\newcolumntype{M}[1]{>{\centering\arraybackslash}m{#1}}
\newcolumntype{P}[1]{>{\flushleft\arraybackslash}m{#1}}
\definecolor{bg}{RGB}{220,220,220}
\theoremstyle{plain}
\newtheorem{theorem}{Theorem}
\newtheorem{lemma}{Lemma}[section]
\newtheorem{proposition}[lemma]{Proposition}
\newtheorem{corollary}[lemma]{Corollary}
\newtheorem{conjecture}[lemma]{Conjecture}
\theoremstyle{definition}
\newtheorem{definition}[lemma]{Definition}
\newtheorem{openquestion}{Open Question}
\theoremstyle{remark}
\newtheorem{remark}[lemma]{Remark}
\newtheorem*{remark*}{Remark}
\newcommand{\bi}[1]{\mathsf{b}_{#1}}
\newcommand{\uds}[1]{u^{\rm DS}_{#1}}
\newcommand{\urr}[1]{u^{\rm RR}_{#1}}
\newcommand{\srr}[1]{\sigma^{\rm RR}_{#1}}
\title[Laplacian growth \& sandpile growth on the Sierpinski gasket]{Laplacian growth \& sandpiles on the Sierpinski gasket: limit shape universality and exact solutions}
\author{Joe P.\@ Chen}
\address[Joe P.\@ Chen]{Department of Mathematics, Colgate University, Hamilton, NY 13346, USA.}
\email{jpchen@colgate.edu}
\urladdr{\url{http://math.colgate.edu/~jpchen}}
\author{Jonah Kudler-Flam}
\address[Jonah Kudler-Flam]{Kadanoff Center for Theoretical Physics, The University of Chicago, Chicago, IL 60637, USA.}
\email{jkudlerflam@uchicago.edu}
\thanks{This project was initiated while JKF was an undergraduate at Colgate University. It has been supported in part by the Research Council of Colgate University, the Simons Foundation (Collaboration Grant for Mathematicians \#523544), and the National Science Foundation (DMS-1855604).}
\date{\today}
\keywords{Laplacian growth, rotor-router aggregation, divisible sandpiles, abelian sandpiles, sandpile group, internal diffusion-limited aggregation, harmonic measure, analysis on fractals, Sierpinski arrowhead curve, self-similarity, limit shapes, exact renormalization.}
\subjclass[2010]{
05C20, 
05C81, 
05E18, 
28A80, 
31A15, 
37B15, 
60K05, 
82C24. 
}
\begin{document}

\begin{abstract}
We establish quantitative spherical shape theorems for rotor-router aggregation and abelian sandpile growth on the graphical Sierpinski gasket ($SG$) when particles are launched from the corner vertex.

In particular, the abelian sandpile growth problem is exactly solved via a recursive construction of self-similar sandpile tiles.
We show that sandpile growth and patterns exhibit a $(2\cdot 3^n)$-periodicity as a function of the initial mass. 
Moreover, the cluster explodes---increments by more than 1 in radius---at periodic intervals, a phenomenon not seen on $\mathbb{Z}^d$ or trees.
We explicitly characterize all the radial jumps, and use the renewal theorem to prove the scaling limit of the cluster radius, which satisfies a power law modulated by log-periodic oscillations.
In the course of our proofs we also establish structural identities of the sandpile groups of subgraphs of $SG$ with two different boundary conditions, notably the corresponding identity elements conjectured by Fairchild, Haim, Setra, Strichartz, and Westura.

Our main theorems, in conjunction with recent results of Chen, Huss, Sava-Huss, and Teplyaev, establish $SG$ as a positive example of a
state space which exhibits ``limit shape universality,'' in the sense of Levine and Peres, among the four Laplacian growth models: divisible sandpiles, abelian sandpiles, rotor-router aggregation, and internal diffusion-limited aggregation (IDLA).
We conclude the paper with conjectures about radial fluctuations in IDLA on $SG$, possible extensions of limit shape universality to other state spaces, and related open problems.
\end{abstract}

\maketitle
\tableofcontents

\section{Introduction and main results} \label{sec:intro}

Internal aggregation models---such as internal diffusion-limited aggregation (IDLA) and rotor-router aggregation---and the abelian sandpile model have a long history in the statistical physics and the mathematics literature.
For reasons to be explained shortly, we prefer to call them \textbf{Laplacian growth models}, to emphasize their close association with the \textbf{combinatorial graph Laplacian}, which is defined on a connected, locally finite, undirected graph $G=(V(G),E(G))$ by the non-positive symmetric matrix
\[
\Delta_G(x,y) = \left\{\begin{array}{ll} -\deg(x), &\text{if } x=y, \\ \mathsf{N}_{xy}, &\text{if } x\neq y,\end{array}\right. \quad (x,y\in V(G))
\]
where $\mathsf{N}_{xy}$ is the number of edges connecting $x$ and $y$.
It is also convenient to introduce the graph Laplacian normalized by the vertex degree, $\Delta(x,y) := \frac{1}{\deg(x)} \Delta_G(x,y)$, sometimes also called the \textbf{probabilistic graph Laplacian}.

The present work is devoted to the solutions of two Laplacian growth models---rotor-router aggregation and the abelian sandpile growth model---on the graphical Sierpinski gasket ($SG$, see Figure \ref{fig:SG}), when particles (or ``chips'') are launched from the corner vertex of $SG$.
In particular, we solve the abelian sandpile growth problem exactly via a renormalization scheme involving self-similar sandpile configurations, or sandpile ``tiles,'' on subgraphs of $SG$.

The motivations for our study are twofold:
\begin{enumerate}[wide]
\item \emph{The limit shape universality conjecture.}
A folklore conjecture in the sandpile community is that on a fixed state space, the growing clusters associated with the four Laplacian growth models---IDLA,
rotor-router aggregation, divisible sandpiles,
and abelian sandpiles---have the same limit shape.
This ``limit shape universality'' conjecture does not hold in general. In fact, exhibiting even a positive example of a state space beyond $\mathbb{Z}$ is difficult. On $\mathbb{Z}^d$, it has been proven that the first three models have Euclidean balls as limit shapes \cites{LBG92, LevinePeres09}, but numerical evidence strongly suggests that the limit shape in the abelian sandpile model on $\mathbb{Z}^2$ is closer to a polygon than an Euclidean ball, see Figure \ref{fig:ASMZ2}. Table \ref{table:Zdshape} summarizes the state of the art on $\mathbb{Z}^d$. See the excellent survey \cite{LevinePeres17} of Levine and Peres for an overview of the basic mechanisms behind each of the Laplacian growth models and the lack of limit shape universality on $\mathbb{Z}^d$.

We will show that limit shape universality holds on $SG$, see Theorem \ref{thm:shapeuniv} below.

\item \emph{From ``fractals in a sandpile'' to ``sandpile on a fractal.''}
The abelian sandpile model, introduced by Bak, Tang, and Wiesenfeld \cites{BTW87, BTW88}, exhibits nontrivial fractal patterns on the Euclidean lattice $\mathbb{Z}^d$, see Figure \ref{fig:ASMZ2} again.
While the fractal nature of the sandpile patterns has been long recognized, rigorous proofs did not arrive until recently.
In an important breakthrough, Pegden and Smart \cite{PegdenSmart} showed that the scaling limit of the patterns exists in the sense of weak-$*$ $L^\infty(\mathbb{R}^d)$ convergence.
Shortly after, the seminal works of Levine, Pegden, and Smart \cites{LPS16, LPS17} established the existence of the Apollonian structure in the sandpile patterns on $\mathbb{Z}^2$, via the analysis of $\mathbb{Z}$-valued superharmonic matrices. (See Figure \ref{fig:Apollonian} for a picture of an Apollonian gasket.)

Inspired by the ``analysis of fractals in sandpiles,'' and following the lead of Strichartz, we are prompted to study ``analysis of sandpiles on fractals.'' 
We are grateful to Strichartz and his undergraduate students for making several key numerical findings and conjectures in \cite{ASMSGStr}.
Indeed, the exact renormalization scheme we present here stems from an insight which appeared in \cite{ASMSGStr}, namely, Dhar's multiplication by identity test applied to cut points on a nested fractal graph. 
This is the key mechanism behind the production of self-similar sandpile tiles on $SG$, which enables us to establish the identity element of the sandpile groups of subgraphs of $SG$ (Theorem \ref{thm:groupSG}), and to solve the sandpile growth problem exactly (Theorems \ref{thm:tail} and \ref{thm:radialcycle}).
Let us mention that the notion of sandpile tiles has also appeared in the Euclidean setting \cites{CPS15,PegdenSmart17}, and its connection to curves in tropical geometry has been explored in \cites{KS16}.
\end{enumerate}

\begin{figure}
\centering
\includegraphics[width=0.5\textwidth]{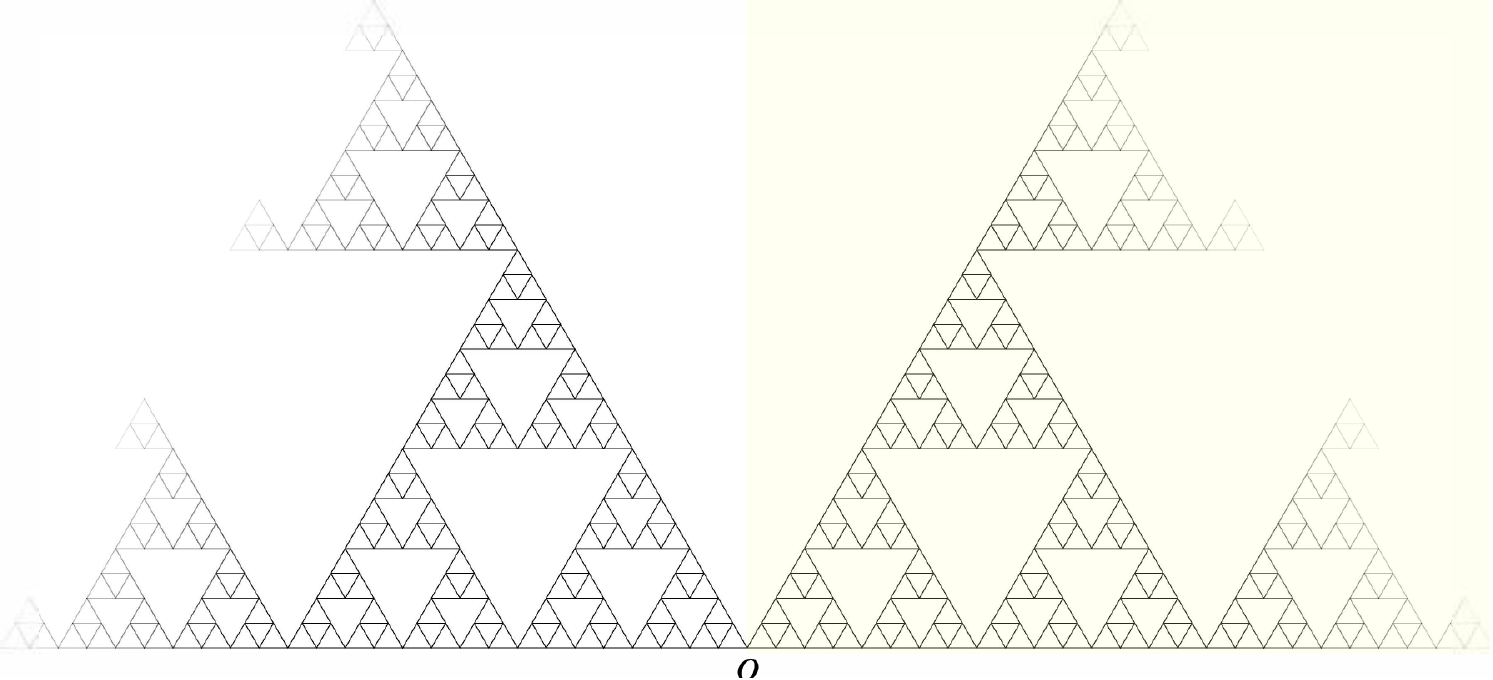}
\caption{The double-sided Sierpinski gasket graph $SG$. The single-sided gasket graph (shaded) is just one half of the double-sided one. The vertex $o$ is the source vertex from which particles are launched.}
\label{fig:SG}
\end{figure}

\begin{figure}
\centering
\begin{tabular}{m{0.35\textwidth} m{0.35\textwidth} m{0.1\textwidth}}
\includegraphics[width=0.3\textwidth]{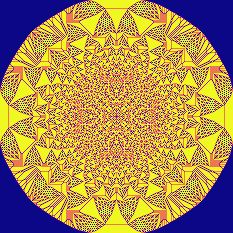}
&
\includegraphics[width=0.3\textwidth]{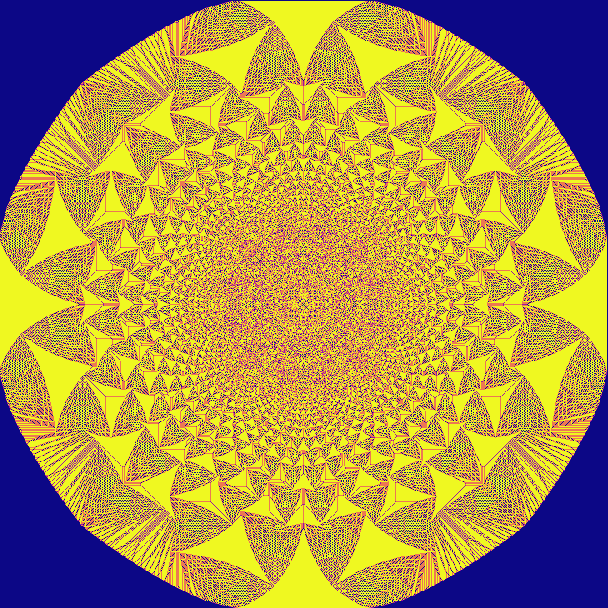}
&
\begin{tabular}{lm{10px}}
0 & \includegraphics[width=10px]{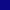} \\
1 & \includegraphics[width=10px]{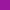} \\
2 & \includegraphics[width=10px]{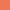} \\
3 & \includegraphics[width=10px]{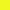}
\end{tabular}
\end{tabular}
\caption{Abelian sandpile cluster on $\mathbb{Z}^2$ starting with $10^5$ chips (left) and $10^6$ chips (right) at the origin. Each vertex is colored according to the number of chips there.}
\label{fig:ASMZ2}
\end{figure}

\begin{figure}
\centering
\includegraphics[width=0.3\textwidth]{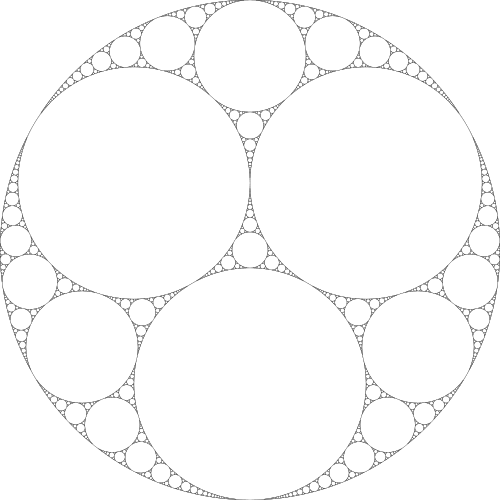}
\caption{
An Apollonian gasket. Picture by Time3000, \url{https://commons.wikimedia.org/wiki/File:Apollonian_gasket.svg} [\href{http://www.gnu.org/copyleft/fdl.html}{GFDL} or \href{https://creativecommons.org/licenses/by-sa/4.0}{CC BY-SA 4.0}], from Wikimedia Commons.
}
\label{fig:Apollonian}
\end{figure}

\emph{Notation.} Throughout the paper, unless noted otherwise, a graph $G=(V(G),E(G))$ is assumed to be undirected, locally finite, and connected. 
When $xy\in E(G)$ we write $x\sim y$.
Let $d: V(G)\times V(G)\to \mathbb{N}_0$ be the graph metric on $G$, and $B_x(r) :=\{y\in V(G): d(x,y)\leq r\}$ be the closed ball of radius $r$ centered at $x$. The cardinality of a finite set $S$ is denoted $|S|$.

\subsection{Sierpinski gasket graph ($SG$)}

\begin{figure}
\centering
\includegraphics[width=0.8\textwidth]{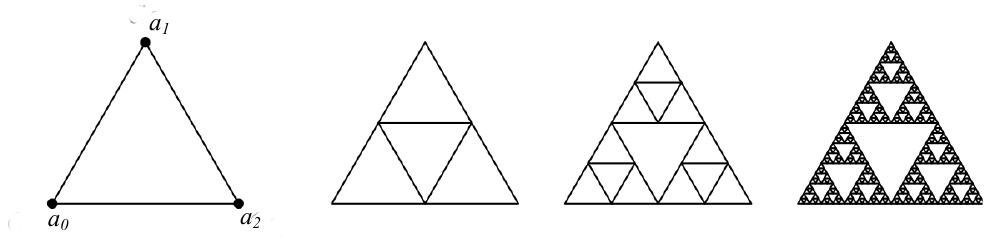}
\caption{The pre-fractal Sierpinski gasket graphs of level $0$, $1$, $2$, and $5$.}
\label{fig:SGImage}
\end{figure}

We define the Sierpinski gasket and the associated pre-fractal graph.
Let $a_0=(0,0)=:o$, $a_1=(\frac{1}{2}, \frac{\sqrt{3}}{2})$, and $a_2=(1,0)$ be the vertices of a unit equilateral triangle in $\mathbb{R}^2$, and $\mathfrak{G}_0$ be the complete graph on the vertex set $V_0 = \{a_0,a_1,a_2\}$, as shown on the left in Figure \ref{fig:SGImage}. 
We introduce three contracting similitudes $\Psi_i : \mathbb{R}^2 \to\mathbb{R}^2$, $\Psi_i(x) = \frac{1}{2}(x-a_i) + a_i $ for each $i\in \{0,1,2\}$. 
The Sierpinski gasket fractal $K$ is the unique nonempty compact set $K$ such that $K= \bigcup_{i=0}^2 \Psi_i(K)$.
To obtain the associated level-$n$ pre-fractal graph $\mathfrak{G}_n$, $n\geq 1$, we define by induction $\mathfrak{G}_n = \bigcup_{i=0}^2 \Psi_i(\mathfrak{G}_{n-1})$; see Figure \ref{fig:SGImage}.
To make all edges of the graph have unit length, we consider $G_n:= 2^n \mathfrak{G}_n$, where for $\alpha >0$ and $S\subset \mathbb{R}^2$ we denote $\alpha S:= \{ \alpha x: x\in S\}$.
The one-sided Sierpinski gasket graph $SG$ is then defined to be the infinite graph $G_\infty:=\bigcup_{n=0}^\infty G_n$, and
the double-sided gasket graph, $G_\infty \cup \mathfrak{R}(G_\infty)$, where $\mathfrak{R}$ is the reflection about $x_1=0$; see Figure \ref{fig:SG}. 

Given a set $U\subset \mathbb{R}^n$, let ${\rm diam}(U) := \sup\{|x-y|: x,y \in U\}$ denote the diameter of $U$ in the Euclidean metric.
The Hausdorff dimension of a set $F \subset \mathbb{R}^n$ is defined as
\begin{align}
d_H(F) := \inf\{s\geq 0: \mathcal{H}^s(F)=0\} \overset{\text{or}}{=}\sup\{s\geq 0: \mathcal{H}^s(F)=\infty\},
\end{align}
where
\begin{align}
\mathcal{H}^s(F) := \lim_{\delta\downarrow 0} \inf\left\{ \sum_{i=1}^\infty [{\rm diam}(U_i)]^s ~\bigg|~ \{U_i\}_i \text{ is a $\delta$-cover of } F\right\}
\end{align}
is the $s$-dimensional Hausdorff measure of $F$.
See \emph{e.g.\@} \cite[Chapter 2]{Falconer} for more details.
It is well known that the Hausdorff dimension of the Sierpinski gasket $K$ is $\frac{\log 3}{\log 2}$.

\subsection{Laplacian growth models and shape theorems on $SG$}

We now define the Laplacian growth models under study. All of them belong to the so-called \emph{abelian networks} introduced by Bond and Levine \cite{BondLevine}; for the rationale behind the abelian property we refer the reader to \cites{DF91, HLMPPW, BondLevine, LevinePeres17}.

In this subsection, $G$ is the one-sided gasket graph $SG$, and the ``origin'' $o$ is the corner vertex of $SG$; see again Figure \ref{fig:SG}.
Propositions \ref{prop:IDLAShapeThm} and \ref{prop:DSShapeThm}, which are the main theorems of \cites{IDLASG} and \cite{HSH17}, respectively, were proved on the double-sided $SG$. It is straightforward to modify the proof to work on the one-sided $SG$, which results in no change in the statement.

\subsubsection{Internal diffusion-limited aggregation (IDLA)}

Launch $m$ particles successively from $o$, and let each of them perform i.i.d.\@ random walks until reaching a site previously unvisited. 
Recall that a random walk on $G$ is a Markov chain on the state space $V(G)$  with infinitesimal generator $\Delta$.
The resulting (random) set of occupied vertices is called an IDLA cluster, denoted $\mathcal{I}(m)$.

\begin{proposition}[Shape theorem for IDLA on $SG$ \cite{IDLASG}*{Theorem 1.1}]
\label{prop:IDLAShapeThm}
For all $\epsilon>0$, we have
\[
B_o(n(1-\epsilon)) \subset \mathcal{I}(|B_o(n)|) \subset B_o(n(1+\epsilon))
\]
for all $n$ sufficiently large, with probability $1$.
\end{proposition}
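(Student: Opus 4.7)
The plan is to adapt the Lawler--Bramson--Griffeath (LBG) framework for IDLA shape theorems to the anomalous diffusion geometry of $SG$. The argument splits into an inner bound $B_o(n(1-\epsilon)) \subset \I(|B_o(n)|)$ and an outer bound $\I(|B_o(n)|) \subset B_o(n(1+\epsilon))$, and one passes from ``with high probability'' statements along the subsequence $n\in\mathbb{N}$ to an almost sure statement by a Borel--Cantelli argument.

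For the inner bound, fix $z \in B_o(n(1-\epsilon))$ and consider the first $M := |B_o(n)|$ launched particles. Writing each IDLA random walk as the truncation of a full walk on $SG$ killed on exiting $B_o(n)$, I would couple IDLA to this longer process so that, with $N_z$ the number of the $M$ walks that visit $z$ at some point before being killed and $L_z$ the number of walks that settle in IDLA at a site $y$ from which a fresh walk would then hit $z$ before killing, one has the LBG-type inequality $\{z \notin \I(M)\} \subset \{N_z - L_z \leq 0\}$. The expectation $\mathbb{E}[N_z]$ is $\sum_{y}g_{B_o(n)}(o,y)$ weighted by the hitting density at $z$, and $\mathbb{E}[L_z]$ has a matching expression obtained by stopping at settling sites; the Green's function asymptotics on $SG$, driven by the volume exponent $d_f=\log 3/\log 2$ and the walk exponent $d_w=\log 5/\log 2$, give $\mathbb{E}[N_z - L_z] \gtrsim n^{d_f}$, while a second-moment computation using hit-probability bounds yields $\mathrm{Var}(N_z - L_z) = o(\mathbb{E}[N_z-L_z]^2)$. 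Chebyshev, a union bound over $z \in B_o(n(1-\epsilon))$, and Borel--Cantelli then deliver the inner bound.

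For the outer bound I would run the complementary estimate: any particle contributing a cluster site outside $B_o(n(1+\epsilon))$ must have first exited $B_o(n(1+\epsilon))$, and by the abelian property the number of such exits (summed over the $M$ particles) is stochastically dominated by the number of independent walks started at $o$ that hit $\partial B_o(n(1+\epsilon))$. Estimating this via the exit distribution of the walk killed at $\partial B_o(n(1+\epsilon))$ and using sub-Gaussian heat-kernel bounds on $SG$, the mean is small compared with the annulus volume $|B_o(n(1+\epsilon))|-|B_o(n)|$; a second-moment/Chebyshev step and Borel--Cantelli close the argument.

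The principal obstacle, compared to the $\mathbb{Z}^d$ argument, is the fractal potential theory input: one needs sharp two-sided Green's function estimates $g_{B_o(n)}(o,y) \asymp n^{d_w-d_f}$ with quantitative control uniform in $y$, together with hitting probability and harmonic measure estimates near the corners of $B_o(n)$, whose boundary is itself a complicated self-similar object (a discrete approximation of the Sierpinski arrowhead curve) rather than a sphere. This requires elliptic Harnack inequalities and a uniform boundary Harnack principle on $SG$, and is where the bulk of the technical work lies; once those ingredients are available from analysis on $SG$, the LBG variance and concentration machinery goes through essentially verbatim.
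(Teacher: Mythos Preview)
This proposition is not proved in the present paper; it is quoted from \cite{IDLASG}. The paper only summarizes that proof in \S\ref{sec:harmonicmeasure}: the inner bound follows the Lawler--Bramson--Griffeath scheme you outline, with the key potential-theoretic input being a mean-value inequality for the Dirichlet Green's function over balls, obtained from the explicit solution of the divisible sandpile problem on $SG$ (rather than from a bare Harnack/boundary-Harnack argument). So your inner-bound sketch is on the right track, though you underemphasize the role of the divisible sandpile odometer in supplying that mean-value estimate.

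Your outer-bound argument, however, has a genuine gap. You propose to dominate the number of particles settling outside $B_o(n(1+\epsilon))$ by the number of independent walks from $o$ that hit $\partial B_o(n(1+\epsilon))$. But simple random walk on $SG$ is recurrent, so \emph{every} such walk hits that sphere; the dominating count is exactly $M=|B_o(n)|$, which is not small relative to the annulus volume $|B_o(n(1+\epsilon))|-|B_o(n)|\asymp \epsilon\, n^{d_H}$. Sub-Gaussian heat-kernel bounds do not rescue this, because there is no killing mechanism in the comparison process to make the exit probability small. The approach actually used in \cite{IDLASG}, as the paper notes, is instead to bootstrap the outer bound from the already-established inner bound, following \cite{DCLYY}: once the cluster is known to contain $B_o(n(1-\epsilon))$, volume and tentacle-control arguments prevent the remaining $o(n^{d_H})$ particles from escaping far.

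A minor correction: the boundary $\partial B_o(n)$ in $SG$ is not the Sierpinski arrowhead curve. Spheres in $SG$ centered at $o$ are finite unions of intervals (Cantor-set-like for certain radii), and the harmonic measure on them is in fact uniform; the arrowhead curve appears in this paper in an entirely different context (the structure of the sandpile identity element $e_n$).
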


\subsubsection{Divisible sandpiles}
\label{sec:DSShapeThm}

Divisible sandpiles were introduced by \cite{LevinePeres09}. Start with $m$ amount of sand at the origin $o$. Whenever the amount of sand $s(x)$ at vertex $x$ exceeds $1$, we topple the excess amount $s(x)-1$ and distribute it equally among the neighboring vertices $y\sim x$, \emph{i.e.,}
the resulting configuration is $s' = s + \max\left(s(x)-1,0\right)\Delta(x,\cdot)$.
Continue this procedure until the amount of sand is $\leq 1$ for all $x\in V(G)$, in which case we say that the sandpile has stabilized. 

Let $\mathcal{D}(m)$ denote the set of vertices which have toppled in the process, which we refer to as the \emph{divisible sandpile cluster}. 
Let $\bi{n} :=|B_o(n)| -\frac{1}{2}|\partial_I B_o(n)|$, where $\partial_I A = \{x\in A: \exists y\in A^c \text{ with } x\sim y\}$ is the inner boundary of a set $A \subset V(G)$.

\begin{proposition}[Shape theorem for divisible sandpiles on $SG$ \cite{HSH17}*{Theorem 1.1}]
\label{prop:DSShapeThm}
Let $n_m = \max\{k\geq 0: \bi{k} \leq m\}$. Then $B_o(n_m-1) \subset \mathcal{D}(m) \subset B_o(n_m)$.
\end{proposition}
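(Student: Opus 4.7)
The plan is to exploit the abelian property of the divisible sandpile and the Levine--Peres characterization of the odometer as the least superharmonic majorant of a specific potential. Define $u_m : V(SG) \to [0,\infty)$ by letting $u_m(x)$ be the total mass emitted from $x$ during stabilization, so that $\mathcal{D}(m) = \{x : u_m(x) > 0\}$ and $\Delta_G u_m(x) = s_\infty(x) - m\delta_o(x)$ with $s_\infty \leq 1$ everywhere and $s_\infty = 1$ on $\mathcal{D}(m) \setminus \partial_I \mathcal{D}(m)$. Standard theory then identifies $u_m = \bar\gamma_m - \gamma_m$, where $\gamma_m$ is any function with $\Delta_G \gamma_m = m\delta_o - \mathbf{1}$ and $\bar\gamma_m$ is the least (pointwise) superharmonic majorant of $\gamma_m$. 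This reduces the shape question to an explicit obstacle problem.

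Guided by the radial symmetry of launching from the corner and by the analogous shape theorems on $\mathbb{Z}^d$, I would conjecture that $\mathcal{D}(m)$ is sandwiched between two consecutive balls $B_o(n-1)$ and $B_o(n)$. On $SG$ the boundary $\partial_I B_o(n)$ consists of cut vertices that separate the $n$-scale triangular cells, and harmonic functions satisfy an exact decimation identity across such cut points. Using this, I would construct a candidate odometer supported in $B_o(n)$ whose Laplacian is $\mathbf{1} - m\delta_o$ inside and whose values at the cut-point boundary are prescribed so that the resulting $s_\infty$ lies in $[0,1]$ there. Verifying the least-superharmonic-majorant property then reduces to checking monotonicity along radial rays, which is transparent from the self-similar Green's function on $SG$.

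To pin down the precise threshold, apply mass conservation $\sum_x s_\infty(x) = m$. If $\mathcal{D}(m) = B_o(n)$, then the interior contributes $|B_o(n)| - |\partial_I B_o(n)|$ and the boundary contributes $\sum_{x \in \partial_I B_o(n)} s_\infty(x)$. The key input---and the cleanest explanation for the factor $\tfrac12$ in $\bi{n}$---is that in the critical regime these boundary values average exactly to $\tfrac12$; this can be extracted either from a reflection/symmetry argument pairing opposite cells across the cut points, or from the harmonic extension calculation performed in the previous step. Combined with the abelian-property monotonicity $m \mapsto \mathcal{D}(m)$, this produces the sandwich $B_o(n_m - 1) \subset \mathcal{D}(m) \subset B_o(n_m)$ for $n_m = \max\{k : \bi{k} \leq m\}$.

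The main obstacle is the boundary-average computation. The subtlety is that $s_\infty$ on $\partial_I B_o(n)$ is \emph{not} constant---it varies across cut vertices with different local geometry---so the $\tfrac12$ is an averaged, not pointwise, statement. The cleanest route is probably to bypass pointwise control and argue only with mass conservation plus the explicit self-similar Green's function on $SG$ to produce matching super- and subharmonic obstacles, following the template of \cite{LevinePeres09} but with the $SG$ Green's function replacing the Euclidean one. Once the odometer is known explicitly on one scale, the self-similar nesting of balls $B_o(2^n) \subset B_o(2^{n+1}) \subset \cdots$ propagates the result uniformly in $n$.
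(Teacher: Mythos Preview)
The paper does not prove this proposition; it is quoted as the main theorem of Huss and Sava-Huss \cite{HSH17}. What the paper does record (in Proposition~\ref{prop:DSLAP} and Lemma~\ref{lem:divisibleodometer}) is the mechanism of that proof: one writes down a candidate odometer $u_*$, checks that $\sigma_*=\sigma_0+\Delta u_*\le 1$, that $\sigma_*=1$ on ${\rm supp}(u_*)$, and that the support is finite; uniqueness then forces $u_\infty=u_*$. Your overall plan---guess the support, build a candidate, verify it via a least-action/obstacle principle---is exactly this, so at the strategic level you are aligned with the intended argument.

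The substantive gap in your proposal is the paragraph beginning ``The main obstacle is the boundary-average computation.'' You assert that $s_\infty$ on $\partial_I B_o(n)$ is \emph{not} constant and that the $\tfrac12$ in $\bi{n}$ is only an averaged statement. On $SG$ this is false, and the point is crucial: as recorded in Lemma~\ref{lem:divisibleodometer}\eqref{DS2}, at the critical mass $m=\bi{n}$ one has $\Delta\uds{n}(z)=\tfrac12$ \emph{pointwise} for every $z\in\partial_I B_o(n)$, equivalently $s_\infty\equiv\tfrac12$ on the entire inner boundary. This is a consequence of the fact---noted in \S\ref{sec:harmonicmeasure}---that the harmonic measure from $o$ on spheres in $SG$ is exactly uniform, which in turn comes from the cut-point structure and the exact $(\tfrac15,\tfrac25,\tfrac25)$ hitting probabilities on a single cell. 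So the ``subtlety'' you flag does not exist, and the obstacle you identify as the main one evaporates. Conversely, your proposed workaround (bypassing pointwise control via mass conservation plus super/subharmonic barriers in the style of \cite{LevinePeres09}) would likely \emph{not} yield the sharp two-ball sandwich $B_o(n_m-1)\subset\mathcal{D}(m)\subset B_o(n_m)$, only a weaker statement with multiplicative constants; on $\mathbb{Z}^d$ that method gives inner and outer radii differing by $O(1)$ precisely because the boundary values are \emph{not} constant there. On $SG$ the exact pointwise constancy is what makes the sharp result possible, and it is what the inductive construction in \cite{HSH17} exploits.
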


\subsubsection{Rotor-router aggregation}

A rotor(-router) walk on a graph, introduced by Propp, is a derandomized version of a random walk on a graph. To begin, each vertex, $x$, is equipped with an arrow (rotor) which targets
its neighboring vertices in a periodic sequence. This periodic sequence is called a \emph{rotor mechanism}. A rotor mechanism is said to be \emph{simple} if each neighbor of $x$ occurs exactly once in a period.
A particle performing rotor walk first changes the rotor at the current position to point to the next neighbor, according to the simple periodic rotor mechanism, and then moves to the neighboring vertex the rotor points towards.
The rotor-router action $\rho$ on a particle configuration on $G$ is governed by the \textbf{stack Laplacian} $\Delta_\rho$ which, unlike the usual Laplacian $\Delta$, is a \emph{non}linear operator. See Definition \ref{def:stackLap} below.

In rotor-router aggregation, we launch $m$ rotor walks successively from the origin $o$, and let each of them perform rotor walks until reaching a site previously unvisited. We assume that each vertex carries a rotor mechanism which is periodic and simple.  
Let $\mathcal{R}(m)$ and $\sigma(m)$ denote, respectively, the set of vertices which have fired and the set of vertices occupied by the rotor walkers. 

\begin{theorem}[Shape theorem for rotor-router aggregation on $SG$]
Let $n_m=\max\{k\geq 0: \bi{k}\leq m\}$. Then for any periodic simple rotor mechanism,
\[
B_o(n_m-2) \subset \mathcal{R}(m) \subset B_o(n_m) \quad\text{and}\quad B_o(n_m-1) \subset \sigma(m) \subset B_o(n_m+1)
\]
for all $m\in\mathbb{N}$.
\label{thm:RRA}
\end{theorem}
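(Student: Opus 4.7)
The strategy is to piggyback on the divisible sandpile shape theorem (Proposition \ref{prop:DSShapeThm}) via an \emph{odometer comparison}, in the spirit of the Levine--Peres framework on $\mathbb{Z}^d$. Introduce the rotor-router odometer $\urr{m}(x)$---the total number of times vertex $x$ fires in the aggregation of $m$ particles launched from $o$---and the divisible sandpile odometer $\uds{m}(x)$. By the abelian property of rotor-router aggregation, we may choose any legal firing order, and in particular we may compare final cluster shapes directly through their odometers.

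The first step is to write down the discrete balance equations. Because each rotor mechanism is periodic and simple, vertex $x$ sends to each neighbor either $\lfloor \urr{m}(x)/\deg(x)\rfloor$ or $\lceil \urr{m}(x)/\deg(x)\rceil$ chips. Hence the terminal chip distribution satisfies
\[
\mathbf{1}_{\sigma(m)}(x) \;=\; m\,\delta_o(x) + \Delta_G\!\left(\urr{m}/\deg\right)\!(x) + E(x), \qquad |E(x)| \le \deg(x),
\]
where $E$ is supported in $\mathcal{R}(m)\cup\partial_O\mathcal{R}(m)$. The divisible sandpile satisfies the \emph{exact} analogue (with zero error term):
\[
\eta^{DS}(x) \;=\; m\,\delta_o(x) + \Delta_G\!\left(\uds{m}/\deg\right)\!(x), \qquad 0\le \eta^{DS}\le \mathbf{1}_{\mathcal{D}(m)}.
\]

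The second step is a uniform odometer comparison. Setting $h:=(\urr{m}-\uds{m})/\deg$, subtraction gives $\Delta_G h = (\mathbf{1}_{\sigma(m)}-\eta^{DS}) - E$, a function of bounded magnitude with compact support. Applying the discrete maximum principle on a sufficiently large finite subgraph (together with the Green's function estimates on $SG$ for the random walk generator $\Delta$) yields $\|h\|_\infty \le C$ for a universal constant $C$. The key input here is the sparseness of ball boundaries on $SG$: by the self-similar construction, $\partial_I B_o(n)$ contains only $O(1)$ vertices, which prevents rotor error accumulation from scaling with $m$ and keeps $C$ truly uniform.

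The third step converts the odometer bound into the asserted shape inclusions. By Proposition \ref{prop:DSShapeThm}, $\uds{m}$ is strictly positive throughout $B_o(n_m-1)$ and vanishes outside $B_o(n_m)$; more precisely, the radial profile of $\uds{m}$ picks up an additive increment of order $\bi{n}-\bi{n-1}$ across each layer. Combined with $|h|\le C$, this forces $\urr{m}>0$ on $B_o(n_m-2)$ and $\urr{m}=0$ off $B_o(n_m)$, giving $B_o(n_m-2)\subset \mathcal{R}(m)\subset B_o(n_m)$. The inclusion $\sigma(m)\subset \mathcal{R}(m)\cup\partial_O\mathcal{R}(m)\subset B_o(n_m+1)$ is then immediate, while $B_o(n_m-1)\subset \sigma(m)$ follows because every vertex in $B_o(n_m-1)\setminus B_o(n_m-2)$ is a neighbor of some fired vertex and therefore must have received a chip.

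\medskip

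The principal obstacle is ensuring the comparison constant $C$ is sharp enough to yield only a $\pm 2$-layer spread for $\mathcal{R}(m)$ (rather than some larger but still $O(1)$ spread), \emph{uniformly} over all initial rotor states and all periodic simple mechanisms. This ultimately hinges on the thin-cut geometry of $SG$: at every cut vertex, the simplicity hypothesis forces the rotor to visit each side of the cut in bounded alternation, so the sign of the rotor error at a boundary vertex is controlled by the odometers on at most $O(1)$ layers inward. Once this local cancellation is established, the two inclusions in the theorem follow directly from the divisible sandpile sandwich and the explicit structure of $\bi{n}$.
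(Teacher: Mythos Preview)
Your approach has a genuine gap, and it lies in the second step. The assertion that $\partial_I B_o(n)$ contains only $O(1)$ vertices is false on $SG$: sphere sizes grow like $(3/2)^k$ on average (since $|B_o(2^k)|=\frac{3}{2}(3^k+1)$), and indeed the paper's own boundary analysis in Proposition~\ref{prop:RRsupport} treats $|S_n|$ and $|\partial_I B_n|$ as nontrivial quantities that must balance exactly. Once this claim fails, the maximum-principle step collapses: from $\Delta_G h$ bounded pointwise on a region of volume $\asymp 3^{n_m}$, the Green's function representation gives only $|h(x)|\le \sum_y G(x,y)|\Delta_G h(y)|$, and on $SG$ the Green's function summed over a ball is not $O(1)$. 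So you do not get $\|h\|_\infty\le C$ uniformly in $m$, and the sharp $\pm 2$ layer conclusion does not follow. You implicitly concede this in your final paragraph, but the ``local cancellation at cut vertices'' sketch does not repair it, because most boundary vertices of $B_o(n)$ are \emph{not} cut vertices.

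The paper's proof is not an odometer comparison in the Levine--Peres $\mathbb{Z}^d$ style at all. It runs the Friedrich--Levine algorithm with $\uds{n}$ as input and exploits a much more delicate fact than a global bound on $h$: namely that $\uds{n}$ equals \emph{exactly} $2$ on every vertex of $\partial_I B_{n-1}$ (Lemma~\ref{lem:divisibleodometer}(\ref{DS3})). Because this value is so small, one can enumerate \emph{by hand} the possible values of $\Delta_\rho\uds{n}$ on $S_n$ and $S_{n+1}$ over all simple periodic rotor mechanisms (Proposition~\ref{prop:DSapprox}), and then carry out the annihilation step of the algorithm explicitly by an inductive two-act argument (``filling the bulk'' then ``pulling the marionette'') that tracks exact chip counts on the two outermost spheres, with separate bookkeeping for $n$ even and odd. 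The sharpness of the $\pm 2$ bound comes from this exact combinatorial accounting at the boundary, not from any soft analytic estimate.
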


\begin{remark}
When considered over all possible periodic simple rotor mechanisms, our Theorem \ref{thm:RRA} is sharp, namely, that the difference between the out-radius and the in-radius may equal, but never exceeds, $2$.
This has been confirmed by our simulations, see Figure \ref{fig:RRA}.
In fact, simulations indicate that the cluster growth tends to fill up the sphere of radius $n$ before entering the sphere of radius $n+1$, with occasional exceptions resulting in (temporary) outer$-$inner radial difference of $2$.
Moreover, the second-named author has numerical evidence \cite{AutomataSG} that starting from a nearly symmetric rotor configuration, assigning to all vertices the clockwise (resp.\@ counter-clockwise) rotor mechanisms, the radial difference appears never to exceed $1$. 
\end{remark}

\begin{figure}
\centering
\includegraphics[width=0.4\textwidth]{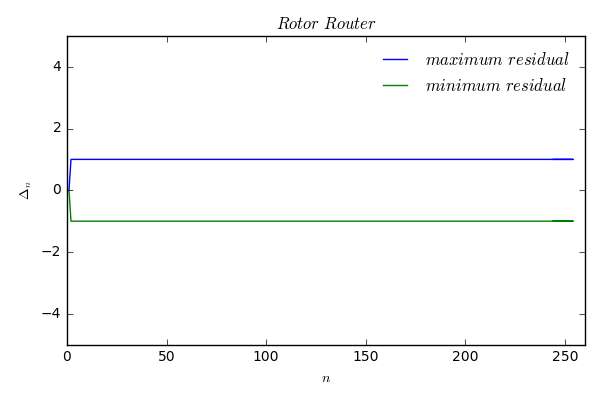}
\caption{Fluctuations of the rotor-router cluster $\sigma(|B_o(n)|)$ about the expected radius $n$, as a function of $n \in \{1,2,\cdots, 256\}$. For each $n$, 1000 instances of the rotor-router cluster associated with random initial rotor configurations were generated. The data points indicate the smallest possible out-radius (blue) and the largest possible in-radius (green) among the 1000 samples.}
\label{fig:RRA}
\end{figure}

\subsubsection{The abelian sandpile (or chip-firing) model} \label{sec:sandpileintro}
Start with $m$ chips at the origin $o$. Whenever the number of chips $\eta(x)$ at vertex $x$ equals or exceeds $\deg(x)$, we \textbf{topple} or \textbf{fire} at $x$ by sending one chip to each of the neighboring vertices $y\sim x$. 
Put in another way, toppling at $x$ on the sandpile configuration $\eta: V(G)\to \mathbb{N}_0$ produces the new configuration $\eta' =\eta + \Delta_G(x,\cdot)$.
Continue this procedure until the number of chips at $x$ is fewer than $\deg(x)$ for all $x\in V(G)$, in which case we say that the sandpile has stabilized. It is well-known that the order of topplings does not affect the stable configuration, which lends to the term ``abelian'' in the abelian sandpile model. 

Note that for the one-sided gasket graph, since $\deg(o)=2$, the sandpile configurations at mass $2k$ and $2k+1$, $k\in \mathbb{N}_0$, only differ at the origin $o$ (which carries $0$ and $1$ chip, respectively).
Therefore it suffices to study the growing sandpile cluster starting with an even number of chips at $o$.

Let $S(m)$ be the set of all vertices which have received at least a chip during the toppling process, which we refer to as the \textbf{receiving set}. This is to be distinguished from the \textbf{firing set} $A(m)$, the set of vertices which have fired at least once. Trivially $A(m)\subset S(m)$.

Recall $d_H=\frac{\log 3}{\log 2} = 1.58496\ldots$ is the Hausdorff dimension of $SG$.

\begin{theorem}[Shape theorem for the abelian sandpile cluster on $SG$]
The following hold for the abelian sandpile cluster on $SG$ with initial configuration $m\mathbbm{1}_o$:
\begin{enumerate}
\item \label{item:ball} For every $m\in \mathbb{N}$, there exists a radius $r_m\in \mathbb{N}_0$ such that $B_o(r_m-1)\subset A(m) \subset B_o(r_m) = S(m)$. (It is understood that $B_o(-1)=\emptyset$.)
\item \label{item:renewal} Let $r: [0,\infty)\to [0,\infty)$ be defined by $r(x) = r_{\lfloor x\rfloor}$. Then
\begin{align}
\label{eq:asymprm}
r(x) = x^{1/d_H} [\mathcal{G}(\log x)+o(1)] \quad \text{as}~x\to\infty,
\end{align}
where $\mathcal{G}$ is a nonconstant $(\log 3)$-periodic function having a finite number of well-defined discontinuities within each period (see Theorems \ref{thm:tail} and \ref{thm:radialcycle}, and Figure \ref{fig:tailpattern} below).
In particular, when $x\in [\frac{10}{9}, \frac{4}{3})$,
\begin{align}
\label{Gest}
\mathcal{G}(\log x) =\frac{1}{2}x^{-1/d_H} \in \left(0.4170,0.4679\right).
\end{align}
We also have the global estimate
\begin{align}
\label{eq:globalest}
0.3871 \leq \left(\frac{2}{9}\right)^{1/d_H}  \leq \mathcal{G}(\log x) \leq \frac{3}{4},
\end{align}
\end{enumerate}
\label{thm:ASM}
\end{theorem}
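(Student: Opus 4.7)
The plan is to establish part \eqref{item:ball} via a renormalization argument that exploits the self-similarity of $SG$ and the abelian property of sandpiles, and then to derive part \eqref{item:renewal} by applying a lattice renewal theorem to the sequence of radial jumps produced by the renormalization. The whole scheme rests on Theorems \ref{thm:groupSG}, \ref{thm:tail}, and \ref{thm:radialcycle}, which I would treat as the analytical engine: \ref{thm:groupSG} gives explicit identity elements of the sandpile groups on the prefractal subgraphs $G_n$ of $SG$ with appropriate boundary conditions, and \ref{thm:tail} together with \ref{thm:radialcycle} provide the exact list of critical masses at which the cluster ``explodes'' and the increment of $r_m$ at each such explosion.

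For part \eqref{item:ball}, I would proceed by induction on $m$, using abelianness to topple $o$ first and then propagate outward one shell at a time. Because $SG$ is a nested union of self-similar copies $G_n$ glued at corner vertices, once enough chips accumulate at a corner of a sub-gasket the stabilization inside that sub-gasket reproduces, up to symmetry, the identity configuration characterized in Theorem \ref{thm:groupSG}; the firing set therefore remains a sub-gasket and the receiving set differs from it only along the next outgoing shell. A choice of toppling order that fires all of $B_o(r-1)$ before any vertex outside $B_o(r)$ receives a chip gives the inclusions $B_o(r_m-1)\subset A(m)\subset B_o(r_m)$, and $S(m)=B_o(r_m)$ follows because the outermost firing produces exactly one chip at each site of the next shell.

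For part \eqref{item:renewal}, I would read off from Theorems \ref{thm:tail} and \ref{thm:radialcycle} that the explosion masses and radial increments are $(2\cdot 3^n)$-periodic across scales $n$: rescaling mass by a factor of $3$ corresponds to rescaling radius by a factor of $2$, with a finite list of intermediate plateaus and jumps within each period. Taking logarithms, the counting function for explosions satisfies a lattice renewal equation with step distribution concentrated on integer multiples of $\log 3$. The renewal theorem (in its lattice form) then gives $r(x)=x^{1/d_H}[\mathcal{G}(\log x)+o(1)]$ with $\mathcal{G}$ a $(\log 3)$-periodic function, whose finitely many jump discontinuities per period correspond to the finitely many explosion events within one renormalization cycle, and which is nonconstant precisely because $r_m$ has genuine plateaus between explosions. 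The explicit bounds \eqref{Gest} and \eqref{eq:globalest} are then obtained by direct evaluation of $r_m/m^{1/d_H}$ at the endpoints of each plateau, using the list of explosion masses from Theorem \ref{thm:radialcycle}: the interval $[\tfrac{10}{9},\tfrac{4}{3})$ corresponds to a specific plateau on which $r_m$ is constant, and the global bounds come from locating the extrema of $\mathcal{G}$ within a fundamental period.

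The main obstacle, by far, is the algebraic-combinatorial content that lies upstream of this bookkeeping, namely the construction of the self-similar sandpile tiles and the identification of the sandpile group identities in Theorem \ref{thm:groupSG}, together with the precise enumeration of explosion masses in Theorem \ref{thm:radialcycle}. Once those are in hand, the present theorem follows from \textbf{(i)} an abelian-property argument that packages the tiles into concentric shells, and \textbf{(ii)} a standard application of the lattice renewal theorem to extract the log-periodic asymptotics and verify the numerical bounds on $\mathcal{G}$.
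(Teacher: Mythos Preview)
Your high-level architecture is right and matches the paper: part \eqref{item:ball} via a self-similar/abelian induction, part \eqref{item:renewal} via a lattice renewal theorem applied to the approximate scaling $r(3x)\approx 2r(x)$. Two points where your outline diverges from what the paper actually does are worth flagging.

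\textbf{Part \eqref{item:ball}.} The induction is on the \emph{level} $n$ of the prefractal $G_n$, not on $m$, and the invariant you need is not that the sub-gasket stabilizes to the identity element of Theorem~\ref{thm:groupSG}. Rather, one first stabilizes inside $G_n^{(s)}$, pausing excess chips at the two cut vertices $\partial G_n$; the resulting configuration on $G_n^{(s)}$ is some recurrent $\eta\in\mathcal{R}_n^{(s)}$ (generically \emph{not} $e_n$). The key device is Dhar's multiplication-by-identity test: each toppling at $\partial G_n$ triggers one toppling at every vertex of $G_n^{(s)}$ and leaves $\eta$ unchanged, while shipping a net of two chips per cut vertex into $G_{n+1}\setminus G_n$. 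This produces a copy of $((m'-2)\mathbbm{1}_o)^\circ$ in each outer cell, and the ball property follows by induction on $n$. Your ``fire $B_o(r-1)$ before anything outside $B_o(r)$ receives a chip'' is not the mechanism.

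\textbf{The global estimate \eqref{eq:globalest}.} Here you genuinely differ from the paper. You propose to read off the extrema of $\mathcal{G}$ by direct evaluation of $r_m/m^{1/d_H}$ at the plateau endpoints listed in Theorem~\ref{thm:radialcycle}. That is in principle doable but tedious, since it requires summing the Fourier-type series for $\mathcal{G}$ over the full period. The paper instead bypasses $\mathcal{G}$ entirely and proves \eqref{eq:globalest} by a \emph{geometric} counting argument: the lower bound comes from $m\le 3|B_o(r_m)|$ and a volume estimate for balls in $SG$, while the upper bound comes from Rossin's lemma (the minimum mass needed to fire every vertex of a subgraph $X$ equals $|{\rm in}(X)|+|C_G(X)|$) applied to $X=B_o(r_m-1)$. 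This is both shorter and independent of the exact jump enumeration. Your approach to \eqref{Gest}, on the other hand, is essentially the paper's: on $[\tfrac{10}{9},\tfrac{4}{3})$ the remainder $R(3^j x)=r(3^j x)-2r(3^{j-1}x)$ vanishes for $j\ne 1$ and equals $1$ for $j=1$, giving $\mathcal{G}(\log x)=(3x)^{-1/d_H}=\tfrac12 x^{-1/d_H}$.
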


Figure \ref{fig:ASMScaling} shows the radius-to-mass scaling of Theorem \ref{thm:ASM}.

\begin{figure}
\centering
\begin{subfigure}[t]{0.5\textwidth}
\centering
\includegraphics[height=0.2\textheight]{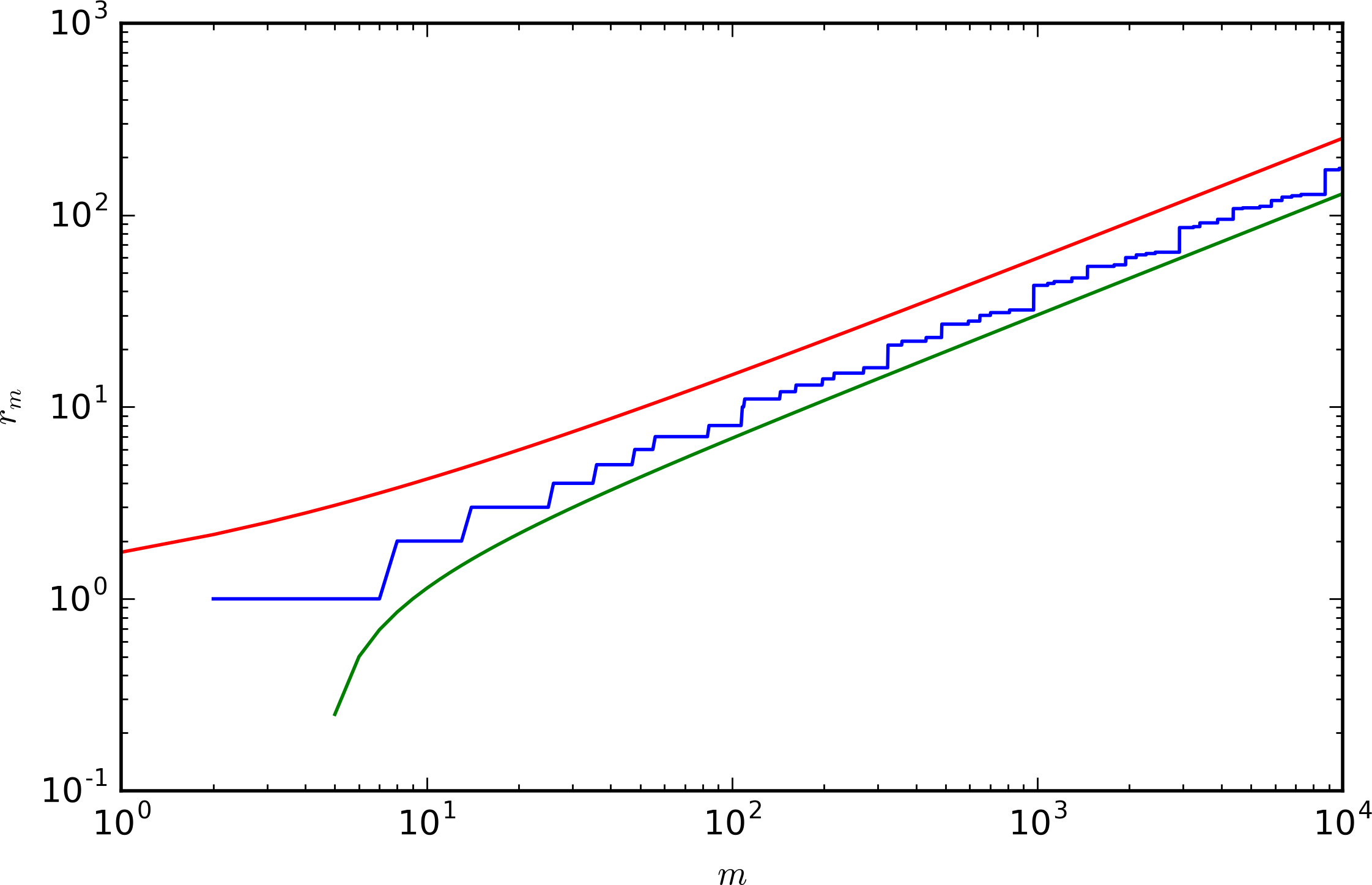}
\end{subfigure}%
~
\begin{subfigure}[t]{0.5\textwidth}
\centering
\includegraphics[height=0.225\textheight]{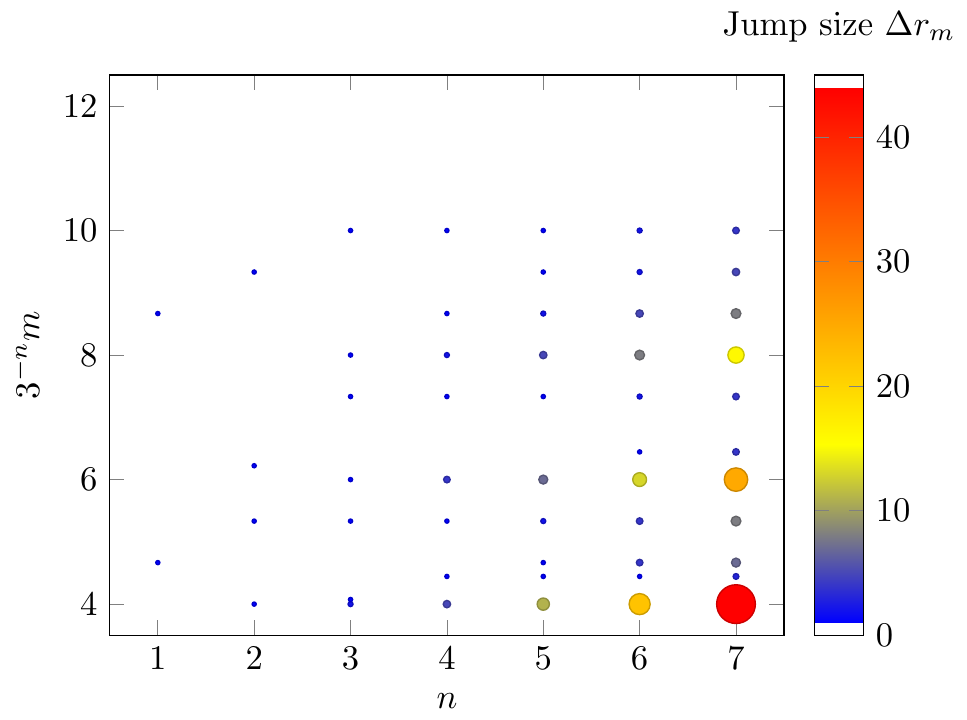}
\end{subfigure}
\caption{(Left) The radius $r_m$ of the abelian sandpile cluster $(m\mathbbm{1}_o)^\circ$ as a function of the initial mass $m$. Upper and lower bounds are given by the global estimate \eqref{eq:globalest}.
(Right) Jumps in $r_m$, \emph{cf.\@} Table \ref{table:spectrum}.
Enumeration of the radial jumps is given in Theorems \ref{thm:tail} and \ref{thm:radialcycle}; see also Figure \ref{fig:tailpattern}.
}
\label{fig:ASMScaling}
\end{figure}

\begin{remark}
\label{rem:sandpileSG}
As alluded to earlier, there have been previous works on the abelian sandpile model on $SG$ (or its variant).
One type of problem is to study sandpile dynamics under stationarity (\emph{a.k.a.\@} sandpile Markov chains) and obtain critical exponents of sandpile avalanches.
This was the focus of the numerical works \cites{Stanley96, DV98,DPV01} in the late 90s.
A rigorous analysis of sandpile height correlations was carried out by Matter \cite{MatterThesis}*{Chapter 5} on the Hanoi tower graph (Figure \ref{fig:Hanoi}), which is a different approximation of the Sierpinski gasket fractal.
The other type of problem is deterministic single-source sandpile growth, which was studied by Fairchild, Haim, Setra, Strichartz, and Westura \cite{ASMSGStr} and is the main focus of the present work.

\begin{figure}
\centering
\includegraphics[width=0.2\textwidth]{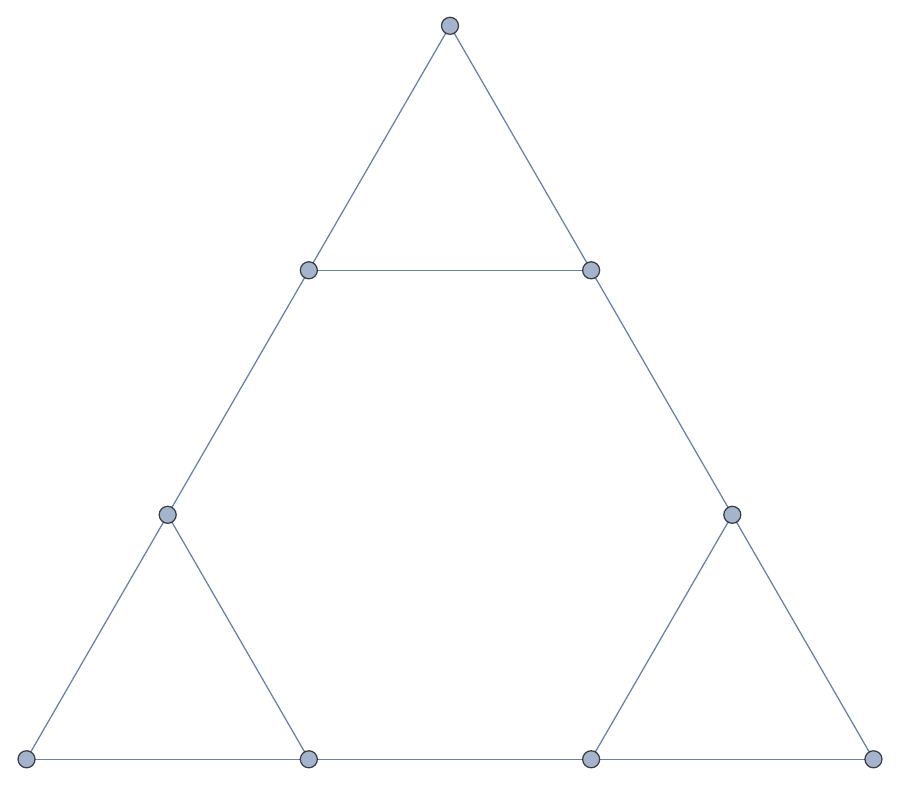}
\includegraphics[width=0.2\textwidth]{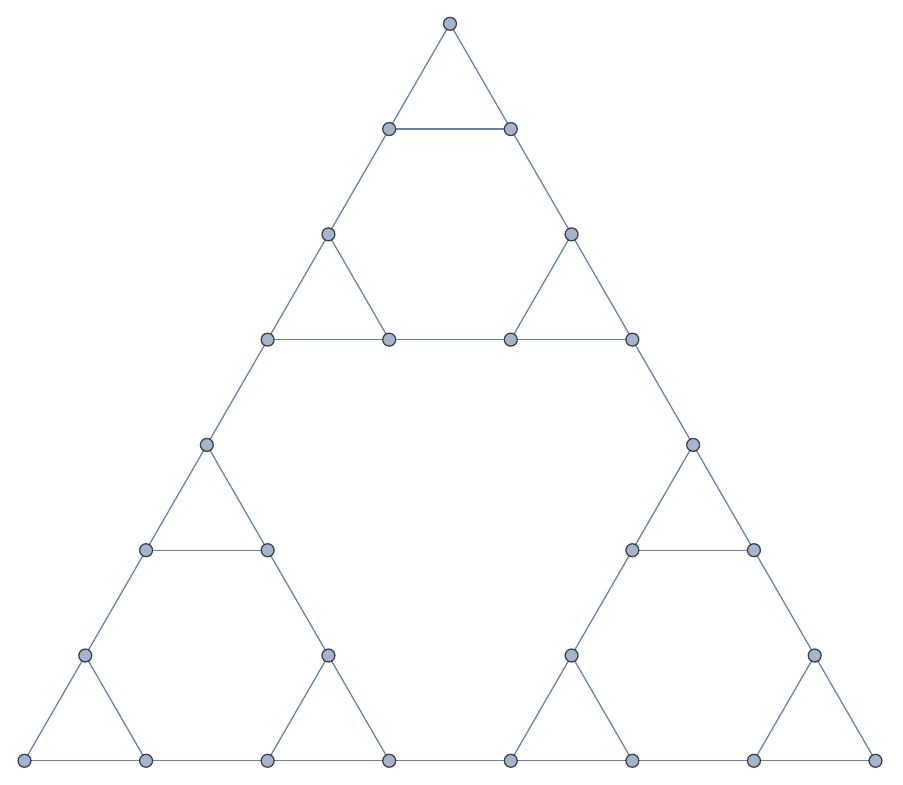}
\includegraphics[width=0.2\textwidth]{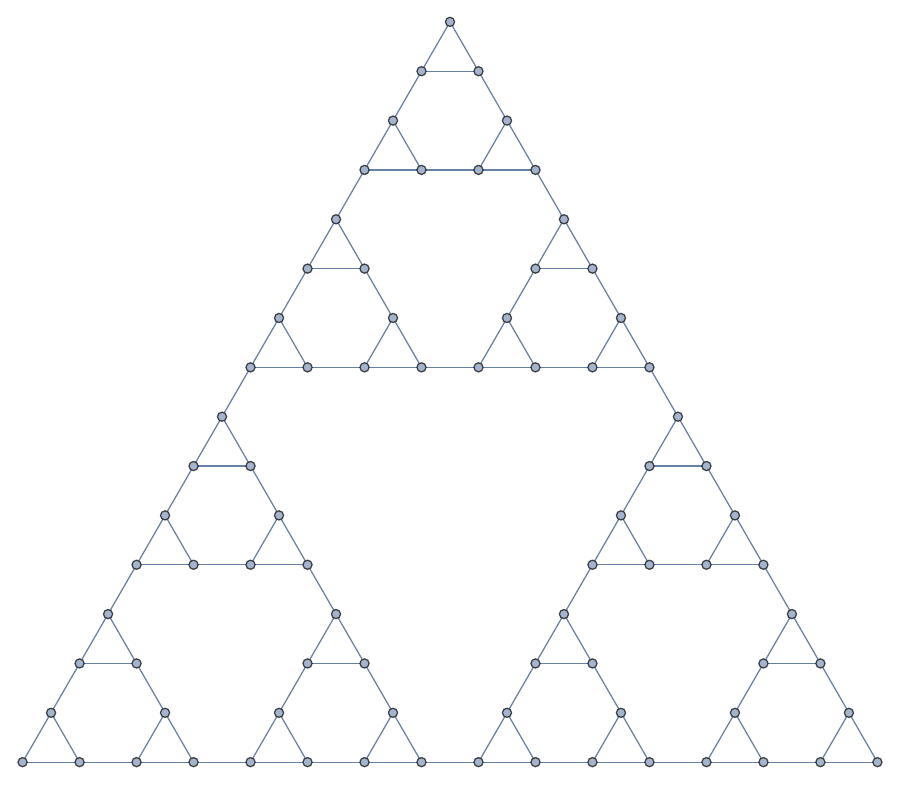}
\includegraphics[width=0.2\textwidth]{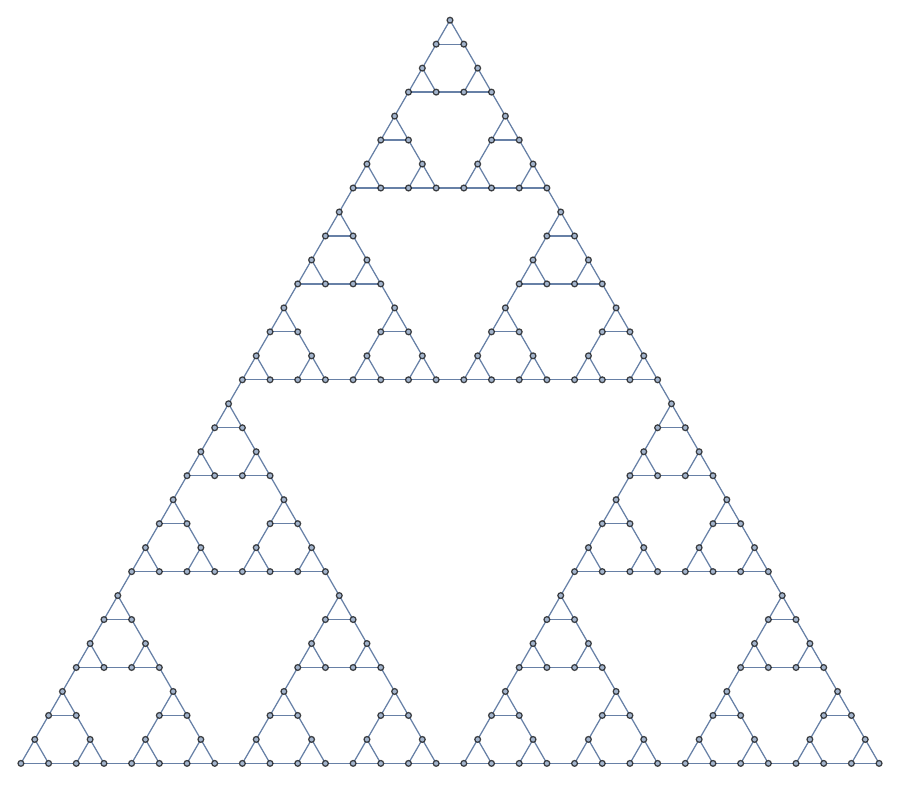}
\caption{The Hanoi-tower graphs of level 1, 2, 3, and 4.}
\label{fig:Hanoi}
\end{figure}

An expected feature appearing in both types of problems is that the relevant observable (be it the number of avalanche events, the cluster radius, etc.\@) asymptotically follows a \textbf{power law modulated by log-periodic oscillations}. 
This is a phenomenon expected on state spaces that possess discrete scale invariance, $SG$ being a prime example, although to our best knowledge there has been no rigorous proof prior to the present work.

In terms of cluster radial asymptotics, it was proved in \cite{ASMSGStr} that $r_m = \mathcal{O}(m^{1/d_H})$. 
Our Theorem \ref{thm:ASM} improves upon this result by showing rigorously that the radius follows a power law (with exponent $1/d_H$) modulated by a nonconstant log-periodic function $\mathcal{G}$.
This $\mathcal{G}$ function has a Fourier series representation with explicit Fourier coefficients, \emph{cf.\@} \eqref{eq:G}, and we can evaluate it in certain intervals of $x$, \emph{cf.\@} \eqref{Gest}.
\end{remark}

\subsection{Limit shape universality on $SG$}

Propositions \ref{prop:IDLAShapeThm} and \ref{prop:DSShapeThm} and Theorems \ref{thm:RRA} and \ref{thm:ASM} together imply the following ``limit shape universality'' result on $SG$, summarized in Table \ref{table:summary}.

\renewcommand*{\arraystretch}{1.2}
\begin{table}
\begin{tabular}{c|c}
\emph{Growth model} & \emph{Shape {\color{purple} theorem}/{\color{blue} conjecture}} \\ \hline \hline
IDLA & In/out-radius 
{\color{purple} $\left\{\begin{array}{ll}n\pm \mathcal{O}(\log n),& d= 2 \\ n\pm\mathcal{O}(\sqrt{\log n}), & d\geq 3 \end{array}\right\}$} \cites{LBG92, Lawler95, AG1, AG2, JLS1, JLS2} \\ \hline
Rotor-router aggregation & 
\begin{tabular}{@{}c@{}}{\color{purple} In-radius $n-c\log n$, out-radius $n+c'\log n $} \cites{LevinePeres09, LevinePeres17} \\ ($c,c'$ indep of $n$)\end{tabular}  \\ \hline
Divisible sandpiles & 
\begin{tabular}{@{}c@{}}{\color{purple} In-radius $n-c$, out-radius $n+c'$} \cite{LevinePeres09} \\ ($c,c'$ indep of $n$) \end{tabular} \\ \hline
Abelian sandpiles & \begin{tabular}{@{}c@{}} {\color{purple} Limit shape is not spherical}, {\color{blue} appears to be a polygon} \\
  {\color{purple}  Rigorous outer/inner spherical bounds (with a gap)} \cites{LevinePeres09,FLP10}
\end{tabular}
\end{tabular}
\vspace{10pt}
\caption{A summary of shape results for Laplacian growth models on $\mathbb{Z}^d$, starting with $|B_o(n)|$ particles at the origin $o$.}
\label{table:Zdshape}
\end{table}

\begin{table}
\centering
\begin{tabular}{c|c|c}
\emph{Growth model} & \emph{Initial \# of chips} & \emph{Shape {\color{purple} theorem}/{\color{blue} conjecture}} \\ \hline \hline
IDLA  & $|B_o(n)|$ & In/out-radius ${\color{purple} n} {\color{blue} \pm \mathcal{O}(\sqrt{\log n})}$ \cite{IDLASG}, [$\Diamond$] \\ \hline
Rotor-router aggregation & $m$ & {\color{purple} In-radius $n_m-2$, out-radius $n_m$} [$\Diamond$] \\ \hline 
Divisible sandpiles  & $m$ &
{\color{purple} In-radius $n_m-1$, out-radius $n_m$} \cite{HSH17}  \\ \hline
Abelian sandpiles & $m$ & \begin{tabular}{@{}c@{}} {\color{purple} Receiving set $S(m) = B_o(r_m)$} \\  {\color{purple} $r_m=m^{1/d_H}[\mathcal{G}(\log m)+o(1)]$ as $m\to\infty$} [$\Diamond$] \\
($\mathcal{G}$ is an explicit $(\log 3)$-periodic function)
\end{tabular}
\end{tabular}
\vspace{10pt}
\caption{A summary of shape results for Laplacian growth models on $SG$, with a single source at the origin $o$. [$\Diamond$] denotes results addressed in the present paper. Here $n_m=\max\{k\geq 0: |B_o(k)| - \frac{1}{2}|\partial_I B_o(k)|\leq m\}$, where $\partial_I A = \{x\in A: \exists y\in SG\setminus A,~ y\sim x\}$ denotes the inner boundary of $A\subset SG$.}
\label{table:summary}
\end{table}

\begin{theorem}[Limit shape universality on $SG$]
\label{thm:shapeuniv}
On $SG$, the four single-source Laplacian growth models---IDLA, rotor-router aggregation, divisible sandpiles, and abelian sandpiles---launched from the corner vertex $o$ all fill balls in the graph metric centered at $o$. 
\end{theorem}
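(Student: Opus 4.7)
The plan is to derive Theorem \ref{thm:shapeuniv} as a direct corollary of the four shape statements already in hand: Proposition \ref{prop:IDLAShapeThm} for IDLA, Proposition \ref{prop:DSShapeThm} for divisible sandpiles, Theorem \ref{thm:RRA} for rotor-router aggregation, and part (1) of Theorem \ref{thm:ASM} for abelian sandpiles. In each model the cluster (occupied set, toppled set, firing set, or receiving set) is trapped between two graph-metric balls centered at $o$ whose radii are asymptotically equivalent: for IDLA with $|B_o(n)|$ chips the sandwich is $B_o(n(1-\epsilon)) \subset \mathcal{I}(|B_o(n)|) \subset B_o(n(1+\epsilon))$ eventually almost surely; for rotor-router aggregation and divisible sandpiles the radial gap is a uniform additive constant ($2$ and $1$, respectively); and for abelian sandpiles Theorem \ref{thm:ASM}(1) yields the exact identity $S(m) = B_o(r_m)$, which is the strongest of the four.

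To stitch these into a single statement, first I would unify the parameterizations. The three conventions in play---$m = |B_o(n)|$ for IDLA, the auxiliary radius $n_m$ for rotor-router and divisible sandpiles, and $r_m$ for abelian sandpiles---all produce a cluster radius of order $m^{1/d_H}$ (in the abelian sandpile case via part (2) of Theorem \ref{thm:ASM}; in the remaining cases via the known polynomial growth of $|B_o(n)|$ on $SG$). Consequently, after rescaling by $m^{-1/d_H}$, each cluster Hausdorff-converges to a ball in the natural graph metric on $SG$ centered at $o$, with the abelian sandpile radius additionally modulated by the log-periodic factor $\mathcal{G}$. This is exactly the assertion of Theorem \ref{thm:shapeuniv}.

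There is no serious obstacle here: Theorem \ref{thm:shapeuniv} is an assembly result whose content resides entirely in its four ingredients. The heavy lifting was done in proving those individual shape theorems---especially Theorems \ref{thm:RRA} and \ref{thm:ASM}, which are the new contributions of this paper. The final write-up therefore reduces to a short paragraph citing the four inclusions and remarking that, although the widths of the sandwiching annuli differ by model (logarithmic fluctuations for IDLA versus bounded or zero gaps for the deterministic models), the common limit shape across all four models is a metric ball centered at $o$, establishing limit shape universality on $SG$.
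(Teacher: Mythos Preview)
Your proposal is correct and matches the paper's approach exactly: the paper presents Theorem \ref{thm:shapeuniv} as an immediate consequence of Propositions \ref{prop:IDLAShapeThm} and \ref{prop:DSShapeThm} together with Theorems \ref{thm:RRA} and \ref{thm:ASM}, without writing out a separate proof. Your additional remarks on unifying the parameterizations and the $m^{1/d_H}$ scaling are fine elaborations, but the paper simply cites the four ingredient results and moves on.
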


As far as we are aware, this is the first non-tree state space whereupon the limit shapes of all four Laplacian growth models have been rigorously proven to coincide.
The fact that the abelian sandpile cluster on $SG$ is a ball is not entirely trivial, but nonetheless can be established easily via induction.
On other lattices and graphs the sandpile limit shape problem is largely open.


\subsection{Key ideas behind the proofs}

Our proof methods combine ideas and tools from analysis, probability, algorithms, combinatorics, algebra, and geometry.
For the reader's convenience we summarize the key elements of our proofs in this subsection.

\subsubsection{From divisible sandpiles to IDLA and rotor-router aggregation}
\label{sec:harmonicmeasure}

Let $(X_t)_{t\geq 0}$ be a continuous-time random walk on $G$ with infinitesimal generator $\Delta$, and $\mathbb{P}_o$ denote its law started at $o$.
Given a bounded subset $D \ni o$ of $V(G)$ which has boundary $\partial D$, one would like to estimate the hitting distribution of $X_t$ on $\partial D$ as it exits $D$.
This is known as the \textbf{harmonic measure} $\nu$ on $\partial D$:
\[ \nu(S) = \mathbb{P}_o[X_\tau \in S]  \qquad (S\subset \partial D)\] 
where $\tau=\inf\{t>0: X_t\in \partial D\}$.

In Laplacian growth models, the first particle occupies $o$, and inductively the $m$th particle is launched from $o$ and occupies the first vertex upon exiting the cluster $\Omega(m-1)$ formed by the first $m-1$ particles.
So the analysis involves the harmonic measure on the boundary $\partial \Omega(m-1)$, which evolves with $m$.
We need to make an educated guess of $\Omega(m)$ at special values of $m$, for example, when $m=|B_o(n)|$.

The strategy suggested by Levine and Peres \cite{LevinePeres09} is to first solve the divisible sandpile problem, \emph{i.e.,} the variational problem
\begin{align}
\label{eq:DSvar}
u_\infty(x) = \inf\left\{ w(x) ~|~ w: V(G)\to [0,\infty) \text{ satisfies } m\mathbbm{1}_o + \Delta w \leq 1\right\},
\end{align}
also known as the \emph{least action principle} for divisible sandpiles. Here $u_\infty$ is the divisible sandpile \textbf{odometer function} associated with the initial configuration $m\mathbbm{1}_o$, $\sigma_\infty= m\mathbbm{1}_o + \Delta u_\infty$ is the final configuration, and the divisible sandpile cluster $D(m)$ is the support of $u_\infty$.
See \cite{LevinePeres17}*{\S2} for a discussion.

In \cite{HSH17} Huss and Sava-Huss solved the divisible sandpile problem on $SG$.
Specifically they showed that at $m=\bi{n}$, the odometer function $u_\infty$ is the unique solution of a Dirichlet problem on the ball $D(\bi{n})=B_o(n)$ with boundary $\partial_I B_o(n)$.
See \cite{HSH17}*{Theorem 4.2} or Lemma \ref{lem:divisibleodometer} below.

The key philosophy to follow is that the divisible sandpile cluster gives a very good approximation of the bulk of the cluster in rotor-router aggregation and in IDLA.
This is best illustrated by the exact algorithm of Friedrich and Levine \cite{FL13} for fast simulations of large-scale growth models, see Algorithm \ref{alg:FL} below. 
In this algorithm, one can input any odometer function to produce an (incorrect) configuration, and then correct it by successive firings and unfirings and reverse cycle-popping.
By using the divisible sandpile odometer $u_\infty$ as the input odometer, the ensuing error corrections predominantly involve rotors near the boundary of the putative cluster.

This idea, combined with the fact that the harmonic measure on spheres in $SG$ is uniform, is central to our proof of the rotor-router shape Theorem \ref{thm:RRA}. At $m=\bi{n}$, we use the corresponding divisible sandpile odometer $u_\infty$ as the input to the Friedrich-Levine algorithm, and obtain an outer bound for the rotor-router cluster $\mathcal{R}(\bi{n})$. 
We then take advantage of the structure of $SG$ and the constancy of $u_\infty$ along the cluster boundary $\partial_I B_o(n-1)$ to carry out the error corrections exactly, leading to the inner bound.

Computationally this algorithm can also be implemented for IDLA, but analytically it appears not as useful. Indeed, one has to take into account randomization of rotors in calculating the \emph{random} approximate odometer, and then carry out the \emph{random} error corrections, neither of which is straightforward to analyze. 
To our best knowledge, the best techniques for analyzing IDLA are still based on those in \cites{LBG92, Lawler95}, combined with properties of random walks on graphs (Green's function estimates, elliptic Harnack inequality). 
In particular one needs to prove a mean-value inequality for the Dirichlet Green's function over balls, which can be obtained from solving the divisible sandpile problem.
In \cite{IDLASG} the first-named author, Huss, Sava-Huss, and Teplyaev adopted this idea to prove the inner bound of the IDLA cluster (Proposition \ref{prop:IDLAShapeThm}); see \cite{IDLASG}*{Section 3.1}.
The inner bound was then used to prove a matching outer bound using arguments similar to \cite{DCLYY} .

\subsubsection{Abelian sandpile growth}

The analysis of the abelian sandpile model is carried out differently from the other three growth models.
Since the model is predicated upon the \textbf{integrality} of the sandpile configuration (or height function), $\eta: V(G) \to \mathbb{N}_0$, we cannot directly apply tools from analysis of real-valued functions on fractals \cites{BarlowStFlour, KigamiBook, StrichartzBook}.
Instead, we analyze integer-valued functions on subgraphs of $SG$ endowed with suitable boundary conditions (sinks),
which brings us to the notion of a \textbf{sandpile group}.
For a modern introduction to this subject, see the excellent surveys \cite{Jarai} and \cite{PPW13}, as well as references therein.

\emph{A quick primer on the sandpile group.}
Let $G=(V \cup \{s\},E)$ be a finite, connected, undirected graph with a distinguished vertex $s$ (or possibly a set of distinguished vertices identified together) called the \textbf{sink}.
A sandpile configuration on $G$ is a map $\eta: V\to \mathbb{N}_0$. 
\textbf{Toppling} the configuration $\eta$ at the vertex $x\in V$ produces the new configuration $\eta' = \eta + \Delta'_G(x,\cdot)$, where $\Delta'_G$ is the combinatorial graph Laplacian:
\begin{align}
\label{def:Lap}
\Delta'_G(x,y) = 
\left\{\begin{array}{ll} -\deg_G(x), & \text{if } x=y \in V, \\ \mathsf{N}_{xy}, &\text{if } x\neq y,~x,y\in V. \end{array}\right.
\end{align}
Here $\deg_G(x)$ is the degree of the vertex $x$ in $G$.
Note that the sink $s$ plays a distinguished role in that chips that fall into $s$ are lost and do not return to $V$.
Accordingly the Laplacian $\Delta_G'$ is endowed with Dirichlet boundary condition on $s$, and is notated differently from the aforementioned Laplacian $\Delta_G$.
(That said, in our proofs to follow, it is necessary to keep track of the number of chips received by the sink after stabilization.) 

A configuration $\eta$ is \textbf{stable} if $\eta(x)<\deg_G(x)$ for all $x\in V$.
Denote $\Omega_G$ as the set of all stable configurations on $G$.
If $\eta$ is unstable, we can stabilize it by executing successive (legal) topplings until it reaches a unique stable configuration $\eta^\circ \in \Omega_G$.
This is guaranteed by the existence of the sink $s$.
Let us define the binary operation of pointwise addition of two stable configurations followed by stabilization
\[
\oplus: \Omega_G \times \Omega_G \to \Omega_G, \quad \eta \oplus \xi = (\eta+\xi)^\circ.
\]
This makes $(\Omega_G, \oplus)$ into a commutative monoid.

Now define a Markov chain on $\Omega_G$ with transitions
\[
\eta \to \eta\oplus \mathbbm{1}_x \quad \text{ with probability } p(x),
\]
where $p(x)>0$ for all $x\in V$ and $\sum_{x\in V}p(x)=1$.
Using the standard Markov chain language, we say that $\eta\in \Omega_G$ is \textbf{recurrent} if starting from $\eta$, the Markov chain returns to $\eta$ with probability $1$. The following facts are well-known \cite{Jarai}*{\S2}: there is exactly one recurrent communication class $\mathcal{R}_G$ in $\Omega_G$, and that $\eta\in \mathcal{R}_G$ if and only if
for any sandpile $\sigma$, there exists a sandpile $\zeta$ such that $\eta= \sigma\oplus\zeta$. One may check membership in $\mathcal{R}_G$ using the \textbf{burning test} of Dhar and Majumdar \cites{Dhar90, MajumdarDhar}, see \emph{e.g.\@} \cite{Jarai}*{\S4.1} for description of the \textbf{burning bijection} between recurrent configurations in $\mathcal{R}_G$ and spanning trees on $G$ rooted at $s$.

Taking $\mathbb{Z}^V$ as an abelian group, the integer row span $\mathbb{Z}^V \Delta_G'$ of $\Delta_G'$ forms a subgroup of $\mathbb{Z}^V$.
Put in another way, we define an equivalence relation on $\mathbb{Z}^V$ by declaring that 
\[
\xi \sim \zeta \quad \Longleftrightarrow \quad \xi-\zeta \in \mathbb{Z}^V \Delta_G',
\]
that is, two configuratons are equivalent if one can be obtained from the other via successive (possibly illegal) topplings.
The equivalence classes under $\sim$ form an abelian group $K_G:= \mathbb{Z}^V / \mathbb{Z}^V \Delta_G'$, which is called the \textbf{sandpile group} of $G$.
Each equivalence class in $K_G$ corresponds to one and exactly one recurrent configuration in $\mathcal{R}_G$.
In other words, $(\mathcal{R}_G, \oplus)$, which is the minimal ideal of $(\Omega_G, \oplus)$, forms an abelian group which is isomorphic to $K_G$.
As a finite abelian group, $K_G$ can be expressed as the direct sum of cyclic groups.
For a systematic discussion of the sandpile group, including its computation via the Smith normal form of $\Delta_G'$, see \cites{Biggs, Lorenzini}. 

\emph{Dhar's multiplication by identity test and its application.}
The next result is due to Dhar \cite{Dhar90}.
\begin{lemma}[Multiplication by identity test]
\label{lem:mit}
Let $\eta \in \mathcal{R}_G$. Then $\eta \oplus \sum_{y\in V} \mathsf{N}_{sy} \mathbbm{1}_y = \eta$, and each vertex topples exactly once upon stabilization.
\end{lemma}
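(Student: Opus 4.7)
The plan is to analyze the odometer $u:V\to\mathbb{N}_0$ associated with stabilizing $\xi:=\eta+\sum_{y\in V}\mathsf{N}_{sy}\mathbbm{1}_y$ and to show that $u\equiv 1$ on $V$. The key algebraic input is that the rows of the full combinatorial Laplacian (including the sink $s$) sum to zero, so restricting to $V$ yields $\sum_{y\in V}\Delta'_G(y,x)=-\mathsf{N}_{xs}$ for every $x\in V$. Consequently, if every vertex topples exactly once, the net chip change at $x$ is $-\mathsf{N}_{xs}$, which exactly cancels the $\mathsf{N}_{sx}=\mathsf{N}_{xs}$ chips added at $x$, returning the configuration to $\eta$. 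Since $\eta\in\mathcal{R}_G\subset\Omega_G$ is stable, this shows that the constant vector $\mathbbm{1}_V$ is a valid stabilizing firing count and that its associated stabilized state is $\eta$ itself.

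Next I would establish the upper bound $u\leq\mathbbm{1}_V$ via the abelian least-action principle: if $w:V\to\mathbb{N}_0$ produces a stable configuration when each $y$ is toppled $w(y)$ times (in any order, legal or illegal), then the true odometer satisfies $u\leq w$ pointwise. Taking $w\equiv 1$ yields $u(x)\leq 1$ for every $x$, so each vertex topples at most once upon stabilization of $\xi$.

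For the matching lower bound, I would invoke the burning test characterization of $\mathcal{R}_G$: recurrence of $\eta$ is equivalent to the existence of an ordering $v_1,v_2,\ldots,v_{|V|}$ of $V$ along which the burning algorithm succeeds, namely
\[
\eta(v_k)\geq \deg_G(v_k)-\mathsf{N}_{sv_k}-\sum_{j<k}\mathsf{N}_{v_jv_k}\qquad (1\leq k\leq|V|).
\]
Adding the $\mathsf{N}_{sv_k}$ chips at $v_k$ in forming $\xi$ converts this inequality precisely into the legal-firing condition for $v_k$ in $\xi$ once $v_1,\ldots,v_{k-1}$ have already fired. Thus firing the vertices in the burn order is a legal toppling sequence in which every vertex fires at least once, so $u(x)\geq 1$. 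Combining with the upper bound yields $u\equiv 1$, and the algebraic identity from step one then delivers $\eta\oplus\sum_{y\in V}\mathsf{N}_{sy}\mathbbm{1}_y=\eta$ with each vertex toppling exactly once.

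I expect the main subtlety to lie in the third step, namely the precise matching between the burning criterion and the legal-firing condition, particularly in the presence of multi-edges where one must track edge multiplicities $\mathsf{N}_{yz}$ rather than mere adjacencies, and where one must verify that successive firings remain legal in the \emph{cumulative} configuration rather than just pairwise. This matching is essentially Dhar's original insight, and is what makes the lemma simultaneously a structural statement identifying the additive identity in $\mathcal{R}_G\oplus\left(\sum_y \mathsf{N}_{sy}\mathbbm{1}_y\right)$-orbits and a practical algorithmic test for recurrence that will be repeatedly exploited later in the paper.
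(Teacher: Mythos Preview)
Your proof is correct and follows the same core idea as the paper's: both rest on the algebraic identity $\sum_{x\in V}\Delta'_G(x,y) = -\mathsf{N}_{sy}$, which shows that toppling every vertex once exactly cancels the added chips and returns $\eta$. The paper's argument stops at that computation and leaves the ``exactly once'' claim implicit in the recurrence hypothesis; you make this explicit by sandwiching the odometer between the least-action upper bound $u\leq\mathbbm{1}_V$ and the burning-test lower bound $u\geq\mathbbm{1}_V$, which is a welcome clarification but not a genuinely different route.
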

\begin{proof}
Using \eqref{def:Lap} we have that for each $y\in V$,
\[
\mathsf{N}_{sy} + \sum_{x\in V}\Delta_G'(x,y) = \mathsf{N}_{sy} + \left( \sum_{\substack{x\in V\\x\neq y}} \mathsf{N}_{xy} - \deg_G(y) \right)= 0.
\]
This implies that on a recurrent configuration $\eta$, after we add $\mathsf{N}_{sy}$ chips to each vertex $y\in V$, and then topple once at every vertex, the same configuration $\eta$ is returned. 
\end{proof}

We now explain how Lemma \ref{lem:mit} is applied to the state space $SG$ (the same idea also works on other nested fractal graphs, see \cite{ASMSGStr}).
Let $G_n$ be the level-$n$ pre-fractal Sierpinski gasket graph; see Figure \ref{fig:SGImage}, and observe the three corner vertices $o$ (origin), $x$, and $y$.
We denote by $\partial G_n=\{x,y\}$ the set of two non-origin corner vertices which are distance $2^n$ from $o$ in the graph metric.
The two types of sinked graphs we consider are:
$G_n^{(s)}$, where we designate $\partial G_n$ as the sink; and $G_n^{(o)}$, where we designate $o$ as the sink. 
We denote the set of recurrent configurations on $G_n^{(s)}$ (resp.\@ $G_n^{(o)}$) by $\mathcal{R}_n^{(s)}$ (resp.\@ $\mathcal{R}_n^{(o)}$).

In the process of stabilizing $m\mathbbm{1}_o$, chips will occupy the graph $G_n$ for some $n$.
Instead of stabilizing all vertices in $SG$ at once, we can use the abelian property to stabilize only the vertices in $G_n \setminus \partial G_n$, and pause any excess chips on $\partial G_n$.
(By the axial symmetry of the initial configuration, an equal number of chips will reach $x$ and $y$.)
This creates a recurrent configuration $\eta \in \mathcal{R}_n^{(s)}$ on $G_n^{(s)}$.
Now we topple at the two cut vertices $x$ and $y$ in $\partial G_n$---think of them as sinks of $G_n^{(s)}$, but sources for producing the ``tail'' configuration in $(G_n)^c$---which leads to 1 chip being added to each vertex adjacent to $\partial G_n$.
By Lemma \ref{lem:mit}, we can then topple at every vertex in $G_n\setminus \partial G_n$, and this results in no change in the configuration $\eta$ on $G_n^{(s)}$.
Overall, $x$ and $y$ each loses 2 chips to the vertices in the ``tail'' $(G_n)^c$.
Continue this process until a stable configuration is reached.

In summary, we obtain the fundamental diagram
\begin{align}
\label{eq:FD}
\begin{array}{lm{1.1in}l}
(m \mathbbm{1}_o)^\circ = 
&
\includegraphics[height=0.15\textwidth]{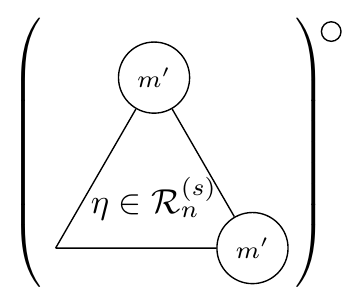}
&
\subseteq G_{n+1}.
\end{array}
\end{align}
which means that the cluster first fills $G_n$ with $m'$ chips paused at each vertex in $\partial G_n$, and then upon full stabilization the cluster fills a subset of $G_{n+1}$.
Based on \eqref{eq:FD}, we prove in Proposition \ref{prop:ball} below that the receiving set of $(m\mathbbm{1}_o)^\circ$ is always a ball $B_o(r_m)$, and for $m\geq 12$, $r_m =2^n + r_{m'-2}$.

\emph{Structural theorems on the sandpile groups of subgraphs of $SG$.}
Using the preceding ideas we obtain, via induction on $n$, a number of structural results on the sandpile groups $\mathcal{R}_n^{(s)}$ and $\mathcal{R}_n^{(o)}$.
In the diagrams to follow throughout the paper, $\bullet$ represents a sink vertex, and long arrows (or long dashed lines) are used as visual devices for indicating orientations. 

\begin{definition}
\label{def:en}
Let
\begin{tabular}{m{0.8in}lm{0.8in}l}
 \includegraphics[width=0.135\textwidth]{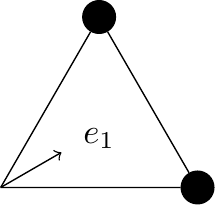}
&
$=$
&
 \includegraphics[width=0.135\textwidth]{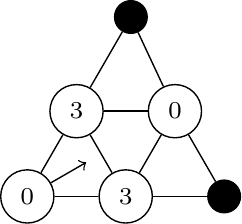}
 &,
\end{tabular}
and for each $n \geq 1$, $e_{n+1}$ is constructed by gluing three copies of $e_n$ according to the rule

\begin{center}
\begin{tabular}{m{1.3in}lm{1.3in}l}
\includegraphics[height=0.2\textwidth]{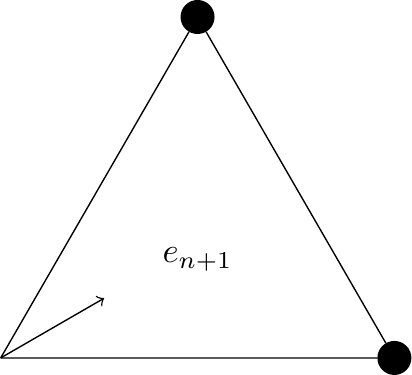}
&
$=$
&
\includegraphics[height=0.2\textwidth]{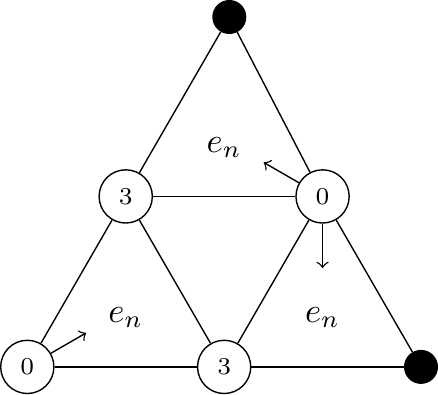}
&
.
\end{tabular}
\end{center}
\end{definition}

\begin{definition}
\label{def:Mn}
Let
\begin{tabular}{m{0.8in}lm{0.8in}l}
 \includegraphics[width=0.135\textwidth]{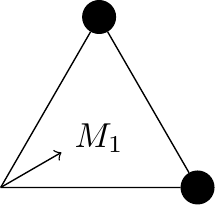}
&
$=$
&
 \includegraphics[width=0.135\textwidth]{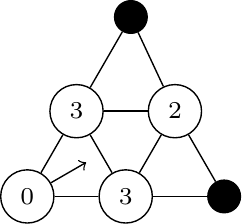}
 &,
\end{tabular}
and for each $n \geq 1$, $M_{n+1}$ is constructed by gluing three copies of $M_n$ according to the rule

\begin{center}
\begin{tabular}{m{1.3in}lm{1.3in}l}
\includegraphics[height=0.2\textwidth]{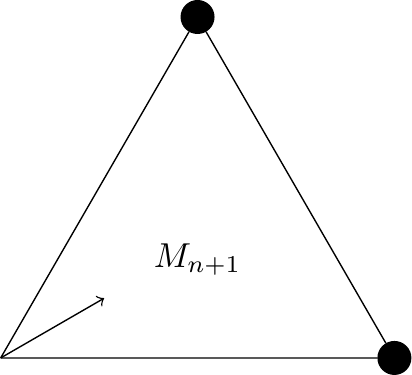}
&
$=$
&
\includegraphics[height=0.2\textwidth]{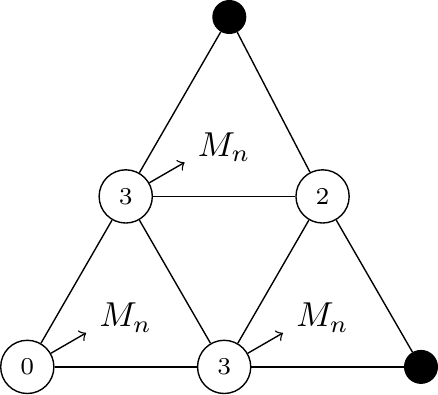}
&
.
\end{tabular}
\end{center}

Let $e_n^{(o)}$ denote the configuration 
\begin{tabular}{m{0.8in}}
 \includegraphics[width=0.135\textwidth]{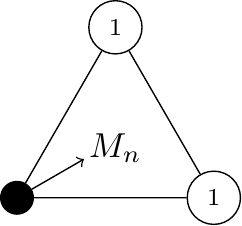}
\end{tabular}
.
\end{definition}

\begin{theorem}
\label{thm:groupSG}
We have the following identities:
\begin{enumerate}
\item \label{item:identity} $e_n$ and $e_n^{(o)}$ are, respectively, the identity element of $(\mathcal{R}_n^{(s)}, \oplus)$ and $(\mathcal{R}_n^{(o)}, \oplus)$.
\item \label{item:tops} For every $\eta \in \mathcal{R}_n^{(s)}$, $\eta \oplus (2\cdot 3^n) \mathbbm{1}_o = \eta$.
\item \label{item:topo} Let $\partial G_n=\{x,y\}$. For every $\eta\in \mathcal{R}_n^{(o)}$,
\[
\eta \oplus 3^n(\mathbbm{1}_x + \mathbbm{1}_y) = \eta,
\quad
\eta  \oplus 3^{n+1}\mathbbm{1}_x = \eta,
\quad\text{and}\quad
\eta  \oplus 3^{n+1}\mathbbm{1}_y = \eta.
\]
\end{enumerate}
\end{theorem}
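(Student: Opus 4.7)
The plan is to establish Theorem \ref{thm:groupSG} by induction on $n$, proving all three parts together at each level. The base case $n=1$ is a direct finite verification: on each of $G_1^{(s)}$ and $G_1^{(o)}$, which have at most five non-sink vertices, one solves the linear system $\Delta'_G u = -f$ for each relevant source $f$ (\emph{viz.\@} $6\mathbbm{1}_o$ on $G_1^{(s)}$ and $3(\mathbbm{1}_x+\mathbbm{1}_y)$, $9\mathbbm{1}_x$, $9\mathbbm{1}_y$ on $G_1^{(o)}$), verifies that the solutions are nonnegative integer vectors, and checks that $e_1$ and $e_1^{(o)}$ pass the burning test and are idempotent.

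For the inductive step from $n$ to $n+1$, the main tool is the self-similar decomposition of $G_{n+1}$ into three sub-gaskets $\Delta_0,\Delta_1,\Delta_2$, each isomorphic to $G_n$, meeting only at the cut vertices $p_1=\Delta_0\cap\Delta_1$, $p_2=\Delta_0\cap\Delta_2$, and $p_3=\Delta_1\cap\Delta_2$. I prove parts (2) and (3) at level $n+1$ by constructing explicit integer firing vectors glued from level-$n$ firing vectors on the three sub-gaskets. For part (2) I trace chip flow from $o$ outward: on $\Delta_0$ (with virtual sinks $\{p_1,p_2\}$), three applications of the level-$n$ part (2) firing vector absorb $2\cdot 3^{n+1}$ chips at $o$ and, by axial symmetry, deposit $3^{n+1}$ chips at each of $p_1,p_2$; on $\Delta_1$ (viewed as $G_n^{(o)}$ with sink at the top corner $x$), the third identity of level-$n$ part (3) absorbs the $3^{n+1}$ chips at $p_1$ and routes them to $x$; symmetrically on $\Delta_2$. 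Part (3) at level $n+1$ is proved by the reverse flow, with chips from $x,y$ passing through $\Delta_1,\Delta_2$ (viewed as $G_n^{(o)}$ with sinks $p_1,p_2$) to the cut vertices and then through $\Delta_0$ (with sink $o$, via three applications of the level-$n$ part (3) symmetric identity) to the true sink $o$. The asymmetric identities $3^{n+2}\mathbbm{1}_x$ and $3^{n+2}\mathbbm{1}_y$ at level $n+1$ follow by using only the third identity of level-$n$ part (3) on the relevant sub-gaskets. Given parts (2) and (3) at level $n+1$, I deduce part (1) as follows: recurrence of $e_{n+1}$ and $e_{n+1}^{(o)}$ is established by the burning test executed sub-gasket by sub-gasket, using the recursive definition and the inductive recurrence of $e_n,e_n^{(o)}$; idempotence $e_{n+1}\oplus e_{n+1}=e_{n+1}$ is shown by summing the two copies and stabilizing, with inductive idempotence $e_n\oplus e_n=e_n$ collapsing each sub-gasket locally and any residual overflow at the cut vertices---necessarily a period multiple---absorbed using parts (2) and (3) at level $n+1$.

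The main obstacle is maintaining global consistency of the glued firing vector at the cut vertices $p_1,p_2,p_3$, which have effective degree $2$ in each sub-gasket view but degree $4$ in $G_{n+1}$. A firing at $p_1$ prescribed by the $\Delta_1$ view leaks chips into $\Delta_0$ that the $\Delta_0$-firing vector does not anticipate, since that vector treats $p_1$ as a sink. I compensate by augmenting the $\Delta_0$-firing with a uniform additive constant equal to the $p_1$-firing count, which corresponds to repeated applications of Dhar's multiplication-by-identity test (Lemma \ref{lem:mit}) on $\Delta_0^{(s)}$; this absorbs the leakage at no group-theoretic cost. A symmetric correction handles the $\Delta_2$--$\Delta_0$ interface at $p_2$, while the leakage at $p_3$ is automatically consistent, because firing once at $p_3$ in $G_{n+1}$ simultaneously realises the firing expected by both the $\Delta_1$ and $\Delta_2$ views. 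Verifying the local Laplacian accounting at $o,p_1,p_2,p_3$ and at interior vertices of each sub-gasket confirms that the constructed integer vector $U$ satisfies $\Delta'_{G_{n+1}^{(s)}} U = -2\cdot 3^{n+1}\mathbbm{1}_o$, with the analogous verifications for part (3) following the same template.
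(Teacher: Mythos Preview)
Your gluing-of-firing-vectors approach is genuinely different from the paper's diagrammatic stabilizations, and for part \eqref{item:tops} and the \emph{symmetric} identity $\eta\oplus 3^{n+1}(\mathbbm{1}_x+\mathbbm{1}_y)=\eta$ in part \eqref{item:topo} it works: your Dhar-correction at $p_1,p_2$ and the symmetry argument at $p_3$ do produce an integer vector $U$ with $\Delta'_{G_{n+1}^{(s)}}U=-2\cdot 3^{n+1}\mathbbm{1}_o$. However there are two genuine gaps.

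\textbf{The asymmetric identities in part \eqref{item:topo}.} Your claim that $3^{n+2}\mathbbm{1}_x\equiv 0$ on $G_{n+1}^{(o)}$ ``follows by using only the third identity of level-$n$ part \eqref{item:topo} on the relevant sub-gaskets'' does not go through. If you route chips $x\to p_1\to o$ via $\Delta_1^{(p_1)}$ and $\Delta_0^{(o)}$, the non-source corner $p_3$ in the $\Delta_1$-view fires a nonzero number of times (already at $n=1$ one computes $u_1(p_3)=5$ when $9$ chips are placed at the source corner), while the $\Delta_2$-view, receiving no input, prescribes $p_3$ to fire zero times. Your $p_3$-consistency argument relied on the equality $u_1(p_3)=u_2(p_3)$ coming from axial symmetry; here the symmetry is broken and the two counts disagree, so the glued vector fails $\Delta' U=-3^{n+2}\mathbbm{1}_x$ at $p_3$. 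A Dhar shift on $\Delta_2$ cannot repair this since $\Delta_2$ has no global sink. The paper handles this by a separate, rather intricate induction (the stabilization \eqref{1top}, Figure~\ref{fig:sideind}), and something of comparable strength is needed here.

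\textbf{Idempotence in part \eqref{item:identity}.} Your assertion that the overflow at the cut vertices after collapsing each sub-gasket is ``necessarily a period multiple'' is unjustified and in fact false: stabilizing $2e_n$ on $G_n^{(s)}$ sends $\tfrac{3}{2}(3^n-1)$ chips to each sink, so each cut point $p_i$ collects $3^{n+1}-3$ chips, which is not a multiple of $3^n$ or $3^{n+1}$. Moreover, parts \eqref{item:tops} and \eqref{item:topo} at level $n+1$ absorb chips placed at the \emph{corner} vertices $o,x,y$ of $G_{n+1}$, not at the cut vertices $p_1,p_2,p_3$, so they cannot absorb this overflow directly. The paper closes this step by invoking \eqref{1top} on the outer sub-gaskets (Figure~\ref{fig:idind}); your argument needs an analogous mechanism to move the specific excess $3^{n+1}-3$ from each $p_i$ to the true sinks.
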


See Figure \ref{fig:identities} for pictures of the identity elements, and observe the tiling construction.

\begin{figure}
\begin{tabular}{M{0.35\textwidth} M{0.35\textwidth} M{0.07\textwidth}}
\includegraphics[width=0.3\textwidth]{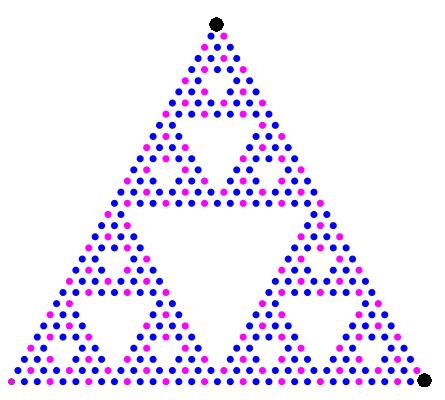}
&
\includegraphics[width=0.3\textwidth]{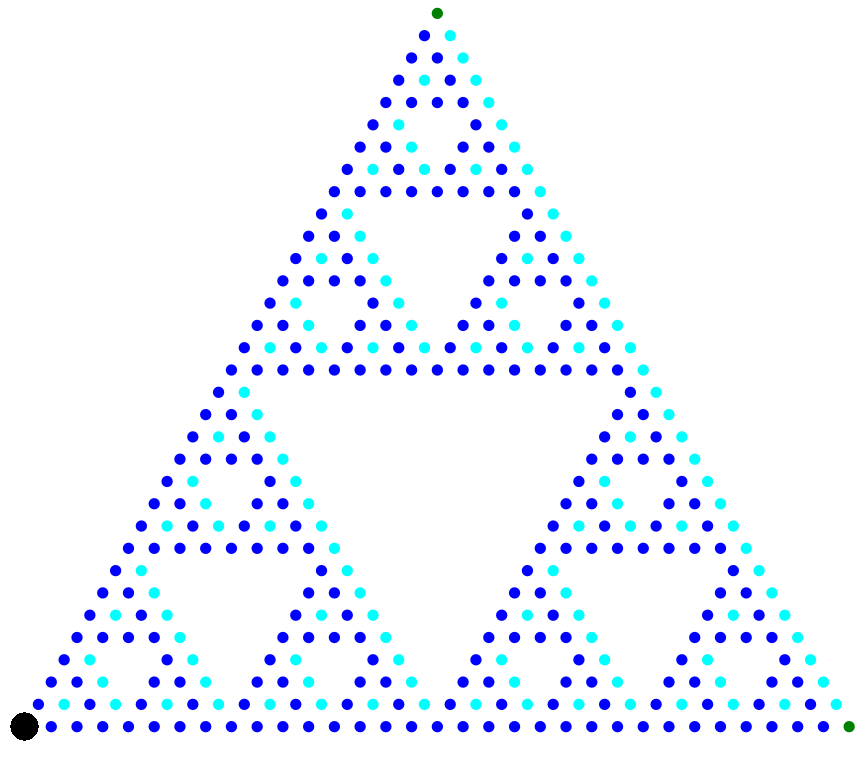}
&
\includegraphics[width=0.05\textwidth]{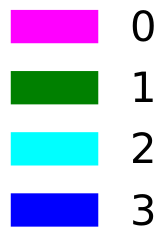}
\end{tabular}
\caption{The identity elements $e_5 \in \mathcal{R}_5^{(s)}$ and $e_5^{(o)} \in \mathcal{R}_5^{(o)}$.}
\label{fig:identities}
\end{figure}

\begin{remark}
In \cite{ASMSGStr} the authors numerically obtained the identity elements $e_n$ for $n=1,2,3$ via an application of Lemma \ref{lem:mit} to $\partial G_n$. They also computed the sandpile group of $G_n$ for $n=1,2,3$, with all three corner vertices $\{o,x,y\}$ identified as sink.
\end{remark}

\begin{remark}
Since Theorem \ref{thm:groupSG} suffices for our purposes, we do not pursue a full characterization of the sandpile group in the present work, though it is of interest to further investigate the underlying self-similar structure.
\end{remark}

\emph{Radial explosions, periodicity, and the exact solution of the sandpile growth problem.}
A somewhat surprising feature on $SG$ is that the configurations $M_n$ and $e_n$ appear in the stabilizations of $m\mathbbm{1}_o$ periodically in $m$.
We will show in \S\ref{sec:explosion} below that
\begin{center}
\begin{tabular}{rm{.85in}l}
$((4\cdot 3^n-2) \mathbbm{1}_o)^\circ = $
&
\includegraphics[height=.8in]{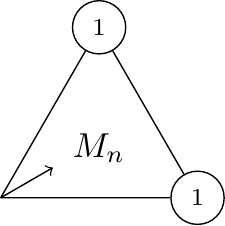}
&; and\\
$((4\cdot 3^n) \mathbbm{1}_o)^\circ = $
&
\includegraphics[height=.85in]{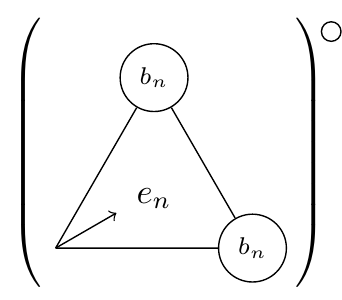}
&, where $b_n= |V(G_{n-1})|=\frac{3}{2}\left(3^{n-1}+1\right)$.
\end{tabular}
\end{center}
The proof of $M_n$ is not difficult.
The proof of $e_n$, on the other hand, is subtle, and uses the cut point structure and the axial symmetry of $SG$. See the axial reflection lemma in \S\ref{sec:reflection}.

The significance of this result is that the sandpile cluster ``explodes''---the cluster radius increments by more than $1$---at mass $4\cdot 3^n$.
Using the toppling identities stated in Theorem \ref{thm:groupSG}, we further deduce radial explosions at mass $6\cdot 3^n$, $8\cdot 3^n$, and $10\cdot 3^n$, \emph{i.e.,} the $(2\cdot 3^n)$-periodicity which is evident from Figure \ref{fig:ASMScaling}.
Details are given in \S\ref{sec:explosion}.

\begin{remark}
The $(2\cdot 3^n)$-periodicity of sandpile growth on $SG$ was already conjectured by the authors of \cite{ASMSGStr}; see their Conjecture 16, where they stated the $(4\cdot 3^n)$-periodicity on the double-sided $SG$.
The complete enumeration of radial jumps up to $n=6$, \emph{cf.\@} the right-hand diagram in Figure \ref{fig:ASMScaling}, was first obtained numerically by the second-named author.
The first-named author then discovered further patterns based on the diagram, as well as the results in \cite{ASMSGStr}, to write down the proofs of the periodicity.
\end{remark}

To describe all the other radial jumps in Figure \ref{fig:ASMScaling}, we need to identify the function $m\mapsto m'$ in the fundamental diagram \eqref{eq:FD}.
This jump function turns out to be well-defined for $m\geq 4\cdot 3^3$.

\begin{theorem}
\label{thm:tail}
For $n\geq 3$ and $m\in [4\cdot 3^n, 4\cdot 3^{n+1})$,
\begin{tabular}{lm{0.9in}}
$(m \mathbbm{1}_o)^\circ = $
&
\includegraphics[height=0.13\textwidth]{tablefig}
\end{tabular}
$\subseteq G_{n+1}$, where $m\mapsto m'$ is a piecewise constant right-continuous function which has jumps indicated in the following table:

\begin{center}
\begin{tabular}{c|c|c}
$m$ & $m'$ & \emph{Location of statement \& proof} \\ \hline
$(4+2p)\cdot 3^n$ & $b_n+ p\cdot 3^n$ & \emph{``$4$'' (Prop.\@ \ref{prop:merge})} \\ 
$(4+2p)\cdot 3^n+2$ & $(b_n+1)+p\cdot 3^n$ & \emph{``$e+2$'' (Prop.\@ \ref{prop:4+2}), ``$4\frac{4}{9}^-$'' (Prop.\@ \ref{prop:449-})} \\
$(4\frac{4}{9}+2p)\cdot 3^n$ & $2\cdot 3^{n-1}+1+p\cdot 3^n$ &\emph{``$4\frac{4}{9}$'' (Prop.\@ \ref{prop:449}), ``$4\frac{2}{3}^-$'' (Prop.\@ \ref{prop:423-})} \\
$(4\frac{2}{3}+2p)\cdot 3^n$ & $2\cdot 3^{n-1}+2+p\cdot 3^n$ & \emph{``$4\frac{2}{3}$'' (Prop.\@ \ref{prop:423}), ``$5\frac{1}{3}^-$'' (Prop.\@ \ref{prop:513-})} \\
$(5\frac{1}{3}+2p)\cdot 3^n$ & $3^n+1+p\cdot 3^n$ & \emph{``$5\frac{1}{3}$'' (Prop.\@ \ref{prop:513}), ``$6^-$'' (Prop.\@ \ref{prop:maxtopple})}
\end{tabular}
\end{center}
where $p\in \{0,1,2,3\}$, and $b_n = |V(G_{n-1})| =\frac{3}{2}(3^{n-1}+1)$. 
\end{theorem}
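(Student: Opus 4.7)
The plan is to prove Theorem \ref{thm:tail} as the compilation of a sequence of propositions (one for each row of the table, as indicated in the third column), proved by induction on $n \geq 3$ and using the fundamental diagram \eqref{eq:FD} together with Theorem \ref{thm:groupSG}. The base case $n = 3$ is verified by direct computation, matching the numerical enumeration of radial jumps obtained by the second-named author. The inductive step splits the cluster into an interior on $G_n^{(s)}$ and a tail in $G_{n+1} \setminus G_n$, coupled only through the chip counts at the cut vertices $\partial G_n = \{x, y\}$.

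The first step is to reduce the five families of $m$-values parametrized by $p \in \{0,1,2,3\}$ to the single family $p = 0$ via the $(2 \cdot 3^n)$-periodicity of Theorem \ref{thm:groupSG}(\ref{item:tops}). Adding $2 \cdot 3^n$ chips at $o$ preserves any recurrent configuration on $G_n^{(s)}$, so by conservation of chips all $2 \cdot 3^n$ added chips must flow into the sink $\partial G_n$; axial symmetry splits them evenly, depositing exactly $3^n$ extra chips at each of $x, y$. This accounts for the $+\,p \cdot 3^n$ shift in the second column of the table and reduces the problem to the five base values $m \in \{4 \cdot 3^n,\ 4 \cdot 3^n + 2,\ \tfrac{40}{9} \cdot 3^n,\ \tfrac{14}{3} \cdot 3^n,\ \tfrac{16}{3} \cdot 3^n\}$.

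For each of these I would use the abelian property to stabilize inside $G_n$ first, with $\partial G_n$ as a temporary sink. The starting point $m = 4 \cdot 3^n$ is the explosion identity of \S\ref{sec:explosion}, which gives $((4 \cdot 3^n)\mathbbm{1}_o)^\circ = e_n$ on $G_n$ with $b_n$ chips paused at $\partial G_n$, matching the first row. For $m = 4 \cdot 3^n + 2$, adding two chips at $o$ and invoking Theorem \ref{thm:groupSG}(\ref{item:identity})---which says $e_n$ is the identity of $\mathcal{R}_n^{(s)}$---restores $e_n$ on $G_n^{(s)}$ after routing one extra chip to each boundary vertex, yielding $m' = b_n + 1$. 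The remaining three jumps correspond to specific toppling cascades in the tail, whose configuration is, by the inductive hypothesis, built from self-similar copies of the smaller tiles $e_{n-1}$ and $M_{n-1}$; at each named $m$-value enough chips have accumulated at $\partial G_n$ to activate a new sub-tile, causing $m'$ to jump to the stated value. Each pair of propositions (``$X$'' and ``$X^-$'') verifies, respectively, that $m'$ takes the stated value at the jump and that it remains constant on the interval up to the next jump.

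The main obstacle is the sharp enumeration of jumps: one must rule out any unlisted jump on the intervals between tabulated values, so that $m \mapsto m'$ really is piecewise constant with jumps at exactly the five stated locations. This requires a detailed inductive description of the tail in terms of $e_{n-1}$, $M_{n-1}$, and their axial reflections, combined with the axial symmetry of $SG$ (invoked through the axial reflection lemma of \S\ref{sec:reflection}) to constrain chip flow into $G_{n+1} \setminus G_n$ to be symmetric through $x$ and $y$. Once the tail configuration is pinned down at each tabulated $m$, constancy of $m'$ on the intervening intervals reduces to careful but routine applications of Dhar's identity test (Lemma \ref{lem:mit}) to suitable subgraphs of the tail.
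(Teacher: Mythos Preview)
Your overall architecture---reduce to $p=0$ via the $(2\cdot 3^n)$-periodicity, then handle the five base values by induction using the fundamental diagram and self-similar tiles---matches the paper's strategy. However, there are two genuine gaps.

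First, your argument for ``$e+2$'' is incorrect. You write that invoking Theorem \ref{thm:groupSG}\eqref{item:identity} ``restores $e_n$ on $G_n^{(s)}$ after routing one extra chip to each boundary vertex.'' But the identity property only says $e_n \oplus \eta = \eta$ for recurrent $\eta$; it tells you nothing about $e_n \oplus 2\mathbbm{1}_o$. In fact the stabilization of $e_n + 2\mathbbm{1}_o$ does \emph{not} return $e_n$: the configuration on $G_n^{(s)}$ changes (traps appear, see Figure \ref{fig:traps}). What is needed is the observation that in $e_n$ there is a unique shortest path from $o$ to each sink in $\partial G_n$---the Sierpinski arrowhead curve---along which every vertex carries $3$ chips, so that a single toppling at $o$ triggers a chain reaction sending exactly one chip to each sink.

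Second, and more seriously, you treat the constancy statements (the ``$X^-$'' propositions) as ``careful but routine applications of Dhar's identity test.'' The paper explicitly flags ``$e+2$'' and ``$4\frac{4}{9}^-$'' as exceptions where the block renormalization / tiling idea does \emph{not} apply. Proving ``$4\frac{4}{9}^-$''---that no second chip reaches $\partial G_n$ until mass $4\frac{4}{9}\cdot 3^n$---requires analyzing the landscape of traps created along the arrowhead curve after the ``$e+2$'' toppling, together with a comparison lemma (Lemma \ref{lem:comparison}) showing that sending chips down the axial direction is ``easier'' than the oblique direction. This is a genuinely different mechanism from the tile-gluing inductions used elsewhere, and your proposal contains no hint of it. Without this ingredient the interval $[4\cdot 3^n+2,\,4\frac{4}{9}\cdot 3^n)$ cannot be closed.
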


\begin{remark}
Regarding the statements in the right-most column,
``$a^-$'' and ``$a$'' stand for, respectively, ``just below the jump at $a$'' and ``at the jump at $a$.''
All statements refer to the case $p=0$. Once proved, we can establish the case $p\in \{1,2,3\}$ easily. 
We have indicated above where these results are stated and proved in \S\ref{sec:enumeration} and \S\ref{sec:enumeration2}.
\end{remark}

The proof of Theorem \ref{thm:tail} is fairly technical, and relies mainly upon the identification of sandpile tiles which, when glued together in a self-similar and symmetric way, produce the sandpile configuration on $G_n^{(s)}$, see Figure \ref{fig:tailpattern}.
In the physics parlance this procedure can be considered an \textbf{exact block renormalization}, where the blocks are the sandpile tiles. To our best knowledge this may be the first time an abelian sandpile problem is exactly solved using renormalization-type arguments. 
Detailed proofs are provided in \S\ref{sec:enumeration}.

That said, there are two exceptions, ``$e+2$'' and ``$4\frac{4}{9}^-$,'' where the block renormalization idea does not immediately apply. 
To tackle these two cases, we observe that in the identity element $e_n$, there is a unique shortest path connecting $o$ to each vertex in $\partial G_n$ along which every vertex carries $3$ chips, see Figure \ref{fig:4+2path}.
This path is the concatenation of the first $n$ iterations of the \textbf{Sierpinski arrowhead curve} \cite{Sierpinski}, and fills up half of $SG$.
Toppling once at $o$ triggers a chain of topplings down the path, and results in each vertex in $\partial G_n$ receiving an extra chip, which proves ``$e+2$.''
We then analyze the landscape of ``traps'' resulting from ``$e+2$'' along the path, and show that it requires adding at least $4\cdot 3^{n-2}$ chips at $o$ to deliver extra chips to $\partial G_n$, thereby establishing ``$4 \frac{4}{9}^-$.''
See \S\ref{sec:enumeration2} for details.

\begin{remark}[A take-home message]
In essence, we have just described two ways in which \emph{fractals} manifest themselves in the ``sandpile on a \emph{fractal}'': via self-similar tilings (block renormalization) and via the Sierpinski arrowhead curve.
It will be helpful to examine Figures \ref{fig:tailpattern} and \ref{fig:4+2path} while reading the proofs in \S\ref{sec:enumeration} and \S\ref{sec:enumeration2}.
We suspect the self-similar tiling idea should be applicable to sandpile growth on other fractal graphs.
But the appearance of the Sierpinski arrowhead curve in the identity element $e_n$, and the role it plays in the proofs below, seems to be unique to $SG$.

As a historical aside, Sierpinski introduced the arrowhead curve  \cite{Sierpinski} as a model of space-filling curves (which differ from those constructed by Peano \cite{Peano} and Hilbert \cite{Hilbert}, respectively) prior to his eponymous gasket.
It is a surprising coincidence that we rediscovered his arrowhead curve through the abelian sandpile identity element on his gasket.
\end{remark}

\renewcommand*{\arraystretch}{2}
\begin{figure}
\rotatebox{90}{
\begin{tabular}{>{\centering\arraybackslash}m{0.12in}|>{\centering\arraybackslash}m{1.1in} >{\centering\arraybackslash}m{1.1in}}
$m$ & $n=3$ & $n=4$ \\ \hline
\rotatebox{90}{$4\cdot 3^n$}
& \includegraphics[width=0.18\textwidth]{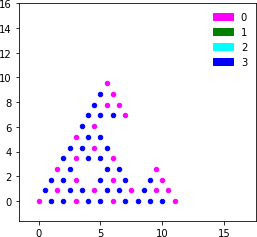}
& \includegraphics[width=0.18\textwidth]{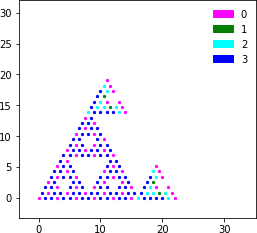}\\
\rotatebox{90}{$(4\cdot 3^n)+2$}
& \includegraphics[width=0.18\textwidth]{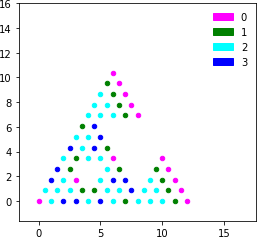} 
& \includegraphics[width=0.18\textwidth]{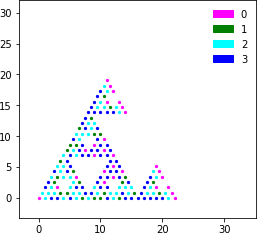}\\
\rotatebox{90}{$(4\frac{4}{9}\cdot 3^n)-2$ }
& \includegraphics[width=0.18\textwidth]{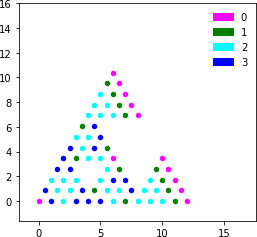}
& \includegraphics[width=0.18\textwidth]{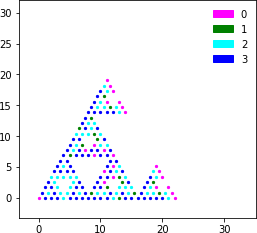}
\end{tabular}
\hspace{10pt}
\begin{tabular}{>{\centering\arraybackslash}m{0.12in}|>{\centering\arraybackslash}m{1.1in} >{\centering\arraybackslash}m{1.1in}}
$m$ & $n=3$ & $n=4$ \\ \hline
\rotatebox{90}{$4\frac{4}{9}\cdot 3^n$} 
& \includegraphics[width=0.18\textwidth]{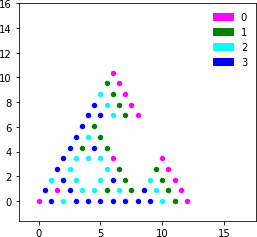} 
& \includegraphics[width=0.18\textwidth]{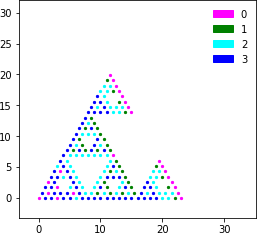}\\
\rotatebox{90}{$(4\frac{2}{3}\cdot 3^n)-2$} 
& \includegraphics[width=0.18\textwidth]{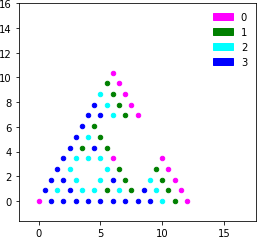} 
& \includegraphics[width=0.18\textwidth]{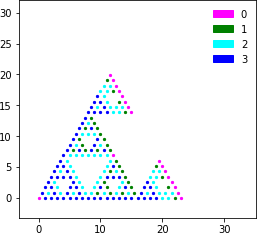}\\
\rotatebox{90}{$4\frac{2}{3}\cdot 3^n$} 
& \includegraphics[width=0.18\textwidth]{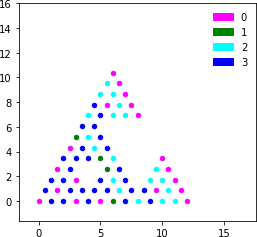} 
& \includegraphics[width=0.18\textwidth]{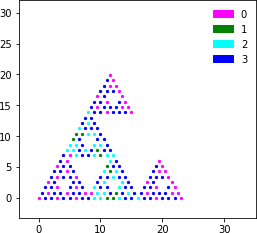}
\end{tabular}
\hspace{10pt}
\begin{tabular}{>{\centering\arraybackslash}m{0.12in}|>{\centering\arraybackslash}m{1.1in} >{\centering\arraybackslash}m{1.1in}}
$m$ & $n=3$ & $n=4$ \\ \hline
\rotatebox{90}{$(5\frac{1}{3}\cdot 3^n)-2$} 
& \includegraphics[width=0.18\textwidth]{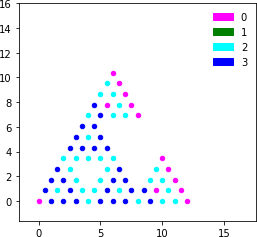} 
& \includegraphics[width=0.18\textwidth]{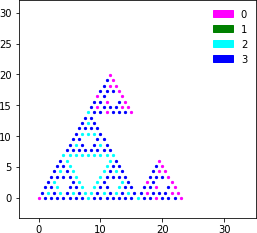}\\
\rotatebox{90}{$5\frac{1}{3}\cdot 3^n$} 
& \includegraphics[width=0.18\textwidth]{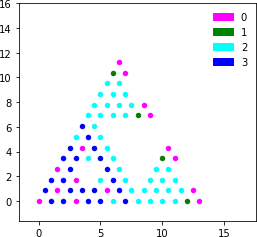} 
& \includegraphics[width=0.18\textwidth]{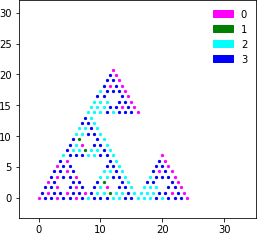}\\
\rotatebox{90}{$6\cdot 3^n-2$} 
& \includegraphics[width=0.18\textwidth]{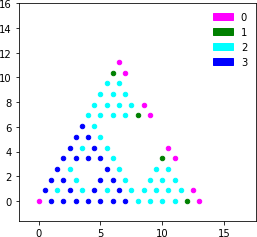} 
& \includegraphics[width=0.18\textwidth]{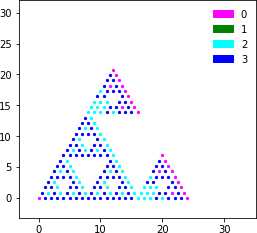}
\end{tabular}
}
\rotatebox{90}{
\centering
\begin{tabular}{c}
\includegraphics[trim=-8cm 0 0 -1cm, width=\textwidth]{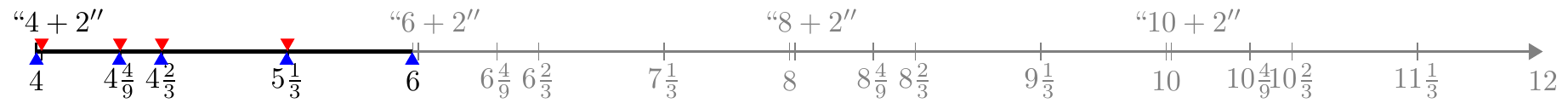}
\end{tabular}
}
\caption{The ``periodic table'' of single-source sandpile on $SG$: $(m\mathbbm{1}_o)^\circ$ at special values of $m$.
We only present the patterns for $m \in [4\cdot 3^n, 6\cdot 3^n)$.
Using the $(2\cdot 3^n)$ periodicity (Theorem \ref{thm:groupSG}, Item \eqref{item:tops}), we can infer patterns for $m\in [6\cdot 3^n, 12\cdot 3^n)$, as the timeline suggests. 
}
\label{fig:tailpattern}
\end{figure}

\begin{figure}
\centering
\begin{minipage}{0.45\textwidth}
\includegraphics[width=0.9\textwidth]{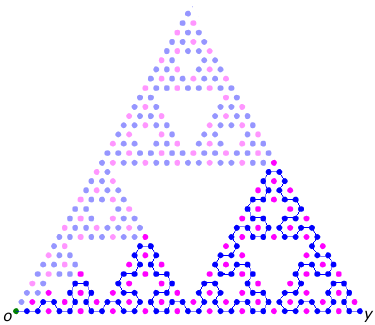}
\end{minipage}
\begin{minipage}{0.45\textwidth}
\begin{tabular}{cc}
\includegraphics[page=1,width=0.45\textwidth]{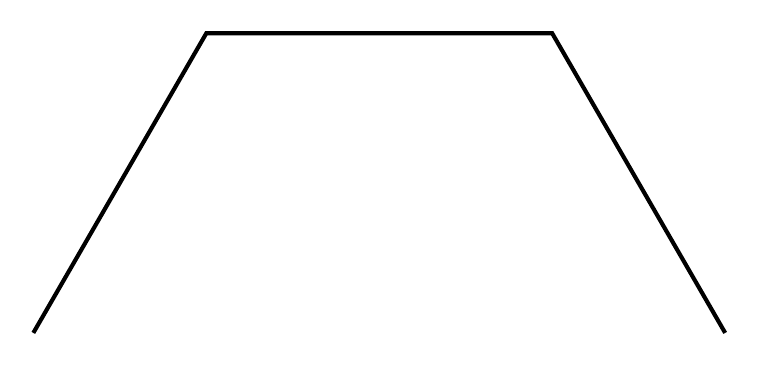} &
\includegraphics[page=2,width=0.45\textwidth]{arrowhead} \\
\includegraphics[page=3,width=0.45\textwidth]{arrowhead} &
\includegraphics[page=4,width=0.45\textwidth]{arrowhead} 
\end{tabular}
\end{minipage}
\caption{The configuration $e_5$ (left). Indicated in {\color{blue} blue} is the space-filling curve connecting $o$ to the sink vertex $y\in\partial G_5$ along which every vertex carries $3$ chips.
It is formed by the concatenation of the first 4 iterations of the Sierpinski arrowhead curve (right).}
\label{fig:4+2path}
\end{figure}

\renewcommand*{\arraystretch}{1.2}

From Theorem \ref{thm:tail}, we obtain a family of recursive formulae for the cluster radii, which completely explains the results shown in Figure \ref{fig:ASMScaling}, and is used in conjunction with the renewal theorem to prove Theorem \ref{thm:ASM}, Part \eqref{item:renewal}.
This is the content of our final Theorem \ref{thm:radialcycle}, which is understood mnemonically using Figure \ref{fig:radialcycle}.

\begin{theorem}
\label{thm:radialcycle}
The following recursions hold for $a$ and $b$ belonging to the respective intervals:
\vspace{5pt}

\begin{minipage}[t]{0.35\textwidth}
\begin{center}
$
\boxed{r_{a\cdot 3^n} = 2^n + r_{b\cdot 3^{n-2}}} \quad (n\geq 4)
$

\vspace{5pt}
\begin{tabular}{cc}
a & b \\ \hline
$\left[4, 4\frac{4}{9}\right) $ & $\left[4\frac{4}{9}, 4\frac{2}{3}\right)$ \\
$\left[4\frac{4}{9}, 4\frac{2}{3}\right) $ & $\left[5\frac{1}{3}, 6\right)$ \\
$\left[4\frac{2}{3}, 5\frac{1}{3}\right) $ & $\left[6, 6\frac{4}{9}\right)$ \\
$\left[5\frac{1}{3}, 6\right) $ & $ \left[8\frac{2}{3}, 9\frac{1}{3}\right) $
\end{tabular}
\end{center}
\end{minipage}
\begin{minipage}[t]{0.6\textwidth}
\begin{center}
$
\boxed{r_{a\cdot 3^n} = 2^n + r_{b\cdot 3^{n-1}}} \quad (n\geq 3)
$

\vspace{5pt}
\begin{tabular}{cc}
a & b \\ \hline
$\left[6, 6\frac{4}{9}\right) $ & $\left[4\frac{4}{9}, 4\frac{2}{3}\right)$ \\
$\left[6\frac{4}{9}, 7\frac{1}{3}\right) $ & $\left[4\frac{2}{3}, 5\frac{1}{3}\right)$ \\
$\left[7\frac{1}{3}, 8\right) $ & $\left[5\frac{1}{3}, 6\right)$ \\
$\left[8, 8\frac{2}{3}\right) $ & $ \left[7\frac{1}{3}, 8\right) $
\end{tabular}
\hspace{10pt}
\vspace{5pt}
\begin{tabular}{cc}
a & b \\ \hline
$\left[8\frac{2}{3}, 9\frac{1}{3}\right)$ & $\left[8, 8\frac{2}{3}\right)$ \\
$\left[9\frac{1}{3}, 10\right)$ & $\left[8\frac{2}{3}, 9\frac{1}{3}\right)$ \\
$\left[10, 12\right)$ & $\left[10, 12\right)$\\
& \\
\end{tabular}
\end{center}
\end{minipage}

In particular, for $n\geq 4$, the restriction of the function $r: [0,\infty)\to [0,\infty)$ to $[4\cdot 3^n, 4\cdot 3^{n+1})$ is piecewise constant on each of the following intervals:
\begin{align*}
&\left[4\cdot 3^n,~ 4\frac{4}{9}\cdot 3^n\right),~ 
\left[4\frac{4}{9}\cdot 3^n, ~4\frac{2}{3}\cdot 3^n\right),~
\left[4\frac{2}{3}\cdot 3^n, ~5\frac{1}{3}\cdot 3^n\right),~
\left[5\frac{1}{3}\cdot 3^n, ~6\cdot 3^n\right),\\
&\left[6\cdot 3^n,~ 6\frac{4}{9} \cdot 3^n\right),~
\left[6\frac{4}{9}\cdot 3^n,~ 7\frac{1}{3}\cdot 3^n\right),~
\left[7\frac{1}{3}\cdot 3^n, ~8\cdot 3^n\right),\\
&\left[8\cdot 3^n, ~8\frac{2}{3}\cdot 3^n\right),~
\left[8\frac{2}{3}\cdot 3^n, ~9\frac{1}{3}\cdot 3^n\right),~
\left[9\frac{1}{3}\cdot 3^n, ~10\cdot 3^n\right),\\
&\left[10\cdot 3^n, ~4\cdot 3^{n+1}\right).
\end{align*}
\end{theorem}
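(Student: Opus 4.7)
The plan is to combine the identity $r_m = 2^n + r_{m'-2}$ from Proposition \ref{prop:ball} with the explicit enumeration of the tail function $m\mapsto m'$ in Theorem \ref{thm:tail}; both the recursions and the piecewise-constant statement will then follow from substituting the relevant $m'$ into this identity, coupled with a strong induction on $n$ that propagates constancy across scales.

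For the recursions I would verify each of the eleven table rows in turn. Consider for example the first row of the first table, $a\in[4,4\tfrac{4}{9})$: Theorem \ref{thm:tail} gives $m'\in\{b_n,b_n+1\}$ depending on whether the ``$e+2$'' jump at $4\cdot 3^n+2$ has occurred, so $m'-2\in\{b_n-2,b_n-1\}$. Using $b_n=\tfrac{3}{2}(3^{n-1}+1)$,
\[
\frac{b_n-2}{3^{n-2}}=\frac{9-3^{2-n}}{2},\qquad \frac{b_n-1}{3^{n-2}}=\frac{9+3^{2-n}}{2},
\]
and both lie in $[4\tfrac{4}{9},4\tfrac{2}{3})$ precisely when $n\geq 4$, matching the hypothesis. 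The remaining three rows of the first table and the seven rows of the second table follow the same recipe: $m'$ takes one of the values $2\cdot 3^{n-1}+1$, $2\cdot 3^{n-1}+2$, or $3^n+1$ (shifted by $p\cdot 3^n$ for $p\in\{1,2,3\}$), and one divides by $3^{n-2}$ or $3^{n-1}$ as appropriate. Each verification is a closed-form arithmetic check that a pair of expressions in $n$ lies in the stated $b$-interval.

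The piecewise-constant statement then follows by strong induction on $n$. Fix one of the eleven listed sub-intervals of $[4\cdot 3^n,4\cdot 3^{n+1})$: Theorem \ref{thm:tail} shows that its endpoints coincide with ``major'' jumps of $m'$ (at the values $(4+2p)\cdot 3^n$, $(4\tfrac{4}{9}+2p)\cdot 3^n$, $(4\tfrac{2}{3}+2p)\cdot 3^n$, $(5\tfrac{1}{3}+2p)\cdot 3^n$), while the only jump of $m'$ strictly inside the sub-interval is an ``$e+2$'' jump that shifts $m'-2$ by $1$. Hence $m'-2$ takes at most two consecutive integer values on the sub-interval, and the arithmetic of the previous paragraph places both values inside a single sub-interval at the smaller scale $n-2$ (first table) or $n-1$ (second table). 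By the inductive hypothesis, $r_{m'-2}$ is single-valued on these two arguments, forcing $r_m=2^n+r_{m'-2}$ to be constant on the chosen $n$-level sub-interval.

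The main obstacle I anticipate is the base case. Theorem \ref{thm:tail} is only available for $n\geq 3$, so when the recursion pushes the exponent below $3$ one must establish constancy of $r$ on the relevant sub-intervals at scale $2$ (and if needed $3$) directly, by iterating Proposition \ref{prop:ball} and the explicit tile constructions of Propositions \ref{prop:merge}, \ref{prop:4+2}, \ref{prop:449}, \ref{prop:423}, and \ref{prop:513} until the argument drops to a configuration small enough to stabilize by inspection. The secondary challenge is purely bookkeeping: aligning the eleven sub-intervals with the four values of $p\in\{0,1,2,3\}$ and the five jump entries of Theorem \ref{thm:tail} is straightforward in principle but requires care to avoid off-by-one mistakes at the interval endpoints.
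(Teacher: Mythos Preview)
Your approach is correct and is essentially the same as the paper's: combine $r_m = 2^n + r_{m'-2}$ from Proposition~\ref{prop:ball} with the explicit values of $m'$ from Theorem~\ref{thm:tail}, and check arithmetically that $m'-2$ lands in the claimed target interval at the smaller scale; the paper's own proof is simply a terser version of this, deferring the base cases to Table~\ref{table:spectrum}.

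One correction is needed in your piecewise-constancy argument. The claim that ``the only jump of $m'$ strictly inside the sub-interval is an `$e+2$' jump'' is false for several of the eleven intervals. For instance, $[6\tfrac{4}{9}\cdot 3^n,\,7\tfrac{1}{3}\cdot 3^n)$ contains the ``$4\tfrac{2}{3}$'' jump at $6\tfrac{2}{3}\cdot 3^n$, and $[8\cdot 3^n,\,8\tfrac{2}{3}\cdot 3^n)$ contains both an ``$e+2$'' jump and a ``$4\tfrac{4}{9}$'' jump, so $m'-2$ takes three values there, not two. Your conclusion still holds, but the correct statement is that \emph{every} value of $m'-2$ occurring on a given sub-interval lies in the single target sub-interval listed in the table; this is what your arithmetic check actually verifies row by row, so the fix is just to drop the ``at most two consecutive values'' claim and check all the $m'$ values supplied by Theorem~\ref{thm:tail} on each interval.
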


\begin{figure}
\centering
\includegraphics[width=0.5\textwidth]{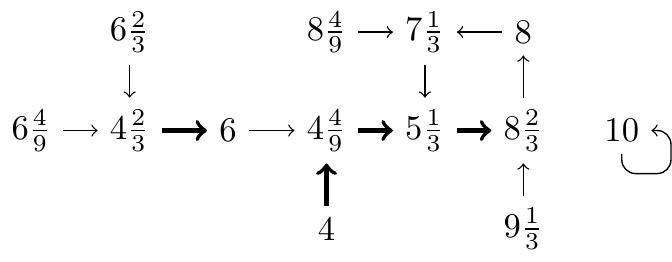}
\caption{Mnemonic for Theorem \ref{thm:radialcycle}. $a\to b$ means $r_{a\cdot 3^n} = 2^n + r_{b\cdot 3^{n-1}}$, while $a \boldsymbol\to b$ (with a thick arrow) means $r_{a\cdot 3^n} = 2^n + r_{b\cdot 3^{n-2}}$.}
\label{fig:radialcycle}
\end{figure}

\begin{table}
\begin{center}
\begin{scriptsize}
\begin{tabular}{lrrrr}
$\frac{m}{3^n}$ & $m$ & $m'$ & $m-2m'$ &  $\Delta r_m$\\ \hline
& 2&1 & 0 & 1\\
& 8&4  &0 & 1\\ \hline \hline
\multicolumn{5}{c}{${\bf n=1}$} \\ \hline \hline
${\bf 4}$&  {\bf 12} & {\bf 3}&  {\bf 6} &\\
$4\frac{2}{3}$ & 14& 4& 6& 1\\
{\color{red} $6$}& {\color{red} 18} & {\color{red}6} & {\color{red} 6} & \\
$6\frac{2}{3}$ & 20 & 7 & 6& \\
{\color{teal}$8$}& {\color{teal} 24} & {\color{teal}9}& {\color{teal} 6}& \\
$8\frac{2}{3}$ & 26& 10 & 6 & 1\\
{\color{blue} $10$} & {\color{blue} 30} & {\color{blue}12} &{\color{blue} 6}& \\
$10\frac{2}{3}$ & 32 & 13 & 6 & \\ \hline \hline
\multicolumn{5}{c}{${\bf n=2}$} \\ \hline \hline
${\bf 4}$ & \textbf{36} &{\bf 6}& {\bf 24} & {\bf 1}\\
$4\frac{2}{9}$ & 38 & 7 & 24 &  \\
$4\frac{2}{3}$ &42 & 8 & 26 &\\
$5\frac{1}{3}$ &48& 10& 28 &1\\
{\color{red} $6$} &{\color{red}54} &{\color{red} 15}&  {\color{red} 24}&\\
$6\frac{2}{9}$ &56& 16 & 24& 1\\
$6\frac{2}{3}$ & 60 & 17 &26 & \\
$7\frac{1}{3}$ & 66 & 19 & 28 & \\
{\color{teal} $8$} & {\color{teal}72} & {\color{teal} 24}& {\color{teal} 24}&\\
$8\frac{2}{9}$ & 74 & 25 & 24& \\
$8\frac{2}{3}$ & 78 & 26 & 26 & \\
$9\frac{1}{3}$ &84&28& 28& 1\\
{\color{blue} $10$} & {\color{blue} 90} &{\color{blue} 33}& {\color{blue} 24} & \\
$10\frac{2}{9}$ & 92 & 34 & 24& \\
$10\frac{2}{3}$ & 96 & 35 & 26& \\
$11\frac{1}{3}$ & 102 & 37 & 28& \\ \hline \hline
\multicolumn{5}{c}{${\bf n=3}$} \\ \hline \hline
${\bf 4}$ &\textbf{108}& {\bf 15}& {\bf 78} &\textbf{2}\\
$4\frac{2}{27}$ &110& 16& 78&  1\\
$4\frac{4}{9}$ & 120 & 19&82 & \\
$4\frac{2}{3}$ &126 & 20 &86 & \\
$5\frac{1}{3}$ &144& 28& 88& 1\\
{\color{red} $6$} &{\color{red}162}&{\color{red}42}& {\color{red} 78}& {\color{red}1}\\
$6\frac{2}{27}$& 164& 43& 78 & \\
$6\frac{4}{9}$ & 174 & 46& 82 & \\
$6\frac{2}{3}$ & 180 & 47 & 86 & \\
$7\frac{1}{3}$ &198& 55& 88 & 1\\
\end{tabular} \hskip 5pt
\begin{tabular}{lrrrr}
$\frac{m}{3^n}$ & $m$ & $m'$ & $m-2m'$ & $\Delta r_m$\\ \hline
{\color{teal} $8$} &{\color{teal}216}&{\color{teal}69}&{\color{teal} 78} & {\color{teal}1}\\
$8\frac{2}{27}$ & 218 & 70 & 78 & \\
$8\frac{4}{9}$  & 228 & 73 & 82 &\\
$8\frac{2}{3}$ & 234 & 74& 86 &\\
$9\frac{1}{3}$ & 252 & 82& 88 & \\
{\color{blue} $10$} &{\color{blue}270}&{\color{blue}96}& {\color{blue} 78}& {\color{blue}1}\\
$10\frac{2}{27}$ & 272& 97 & 78 & \\
$10\frac{4}{9}$ & 282 & 100& 82 &\\
$10\frac{2}{3}$ & 288 & 101 & 86 & \\
$11\frac{1}{3}$ & 306 & 109&88& \\ \hline \hline
\multicolumn{5}{c}{${\bf n=4}$} \\ \hline \hline
${\bf 4}$ &\textbf{324}& {\bf 42} & {\bf 240} & \textbf{5} \\
$4\frac{2}{81}$ & 326 & 43& 240 & \\
$4\frac{4}{9}$ &360& 55 & 250 &  1\\
$4\frac{2}{3}$ & 378 & 56 & 266 & \\
$5\frac{1}{3}$ &432& 82 & 268 &1\\
{\color{red} $6$}&{\color{red}486}& {\color{red}123}& {\color{red} 240} & {\color{red}4}\\
$6\frac{2}{81}$ & 488 & 124 & 240 & \\
$6\frac{4}{9}$ &522 & 136 &250& \\
$6\frac{2}{3}$ & 540 & 137 & 266 & \\
$7\frac{1}{3}$ &594& 163 &268 &1\\
{\color{teal} $8$} &{\color{teal}648}& {\color{teal} 204} & {\color{teal} 240}& {\color{teal}2}\\
$8\frac{2}{81}$ & 649 & 205 & 240 & \\
$8\frac{4}{9}$ & 684 & 217 & 250 &\\
$8\frac{2}{3}$ &702& 218 &266  &1\\
$9\frac{1}{3}$ & 756 & 244 & 268 & \\
{\color{blue} $10$} &{\color{blue}810}& {\color{blue} 285}& {\color{blue} 240} &{\color{blue}1}\\
$10\frac{2}{81}$ & 812 & 286 & 240 & \\
$10\frac{4}{9}$ & 846 & 298 & 250& \\
$10\frac{2}{3}$ & 864 & 299 & 266& \\
$11\frac{1}{3}$ & 918 & 325&268& \\ \hline \hline
\multicolumn{5}{c}{${\bf n=5}$} \\ \hline \hline
${\bf 4}$ &\textbf{972}& {\bf 123} & {\bf 726} & \textbf{11}\\
$4\frac{2}{3^5}$ & 974 &124&726 & \\
$4\frac{4}{9}$ &1080& 163 & 754 & 1\\
$4\frac{2}{3}$ &1134& 164 & 806& 1\\
$5\frac{1}{3}$ &1296& 244& 808 & 2\\
{\color{red} $6$}&{\color{red}1458}& {\color{red} 366}  & {\color{red}726} & {\color{red}7}\\
$6\frac{2}{3^5}$ &1460 & 367 & 726 &\\
$6\frac{4}{9}$ & 1566 & 406 & 754 &
\end{tabular} \hskip 5pt
\begin{tabular}{lrrrr}
$\frac{m}{3^n}$ & $m$ & $m'$ & $m-2m'$ & $\Delta r_m$\\ \hline
$6\frac{2}{3}$ & 1620 & 407 & 806  &\\
$7\frac{1}{3}$ &1782& 487&808& 1\\
{\color{teal} $8$} &{\color{teal}1944}& {\color{teal} 609}  & {\color{teal} 726} & {\color{teal}5}\\
$8\frac{2}{3^5}$ & 1946 & 610 & 726& \\
$8\frac{4}{9}$ & 2052& 649 & 754 &\\
$8\frac{2}{3}$ &2106& 650& 806& 2\\
$9\frac{1}{3}$ &2268& 730& 808& 1\\
{\color{blue} $10$} &{\color{blue}2430}& {\color{blue} 852} & {\color{blue} 726} & {\color{blue}1}\\
$10\frac{2}{3^5}$ & 2432 & 853 & 726 & \\
$10\frac{4}{9}$ &2538 & 892 & 754& \\
$10\frac{2}{3}$ & 2592  & 893& 806 & \\
$11\frac{1}{3}$ &2754 & 943& 808&\\ \hline \hline
\multicolumn{5}{c}{${\bf n=6}$} \\ \hline \hline
${\bf 4}$ &\textbf{2916}& {\bf 366}  &{\bf 2184} & \textbf{22}\\
$4\frac{2}{3^6}$ & 2918 & 367 & 2184 & \\
$4\frac{4}{9}$ &3240& 487 & 2266& 1\\
$4\frac{2}{3}$ &3402& 488& 2426& 4\\
$5\frac{1}{3}$ &3888& 730& 2428& 4\\
{\color{red} $6$} &{\color{red}4374}& {\color{red} 1095} & {\color{red} 2184} & {\color{red}13}\\
$6\frac{2}{3^6}$ & 4376 & 1096 &2184 & \\
$6\frac{4}{9}$ &4698& 1216 & 2266& 1\\
$6\frac{2}{3}$ &4860 & 1217& 2426& \\
$7\frac{1}{3}$ &5346& 1459& 2428& 2\\
{\color{teal} $8$}&{\color{teal}5832}& {\color{teal} 1824} & {\color{teal} 2184} & {\color{teal}8}\\
$8\frac{2}{3^6}$ &  5834& 1825 &2184 & \\
$8\frac{4}{9}$ & 6156 & 1945 & 2266& \\
$8\frac{2}{3}$ &6318& 1946 & 2426& 5\\
$9\frac{1}{3}$ &6804& 2188 & 2428 & 2\\
{\color{blue} $10$} &{\color{blue}7290}& {\color{blue}2553}&{\color{blue} 2184}  &{\color{blue}2}\\
$10\frac{2}{3^6}$ & 7292& 2554 &2184 & \\
$10\frac{4}{9}$ & 7614  & 2674 & 2266 &\\
$10\frac{2}{3}$ & 7766& 2675 & 2426 & \\
$11\frac{1}{3}$ & 8262 & 2917 & 2428 & \\ \hline \hline
\multicolumn{5}{c}{${\bf n=7}$} \\ \hline \hline
${\bf 4}$ &\textbf{8748}& {\bf 1095} & {\bf 6558} &\textbf{44}\\
$4\frac{2}{3^7}$ & 8750 & 1096 & 6558 & \\
$4\frac{4}{9}$ &9720&  1459 & 6802 & 3\\
$4\frac{2}{3}$ & 10206& 1460 & 7286 & 7\\
$5\frac{1}{3}$ & 11664  & 2188 &7288 &8 \\
{\color{red} 6} & {\color{red}13122} & {\color{red} 3282}& {\color{red} 6558}& {\color{red}25}
\end{tabular}
\end{scriptsize}
\begin{small}
\vskip 13pt
\begin{tabular}{lm{1.05in}l}
Legend: $\quad (m \mathbbm{1}_o)^\circ = $
&
\includegraphics[height=0.15\textwidth]{tablefig}
&
$\subseteq G_{n+1}$; $~\#\{\text{chips in } \eta\} = m-2m'$.
\end{tabular}
\end{small}
\vskip 13pt
\caption{Periodic structure of the single-source sandpile cluster on $SG$.
See Legend for explanation of $m$, $m'$, and $n$.
We only list values of $m$ at which $m'$ changes.  
$\Delta r_m:= r_m -r_{m-1}$ is the jump in the cluster radius.
Special values of $m$ are: ${\bf 4\cdot 3^n}$, ${\color{red} 6\cdot 3^n}$, ${\color{teal} 8\cdot 3^n}$, ${\color{blue} 10\cdot 3^n}$.
}
\label{table:spectrum}
\end{center}
\end{table}


The rest of the paper is organized as follows. 
Theorem \ref{thm:RRA} is proved in \S\ref{sec:RRA}.
Theorems \ref{thm:ASM}, \ref{thm:groupSG}, \ref{thm:tail}, and \ref{thm:radialcycle} are proved in \S\ref{sec:ASM}.
In \S\ref{sec:IDLA}, we provide strong numerical evidence suggesting sublogarithmic fluctuations in the radius of the IDLA cluster, and investigate a potential central limit theorem (CLT). Possible extensions to nested fractals, as well as related open questions, are discussed in \S\ref{sec:open}.


\section{Rotor-router aggregation on $SG$: Proof of Theorem \ref{thm:RRA}} \label{sec:RRA}


Let us first summarize the overall strategy in the proof of Theorem \ref{thm:RRA}.
\begin{itemize}[wide]
\item Solving the rotor-router problem is equivalent to solving the \emph{least action principle} for the model (see Proposition \ref{prop:LAP}).
\item A fast exact simulation algorithm for rotor-router aggregation based on this least action principle was devised by Friedrich and Levine (see Algorithm \ref{alg:FL}). The algorithm involves an initial approximation step followed by two error-correcting steps, and produces the correct rotor-router configuration and odometer function.
\item It turns out that on $SG$, we have an excellent candidate for the initial approximation using the divisible sandpile odometer, which was solved by Huss and Sava-Huss \cite{HSH17}; see \S\ref{sec:DSodo} for key facts needed for our proof. By an inductive argument described in \S\ref{sec:RRodo}, we can carry out precise error corrections in the Friedrich-Levine algorithm to arrive at the true rotor-router odometer.
\end{itemize}


\subsection{Abelian stack model and the Friedrich-Levine algorithm}

In this subsection we describe a more general aggregation model called the \textbf{abelian stack model}, which includes rotor-router aggregation and IDLA as special cases. The description follows \cite{FL13} closely.

Fix a directed graph $G=(V,E)$ which is locally finite and strongly connected, that is, given any vertices $x,y \in V$ there are directed paths from $x$ to $y$ and from $y$ to $x$.
A directed edge is denoted $e=( {\tt s}(e), {\tt t}(e))$, where ${\tt s}(e)$ and ${\tt t}(e)$ are the source and target vertices, respectively.
A rotor configuration $\rho: V\to E$ is an assignment of an edge $\rho(x)\in E$ to the vertex $x\in V$, with ${\tt s}(\rho(x))=x$.

In the abelian stack model, we equip each vertex $x\in V$ with a \emph{stack} of rotors $\{\rho_k(x)\}_{k=0}^\infty$, where each $\rho_k(x)$ is an edge with source vertex $x$.
A finite number of indistinguishable chips are assigned to $V$ according to some initial configuration.
For each $x\in V$, the first chip to visit $x$ is absorbed there and does not move again.
Each subsequent chip arriving at $x$ first shifts the stack at $x$ to become $(\rho_{k+1}(x))_k$.
After shifting the stack, the chip moves from $x$ to the target of the new rotor on top, $y={\tt t}(\rho_1(x))$.
This procedure is called a \textbf{firing} of the vertex $x$.
It can be readily seen that the $k$th chip fired from $x$ travels along the edge $\rho_k(x)$.

Given a directed edge $e$ and a nonnegative integer $n$, define
\[
R_\rho(e,n) = \#\{1\leq k \leq n : \rho_k({\tt s}(e))=e\}
\]
to be the number of times $e$ occurs among the first $n$ rotors in the stack at the vertex ${\tt s}(e)$. 
In the case of rotor-router aggregation with simple periodic rotor mechanism, one can show that
\begin{align}
\label{eq:Rrotor}
R_\rho(e,n) = \left\lfloor \frac{n+d_{\rm out}({\tt s}(e)) - j(e)}{d_{\rm out}({\tt s}(e))} \right\rfloor,
\end{align}
where $j(e)$ is the minimum positive integer such that $\rho_j({\tt s}(e))=e$, and $d_{\rm out}(v)$ is the out-degree of vertex $v$.

\begin{definition}
\label{def:stackLap}
The \textbf{stack Laplacian} of a function $u:V\to\mathbb{N}$ is the function $\Delta_\rho u: V\to\mathbb{Z}$ given by
\[
\Delta_\rho u(x) = \sum_{{\tt t}(e)=x} R_\rho(e,u({\tt s}(e))) - u(x),
\]
where the sum runs over all edges $e$ with target vertex $x$.
\end{definition}

The role of the stack Laplacian is as follows. Starting from a chip configuration $\sigma_0$, we perform $u(x)$ firings at each vertex $x\in V$. It is direct to verify that the resulting configuration is $\sigma_0+\Delta_\rho u$. 
We denote by
\[
{\tt Top}_\rho(u)(x) = \rho_{u(x)}(x) \qquad (x\in V)
\]
the rotor configuration on the tops of the stacks after the firings.

The main question in the abelian stack model is to identify the odometer function $u_*$ which produces the final stable configuration $\sigma_*$ from $\sigma_0$.
The solution to this question is given by the following \emph{least action principle}.

\begin{proposition}[\cite{FL13}*{Theorem 1}]
\label{prop:LAP}
Let $G$ be a directed graph, $\rho$ a collection of rotor stacks on $G$, and $\sigma_0$ a chip configuration on $G$.
Fix $u_*: V\to\mathbb{N}$, and let $A_* = {\rm supp}(u_*)$.
Let $\sigma_* = \sigma_0+\Delta_\rho u_*$.
Suppose that
\begin{itemize}
\item $\sigma_* \leq 1$.
\item $A_*$ is finite.
\item $\sigma_*(x)=1$ for all $x\in A_*$.
\item ${\tt Top}_\rho(u_*)$ is acyclic on $A_*$.
\end{itemize}
Then there exists a finite complete legal firing sequence for $\sigma_0$, and its odometer function is $u_*$.
\end{proposition}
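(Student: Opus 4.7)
The plan is to reduce the claim to constructing one complete legal firing sequence whose odometer equals $u_*$, invoking the abelian property of the stack model---established in \cite{BondLevine} and reviewed in \cite{FL13}---which asserts that any two complete legal firing sequences from the same data $(\sigma_0,\rho)$ have identical odometers and identical final configurations, and that this common odometer is the unique pointwise-minimal $w : V\to\mathbb{N}_0$ rendering $\sigma_0 + \Delta_\rho w \leq 1$. Stability of the final configuration is already built into the hypothesis $\sigma_* \leq 1$, so once such a sequence is exhibited the identification $u_* = u_{\min}$ follows automatically.

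The firing sequence would then be built by induction on $N := \sum_{x \in V} u_*(x)$, the base case $N=0$ being trivial since $A_* = \emptyset$. For the inductive step, the plan is to locate a vertex $x_0 \in A_*$ that (i) is legal to fire first, i.e.\@ $\sigma_0(x_0) \geq 2$, and (ii) leaves the four hypotheses intact for the decremented data obtained by replacing $u_*$ by $u_* - \mathbbm{1}_{x_0}$, shifting the rotor stack at $x_0$ by one position, and updating $\sigma_0$ to $\sigma_0 + \Delta_\rho \mathbbm{1}_{x_0}$. The acyclicity of ${\tt Top}_\rho(u_*)$ guides the choice of $x_0$: pick $x_0$ to be a source in the DAG induced by ${\tt Top}_\rho(u_*)$ on $A_*$---meaning no vertex of $A_*$ has its top rotor pointing to $x_0$. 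This has two consequences: (a) shifting the rotor at $x_0$ only alters an outgoing DAG edge, so no new cycle is introduced and acyclicity persists; (b) a chip-conservation argument---unfolding Definition~\ref{def:stackLap} and the identity \eqref{eq:Rrotor}, and balancing the inflow to $A_*$ from $\Delta_\rho u_*$ against the prescribed occupation $\sigma_*|_{A_*} \equiv 1$---should force some such source to carry supercritical $\sigma_0$-chips.

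The main obstacle I anticipate is the persistence of acyclicity under the decrement. After firing at $x_0$, the new top rotor at $x_0$ becomes $\rho_{u_*(x_0)}(x_0)$, and one must rule out that this new outgoing edge creates a cycle in the residual DAG on $A_*' \subseteq A_*$. A case split on whether $u_*(x_0) = 1$ (so $x_0$ leaves $A_*'$ and the issue is moot) or $u_*(x_0) \geq 2$, combined with the source property of $x_0$ in the original DAG, should close the argument; the abelian property then furnishes the remaining uniqueness to yield the proposition.
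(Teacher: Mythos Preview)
The paper does not supply its own proof of this proposition; it is quoted verbatim from \cite{FL13}. So I can only assess your argument on its own merits.

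Your inductive scheme has the right shape, but step (b) contains a genuine gap: it is \emph{not} true that some \emph{source} of the top-rotor DAG must carry at least two chips in $\sigma_0$. Here is a two-vertex counterexample. Take $A_*=\{a,b\}$ with a third vertex $c\notin A_*$; set $u_*(a)=1$, $u_*(b)=3$; let the stacks be $\rho_1(a)=(a,b)$ and $\rho_1(b)=\rho_2(b)=(b,a)$, $\rho_3(b)=(b,c)$. Then ${\tt Top}_\rho(u_*)(a)=(a,b)$ and ${\tt Top}_\rho(u_*)(b)=(b,c)$, so the DAG on $A_*$ is $a\to b$, acyclic, with $a$ the unique source. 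One computes that $\sigma_*(a)=\sigma_*(b)=1$ forces $\sigma_0(a)=0$ and $\sigma_0(b)=3$: the source is not fireable, and the only legal firing sequence is $b,b,a,b$.

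The repair is to drop the source restriction altogether; it is both unnecessary and insufficient. First, your worry about acyclicity is based on a miscomputation: after firing once at \emph{any} $x_0\in A_*$, the final rotor configuration ${\tt Top}$ is literally unchanged, since with the shifted stack $\rho'$ and decremented target $u_*'=u_*-\mathbbm{1}_{x_0}$ one has
\[
{\tt Top}_{\rho'}(u_*')(x_0)=\rho'_{u_*(x_0)-1}(x_0)=\rho_{u_*(x_0)}(x_0)={\tt Top}_\rho(u_*)(x_0).
\]
Hence all four hypotheses persist automatically for \emph{any} choice of fireable $x_0$ (note $\sigma_*$ is also unchanged, and $A_*'\subseteq A_*$). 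Second, to locate a fireable vertex: summing $\sigma_0=\sigma_*-\Delta_\rho u_*$ over $A_*$ gives
\[
\sum_{x\in A_*}\sigma_0(x)=|A_*|+\sum_{\substack{e:\,{\tt s}(e)\in A_*\\ {\tt t}(e)\notin A_*}} R_\rho\bigl(e,u_*({\tt s}(e))\bigr).
\]
Acyclicity forces some vertex of $A_*$ to have its top rotor---which records an \emph{actually used} edge, namely the $u_*$-th one in its stack---pointing outside $A_*$; hence the boundary sum is at least $1$, the left side strictly exceeds $|A_*|$, and by pigeonhole some $x_0\in A_*$ has $\sigma_0(x_0)\geq 2$. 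This $x_0$ need not be a source. Finally, the identity \eqref{eq:Rrotor} is specific to simple periodic mechanisms and should not appear in the argument, which must hold for arbitrary stacks.
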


Based on Proposition \ref{prop:LAP}, Friedrich and Levine established a three-step algorithm which produces $u_*$ exactly \cite{FL13}, as described in Algorithm \ref{alg:FL} below.
Step 1 of the algorithm takes in any function $u_1: V\to\mathbb{N}_0$ and returns the resulting approximate configuration $\sigma_1$.
Following it are two error-correcting steps, annihilation and reverse cycle-popping, which correct the errors in the particle configuration and the rotor configuration, respectively.
Readers interested in the computational costs associated with this algorithm may consult \cite{FL13}*{\S4--\S6}. 

\begin{algorithm}[h]
\caption{Computing the abelian stack odometer \cite{FL13}}
\label{alg:FL}
	\SetKwInOut{Input}{Inputs}
	\SetKwInOut{Output}{Outputs}
	
	\Input{Initial chip configuration $\sigma_0$ and approximate odometer $u_1$}
	\Output{Final chip configuration $\sigma_2$ and exact odometer $u_3$}
	
	\underline{(Step 1) Approximation:} \Return $\sigma_1=\sigma_0+\Delta_\rho u_1$\;
	\underline{(Step 2) Annihilation:} Set $u_2=u_1$ and $\sigma_2=\sigma_1$\;
	\ForEach{$x\in V$}{
		\uIf{$\sigma_2(x)>1$}{
	 	call $x$ a \emph{hill}, fire it by moving one chip from $x$ to ${\tt t}({\tt Top}(u_2)(x))$ and incrementing $u_2(x)$ by $1$\;
	 	}
	 	\uElseIf{$\sigma_2(x)<0$, or $\sigma_2(x)=0$ and $u_2(x)>0$}{
	 	call $x$ a \emph{hole}, unfire it by moving one chip from ${\tt t}({\tt Top}(u_2)(x))$ to $x$ and decrementing $u_2(x)$ by $1$\;
		 }
	 }
\Return $\sigma_2$ and $u_2$\;
	 \underline{(Step 3) Reverse cycle-popping:} Set $u_3=u_2$ and $A_3={\rm supp}(u_3):=\{x\in V: u_3(x)>0\}$\;
	 \If{${\tt Top}(u_3)$ is not acyclic on $A_3$}{
	 pick a cycle and unfire each of its vertices once. (This does not change $\sigma_2$.) Update $u_3$ and $A_3$;
	 }
\Return $u_3$
\end{algorithm}

\subsection{Divisible sandpile odometer} \label{sec:DSodo}

While the Friedrich-Levine algorithm was devised for fast simulation purposes, it turns out to work well on $SG$ from the analytic point-of-view, when we choose the divisible sandpile odometer as the input function $u_1$.

Recall the description of the divisible sandpile problem in \S\ref{sec:DSShapeThm}.
Consider an initial configuration $\sigma_0: V(G)\to[0,\infty)$. 
We would like to stabilize it to a final configuration $\sigma_\infty$ where $\sigma_\infty \leq 1$ everywhere, and identify the corresponding odometer function $u_\infty: V(G)\to [0,\infty)$, where $u_\infty(x)$ represents the amount of mass emitted from $x$ during the stabilization.
The solution to this problem is given in variational form by \eqref{eq:DSvar}.
On a general state space it is difficult to solve \eqref{eq:DSvar}.
Instead we can use an alternative formulation which is analogous to that for the abelian stack model, Proposition \ref{prop:LAP} above.

\begin{proposition}
\label{prop:DSLAP}
Let $u_* : V(G)\to [0,\infty)$ be a function, $A_* = \{z\in V(G): u_*(z)>0\}$ and $\sigma_* = \sigma_0+\Delta u_*$.
Suppose that
\begin{itemize}
\item $\sigma_*(z)=1$ for all $z\in A_*$.
\item $A_*$ is finite.
\item $\sigma_*\leq 1$.
\end{itemize}
Then the divisible sandpile odometer $u_\infty=u_*$.
\end{proposition}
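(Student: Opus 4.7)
The plan is to prove the two inequalities $u_\infty \leq u_*$ and $u_* \leq u_\infty$ separately, using the variational characterization \eqref{eq:DSvar} of $u_\infty$ for the first and a strong maximum principle on $G$ for the second.

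For $u_\infty \leq u_*$: the hypotheses $u_* \geq 0$ and $\sigma_0 + \Delta u_* = \sigma_* \leq 1$ make $u_*$ admissible in the infimum defining $u_\infty$ in \eqref{eq:DSvar}, so $u_\infty(x) \leq u_*(x)$ for every $x \in V(G)$. In particular the support of $u_\infty$ is contained in the finite set $A_*$, and the difference $v := u_* - u_\infty$ is a compactly supported function on $V(G)$.

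For the reverse inequality $u_* \leq u_\infty$, I would argue by contradiction. Suppose $v$ takes a positive value, and let $M := \max_{V(G)} v > 0$, attained at some vertex $x_0$ (the maximum exists since $v$ has finite support). Since $u_*(x_0) \geq v(x_0) > 0$, we have $x_0 \in A_*$, so by hypothesis $\sigma_*(x_0) = 1$; combined with the global bound $\sigma_\infty(x_0) \leq 1$, this gives
\[
\Delta v(x_0) \;=\; \sigma_*(x_0) - \sigma_\infty(x_0) \;\geq\; 0.
\]
On the other hand, at a maximum of $v$,
\[
\Delta v(x_0) \;=\; \frac{1}{\deg(x_0)} \sum_{y \sim x_0}(v(y) - v(x_0)) \;\leq\; 0,
\]
so equality holds throughout and $v(y) = M$ for every neighbor $y$ of $x_0$. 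Each such neighbor again has $v(y) = M > 0$, hence lies in $A_*$, so the same inequality chain applies at $y$ and $v \equiv M$ propagates to its neighbors as well. Connectedness of $G$ now forces $v \equiv M > 0$ on all of $V(G)$, contradicting the compact support of $v$. Therefore $v \leq 0$, and combining with the first step yields $u_* = u_\infty$.

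The main obstacle is to verify that the maximum-propagation step admits a genuine induction across all of $V(G)$, rather than stalling at some finite cluster; this is automatic here because at every newly discovered maximum point $y$, the value $v(y) = M > 0$ forces $y \in A_*$, which in turn restores both $\sigma_*(y) = 1$ and $\sigma_\infty(y) \leq 1$, so the sub-/super-harmonicity argument regenerates at $y$. A secondary technical point is the existence of the operational odometer $u_\infty$ with a stabilized configuration $\sigma_\infty \leq 1$ of finite support; this is guaranteed by the general construction of \cite{LevinePeres09}, and moreover the first-step inequality $u_\infty \leq u_*$ hands us finite support for free, ensuring that the maximum $M$ in the proof above is actually attained.
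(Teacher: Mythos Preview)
Your argument is correct. The paper itself does not give a proof of this proposition; it simply defers to \cite{HSH17}*{Lemma 3.10} with the remark that the argument there, stated for $SG$, carries over to any infinite, locally finite, connected graph. Your write-up supplies exactly such a self-contained proof: the inequality $u_\infty \leq u_*$ from the least action principle \eqref{eq:DSvar}, followed by the strong maximum principle for $v = u_* - u_\infty$ to rule out $v>0$. This is the standard route and is essentially what the cited reference does.

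Two minor remarks. First, \eqref{eq:DSvar} in the paper is written for the specific initial datum $m\mathbbm{1}_o$, whereas you apply it with a general $\sigma_0$; this generalization is of course valid (and is how the least action principle is stated in \cite{LevinePeres09}), but it would be cleaner to say so explicitly. Second, your final contradiction relies on $G$ being infinite (so that a compactly supported function cannot be a nonzero constant); this is true in the present setting but is worth flagging since the proposition is stated for an abstract $G$.
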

\begin{proof}
See \cite{HSH17}*{Lemma 3.10} for the proof, which was stated for $SG$, but works on any infinite, locally finite, connected graph supporting an irreducible random walk process with infinitesimal generator $\Delta$.
\end{proof}

Let us specialize to $SG$. 
Denote the closed ball and the sphere of radius $n$ centered at $o$ by $B_n = \{y\in SG: d(o,y)\leq n\}$ and $S_n=\{y\in SG: d(o,y)=n\}$. 
Given a subset $A\subset SG$, we define its inner boundary by $\partial_I A :=\{x\in A: \exists y\in A^c,~x\sim y\}$.
For each $n\geq 1$, set
\begin{align}
\bi{n} := |B_n|-\frac{1}{2} |\partial_I B_n| = |B_{n-1}| + |\partial_I B_{n-1}|.
\end{align}
For the proof of the latter equality see \cite{HSH17}*{Lemma 4.1}.
The point is that $\bi{n}$ counts the number of vertices in $B_n$ which also takes into account boundary corrections.

In \cite{HSH17} Huss and Sava-Huss used an inductive procedure and Proposition \ref{prop:DSLAP} to give an explicit characterization of the divisible sandpile odometer function starting from $\bi{n}\mathbbm{1}_o$.
We summarize their main result as follows:

\begin{lemma}
\label{lem:divisibleodometer}
The following hold for the divisible sandpile odometer $\uds{n}$ associated with the initial distribution $\bi{n} \mathbbm{1}_o$:
\begin{enumerate}
\item \label{DS1} ${\rm supp}(\uds{n}) = B_{n-1}$.
\item \label{DS2} $\displaystyle \Delta \uds{n}(z) = 
\left\{\begin{array}{ll}
1-\bi{n} \mathbbm{1}_o, &\text{if}~z\in B_n\setminus \partial_I B_n,\\
1/2, & \text{if}~z\in \partial_I B_n,\\
0, & \text{if}~z\notin B_n.
\end{array}
\right.
$
\item \label{DS3} $\uds{n}(y)=2$ for each $y\in \partial_I B_{n-1}$.
\end{enumerate}
\end{lemma}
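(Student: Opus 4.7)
The plan is to build an explicit candidate function $u_*$ that satisfies the three assertions (support, Laplacian, boundary value) and then invoke the least action principle for divisible sandpiles (Proposition \ref{prop:DSLAP}) to conclude $u_* = \uds{n}$. Verification of the three bullets of Proposition \ref{prop:DSLAP}---namely $\sigma_* \leq 1$ everywhere, $\sigma_*(z) = 1$ on $A_*$, and $A_*$ finite---reduces the problem to a finite and essentially combinatorial check once $u_*$ is in hand.

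I would proceed by induction on $n$, verifying small $n$ (say $n=1, 2$) directly by simulating the stabilization of $\bi{n}\mathbbm{1}_o$ and reading off the odometer. For the inductive step, the key input is the self-similar structure of $SG$ and the well-known fact that the random walk on $SG$ hits each vertex of the sphere $S_n = \partial_I B_n$ with equal probability (uniform harmonic measure on spheres centered at the corner $o$). This axial symmetry forces the true odometer to be constant on each sphere, and in particular on $\partial_I B_{n-1}$, so the natural ansatz is to set $u_* \equiv c_n$ on $\partial_I B_{n-1}$, $u_* = 0$ outside $B_{n-1}$, and to extend $u_*$ inside $B_{n-1}$ by solving the discrete Dirichlet problem
\begin{equation*}
\Delta u_*(z) = 1 - \bi{n}\mathbbm{1}_o(z) \quad \text{on } B_{n-1}\setminus \partial_I B_{n-1}, \qquad u_* = c_n \text{ on } \partial_I B_{n-1}.
\end{equation*}
The boundary constant $c_n$ is determined by the requirement that $\sigma_* = 1/2$ on $\partial_I B_n$, which by conservation of mass forces the identity $c_n = 2$ (consistent with Item \eqref{DS3}); an inductive argument on the self-similar decomposition of $B_{n-1}$ into smaller copies of $B_{n-2}$ identifies $c_n$ directly.

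With $u_*$ defined, the verifications split by location: on the interior of $B_n$ the identity $\sigma_* = 1$ is immediate from the Dirichlet equation; outside $B_n$ the function $u_*$ vanishes on a neighborhood so $\Delta u_* = 0$. The one nontrivial check is at $z \in \partial_I B_n$, where $u_*(z) = 0$ but $z$ has neighbors in $\partial_I B_{n-1}$ carrying $u_* = 2$. One must count, using the local geometry of $SG$ near a sphere vertex, exactly how many edges go from $z$ back into $\partial_I B_{n-1}$ versus into the exterior; the normalized Laplacian then yields $\Delta u_*(z) = 1/2$.

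The main obstacle is precisely this local geometric accounting at $\partial_I B_n$ together with pinning down $c_n = 2$ within the induction. The spheres $S_n$ in $SG$ are not all "alike"---some sphere vertices are corners of fine-scale sub-triangles while others are midpoints---so one has to classify vertices of $\partial_I B_n$ by their local type, verify that in each case the number of neighbors inside $B_{n-1}$ is the right fraction of the degree to produce $1/2$, and check compatibility with the self-similar recursion relating $\bi{n}$ and $\bi{n-1}$. Once this bookkeeping is carried out, Proposition \ref{prop:DSLAP} gives $\uds{n} = u_*$ and all three items follow simultaneously.
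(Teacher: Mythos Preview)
Your strategy is essentially the one carried out in \cite{HSH17}, which the paper simply cites for Parts \eqref{DS1} and \eqref{DS2} rather than reproving. So for those two items your proposal is correct but does more work than the paper does: the paper treats them as black boxes imported from the literature.

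The genuine difference is in Part \eqref{DS3}. You propose to pin down the boundary constant $c_n$ as part of the inductive construction, via mass conservation and the self-similar decomposition. The paper instead observes that once \eqref{DS1} and \eqref{DS2} are in hand, the value $\uds{n}(y)=2$ drops out immediately from a single local Laplacian computation: pick any $z\in\partial_I B_n$, note that its four neighbors are $y\in\partial_I B_{n-1}$, another vertex of $S_n$, and two vertices of $S_{n+1}$, all of which except $y$ carry $\uds{n}=0$ by \eqref{DS1}; then $\Delta\uds{n}(z)=\tfrac14\uds{n}(y)=\tfrac12$ by \eqref{DS2} forces $\uds{n}(y)=2$. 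This is a two-line argument, whereas your route requires tracking the induction and the geometry of several sphere types simultaneously. Your approach would work, but the paper's is considerably more economical for \eqref{DS3}.

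One small muddle in your write-up: you say $c_n$ is ``determined by the requirement that $\sigma_*=1/2$ on $\partial_I B_n$,'' but that value $1/2$ is part of the conclusion, not a hypothesis of Proposition \ref{prop:DSLAP}. What actually pins down $c_n$ in the LAP framework is the condition $\sigma_*(y)=1$ for $y\in\partial_I B_{n-1}\subset A_*$, together with $\sigma_*\le 1$ on $S_n$; the $1/2$ then follows.
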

\begin{proof}
Parts \eqref{DS1} and \eqref{DS2} are established in \cite{HSH17}*{Theorem 4.2}. 
For Part \eqref{DS3}, observe that $y\in \partial_I B_{n-1}$ is connected to two vertices $z_1, z_2 \in S_n$. Take $z_1$, which has 4 adjacent vertices $y$, $z_2$, $w_1$, $w_2$, where $w_1, w_2 \in S_{n+1}$. 
By Part \eqref{DS1}, $\uds{n}(z_1)=\uds{n}(z_2)=\uds{n}(w_1)=\uds{n}(w_2)=0$. 
By Part \eqref{DS2},
\[
\Delta \uds{n}(z_1)= \frac{1}{4}\left(\uds{n}(y) +\uds{n}(z_2) + \uds{n}(w_1) + \uds{n}(w_2)\right)- \uds{n}(z_1)= \frac{1}{2}.
\]
Infer that $\uds{n}(y)=2$.
\end{proof}

\begin{remark}
Taking together Items \eqref{DS1} and \eqref{DS2}, it is not difficult to check that $\uds{n}$ solves the Dirichlet boundary-value problem
\begin{align}
\label{eq:Dir}
\left\{
\begin{array}{ll}
\Delta \uds{n}=1-\bi{n}\mathbbm{1}_o & \text{on } B_{n-1},\\
\uds{n}=0 & \text{on } (B_{n-1})^c.
\end{array}
\right.
\end{align}
Equation \eqref{eq:Dir}, and more generally, Poisson's equation on ball subsets of $SG$, have already been solved by Strichartz \cite{StrLap}.
Technically speaking, the divisible sandpile problem is a \emph{free} boundary-value problem.
But often its solution can be found by first guessing the support of the odometer, and then solving the \emph{Dirichlet} boundary-value problem on the support set.
\end{remark}

\subsection{Rotor-router cluster and odometer} \label{sec:RRodo}

In the subsection we show that the divisible sandpile odometer $\uds{n}$ makes an excellent approximation of the rotor-router odometer under the same initial configuration $\bi{n}\mathbbm{1}_o$, in the sense that we can perform precise error corrections in the Friedrich-Levine algorithm.
Our error-correction proof uses induction on $n$ and consists of two acts: ``filling the bulk'' and ``pulling the marionette.''
Upon making the error corrections, we identify the support of the rotor-router cluster and odometer function, thereby proving Theorem \ref{thm:RRA}.

\begin{proposition}
\label{prop:DSapprox}
For any periodic simple rotor mechanism $\rho$,
\[
\Delta_\rho \uds{n}(x) \in 
\left\{\begin{array}{ll} 
\{0\}, &\text{if}~x\notin B_n, \\
\{0,1\} ,&\text{if}~x\in \partial_I B_n,\\
\{0,1,2\}, &\text{if}~x\in S_n \setminus \partial_I B_n.
\end{array}\right.
\]
\end{proposition}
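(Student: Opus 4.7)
The plan is to compute $\Delta_\rho \uds{n}(x)$ term-by-term from Definition~\ref{def:stackLap}, reading off each $\uds{n}$ value from Lemma~\ref{lem:divisibleodometer} and each $R_\rho$ value from formula~\eqref{eq:Rrotor}. Since $\mathrm{supp}(\uds{n}) = B_{n-1}$ by part~\eqref{DS1}, every neighbor $y$ of $x$ with $d(o,y) \geq n$ contributes $R_\rho((y,x),0) = 0$. Moreover, whenever $d(o,x) \geq n$ and $d(o,y) = n-1$, the neighbor $y$ necessarily lies in $\partial_I B_{n-1}$ (as $x \notin B_{n-1}$ is an outside neighbor of $y$), so by part~\eqref{DS3} we get $\uds{n}(y) = 2$.

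The first case $x \notin B_n$ is immediate: every $y \sim x$ has $d(o,y) \geq n$, so all stack-Laplacian terms vanish and $\Delta_\rho \uds{n}(x) = 0$. In the remaining cases $x \in S_n$, we have $\uds{n}(x) = 0$ and the only contributing neighbors are those at distance $n-1$, each with $\uds{n}(y) = 2$. For $n \geq 2$ every such $y$ satisfies $y \neq o$, hence $\deg(y) = 4$, and \eqref{eq:Rrotor} yields
\[
R_\rho((y,x),2) \;=\; \left\lfloor \frac{6 - j(y,x)}{4}\right\rfloor \;\in\; \{0,1\},
\]
where $j(y,x) \in \{1,2,3,4\}$ is the position of the directed edge $(y,x)$ in the rotor cycle at $y$. (The base case $n=1$ is handled analogously using $\uds{1}(o) = 1$, $\deg(o) = 2$, and $R_\rho((o,x),1) \in \{0,1\}$.) The proposition therefore reduces to a purely combinatorial dichotomy about $SG$: \textbf{(A)} every $x \in \partial_I B_n$ has exactly one neighbor at distance $n-1$ from $o$; \textbf{(B)} every $x \in S_n \setminus \partial_I B_n$ has exactly two such neighbors. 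Combined with the per-neighbor bound above, (A) gives $\Delta_\rho \uds{n}(x) \in \{0,1\}$ on $\partial_I B_n$ and (B) gives $\Delta_\rho \uds{n}(x) \in \{0,1,2\}$ on $S_n \setminus \partial_I B_n$.

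I would prove (A) and (B) by induction on the scale $k$ with $2^{k-1} \leq n \leq 2^k$, leveraging the recursive decomposition $G_{k+1} = \bigcup_{i=0}^{2} (G_k + 2^k a_i)$. The guiding structural observation is that a vertex of $\partial_I B_n$ sits at a ``bridgehead'' where a fresh fractal cell attaches to $B_n$ through a single cut point, admitting a unique distance-$(n-1)$ predecessor -- mirroring the role played by the origin $o$ at scale zero. A vertex of $S_n \setminus \partial_I B_n$, by contrast, lies at a junction where two geodesic descents from $o$ (through sibling sub-triangles of the cell decomposition) meet, and therefore has exactly two distance-$(n-1)$ neighbors. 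The base cases $n=1,2$ are checked directly; the inductive step uses the fact that as $n$ crosses a power of $2$, each new pair of sphere-$n$ vertices appears as the outer end of a single cut-edge issuing from a vertex already in $\partial_I B_{n-1}$.

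The main obstacle is the careful execution of the geometric dichotomy (A), (B). The statement is intuitively transparent from the nested self-similar structure of $SG$ (and consistent with the uniform harmonic measure on spheres used in \cite{IDLASG}), but a clean induction requires bookkeeping as $n$ sweeps through the intervals $(2^{k-1}, 2^k]$ and as new attachment corners $2^k a_i$ enter the picture at each doubling scale. Once (A) and (B) are established, the proposition follows immediately from the counting argument together with the per-edge bound $R_\rho((y,x),2) \in \{0,1\}$ derived above.
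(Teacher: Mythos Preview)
Your approach is essentially identical to the paper's: both compute $\Delta_\rho \uds{n}(x)$ from the definition, use Lemma~\ref{lem:divisibleodometer} to read off the values of $\uds{n}$ at neighbors, and reduce to the per-edge bound $R_\rho((y,x),2)\in\{0,1\}$. The only difference is that the paper simply asserts the neighbor structure (your (A) and (B)) as known facts about $SG$---writing out explicitly that $x\in\partial_I B_n$ has neighbors $y\in\partial_I B_{n-1}$, $z\in S_n$, $w_1,w_2\in S_{n+1}$, and $x\in S_n\setminus\partial_I B_n$ has neighbors $y_1,y_2\in\partial_I B_{n-1}$, $x_1,x_2\in S_n$---rather than proving them by a separate scale induction as you propose.
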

\begin{proof}
If $x\notin B_n$, then there is no vertex $y\in B_{n-1}$ which is connected to $x$,
so Part \eqref{DS1} of Lemma \ref{lem:divisibleodometer} implies that $\Delta_\rho \uds{n}(x)=0$.
If $x\in \partial_I B_n$, then it is connected to 4 vertices $y$, $z$, $w_1$, and $w_2$, where $y\in \partial_I B_{n-1}$, $z\in S_n$, and $w_1, w_2\in S_{n+1}$.
By Part \eqref{DS1} of Lemma \ref{lem:divisibleodometer}, $\uds{n}(x)=\uds{n}(z)=\uds{n}(w_1)=\uds{n}(w_2)=0$, while by Part \eqref{DS3} we have $\uds{n}(y)=2$. Therefore
\[
\Delta_\rho \uds{n}(x) = R_\rho\left((y,x), \uds{n}(y)\right) = R_\rho\left((y,x),2\right) =\left\{\begin{array}{ll} 1, & \text{if } (y,x) \in \{\rho_1(y), \rho_2(y)\}, \\ 0, & \text{if } (y,x) \in \{\rho_3(y), \rho_4(y)\}.\end{array}\right.
\]
Finally, if $x\in S_n\setminus \partial_I B_n$, then it is connected to 4 vertices $y_1$, $y_2$, $x_1$, and $x_2$, where $y_1, y_2\in \partial_I B_{n-1}$ and $x_1, x_2 \in S_n$. Again by Lemma \ref{lem:divisibleodometer}, $\uds{n}(y_1)=\uds{n}(y_2)=2$ and $\uds{n}(x)=\uds{n}(x_1)=\uds{n}(x_2)=0$, so
\[
\Delta_\rho \uds{n}(x) = \sum_{i=1}^2 R_\rho\left((y_i,x), 2\right)  =
\left\{\begin{array}{ll} 
2, & \text{if } (y_1,x) \in \{\rho_1(y_1), \rho_2(y_1)\} \text{ and } (y_2, x) \in \{\rho_1(y_2), \rho_2(y_2)\},\\ 
1, & \text{if } (y_1,x) \in \{\rho_1(y_1), \rho_2(y_1)\} \text{ and } (y_2, x) \in \{\rho_3(y_2), \rho_4(y_2)\}, \\ 
1, & \text{if } (y_1,x) \in \{\rho_3(y_1), \rho_4(y_1)\} \text{ and } (y_2, x) \in \{\rho_1(y_2), \rho_2(y_2)\}, \\ 
0, & \text{if } (y_1,x) \in \{\rho_3(y_1), \rho_4(y_1)\} \text{ and } (y_2, x) \in \{\rho_3(y_2), \rho_4(y_2)\}.
\end{array}\right.
\]
\end{proof}

Let $\urr{n}$ and $\srr{n}$ denote, respectively, the rotor-router odometer and the final chip configuration associated with the initial configuration $\bi{n}\mathbbm{1}_o$.

\begin{proposition}
\label{prop:RRsupport}
$B_{n-2} \subset {\rm supp}(\urr{n}) \subset B_{n-1}$
and
$B_{n-1} \subset {\rm supp}(\srr{n}) \subset  B_n$.
\end{proposition}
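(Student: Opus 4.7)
The plan is to apply the Friedrich--Levine algorithm (Algorithm \ref{alg:FL}) with the initial approximate odometer $u_1 := \uds{n}$. By Lemma \ref{lem:divisibleodometer}, ${\rm supp}(u_1) = B_{n-1}$ with $u_1 \equiv 2$ on $\partial_I B_{n-1}$, and by Proposition \ref{prop:DSapprox} the approximate configuration $\sigma_1 = \bi{n}\mathbbm{1}_o + \Delta_\rho \uds{n}$ vanishes outside $B_n$, takes values in $\{0,1\}$ on $\partial_I B_n$, and values in $\{0,1,2\}$ on $S_n \setminus \partial_I B_n$. In particular, the only hills (vertices with $\sigma_1 > 1$) are the ``twos'' on $S_n \setminus \partial_I B_n$, and since $u_1$ vanishes there, any firing of such a vertex would route one chip to one of its four neighbors, all of which lie in $\partial_I B_{n-1} \cup S_n \subset B_n$. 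This immediately yields the rough outer bounds $\text{supp}(\urr{n}) \subseteq B_n$ and $\text{supp}(\srr{n}) \subseteq B_n$; sharpening to $\text{supp}(\urr{n}) \subseteq B_{n-1}$ requires the more delicate error-correction argument outlined below.

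For the sharper outer bound, I would invoke the abelian property of the annihilation step: rather than firing the hills on $S_n\setminus \partial_I B_n$, I would instead \emph{unfire} at the adjacent vertices in $\partial_I B_{n-1}$, which is legal since $u_1 = 2 > 0$ there. This is the ``filling the bulk'' step: each unfiring at $y \in \partial_I B_{n-1}$ pulls exactly one chip back along ${\tt Top}(u_2)(y)$ from a neighboring hill in $S_n$, simultaneously decrementing $u_2(y)$ by $1$ and the excess chip count at the hill by $1$. Exploiting the structure of $SG$---in particular that each hill in $S_n\setminus\partial_I B_n$ has exactly two neighbors in $\partial_I B_{n-1}$, each starting with $u_1 = 2$, and that every vertex in $\partial_I B_{n-1}$ has rotor index within a period bounded by the degree---a careful combinatorial bookkeeping shows that all hills on $S_n$ can be eliminated using only unfirings in $\partial_I B_{n-1}$, without ever activating a vertex outside $B_{n-1}$. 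Hence ${\rm supp}(u_2) \subseteq B_{n-1}$, and since Step 3 (reverse cycle-popping) only removes rotors, ${\rm supp}(u_3) = {\rm supp}(\urr{n}) \subseteq B_{n-1}$. The outer bound ${\rm supp}(\srr{n}) \subseteq B_n$ then follows because chips can only travel from fired vertices to their graph neighbors.

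For the inner bounds I would proceed by induction on $n$, the base cases $n \in \{1,2\}$ being direct finite verifications. In the inductive step, ``pulling the marionette'' uses the fact that after the bulk-filling phase the configuration restricted to $B_{n-1}$ matches, up to a controlled number of boundary chips supplied by the unfired $\partial_I B_{n-1}$ vertices, the rotor-router problem at level $n-1$ with mass $\bi{n-1}$ plus a surplus. The inductive hypothesis then delivers $B_{n-3} \subset {\rm supp}(\urr{n-1})$, and a counting argument using the constancy $\uds{n}|_{\partial_I B_{n-1}} = 2$ together with $\bi{n} = \bi{n-1} + |\partial_I B_{n-1}| + \ldots$ upgrades this to $B_{n-2} \subset {\rm supp}(\urr{n})$. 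The remaining bound $B_{n-1} \subset {\rm supp}(\srr{n})$ follows because each vertex in $\partial_I B_{n-1} \cup \left(B_{n-2}\text{-neighbors}\right)$ receives a chip either during the bulk firing phase or from an adjacent fired vertex.

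The main obstacle is the bookkeeping in the ``filling the bulk'' phase: one must verify that the unfiring prescription at $\partial_I B_{n-1}$ is consistent for every periodic simple rotor mechanism, i.e., that the rotors at the top of each stack in $\partial_I B_{n-1}$ actually point to a hill when needed, and that the unfiring process terminates without inducing new hills or violating acyclicity of ${\tt Top}(u_2)$. This relies crucially on the uniformity of the harmonic measure on spheres in $SG$ and on the uniform value $\uds{n} = 2$ along $\partial_I B_{n-1}$, both specific to the fractal geometry.
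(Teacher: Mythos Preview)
Your argument has a genuine gap in the outer bound. You assert that ``the only hills (vertices with $\sigma_1 > 1$) are the `twos' on $S_n \setminus \partial_I B_n$,'' but Proposition~\ref{prop:DSapprox} says nothing about $\sigma_1$ on $B_{n-1}$. The stack Laplacian $\Delta_\rho$ differs from the usual Laplacian $\Delta$ by floor-function errors at every vertex, so for $x\in B_{n-1}$ the value $\sigma_1(x)=\bi{n}\mathbbm{1}_o(x)+\Delta_\rho\uds{n}(x)$ need not equal $1$; there can be many hills and holes scattered throughout $B_{n-1}$. Your unfiring scheme at $\partial_I B_{n-1}$ addresses only the hills on $S_n$, and even there it is not justified: an unfiring at $y\in\partial_I B_{n-1}$ pulls a chip from ${\tt t}(\rho_{u_2(y)}(y))$, which for a general periodic simple mechanism need not point to a hill---it might point to another vertex of $\partial_I B_{n-1}$ or to a non-hill in $S_n$, creating a new hole. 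You flag this as ``the main obstacle'' but do not resolve it; in fact it cannot be resolved uniformly over all mechanisms by your scheme.

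The paper avoids this entirely by proving all four containments simultaneously by induction on $n$. At level $n+1$ it uses $\uds{n+1}$ as the approximate odometer and then, in Act~1, invokes the induction hypothesis (specifically $B_{n-1}\subset{\rm supp}(\srr{n})$) to guarantee that some legal fire/unfire sequence places exactly one chip at every vertex of $B_{n-1}$---this is what handles the uncontrolled hills and holes in the bulk. Act~2 then analyzes only the two spheres $S_n$ and $S_{n+1}$: it considers the \emph{worst-case} rotor configuration $\rho_{\max}$ that deposits the maximal number of chips on $S_{n+1}$, counts chips using $|S_n|+|\partial_I B_n|$ leftover from Act~1, and shows via an explicit odd/even case analysis that the required pulls from $S_{n+1}$ to $S_n$ fill $S_n$ exactly. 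Any other rotor configuration requires no more pulls than $\rho_{\max}$, so the bounds follow. Your inductive sketch for the inner bound gestures at a reduction to level $n-1$, but the actual coupling you describe (``matches, up to a controlled number of boundary chips, the rotor-router problem at level $n-1$'') is not the mechanism the paper uses and would need substantial work to make precise.
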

\begin{proof}
We prove this by induction on $n$.
When $n=1$, $\bi{1}=2$, and the claim clearly holds.
Now assume the claim holds for $n$.
Then we have ${\rm supp}(\srr{n+1}) \supset B_{n-1}$, \emph{i.e.,} $B_{n-1}$ is fully occupied.
To complete the induction, we need to settle the remaining $\bi{n+1}-|B_{n-1}|$ chips, and show that they fill $S_n$ and do not overrun $B_{n+1}$.

We apply Algorithm \ref{alg:FL}, using the divisible sandpile odometer $\uds{n+1}$ as the approximate odometer in Step 1.
Let $u'_{n+1}$ and $\sigma'_{n+1}= \bi{n+1}\mathbbm{1}_o + \Delta_\rho \uds{n+1}$ be, respectively, the odometer and the chip configuration which serve as input to Step 2 of Algorithm \ref{alg:FL}.
By Lemma \ref{lem:divisibleodometer}, Part \eqref{DS1}, $u'_{n+1}(x)=0$ for all $x\notin B_n$, so no vertex in $(B_n)^c$ is a hole.
Moreover, by Proposition \ref{prop:DSapprox}, if $x\in \partial_I B_{n+1}$ (resp.\@ $x\in (B_{n+1})^c$), then $\sigma'_{n+1}(x) \in \{0,1\}$ (resp.\@ $\sigma'_{n+1}(x)=0$), so every vertex in $\partial_I B_{n+1} \cup B_{n+1}^c$ is neither a hill nor a hole.

According to the above rationale, we carry out Step 2 in two acts:

\emph{\underline{Act 1: Filling the bulk.}} Fire and unfire vertices in $B_{n-1}$ so as to place 1 chip at each vertex in $B_{n-1}$, in accordance with the induction hypothesis. 
This leaves
\[
\bi{n+1}-|B_{n-1}| = |B_n|+|\partial_I B_n| - |B_{n-1}| = |S_n| + |\partial_I  B_n|
\] 
chips in $B_{n+1} \setminus B_{n-1}$.
In particular, since $u'_{n+1}(x)>0$ for each $x\in S_n$, we will fire and unfire as many vertices in $S_n$ as needed until 1 chip is placed at each vertex in $S_n$. This is carried out in the next act.

\emph{\underline{Act 2: Pulling the marionette.}}
According to Proposition \ref{prop:DSapprox}, there exists (many) a rotor configuration $\rho_{\max}$ which places the \emph{maximal} number of chips on $S_{n+1}$ after Act 1, namely:
\[
\sigma'_{n+1}(x) =
 \left\{\begin{array}{ll}
 1,& \text{if } x\in \partial_I B_{n+1},\\
 2,& \text{if } x\in S_{n+1}\setminus \partial_I B_{n+1}.
 \end{array}
 \right.
\]
The ensuing analysis differs depending on whether $n$ is odd or even. See Figure \ref{fig:RRAproof}.

\begin{figure}
\centering
\begin{subfigure}[t]{0.5\textwidth}
	\centering
	\includegraphics[height=0.12\textheight]{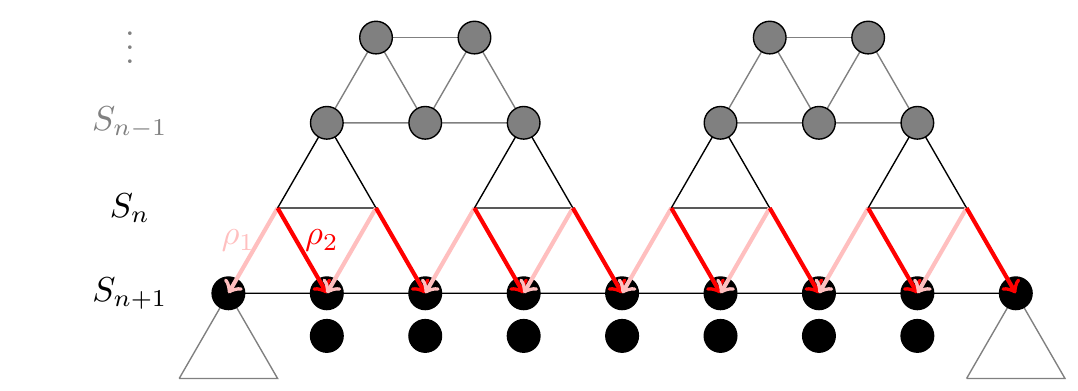}
	\caption{$n$ odd}
\end{subfigure}%
~
\begin{subfigure}[t]{0.5\textwidth}
	\centering
	\includegraphics[height=0.12\textheight]{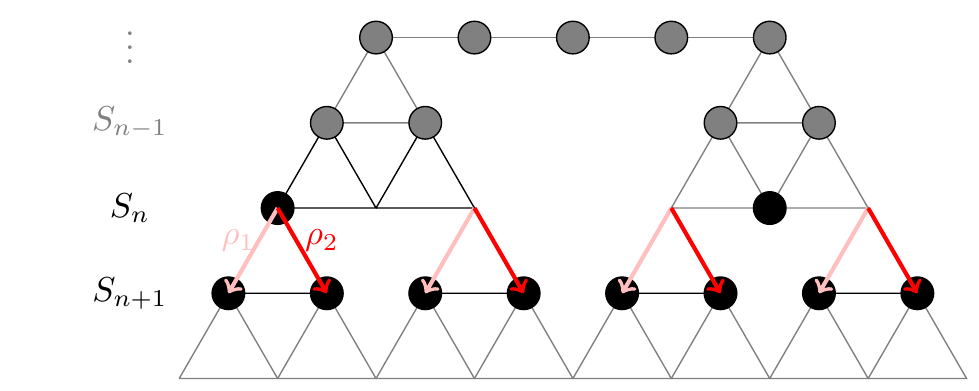}
	\caption{$n$ even}
\end{subfigure}
\caption{The setup for the ``marionette'' act in the proof of Proposition \ref{prop:RRsupport}.}
\label{fig:RRAproof}
\end{figure}

When \underline{$n$ is odd}, there are $2 |S_{n+1}| - |\partial_I B_{n+1}|$ chips on $S_{n+1}$. Moreover, since $u'_{n+1}(x)=2$, every vertex $x\in S_n$ carries a rotor which is targeted towards some vertex in $S_{n+1}$.
Since $|S_n|=|\partial_I B_n|$, deduce that $S_n$ carries $\left(|S_n|+|\partial_I B_n|\right)-\left(2|S_{n+1}| - |\partial_I B_{n+1}|\right)=2|S_n| - 2|S_{n+1}| + |\partial_I B_{n+1}|=0$ chips.
Therefore we unfire every vertex $x\in S_n$ to pull one chip from its successor vertex ${\tt t}(\rho_2(x))$ in $S_{n+1}$, and 
then fire the remaining hills on $S_{n+1}$.

When \underline{$n$ is even}, observe that $S_{n+1}=\partial_I B_{n+1}$ and $|\partial_I B_n|=\frac{1}{2} |S_{n+1}|$. 
Given that every vertex on $S_{n+1}$ carries 1 chip, deduce that $S_n$ carries $\left(|S_n|+|\partial_I B_n|\right) - |S_{n+1}|=|S_n|-|\partial_I B_n|$ chips, that is, there are at least $|\partial_I B_n|$ holes on $S_n$.
Therefore we fire and unfire as many vertices in $B_n$ as necessary until every vertex in $S_n$ carries $1$ chip. In the process $|\partial_I B_n|$ chips will be pulled from $S_{n+1}$, leaving $|\partial_I B_n|$ chips on $S_{n+1}$.

In either case, we arrive at a configuration $\sigma''_{n+1}$ with 
\begin{align}
\label{eq:sn1}
B_n \subset {\rm supp}(\sigma''_{n+1}) \subset B_{n+1}.
\end{align}

Using another rotor configuration $\rho$ places no more chips on $S_{n+1}$, and thus requires no more pulls from $S_{n+1}$, than $\rho_{\rm max}$.
Consequently, the resulting configuration $\sigma''_{n+1}$ will satisfy \eqref{eq:sn1}.
If $u''_{n+1}$ denotes the corresponding odometer function, we deduce that $B_{n-1}\subset {\rm supp}(u''_{n+1}) \subset B_n$.
(The inner bound ${\rm supp}(\sigma''_{n+1})\supset B_n$ implies that all vertices in $B_{n-1}$ have fired.)
This completes Step 2.

Finally, Step 3 (reverse cycle-popping) involves unfirings only and does not alter the chip configuration. So the final rotor-router cluster $\srr{n+1}$ is identical to $\sigma''_{n+1}$ and satisfies \eqref{eq:sn1}, while the support of the odometer cannot increase. In other words, $B_n \subset {\rm supp}(\srr{n+1})\subset B_{n+1}$ and $B_{n-1} \subset {\rm supp}(\urr{n+1})\subset B_n$.
\end{proof}

\begin{proof}[Proof of Theorem \ref{thm:RRA}]
For each $m\in \mathbb{N}$, let $n_m =\max\{k\geq 0: \bi{k} \leq m\}$.
Then $\bi{n_m}\leq m < \bi{n_m+1}$, which implies that
${\rm supp}(\urr{n_m}) \subset \mathcal{R}(m) \subset {\rm supp}(\urr{n_m+1})$
and
${\rm supp}(\srr{n_m}) \subset \sigma(m) \subset {\rm supp}(\srr{n_m+1})$.
Now apply Proposition \ref{prop:RRsupport} to deduce the theorem.
\end{proof}


\section{Abelian sandpile growth on $SG$: Proofs of Theorems \ref{thm:ASM}, \ref{thm:groupSG}, \ref{thm:tail}, and \ref{thm:radialcycle}} \label{sec:ASM}

The proofs of the various theorems proceed as follows: Theorem \ref{thm:ASM}, Part \eqref{item:ball} $\to$ Theorem \ref{thm:groupSG} $\to$ Theorem \ref{thm:tail}
$\to$ Theorem \ref{thm:radialcycle} $\to$ Theorem \ref{thm:ASM}, Part \eqref{item:renewal}.

Recall that $G_n$ is the level-$n$ Sierpinski gasket pre-fractal graph, which has three corner vertices $o$, $x$, $y$.
Set $\partial G_n=\{x,y\}$.
The graph $G_n^{(s)}$ (resp.\@ $G_n^{(o)}$) is the sinked version of $G_n$ with $\partial G_n$ (resp.\@ $o$) identified as sink.
The sandpile group of $G_n^{(s)}$ (resp.\@ $G_n^{(o)}$) is denoted $\mathcal{R}_n^{(s)}$ (resp.\@ $\mathcal{R}_n^{(o)}$).

\subsection{Sandpile cluster is an exact ball}

In this subsection we prove Theorem \ref{thm:ASM}, Part (\ref{item:ball}).
Given a sandpile configuration $\eta$ on $G$ and a subgraph $G'\subset G$, we denote the restriction of $\eta$ to $V(G')$ by $\eta|_{G'}$.

\begin{lemma}
\label{lem:topplecut}
Let $\eta$ be an unstable configuration on $\bigcup_{n\geq 1} G_n$ with the property that $\left.\eta\right|_{G_n^{(s)}} \in \mathcal{R}_n^{(s)}$ for some $n\in \mathbb{N}$.
Suppose we stabilize $\eta$ at all vertices in $G_n^{(s)}$ and obtain a configuration $\eta^o$ (which may be unstable outside $G_n^{(s)}$), and in doing so, each of the two vertices in $\partial G_n$ is toppled $k$ times.
Then every vertex in $G_n^{(s)}$ is also toppled $k$ times, and $\left.\eta^\circ\right|_{G_n^{(s)}}= \left.\eta\right|_{G_n^{(s)}}$.
\end{lemma}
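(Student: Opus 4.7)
The strategy is to reduce the claim to an iterated application of Dhar's multiplication-by-identity test (Lemma \ref{lem:mit}) on the sinked graph $G_n^{(s)}$, exploiting the hypothesis $\eta|_{G_n^{(s)}}\in\mathcal{R}_n^{(s)}$. The guiding picture is that each toppling of $x$ or $y$ during the ambient $SG$-stabilization acts as a chip injection into the interior of $G_n$, while the Dhar identity input $\sum_{z\in V(G_n)\setminus\partial G_n}\mathsf{N}_{sz}\mathbbm{1}_z$ is precisely what returns any recurrent configuration to itself, firing every non-sink vertex exactly once in the process. Matching these two sources of chips will pin down the non-sink odometer as $k$.

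I would first verify the chip accounting. Each of $x$ and $y$ has degree $4$ in $SG$, with two neighbors in $V(G_n)\setminus\partial G_n$ and two in $V(G_{n+1})\setminus V(G_n)$. A direct inspection of the pre-fractal structure---separating $n=1$, where $x$ and $y$ share the midpoint of edge $xy$ and hence $\mathsf{N}_{sz}=2$ at that vertex, from $n\geq 2$, where the four interior neighbors are distinct and $\mathsf{N}_{sz}\in\{0,1\}$---shows that the cumulative effect on the non-sink vertices produced by one toppling at $x$ together with one toppling at $y$ is exactly the Dhar identity input $\sum_z\mathsf{N}_{sz}\mathbbm{1}_z$. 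Then, by a straightforward induction on $k$ using Lemma \ref{lem:mit}, the odometer $u_{\mathrm{sink}}$ for stabilizing $\eta|_{G_n^{(s)}}+k\sum_z\mathsf{N}_{sz}\mathbbm{1}_z$ on $G_n^{(s)}$ equals $k$ at every non-sink vertex, and the resulting stable configuration is $\eta|_{G_n^{(s)}}$ itself.

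The conclusion follows by identifying $u_{\mathrm{sink}}$ with the restriction of the ambient odometer $u$ to $V(G_n)\setminus\partial G_n$. By the chip accounting, the collective effect on the non-sink vertices of the $k$ ambient topplings at $x$ and the $k$ ambient topplings at $y$ is identical to preloading $k\sum_z\mathsf{N}_{sz}\mathbbm{1}_z$ chips on $G_n^{(s)}$, while the remaining ambient topplings involve only non-sink vertices. Hence $u|_{V(G_n)\setminus\partial G_n}=u_{\mathrm{sink}}\equiv k$ and $\eta^\circ|_{G_n^{(s)}}=\eta|_{G_n^{(s)}}$, as claimed. The most delicate point is this last identification: the ambient process freely interleaves topplings at $x,y$ with topplings inside $G_n$, so one must appeal to the uniqueness of the non-sink odometer given the initial non-sink configuration and the total chip input from the sink vertices---a standard consequence of the abelian property applied to sinked graphs. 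Once this invocation is made carefully, both conclusions fall out simultaneously.
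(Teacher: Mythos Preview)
Your proposal is correct and follows essentially the same approach as the paper: both reduce to Dhar's multiplication-by-identity test (Lemma~\ref{lem:mit}) by observing that a single toppling at each of $x,y$ injects precisely $\sum_z \mathsf{N}_{sz}\mathbbm{1}_z$ into $G_n^{(s)}$, so each such pair of boundary topplings is undone by firing every interior vertex once. You are somewhat more careful than the paper in two places---the $n=1$ case distinction for the neighbor count, and the explicit appeal to odometer uniqueness to justify that the ambient odometer restricted to $G_n^{(s)}$ coincides with the sinked odometer---but the underlying argument is the same.
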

\begin{proof}
On $\eta$ we topple at each of the two vertices in $\partial G_n$ once, and produce the configuration $\eta' = \eta + \sum_{x\in \partial G_n} \Delta'(x,\cdot)$.
In the process every vertex $y$ that is connected to $\partial G_n$ receives an extra chip.
So $\left.\eta'\right|_{G_n^{(s)}} = \left.(\eta+\sum_{y\sim s} \mathbbm{1}_y)\right|_{G_n^{(s)}} = \left.\eta\right|_{G_n^{(s)}} + \sum_{\substack{y\in G_n^{(s)}\\y\sim s}} \mathbbm{1}_y$.
By Lemma \ref{lem:mit}, we can stabilize $\left.\eta'\right|_{G_n^{(s)}}$ by toppling at every vertex in $G_n^{(s)}$ exactly once, and return the original configuration $\left.\eta\right|_{G_n^{(s)}}$.
This process can be repeated as many times as needed.
\end{proof}

Let $A(m)$ and $S(m)$ denote, respectively, the firing set and the receiving set of $(m\mathbbm{1}_o)^\circ$.
The following result is fundamental to the solution of the sandpile growth problem on $SG$.

\begin{proposition}
\label{prop:ball}
For each $m\geq 12$, there exists a unique $(n,m')\in \mathbb{N}^2$ with $m'<m/2$ such that
\begin{center}
\begin{tabular}{lm{1.05in}l}
$(m \mathbbm{1}_o)^\circ = $
&
\includegraphics[height=0.15\textwidth]{tablefig}
&
$\subseteq G_{n+1}$.
\end{tabular}
\end{center}
Moreover:
\begin{enumerate}
\item \label{item:AS} For each $m\in \mathbb{N}$, $B_o(r_m-1) \subset A(m) \subset S(m) = B_o(r_m)$.
\item \label{item:radial} For each $m\geq 12$, $r_m = 2^n + r_{m'-2}$.
\end{enumerate}
\end{proposition}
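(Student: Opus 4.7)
The plan is to induct strongly on $m$, using the fundamental diagram \eqref{eq:FD} together with Lemma \ref{lem:topplecut}. The base cases $m \in \{0, 1, \ldots, 11\}$ are handled by direct computation: the cluster $(m\mathbbm{1}_o)^\circ$ is contained in $G_1 = B_o(2)$, and in each instance one verifies Part \eqref{item:AS} by hand and records the radius $r_m$. For $m \geq 12$, select $n \geq 1$ as the integer for which the cluster exits $G_n$ but lies in $G_{n+1}$; existence and uniqueness of $n$ are folded into the inductive argument (and track the intervals in Table \ref{table:spectrum}).

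Invoking the abelian property, I would first stabilize only the vertices of $G_n \setminus \partial G_n = V(G_n^{(s)})$, pausing any chips that land on $\partial G_n = \{x_0, y_0\}$. This yields a configuration $\eta$ on $G_n^{(s)}$ together with paused masses at $x_0, y_0$. For $m \geq 12$ and $n$ as chosen, enough chips have passed through $G_n^{(s)}$ to exhaust any transient class, so $\eta \in \mathcal{R}_n^{(s)}$. The axial reflection symmetry of $SG$ about the perpendicular bisector of $\overline{x_0 y_0}$ through $o$ is preserved by the initial configuration $m\mathbbm{1}_o$ and by every legal toppling sequence, forcing the paused masses at $x_0$ and $y_0$ to coincide; call this common value $m'$. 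Conservation of chips during the first stage (nothing escapes $\partial G_n$) gives $|\eta| + 2m' = m$, and hence $m' < m/2$.

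The remaining stabilization is then organized into MIT ``rounds'' via Lemma \ref{lem:topplecut}: one round consists of toppling $x_0$ and $y_0$ once each, followed by firing every vertex of $G_n^{(s)}$ once (which, by Lemma \ref{lem:mit}, restores $\eta$). The net effect of one round is that $\eta$ is preserved, $x_0$ and $y_0$ each lose $2$ chips, and each tail (the copy of $G_n$ attached at $x_0$, resp.\ $y_0$) receives $1$ chip at each of its two interface vertices. Iterating decouples the remainder into two independent sandpile problems on the tails. A careful matching of toppling counts at $x_0$ identifies the tail problem with the canonical single-source growth $((m'-2)\mathbbm{1}_{x_0})^\circ$ in the tail copy of $G_n$, viewed as a one-sided $SG$ with $x_0$ as its degree-$2$ corner vertex; the shift by $-2$ compensates the degree-$4$ threshold of $x_0$ in $SG$ (where $2$ of every $4$ chips shed come back via MIT) against the degree-$2$ threshold in the tail-only view.

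Since $m' - 2 < m$, the inductive hypothesis applies to each tail: the tail cluster at $x_0$ is the ball $B_{x_0}(r_{m'-2})$ measured in the tail's graph metric, and likewise at $y_0$. Because $x_0, y_0$ are cut vertices separating $G_n$ from its tails, every path in $SG$ from $o$ to $v$ in the tail at $x_0$ passes through $x_0$, so $d(o, v) = 2^n + d_{\mathrm{tail}}(x_0, v)$. Therefore $G_n \cup B_{x_0}(r_{m'-2}) \cup B_{y_0}(r_{m'-2}) = B_o(2^n + r_{m'-2})$, giving $S(m) = B_o(r_m)$ with $r_m = 2^n + r_{m'-2}$. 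The inner inclusion $B_o(r_m - 1) \subset A(m)$ follows from the inductive inner bound in each tail together with the MIT rounds, which force every vertex of $G_n^{(s)}$ to fire at least once; uniqueness of $(n, m')$ is immediate from the decomposition. The principal obstacle is the tail identification: rigorously aligning toppling thresholds at $x_0$ across the two pictures, tracking the back-and-forth chip exchange through $x_0$ over many MIT rounds, and distilling precisely the net $-2$ shift (rather than some $m'$-dependent correction).
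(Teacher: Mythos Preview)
Your proposal is correct and follows essentially the same route as the paper: partial stabilization on $G_n^{(s)}$ yielding a recurrent $\eta$ with symmetric paused masses $m'$ at the cut vertices, then MIT rounds via Lemma~\ref{lem:topplecut} to decouple the tails as independent copies of $((m'-2)\mathbbm{1}_o)^\circ$, and conclude by induction. Your chip-conservation argument $|\eta|+2m'=m$ for $m'<m/2$ is in fact cleaner than the paper's, which defers that inequality to forward references (Propositions~\ref{prop:maxtopple} and~\ref{prop:merge}); the paper also makes the even/odd parity of $m'$ explicit (the cut vertex ends at $2$ or $3$ chips, and one uses $r_{m'-3}=r_{m'-2}$ in the odd case), which you may wish to add for completeness.
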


\begin{proof}
It is direct to check that for each $m<12$, $S(m) \subset G_1$, $S(m)=B_o(r_m)$ for some $r_m \in \{0,1,2\}$, and $A(m)=B_o(r_m-1)$, with the understanding that $B_o(-1)=\emptyset$.

If $m\geq 12$, we obtain $(m\mathbbm{1}_o)^\circ$ according to the following algorithm.
First topple and stabilize at every vertex of $G_n^{(s)}$, but pause any excess chips on $\partial G_n$.
This produces a recurrent configuration $\eta\in \mathcal{R}_n^{(s)}$ in $G_n^{(s)}$.
By the axial symmetry of $G_n$, each of the two cut points in $\partial G_n$ carries the same number of chips $m'$.
If $m'< 4$ (degree of the cut point), we are done.
Otherwise, we topple on $\partial G_n$, but with each toppling we also topple once at every vertex in $G_n^{(s)}$.
By Lemma \ref{lem:topplecut}, this leaves $\left.\eta\right|_{G_n^{(s)}}$ invariant while additional chips are fired into $G_{n+1} \setminus G_n$.
Continue stabilizing at every vertex in $G_{n+1}^{(s)}$ and, if necessary, pause any excess chips on $\partial G_{n+1}$.
If the resulting configuration is stable, we are done.
Otherwise, continue the above process into $G_{n+2}$.
This algorithm proves the claimed diagram.
The condition $m'<m/2$ follows from Propositions \ref{prop:maxtopple} and \ref{prop:merge} below.

Let us make two observations.
First, since $m$ is finite, the algorithm terminates.
Second, with each simultaneous toppling on $G_n$, each $x\in \partial G_n$ loses $4$ chips to its neighboring vertices, and receives $2$ chips back from $y\sim x$, $y\in G_n^{(s)}$, thereby losing a net number of $2$ chips while keeping $\left.\eta\right|_{G_n^{(s)}}$ intact. 
As a result, upon further toppling and stabilizing on $G_{n+1}\setminus G_n$ (plus any additional simultaneous toppling on $G_n$), the number of chips at $x$ decrements in steps of $2$, until $2$ (resp.\@ $3$) chips remain if $m'$ is even (resp.\@ odd). 
This process generates a copy of $((m'-2)\mathbbm{1}_o)^\circ$ (resp.\@ $((m'-3)\mathbbm{1}_o)^\circ$) in each connected component of $G_{n+1}\setminus G_n$.

To prove Part (\ref{item:AS}), we claim that for every $m\in \mathbb{N}$ such that $S(m) \subset G_n$, there exists $r_m \in \mathbb{N}_0$ such that 
$B_o(r_m-1)\subset A(m) \subset S(m) = B_o(r_m)$. 
When $n=1$, this claim holds by virtue of the first paragraph of the proof.
Now suppose the claim holds at level $n$.
Then for every $m\in \mathbb{N}$ such that $G_n \subsetneq S(m) \subset G_{n+1}$, it follows from the previous paragraph and the induction hypothesis that $S(m)=B_o(r_m)$ for some $r_m \in (2^n, 2^{n+1}]$, and that every vertex in $B_o(r_m-1)$ has fired, \emph{i.e.,} $A(m) \supset B_o(r_m-1)$.

To prove Part (\ref{item:radial}), we deduce from the diagram and Part (\ref{item:AS}) that $r_m = 2^n + r_{m'-2}$ (resp.\@ $r_m=2^n +r_{m'-3}$) if $m'$ is even (resp.\@ odd).
Recall (from \S\ref{sec:sandpileintro}) that when $m$ is even, $(m\mathbbm{1}_o)^\circ$ and $((m+1)\mathbbm{1}_o)^\circ$ differs only at the origin $o$. This implies that $r_{m'-3}=r_{m'-2}$ when $m'$ is odd.
\end{proof}

\begin{remark}
Both containments in $B_o(r_m-1) \subset A(m) \subset B_o(r_m)$ are strict in general.
For example, both $A(12)$ and $A(14)$ are equal to $G_1 \setminus \partial G_1$ (neither cut point on $\partial G_1$ topples), which strictly contains $B_o(1)$ and is strictly contained in $B_o(2)$.
\end{remark}

\subsection{Toppling identities \& the identity element of the sandpile group} \label{sec:toppling}

In this subsection we prove Theorem \ref{thm:groupSG}.
Let's begin with the toppling identities.
For concreteness, in the proof below we choose $e_n$ (\emph{cf.\@} Definition \ref{def:en}) to be the ``background'' recurrent configuration, and study the  toppling patterns over $e_n$.

\begin{lemma}
\label{lem:stab}
For each $n\in\mathbb{N}$, the following stabilizations hold. (The number at each sink vertex indicates the number of chips absorbed there.) 
\begin{align}
\label{2top}
\begin{array}{lm{1in}lm{1in}}
\text{With two sink vertices at the bottom:} \quad &
\includegraphics[width=0.15\textwidth]{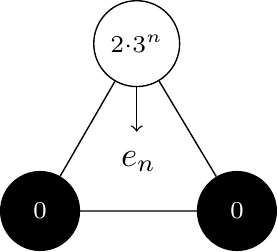} &
\longrightarrow &
\includegraphics[width=0.15\textwidth]{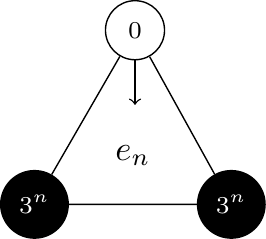} 
\end{array} \\
\label{1top}
\begin{array}{lm{1in}lm{1in}}
\text{With one sink vertex at the bottom-left:} \quad &
\includegraphics[width=0.15\textwidth]{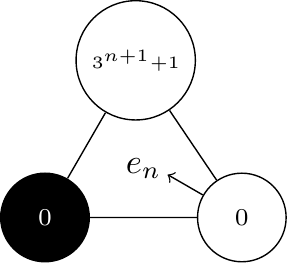} &
\longrightarrow &
\includegraphics[width=0.15\textwidth]{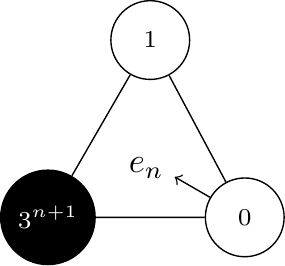}
\end{array}
\end{align}
\end{lemma}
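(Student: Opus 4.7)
My plan is to prove the two stabilization identities \eqref{2top} and \eqref{1top} simultaneously by induction on $n$, exploiting the self-similar decomposition of $G_{n+1}$ into three copies of $G_n$ meeting only at three cut vertices. For the base case $n=1$, the graph has six vertices, so the claimed stabilization can be verified by applying any sequence of legal topplings until the configuration stabilizes; the abelian property guarantees the count of chips absorbed at each sink is unique and independent of the toppling order.

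For the inductive step, I would process the three sub-copies of $G_n$ inside $G_{n+1}$ in sequence, using the cut-point structure to localize the dynamics. The top copy (containing $o$) carries the initial chip excess, and its two bottom corners are cut vertices shared with the two lower copies. Treating these shared cut vertices temporarily as sinks for the top copy, the inductive version of \eqref{2top} at level $n$ tells us exactly how many chips flow out of the top copy into each lower copy. I would then feed those chips into each lower copy and apply the inductive version of \eqref{1top}, since in each lower copy only one corner (a genuine bottom corner of $G_{n+1}$) is an actual sink while the top cut vertex is internal. Whenever an internal cut vertex receives enough extra chips to force an additional ``global'' toppling of its containing subcopy, I would invoke the multiplication by identity test (Lemma \ref{lem:mit}): a synchronized toppling at every vertex of a subcopy leaves its underlying recurrent configuration unchanged while ejecting a prescribed number of chips out through each cut vertex, which keeps the bookkeeping clean.

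The axial symmetry of the setup across the vertical axis through $o$ should guarantee that in \eqref{2top} the two bottom sinks absorb the same number of chips, reducing the inductive step to tracking a single scalar count, which I expect to satisfy a recursion with factor $3$ consistent with the toppling identities of Theorem \ref{thm:groupSG}. The main obstacle will be verifying that, after the successive stabilizations of the three subcopies and the compensating identity-test topplings, the net chip count reaching each genuine sink of $G_{n+1}$ precisely matches the number pictured in the target diagram; doing so cleanly requires care at the cut vertices, where chips may oscillate between subcopies before settling, and where one must use Lemma \ref{lem:topplecut} (or its argument) to ensure that the restriction of the configuration to each stabilized subcopy really does return to the prescribed recurrent pattern.
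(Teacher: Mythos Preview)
Your proposal is correct and follows essentially the same route as the paper's proof: simultaneous induction on $n$, direct verification at $n=1$, and at level $n+1$ the self-similar decomposition into three level-$n$ cells with repeated applications of the inductive hypotheses at the cut vertices, using Lemma~\ref{lem:topplecut} (Dhar's identity test) to keep the already-stabilized subcell configurations intact while chips pass through. The paper confirms your anticipation that \eqref{1top} is the delicate case: its induction step is carried out via a sequence of thirteen explicit diagrams (Figure~\ref{fig:sideind}) tracking exactly the oscillation of chips between subcells that you flag as the main obstacle, whereas \eqref{2top} settles in three steps.
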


\begin{proof}
We prove both results by induction on $n$.
The $n=1$ case is a direct calculation left for the reader. 
Now suppose both \eqref{2top} and \eqref{1top} hold at level $n$. 
To verify the claim at level $n+1$, recall how $e_{n+1}$ is constructed per Definition \ref{def:en}. We then apply the induction hypothesis to each level-$n$ cell as many times as needed, using Lemma \ref{lem:topplecut}. After each toppling operation over a single (or a pair of adjacent) level-$n$ cells, pause the chips at the cut (or corner) vertices. Then based on the number of the chips available, we carry out further operations until every vertex, except the sinks, carries a nonnegative number of chips fewer than its degree.
For \eqref{2top} the induction step is straightforward:
\begin{center}
\begin{tabular}{m{1.55in}lm{1.55in}lm{1.55in}}
\includegraphics[width=0.25\textwidth]{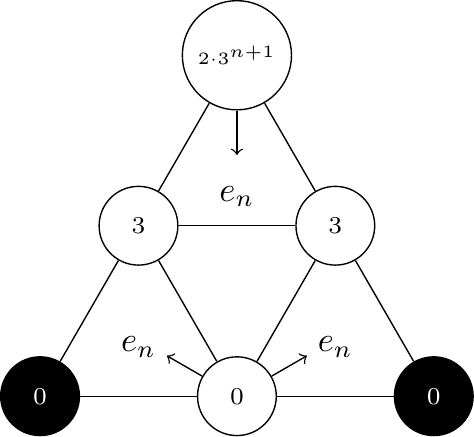}
&
$\longrightarrow$
&
\includegraphics[width=0.25\textwidth]{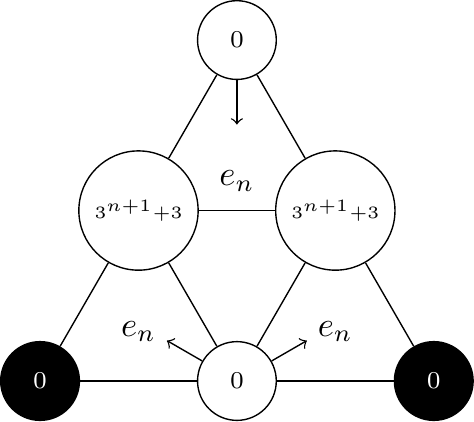}
&
$\longrightarrow$
&
\includegraphics[width=0.25\textwidth]{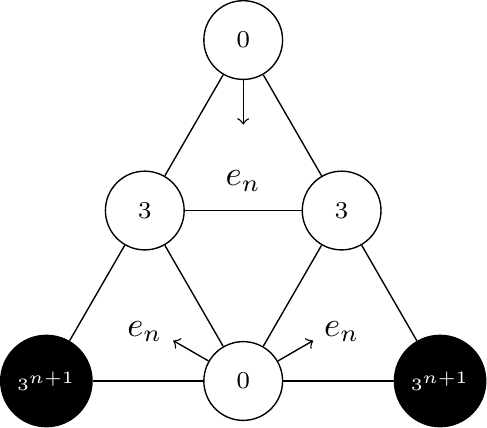}
\end{tabular}
\end{center}
For \eqref{1top} the induction step is described in Figure \ref{fig:sideind}.
\begin{figure}
\begin{center}
\begin{tabular}{lm{1.6in}lm{1.6in}lm{1.6in}}
&
\includegraphics[width=0.25\textwidth]{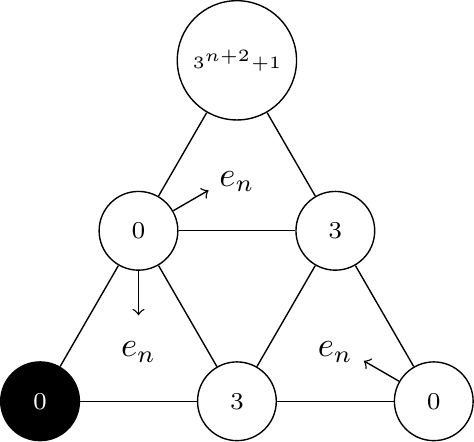}
&
$\longrightarrow$
&
\includegraphics[width=0.25\textwidth]{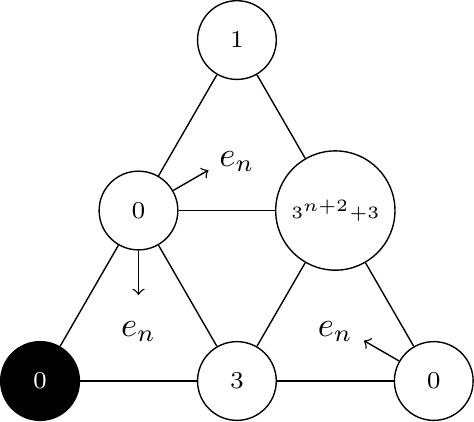}
&
$\longrightarrow$
&
\includegraphics[width=0.25\textwidth]{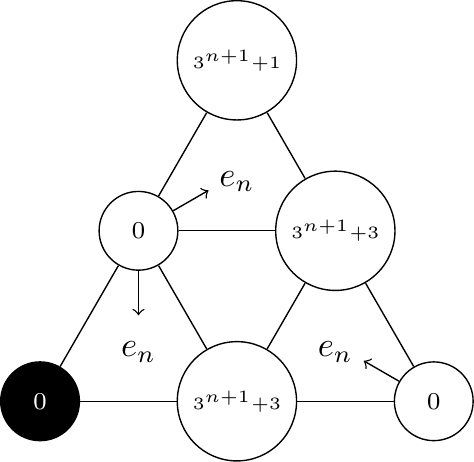}
\\
$\overset{*}{\longrightarrow}$
&
\includegraphics[width=0.25\textwidth]{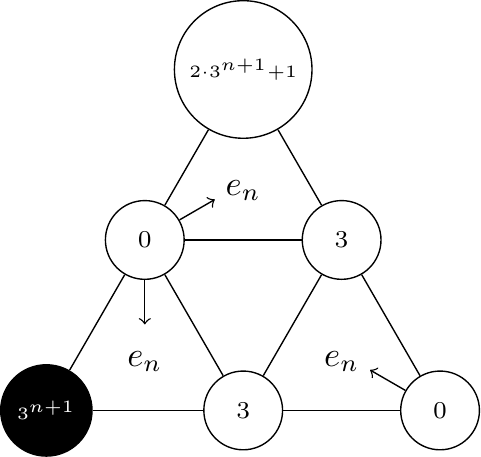}
&
$\longrightarrow$
&
\includegraphics[width=0.25\textwidth]{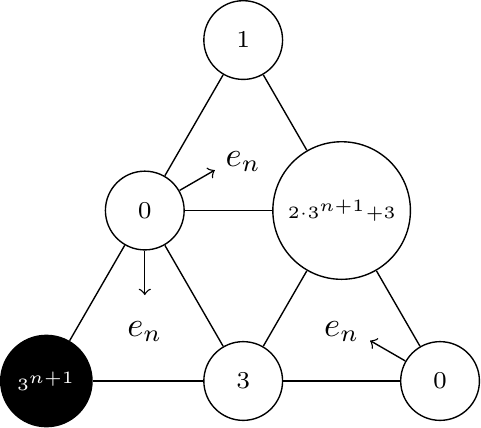}
&
$\longrightarrow$
&
\includegraphics[width=0.25\textwidth]{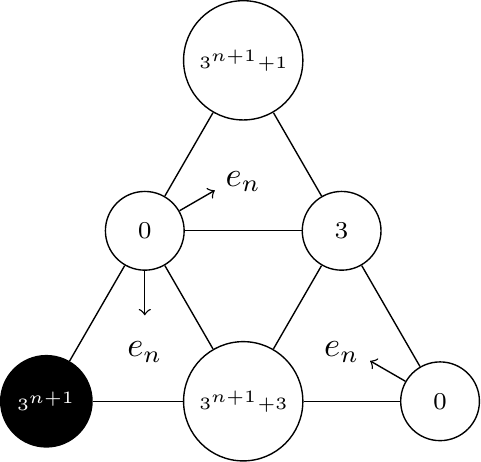}
\\
$\longrightarrow$
&
\includegraphics[width=0.25\textwidth]{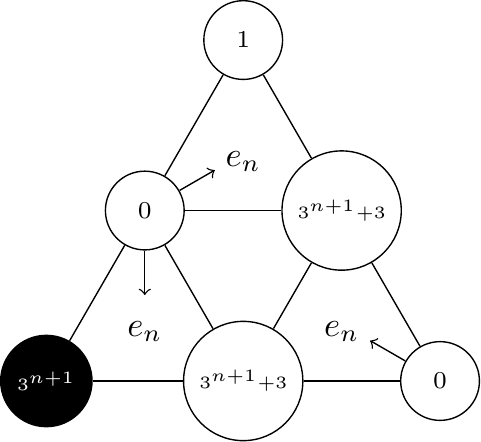}
&
$\overset{*}{\longrightarrow}$
&
\includegraphics[width=0.25\textwidth]{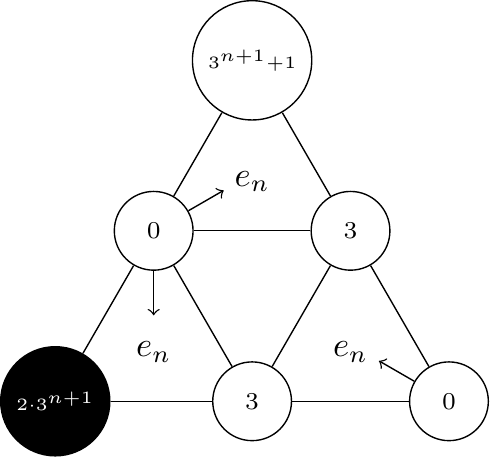}
&
$\longrightarrow$
&
\includegraphics[width=0.25\textwidth]{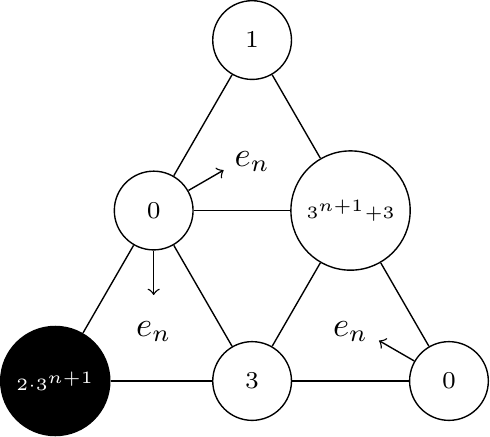}
\\
$\longrightarrow$
&
\includegraphics[width=0.25\textwidth]{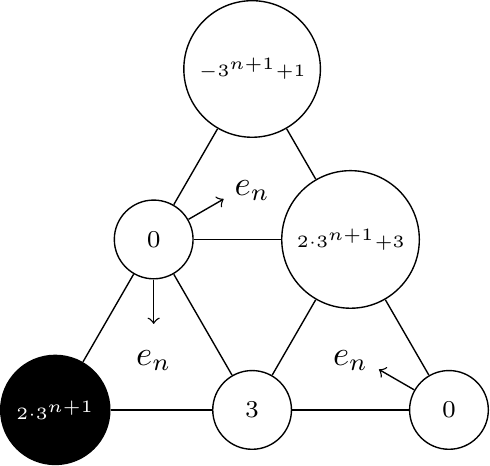}
&
$\longrightarrow$
&
\includegraphics[width=0.25\textwidth]{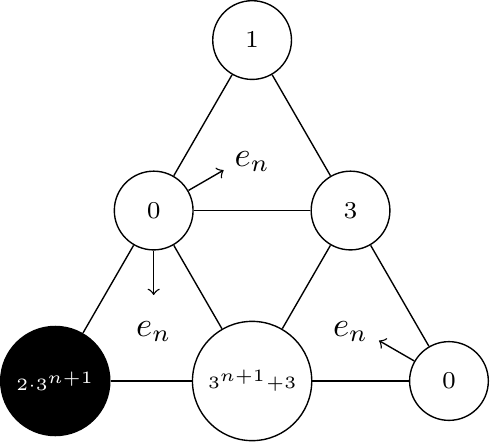}
&
$\longrightarrow$
&
\includegraphics[width=0.25\textwidth]{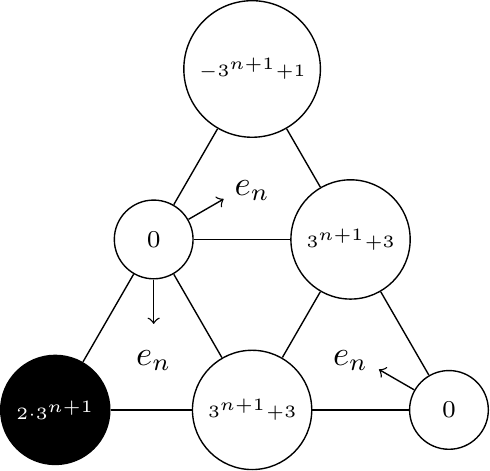}
\\
$\overset{*}{\longrightarrow}$
&
\includegraphics[width=0.25\textwidth]{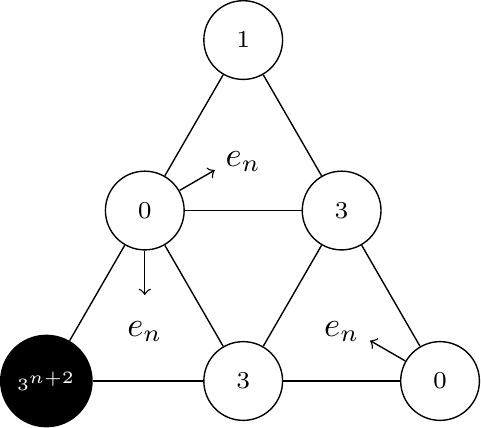}
\end{tabular}
\end{center}
\caption{The induction step in the stabilization \eqref{1top} of Lemma \ref{lem:stab}. Arrows with an asterisk $*$ indicate an application of Lemma \ref{lem:topplecut}.}
\label{fig:sideind}
\end{figure}
\end{proof}

\begin{remark}
\label{rem:en}
For the configuration $e_n$, we note that
\begin{tabular}{m{0.8in}}
\includegraphics[width=0.135\textwidth]{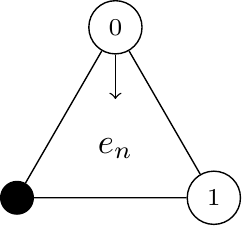}
\end{tabular}
is recurrent, while
\begin{tabular}{m{0.8in}}
\includegraphics[width=0.135\textwidth]{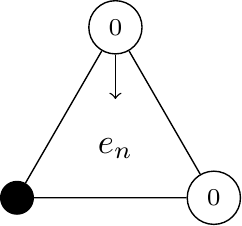}
\end{tabular}
is not recurrent, as can be checked using the burning test and by induction on $n$. 
\end{remark}

Next we establish the identity elements.

\begin{proposition}
\label{prop:id2s}
The identity element $e_n$ of $(\mathcal{R}_n^{(s)},\oplus)$ is 
\begin{tabular}{m{0.8in}l}
\includegraphics[width=0.135\textwidth]{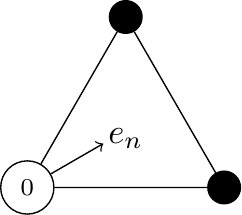}&.
\end{tabular}
\end{proposition}
\begin{proof}
We prove by induction on $n$ that $(2e_n)^\circ = e_n$, and upon stabilization each sink vertex in $\partial G_n$ absorbs $\frac{3}{2}(3^n-1)$ chips.
(This cardinality makes sense since $3(3^n-1)$ is the total number of chips in $e_n$.)

The base case $n=1$ is a straightforward computation. 
For the induction step, assume that $e_n$ is the identity element of $\mathcal{R}_n^{(s)}$, and that in the stabilization of $2 e_n$, each sink vertex receives $\frac{3}{2}(3^n-1)$ chips. Let $e_{n+1}$ be constructed according to Definition \ref{def:en}. Then we stabilize $2e_{n+1}$ using the induction hypothesis, followed by an application of \eqref{1top} in Lemma \ref{lem:stab}, as depicted in Figure \ref{fig:idind}.
\begin{figure}
\begin{center}
\begin{tabular}{m{1.7in}lm{1.7in}lm{1.7in}}
\includegraphics[height=0.25\textwidth]{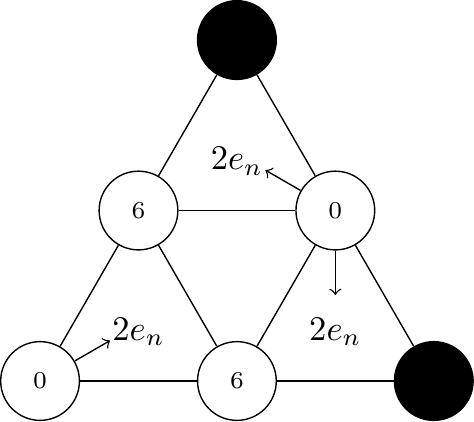}
&
$\longrightarrow$
&
\includegraphics[height=0.25\textwidth]{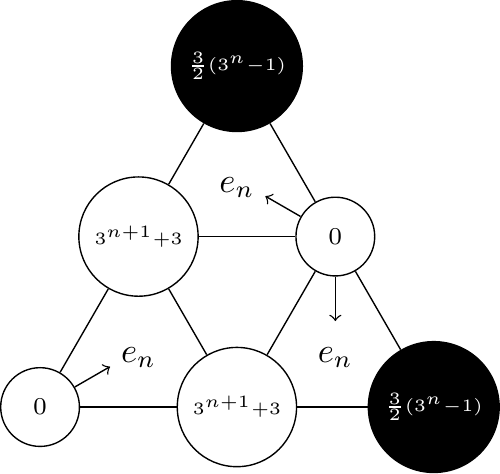}
&
$\longrightarrow$
&
\includegraphics[height=0.25\textwidth]{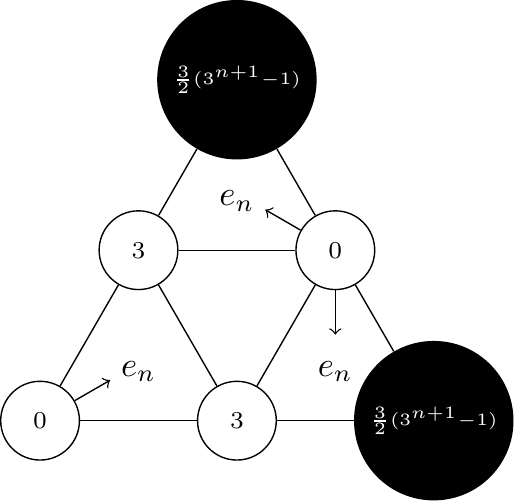}
\end{tabular}
\end{center}
\caption{The induction step in the proof of $(2e_n)^\circ = e_n$.}
\label{fig:idind}
\end{figure}
This proves that $e_{n+1}$ is the identity element of $\mathcal{R}_{n+1}^{(s)}$.
\end{proof}

Using Proposition \ref{prop:id2s} and the toppling identity \eqref{2top} we can derive a number of useful consequences.

\begin{proposition}
\label{prop:recper}
For every $\eta\in \mathcal{R}_n^{(s)}$,
\begin{align}
\label{eq:rec2top}
\begin{array}{m{1in}lm{.95in}l}
\includegraphics[height=0.15\textwidth]{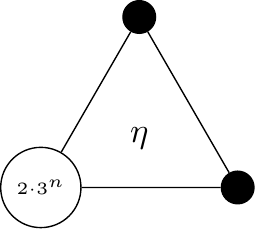} &
\longrightarrow &
\includegraphics[height=0.15\textwidth]{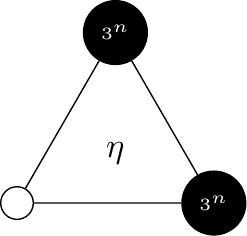} &
.
\end{array}
\end{align}
\end{proposition}
\begin{proof}
The stabilization \eqref{2top} in Lemma \ref{lem:stab} says that $(2\cdot 3^n) \mathbbm{1}_o\oplus e_n= e_n$, with each sink vertex receiving $3^n$ chips upon stabilization. 
By the abelian property, $(2\cdot 3^n)\mathbbm{1}_o\oplus \eta = \left((2\cdot 3^n)\mathbbm{1}_o \oplus e_n\right) \oplus \eta = e_n\oplus \eta=\eta$, and each sink vertex still receives $3^n$ chips upon stabilization.
\end{proof}

We can also reverse the process \eqref{eq:rec2top}, which leads to the following stabilization:
\begin{proposition}
\label{prop:rev1top}
For every $\eta\in \mathcal{R}_n^{(o)}$,
\begin{align}
\label{eq:rev1top}
\begin{array}{m{1in}lm{.95in}l}
\includegraphics[height=0.15\textwidth]{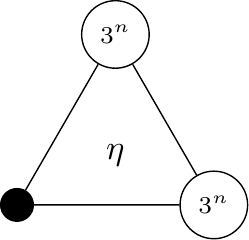} &
\longrightarrow &
\includegraphics[height=0.15\textwidth]{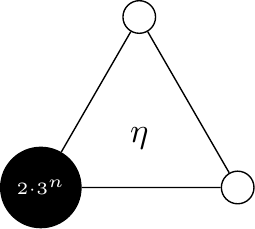} &
.
\end{array}
\end{align}
\end{proposition}

We now have all the tools to prove
\begin{proposition}
\label{prop:id1s}
The identity element $e_n^{(o)}$ of $(\mathcal{R}_n^{(o)}, \oplus)$ is
\begin{tabular}{m{0.8in}l}
\includegraphics[width=0.135\textwidth]{Mno}&.
\end{tabular}
\end{proposition}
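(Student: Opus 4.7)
The plan is to mimic the strategy used for Proposition \ref{prop:id2s}, proving by induction on $n$ that $(2 e_n^{(o)})^\circ = e_n^{(o)}$ on $G_n^{(o)}$, and tracking the number of chips absorbed by the sink $o$ along the way. Once this idempotence is in hand, the identity-element conclusion follows from the standard fact that the only idempotent in the finite abelian sandpile group $(\mathcal{R}_n^{(o)},\oplus)$ is its identity; membership $e_n^{(o)} \in \mathcal{R}_n^{(o)}$ can be verified separately by an induction using the burning test, exactly in the spirit of Remark \ref{rem:en}.

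The base case $n=1$ is a direct toppling computation on $G_1^{(o)}$. For the inductive step I exploit the self-similar construction of $e_{n+1}^{(o)}$: viewing $G_{n+1}^{(o)}$ as three level-$n$ subcells joined at two cut vertices $c_1, c_2$, the subcell containing $o$ is naturally of type $G_n^{(o)}$, while the other two subcells can be treated as $G_n^{(s)}$-type once Lemma \ref{lem:topplecut} justifies viewing $c_1, c_2$ as temporary sinks. Starting from $2 e_{n+1}^{(o)}$, I first stabilize cellwise: the induction hypothesis collapses the doubled $o$-subcell back to $e_n^{(o)}$ while emitting a known number of chips toward $c_1$ (or $c_2$), and Proposition \ref{prop:id2s} collapses each of the other two doubled subcells back to $e_n$ while depositing chips at both cut vertices. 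The accumulated chips at $c_1, c_2$ then fuel further rounds of cellwise topplings, governed by Proposition \ref{prop:rev1top} on the $o$-subcell and by the toppling identity \eqref{1top} of Lemma \ref{lem:stab} on the other two; by Lemma \ref{lem:topplecut} each such round preserves the three cell configurations while shuttling chips inexorably toward the sink $o$.

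The main obstacle I anticipate is the combinatorial bookkeeping at the cut points: verifying that the chip counts deposited at $c_1, c_2$ after each round of cellwise stabilization are precisely the multiples of $\deg(c_i)$ required to invoke Proposition \ref{prop:rev1top} and the relevant case of Lemma \ref{lem:stab}, and showing that the iterative process terminates exactly at $e_{n+1}^{(o)}$ rather than at some other recurrent configuration. I expect this to require a diagram analogous to Figure \ref{fig:idind}, together with a tally of the chips absorbed by $o$ at each stage; the inductive invariant should include not only $(2 e_n^{(o)})^\circ = e_n^{(o)}$ but also the exact number of chips that $o$ receives, so that the count propagates cleanly into the next level.
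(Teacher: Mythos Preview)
Your overall framework is correct and matches the paper: induct on $n$, prove $(2e_n^{(o)})^\circ = e_n^{(o)}$ together with the exact chip count absorbed by $o$, verify $n=1$ directly, and for the inductive step decompose $G_{n+1}^{(o)}$ into three level-$n$ cells and stabilize cellwise.

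There is, however, a genuine gap in your cellwise bookkeeping. By Definition~\ref{def:Mn}, $e_{n+1}^{(o)} = M_{n+1}$ is assembled from three copies of $M_n$, so \emph{every} subcell carries the tile $M_n = e_n^{(o)}$, not the tile $e_n$. Your plan to invoke Proposition~\ref{prop:id2s} on the two non-$o$ subcells therefore does not apply: that proposition asserts $(2e_n)^\circ = e_n$ in $G_n^{(s)}$, and $M_n \neq e_n$. If you were to double a copy of $M_n$ and stabilize it as a $G_n^{(s)}$-configuration, you would obtain $M_n \oplus M_n$ in $(\mathcal{R}_n^{(s)},\oplus)$, which is $M_n$ only if $M_n$ is the identity of that group---and it is not. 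The resulting cell configurations would then fail to reassemble into $e_{n+1}^{(o)}$. For the same reason, your proposed use of the $e_n$-based identity \eqref{1top} on those cells is off target.

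The paper avoids this by treating the subcells uniformly as $G_n^{(o)}$-type copies of $M_n$ and using the induction hypothesis itself (not Proposition~\ref{prop:id2s}) on them. Concretely, the step is: apply IH to the two outer cells to collapse each $2M_n$ back to $M_n$ with $4\cdot 3^n - 2$ chips delivered to the appropriate cut points; then invoke the toppling identity \eqref{eq:rev1top} (Proposition~\ref{prop:rev1top}) on the $o$-cell to route those accumulated chips through to the sink $o$ without disturbing the $M_n$ pattern; finally apply IH once more to the doubled $o$-cell. The chip count $4\cdot 3^{n+1}-2$ at $o$ falls out of this sequence. So your anticipated ``main obstacle'' is real, but it is resolved not by Proposition~\ref{prop:id2s} and \eqref{1top}, rather by IH and \eqref{eq:rev1top}.
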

\begin{proof}
We prove by induction on $n$ that $(2e_n^{(o)})^\circ = e_n^{(o)}$, and upon stabilization the sink $o$ receives $4\cdot 3^n-2$ chips. 
When $n=1$ the calculation is straightforward. 
Suppose the result holds on level $n$.
The induction step for level $n+1$ proceeds as follows. (From this point onwards, ``IH'' denotes an application of the induction hypothesis.)
\begin{align*}
\begin{array}{rm{1.6in}lm{1.6in}l}
&
\includegraphics[width=0.25\textwidth]{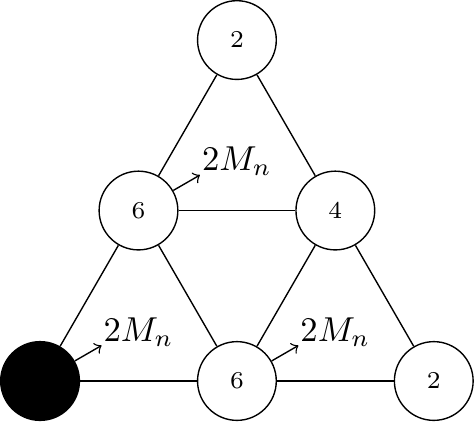}
&
\overset{\text{IH}}{\longrightarrow}
&
\includegraphics[width=0.25\textwidth]{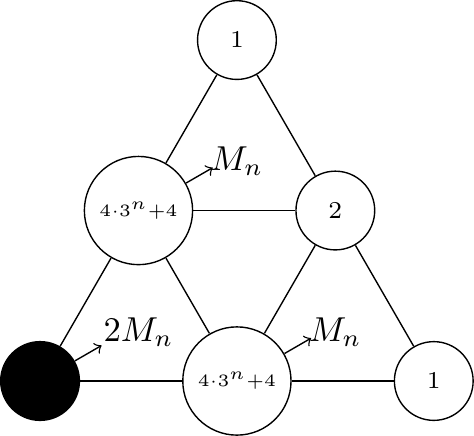}
&
\\
\overset{\eqref{eq:rev1top}}{\longrightarrow}
&
\includegraphics[width=0.25\textwidth]{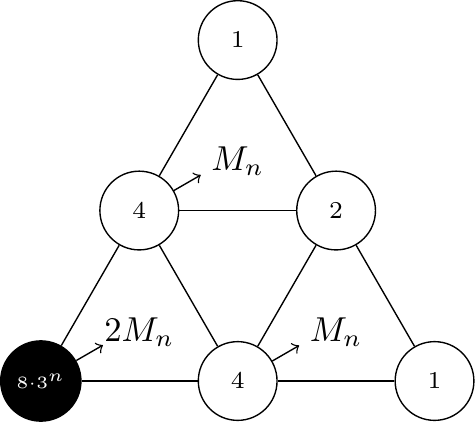}
&
\overset{\text{IH}}{\longrightarrow}
&
\includegraphics[width=0.25\textwidth]{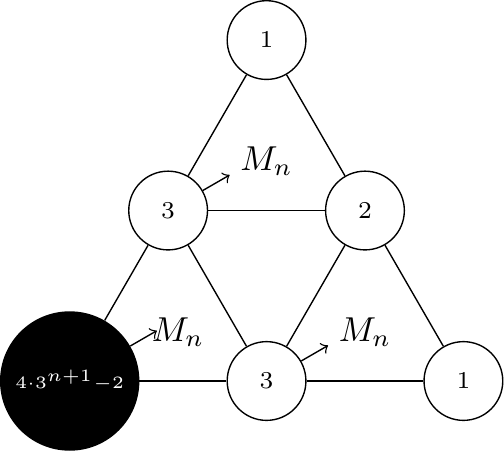}
&.
\end{array}
\end{align*}
\end{proof}

\begin{proof}[Proof of Theorem \ref{thm:groupSG}]
We already proved Items \eqref{item:identity}, \eqref{item:tops} and the first identity
\begin{align}
\label{eq:btop}
\eta\oplus 3^n(\mathbbm{1}_x+\mathbbm{1}_y) = \eta
\end{align}
in Item \eqref{item:topo}.
It remains to establish that for every $\eta\in\mathcal{R}_n^{(o)}$, 
\begin{align}
\label{eq:xtop} \eta\oplus 3^{n+1}\mathbbm{1}_x &= \eta,\\
\label{eq:ytop} \eta\oplus 3^{n+1}\mathbbm{1}_y &= \eta.
\end{align}
To prove \eqref{eq:xtop} we combine \eqref{1top} with the abelian property and proceed \emph{\`a la} the proof of Proposition \ref{prop:recper}.
The identity \eqref{eq:ytop} is equivalent to \eqref{eq:xtop} once we reflect the configuration across the axis of symmetry.
\end{proof}

\begin{remark}[Nested structure of the identity elements]
We pause to make an observation which applies to other nested self-similar fractal graphs: that the sequence of identity elements $(e_n)_n$ is nested in that for every $n\in\mathbb{N}$, $\left. e_{n+1}\right|_{G_n^{(s)}} = e_n$.
Indeed, if $e_{n+1}$ is the identity element of $\mathcal{R}_{n+1}^{(s)}$, then we can stabilize $2e_{n+1}$ by first stabilizing everywhere in $G_n^{(s)}$ to produce $\left(2\left.e_{n+1}\right|_{G_n^{(s)}}\right)^\circ$ and pausing excess chips on $\partial G_n$. 
Then we fire chips into $G_{n+1}^{(s)} \setminus G_n^{(s)}$ and stabilize every vertex therein, and by Lemma \ref{lem:topplecut}, this leaves the configuration in $G_n^{(s)}$ invariant. 
At the end of the stabilization, we recover $e_{n+1}$, and thus on $G_n^{(s)}$ we have $\left(2\left.e_{n+1}\right|_{G_n^{(s)}}\right)^\circ= \left.e_{n+1}\right|_{G_n^{(s)}}$, \emph{i.e.,} $\left. e_{n+1}\right|_{G_n^{(s)}} = e_n$.
By the same argument, the nested property also holds for the sequence $(e_n^{(o)})_n$.
\end{remark}

\subsection{A reflection and a rotation lemma} \label{sec:reflection}

In this subsection we establish two stabilization lemmas on $G_n^{(o)}$, taking advantage of the axial symmetry inherent in the graph.
These lemmas play a crucial role in the proof of radial jumps in \S\ref{sec:explosion} and \S\ref{sec:enumeration}.

\begin{lemma}[Reflection across the axis of symmetry]
\label{lem:reflection}
Let $\eta \in \mathcal{R}_n^{(o)}$ be such that $\eta= e_n^{(o)} \oplus \alpha\mathbbm{1}_x  \oplus \beta\mathbbm{1}_y$ for some $\alpha, \beta \in \mathbb{N}_0$.
Let $k_x, k_y \in \mathbb{N}_o$ solve the system of equations
\begin{align}
\label{eq:kxky}
\left\{
\begin{array}{rl}
\alpha+k_x &= \beta+p_0\cdot 3^n + p_1 \cdot 3^{n+1}\\
\beta+ k_y &=\alpha+p_0 \cdot 3^n + p_2 \cdot 3^{n+1}
\end{array}
\right.
\end{align}
for some $p_0, p_1, p_2\in\mathbb{Z}$.
Then
\begin{align}
\label{eq:refstab}
\begin{array}{m{1.05in}lm{1.05in}l}
\includegraphics[width=0.17\textwidth]{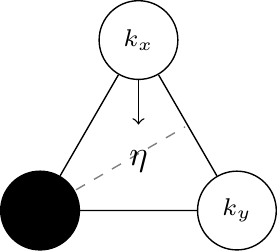} &
\equiv &
\includegraphics[width=0.17\textwidth]{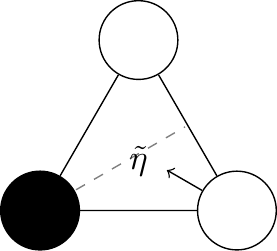} &
,
\end{array}
\end{align}
where $\tilde\eta$ is the reflection of $\eta$ across the axis of symmetry.
\end{lemma}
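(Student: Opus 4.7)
The plan is to reduce the claimed equivalence \eqref{eq:refstab} to a short algebraic identity in the sandpile group $(\mathcal{R}_n^{(o)}, \oplus)$, leveraging the toppling identities of Theorem \ref{thm:groupSG}\eqref{item:topo} together with the axial symmetry of the identity element $e_n^{(o)}$. Concretely, since both sides of \eqref{eq:refstab} correspond to recurrent configurations on $G_n^{(o)}$ obtained by legal chip additions, it suffices to verify that
\[
\eta \oplus k_x\,\mathbbm{1}_x \oplus k_y\,\mathbbm{1}_y \;=\; \tilde{\eta} \quad \text{in } (\mathcal{R}_n^{(o)}, \oplus).
\]

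The first step is to expand the left-hand side using the assumed presentation $\eta = e_n^{(o)} \oplus \alpha\mathbbm{1}_x \oplus \beta\mathbbm{1}_y$ and the substitutions \eqref{eq:kxky}:
\[
\eta \oplus k_x\mathbbm{1}_x \oplus k_y\mathbbm{1}_y = e_n^{(o)} \oplus \bigl(\beta + p_0\cdot 3^n + p_1\cdot 3^{n+1}\bigr)\mathbbm{1}_x \oplus \bigl(\alpha + p_0\cdot 3^n + p_2\cdot 3^{n+1}\bigr)\mathbbm{1}_y.
\]
Next I peel off the trivial summands. Theorem \ref{thm:groupSG}\eqref{item:topo} says that the three vectors $3^n(\mathbbm{1}_x+\mathbbm{1}_y)$, $3^{n+1}\mathbbm{1}_x$, and $3^{n+1}\mathbbm{1}_y$ all lie in the kernel of the canonical projection $\mathbb{Z}^{V(G_n^{(o)})} \twoheadrightarrow \mathcal{R}_n^{(o)}$; since this kernel is a subgroup, arbitrary \emph{integer} multiples thereof also vanish (negative coefficients being handled by group inverses in $\mathcal{R}_n^{(o)}$). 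Grouping the $p_0$-terms into $p_0\cdot 3^n(\mathbbm{1}_x+\mathbbm{1}_y)$ and treating the $p_1$- and $p_2$-terms separately with the other two identities, the right-hand side collapses to $e_n^{(o)} \oplus \beta\mathbbm{1}_x \oplus \alpha\mathbbm{1}_y$.

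The last step is to identify this with $\tilde{\eta}$. The configuration $e_n^{(o)}$ is invariant under the reflection that swaps $x$ and $y$: by an immediate induction on the recursive construction in Definition \ref{def:Mn}, since $M_1$ is manifestly symmetric across the axis through $o$ and the gluing of three copies of $M_n$ into $M_{n+1}$ respects that axis. Applying this reflection to $\eta = e_n^{(o)} \oplus \alpha\mathbbm{1}_x \oplus \beta\mathbbm{1}_y$ therefore fixes $e_n^{(o)}$ and merely swaps the coefficients at $x$ and $y$, yielding $\tilde{\eta} = e_n^{(o)} \oplus \beta\mathbbm{1}_x \oplus \alpha\mathbbm{1}_y$, which matches the reduced expression obtained above.

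The main subtlety I anticipate is purely one of hypothesis-checking rather than of proof: one must verify that a solution $(k_x,k_y)\in\mathbb{N}_0^2$ to \eqref{eq:kxky} actually exists (for some integer triple $(p_0,p_1,p_2)$), so that the left-hand side of \eqref{eq:refstab} genuinely represents a legal chip addition; but this is postulated as part of the lemma statement, and once it is in hand the group-theoretic manipulation above closes the argument. If the $\equiv$ in \eqref{eq:refstab} is read at the finer level of chip histories that tracks absorption at the sink $o$, the abelian property lets one recover that count directly from the difference $k_x + k_y$ in total chips added.
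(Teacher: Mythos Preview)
Your proof is correct and follows essentially the same approach as the paper: both arguments reduce \eqref{eq:refstab} to the algebraic identity $e_n^{(o)} \oplus (\alpha+k_x)\mathbbm{1}_x \oplus (\beta+k_y)\mathbbm{1}_y = e_n^{(o)} \oplus \beta\mathbbm{1}_x \oplus \alpha\mathbbm{1}_y$ in $(\mathcal{R}_n^{(o)},\oplus)$, invoke the axial symmetry of $e_n^{(o)}$ to identify the right-hand side with $\tilde\eta$, and then use the toppling identities \eqref{eq:btop}, \eqref{eq:xtop}, \eqref{eq:ytop} to absorb the $p_0,p_1,p_2$ terms. Your write-up is in fact slightly more explicit than the paper's in justifying why integer (possibly negative) multiples of these relations vanish in the sandpile group.
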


\begin{proof}
By the axial symmetry, the reflection of $\eta$ satisfies $\tilde\eta = e_n^{(o)} \oplus \beta \mathbbm{1}_x \oplus \alpha\mathbbm{1}_y$.
We then observe that \eqref{eq:refstab} is implied by the algebraic identity
\[
e_n^{(o)} \oplus (\alpha+k_x)\mathbbm{1}_x \oplus (\beta+k_y)\mathbbm{1}_y = e_n^{(o)} \oplus \beta \mathbbm{1}_x \oplus \alpha \mathbbm{1}_y.
\]
This explains \eqref{eq:kxky} in the special case $p_0=p_1=p_2=0$. 
For the general case of \eqref{eq:kxky}, we apply the equivalence under the toppling identities \eqref{eq:btop}, \eqref{eq:xtop}, and \eqref{eq:ytop}.
\end{proof}

\begin{lemma}[$120^\circ$-rotation of $M_n$]
\label{lem:120turn}
The following stabilizations hold:
\begin{align}
\label{eq:Mnturn}
\begin{array}{m{0.9in}lm{1in}l}
\includegraphics[width=0.155\textwidth]{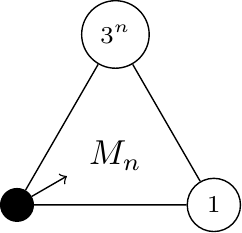} &
\longrightarrow &
\includegraphics[width=0.17\textwidth]{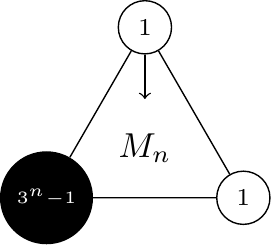} & ,
\end{array}
\\
\label{eq:Mnturn2}
\begin{array}{m{0.9in}lm{1in}l}
\includegraphics[width=0.155\textwidth]{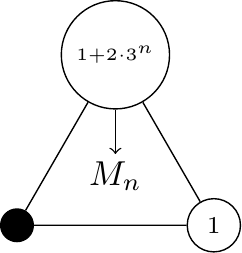} &
\longrightarrow &
\includegraphics[width=0.17\textwidth]{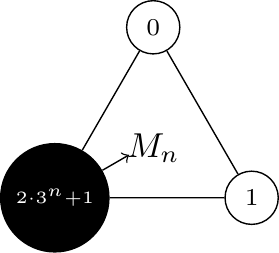} &.
\end{array}
\end{align}
\end{lemma}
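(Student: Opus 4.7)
The plan is to reduce each of the two stabilization identities in \eqref{eq:Mnturn} and \eqref{eq:Mnturn2} to an algebraic equivalence in the sandpile group $\mathcal{R}_n^{(o)}$, in direct analogy with the proof of the reflection Lemma \ref{lem:reflection}. By Definition \ref{def:Mn}, the configuration $M_n$ is essentially one of the three constituent tiles appearing in $e_n^{(o)}$, so both sides of each pictured stabilization should be expressible in the form $e_n^{(o)} \oplus \alpha\mathbbm{1}_x \oplus \beta\mathbbm{1}_y$ for appropriate $\alpha,\beta \in \mathbb{N}_0$. I would first read off the pairs $(\alpha,\beta)$ and $(\alpha',\beta')$ corresponding, respectively, to the unstable left-hand side and to the rotated $M_n$ on the right-hand side, then verify that the difference lies in the subgroup generated by the three relations
\[
3^n(\mathbbm{1}_x + \mathbbm{1}_y), \quad 3^{n+1}\mathbbm{1}_x, \quad 3^{n+1}\mathbbm{1}_y
\]
supplied by Theorem \ref{thm:groupSG}, Item \eqref{item:topo}. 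Uniqueness of stabilization then yields the claimed identity, and the number of chips absorbed at the sink $o$ is determined by conservation of chips together with the known chip count of $M_n$.

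If the left-hand sides are not already recurrent and the above reduction does not apply verbatim, I would fall back on an inductive argument in the spirit of Lemma \ref{lem:stab} and Proposition \ref{prop:id2s}. The base case $n=1$ is a direct computation on $G_1^{(o)}$. For the induction step I would decompose $G_{n+1}$ into three copies of $G_n$ per Definition \ref{def:Mn}, stabilize each subcell separately using the induction hypothesis, and pause excess chips at the cut vertices; Lemma \ref{lem:topplecut} then ensures that subsequent topplings at cut vertices do not disturb the already-stabilized recurrent subcells. The $120^\circ$ symmetry of the three-cell decomposition implies that once one subcell is rotated, the other two are rotated accordingly by the self-similar construction.

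The main obstacle will be the bookkeeping of chips crossing the cut vertices. In the algebraic approach this amounts to correctly identifying $(\alpha,\beta)$ from the pictured unstable configurations (in which some of the chips live on the two non-sink corners $x,y$ rather than as part of a clean $M_n$ tile), and confirming that the shift to $(\alpha',\beta')$ lies in the stated sublattice. In the inductive approach it amounts to triggering the stabilization in each subcell in the correct order with the correct boundary chip counts. The multiplication-by-identity test (Lemma \ref{lem:mit}), applied at the level of an individual subcell, should streamline the verification that the recurrent state of a subcell is preserved whenever the right number of boundary chips arrives from its neighbors.
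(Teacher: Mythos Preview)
Your primary algebraic approach has a circularity problem for \eqref{eq:Mnturn}. You propose to ``read off'' the pair $(\alpha',\beta')$ so that the rotated $M_n$ on the right-hand side equals $e_n^{(o)}\oplus\alpha'\mathbbm{1}_x\oplus\beta'\mathbbm{1}_y$, and then check that $(\alpha,\beta)-(\alpha',\beta')$ lies in the sublattice generated by the relations of Theorem~\ref{thm:groupSG}\eqref{item:topo}. But there is no way to read off $(\alpha',\beta')$ a priori: the rotated $M_n$ is not given to you as $e_n^{(o)}$ plus boundary chips, and knowing that it \emph{is} of that form, together with the specific values of $\alpha',\beta'$, is exactly the content of \eqref{eq:Mnturn}. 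The group relations in Theorem~\ref{thm:groupSG}\eqref{item:topo} only let you compare two configurations that are \emph{both} already expressed as $e_n^{(o)}\oplus(\text{boundary chips})$; they do not by themselves identify which recurrent configuration a given expression stabilizes to. (Note also that the obstruction is on the right-hand side, not the left; the left-hand side is fine since it is $e_n^{(o)}$ with extra chips at a corner.)

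Your fallback---induction on $n$, decomposing $G_{n+1}$ into three level-$n$ cells and using Lemma~\ref{lem:topplecut} together with the toppling identities \eqref{eq:rec2top} and \eqref{eq:rev1top} to shuttle chips across cut vertices---is exactly what the paper does for \eqref{eq:Mnturn}, and it works. Once \eqref{eq:Mnturn} is in hand, it \emph{does} supply a representation of a rotated $M_n$ in terms of $e_n^{(o)}$ plus boundary chips, and at that point your algebraic idea becomes legitimate: the paper derives \eqref{eq:Mnturn2} from \eqref{eq:Mnturn} precisely by applying the toppling identity \eqref{eq:xtop}, which is the kind of group-relation manipulation you had in mind. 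So the right architecture is: induction for the first identity, then algebra for the second.
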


\begin{proof}
We prove \eqref{eq:Mnturn} by induction on $n$.
The $n=1$ case is verified readily. 
Suppose the stabilization holds at level $n$.
At level $n+1$ the stabilization proceeds as follows:
\begin{center}
\begin{tabular}{lm{1.6in}lm{1.6in}lm{1.6in}}
&
\includegraphics[width=0.25\textwidth]{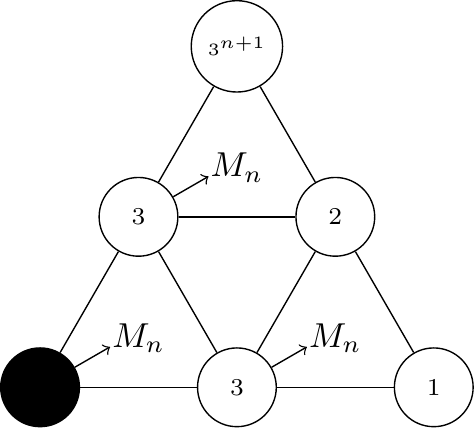}
&
$\overset{\eqref{eq:rec2top}}{\longrightarrow}$
&
\includegraphics[width=0.25\textwidth]{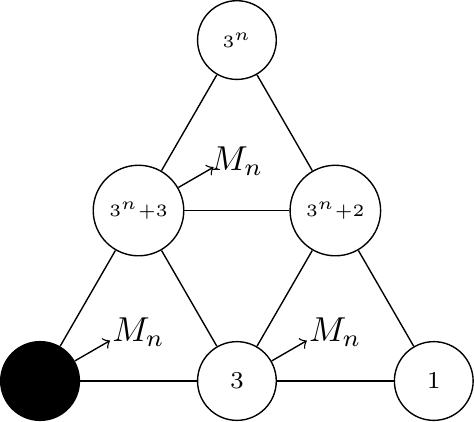}
&
$\overset{\text{IH}}{\longrightarrow}$
&
\includegraphics[width=0.25\textwidth]{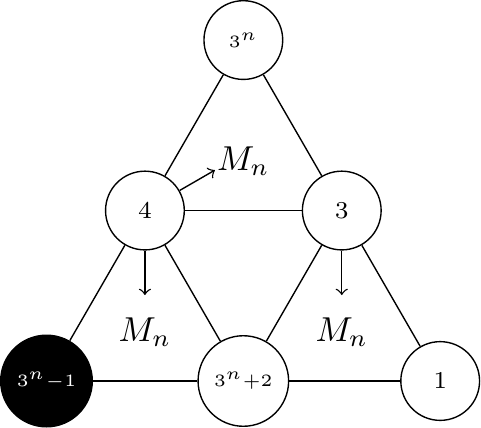}
\\
$\overset{\text{IH}}{\longrightarrow}$
&
\includegraphics[width=0.25\textwidth]{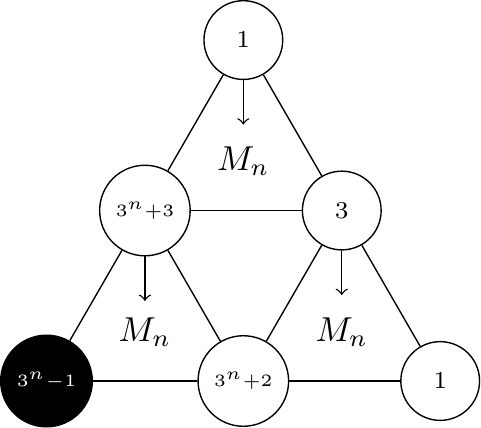}
&
$\overset{\eqref{eq:rev1top}}{\longrightarrow}$
&
\includegraphics[width=0.25\textwidth]{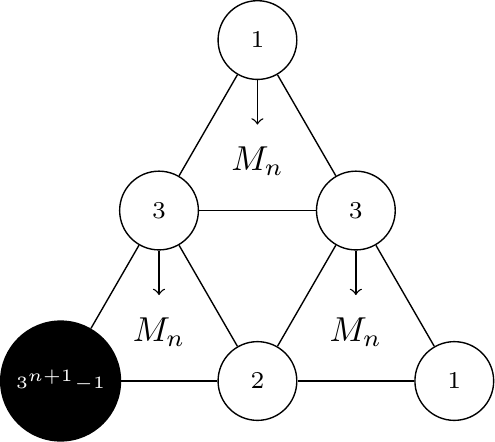}&.
\end{tabular}
\end{center}

Now we can prove \eqref{eq:Mnturn2} using \eqref{eq:Mnturn} and the toppling identity \eqref{eq:xtop}:
\begin{align*}
\begin{array}{m{.8in}lm{.8in}lm{.8in}l}
\includegraphics[width=0.15\textwidth]{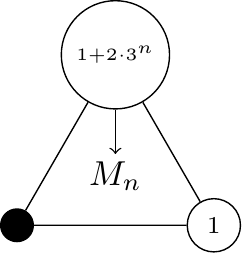} &
\overset{\eqref{eq:Mnturn}}{=} &
\includegraphics[width=0.15\textwidth]{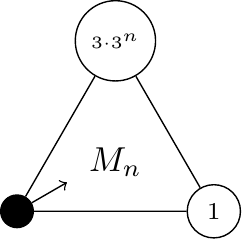} &
\overset{\eqref{eq:xtop}}{=} &
\includegraphics[width=0.15\textwidth]{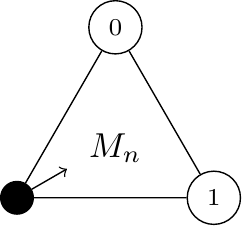} &
.
\end{array}
\end{align*}
The number of chips that the sink receives can be inferred readily.
\end{proof}

\subsection{Explosions in sandpile growth} \label{sec:explosion}

In this subsection we prove the existence of explosions in the growing cluster, \emph{i.e.,} a radial jump of size $>1$. 
To be precise, the main explosion, occurring at mass $4\cdot 3^n$, happens as the configuration transitions from $M_n$ (see Definition \ref{def:Mn}) to $e_n$ (see Definition \ref{def:en}).

\begin{proposition}
\label{prop:maxtopple}
For each $n\in \mathbb{N}$,
\begin{center}
\begin{tabular}{lm{.8in}l}
$((4\cdot 3^n-2) \mathbbm{1}_o)^\circ = $
&
\includegraphics[height=0.13\textwidth]{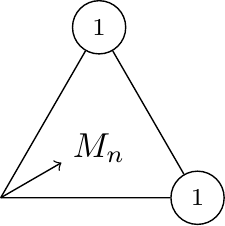}
&.
\end{tabular}
\end{center}
It follows that $r_{4\cdot 3^n-2} = 2^n$.
\end{proposition}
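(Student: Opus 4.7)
The natural approach is induction on $n$. For the base case $n = 1$, one verifies by direct toppling that $(10\,\mathbbm{1}_o)^\circ = M_1$; since $o$ has degree $2$ in $SG$, only finitely many vertices at graph distance $\leq 2$ can ever receive chips, so this is a bounded hand computation.

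For the inductive step, assume $((4\cdot 3^n - 2)\mathbbm{1}_o)^\circ = M_n$ and consider the initial mass $4\cdot 3^{n+1} - 2 = 12\cdot 3^n - 2$. The plan is to run the two-stage algorithm of Proposition \ref{prop:ball}: first stabilize inside $G_n^{(s)}$ while pausing excess chips at $\partial G_n = \{x,y\}$, then topple at $\partial G_n$ to push the paused chips into the two upper sub-triangles $\Psi_1(G_n)$ and $\Psi_2(G_n)$, invoking Lemma \ref{lem:topplecut} to keep the bottom sub-triangle configuration frozen throughout the second stage. The abelian property makes this legal, and symmetry across the axis through $o$ ensures the number of chips paused at $x$ equals that at $y$.

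The quantitative heart of the argument is the decomposition $12\cdot 3^n - 2 = (4\cdot 3^n - 2) + 4\cdot(2\cdot 3^n)$. Stabilize the first summand using the induction hypothesis to obtain $M_n$ on $G_n$ with no boundary absorption (since $M_n$ is supported in $G_n$). Then feed in the remaining $4\cdot(2\cdot 3^n)$ chips at $o$ and apply the $(2\cdot 3^n)$-periodicity of Proposition \ref{prop:recper} four times, after verifying as a side claim that $M_n$, viewed as a sandpile on $G_n^{(s)}$, is recurrent in $\mathcal{R}_n^{(s)}$ (this can be done by the burning test or by observing that stabilization of a sufficiently large point mass lands in the unique recurrent class). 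Each application leaves the $G_n^{(s)}$-configuration invariant and deposits $3^n$ chips into each sink, producing a total of $4\cdot 3^n$ paused chips at each of $x$ and $y$.

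It remains to propagate these paused chips and identify the stabilization in the upper sub-triangles as correctly oriented $M_n$ tiles, so that the three tiles assemble into $M_{n+1}$ per Definition \ref{def:Mn}. Here I would use Lemma \ref{lem:120turn} as the central tool: the $120^\circ$-rotation identity precisely converts a configuration where an $M_n$-tile is being fed chips at one corner into an $M_n$-tile sinked at a different corner, which is exactly what is needed to match the orientations of the three $M_n$-tiles forming $M_{n+1}$. The axial symmetry of the initial data together with Lemma \ref{lem:reflection} ensures that the left and right upper sub-triangles receive mirror-image treatments and produce mirror-image tiles. Once the three tiles are in place, the chip counts at the cut vertices $c_1, c_2 \in \partial G_n$ and at the top cut vertex $c_3$ follow from the tile construction, and the resulting global configuration is $M_{n+1}$ by Definition \ref{def:Mn}.

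The main obstacle I expect is the bookkeeping in the upper sub-triangles: the $4\cdot 3^n$ paused chips do not all sit at the apex of the upper sub-triangle but are injected as pairs into the two interior neighbors of $x$ (respectively $y$) by each joint toppling at $\partial G_n$, so one cannot quote the induction hypothesis in the naive form ``$(4\cdot 3^n - 2)\mathbbm{1}_{\text{apex}}$ stabilizes to $M_n$.'' Lemma \ref{lem:120turn} is the bridge that reinterprets the injected-and-distributed mass as an equivalent $M_n$-tile in the desired orientation; checking that the bookkeeping matches exactly, both in chip count and in the location of the sink, is the step most likely to require a delicate diagram chase.
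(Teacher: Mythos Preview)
Your overall strategy matches the paper's: induction on $n$, the decomposition $4\cdot 3^{n+1}-2 = (4\cdot 3^n - 2) + 4\cdot(2\cdot 3^n)$, the induction hypothesis to produce $M_n$ in the bottom cell, and four applications of Proposition~\ref{prop:recper} to pile $4\cdot 3^n$ additional chips at each vertex of $\partial G_n$ (so $4\cdot 3^n + 1$ total, counting the $1$ already present in $M_n$). Up to here you are on exactly the same track.

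The divergence is in the last step, and here you overcomplicate matters because of a misplaced worry. You say the paused chips ``do not all sit at the apex of the upper sub-triangle but are injected as pairs into the two interior neighbors of $x$,'' and therefore the induction hypothesis cannot be quoted in the naive form. But this is precisely what happens at $o$ as well: $o$ has degree $2$, and the first thing $m\mathbbm{1}_o$ does upon stabilization is fire pairs of chips to its two neighbors. The cut vertex $x$ has degree $4$, but by Lemma~\ref{lem:topplecut} each toppling at $x$ (combined with a full sweep of $G_n^{(s)}$) returns the two chips sent into $G_n$ and leaves a net of two chips sent into the upper cell---one to each neighbor of $x$ there. So $x$ behaves \emph{exactly} like a degree-$2$ origin for the upper copy of $G_n$. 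Starting from $4\cdot 3^n+1$ chips at $x$ and decrementing by $2$ until $3$ remain, the upper cell receives $4\cdot 3^n - 2$ chips from its corner, and the induction hypothesis applies verbatim to produce $M_n$ there with the source at $x$. The two upper $M_n$ tiles meet at the midpoint of $\partial G_{n+1}$ with $1+1=2$ chips, and the gluing is exactly Definition~\ref{def:Mn}. No rotation is needed; the orientation is already correct.

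Your proposed detour through Lemma~\ref{lem:120turn} and Lemma~\ref{lem:reflection} is therefore unnecessary, and in fact would not work as you describe it: Lemma~\ref{lem:120turn} rotates an \emph{existing} $M_n$ tile by adding $3^n$ chips at one corner, whereas your upper cells start empty and receive $4\cdot 3^n - 2$ chips. To get an $M_n$ there to rotate, you would first have to invoke the induction hypothesis---at which point the tile is already in the right orientation and there is nothing left to do. The paper's proof simply writes ``Lemma~\ref{lem:topplecut} $+$ IH'' for this step.
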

\begin{proof}
It is direct to verify the identity for $n=1$. 
Suppose the identity holds at level $n$.
Note that $4\cdot 3^{n+1}-2 = (4\cdot 3^n - 2) + 4(2\cdot 3^n)$.
Based on this we carry out the stabilization as follows:
\begin{center}
\begin{tabular}{rM{.9in}l M{.9in}l M{1.4in} l}
$(4\cdot 3^{n+1}-2)\mathbbm{1}_o \overset{\text{IH}}{\longrightarrow} $ &
\includegraphics[width=.9in]{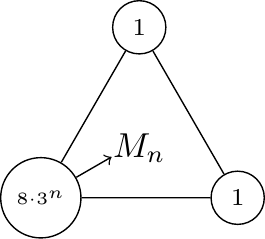} &
$\overset{\eqref{eq:rec2top}}{\longrightarrow}$ &
\includegraphics[width=.9in]{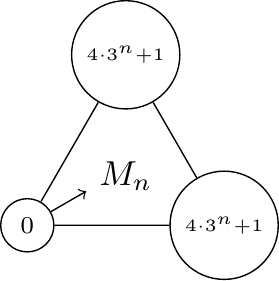} &
$\overset{\substack{\text{Lem.\@ \ref{lem:topplecut}} \\ + \text{IH}}}{\longrightarrow}$ &
\includegraphics[width=1.4in]{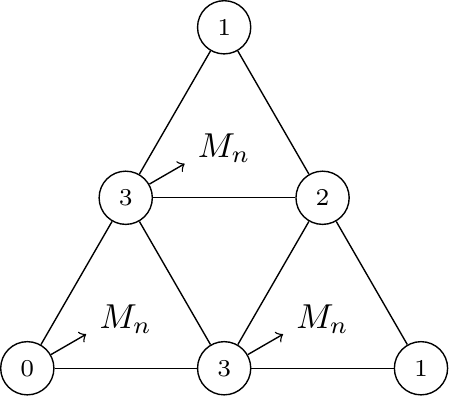} &
.
\end{tabular}
\end{center}

In the last step, we use Lemma \ref{lem:topplecut} to topple the $4\cdot 3^n +1$ chips at the cut point in $\partial G_n$. The number of chips decrements in steps of $2$ until $3$ chips remain, producing a facsimile of $((4\cdot 3^n-2)\mathbbm{1}_o)^\circ$ that emanates from the cut point. By the induction hypothesis this configuration is $M_n$. The two copies of $M_n$ merge at the midpoint between the two sinks of $G_{n+1}^{(s)}$, resulting in $1+1$ chips at that midpoint. According to Definition \ref{def:Mn}, this is the stable configuration $M_{n+1}$ with 1 chip at each sink of $G_{n+1}^{(s)}$.

Since $M_n$ has full support in $G_n$, it follows that $r_{4\cdot 3^n-2} = 2^n$. 
\end{proof}

\begin{proposition}
\label{prop:merge}
For each $n\in \mathbb{N}$,
\begin{align}
\label{eq:Mtoe}
\begin{array}{lm{0.8in}lm{1.1in}l}
(4\cdot 3^n \mathbbm{1}_o)^\circ = 
&
\includegraphics[height=0.13\textwidth]{Mnresult}
&
\oplus ~2\mathbbm{1}_o =
&
\includegraphics[height=0.16\textwidth]{enresult}
&,
\end{array}
\end{align}
where $b_n=|V(G_{n-1})|= \frac{3}{2}(3^{n-1}+1)$. It follows that $r_{4\cdot 3^n} = 2^n + r_{b_n-2}$.
\end{proposition}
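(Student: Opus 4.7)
By Proposition \ref{prop:maxtopple} and the abelian property, $(4\cdot 3^n \mathbbm{1}_o)^\circ$ equals the stabilization of $M_n \oplus 2\mathbbm{1}_o$. Since $\deg(o)=2$ and $o$ carries $0$ chips in $M_n$, adding $2$ chips at $o$ makes $o$ unstable and triggers a cascade of topplings. My plan is to identify the end result of this cascade by induction on $n$.

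For the base case $n=1$, I would verify by direct computation on $G_1$ that $(M_1 \oplus 2\mathbbm{1}_o)^\circ$, stabilized within $G_1^{(s)}$ per the convention of the fundamental diagram \eqref{eq:FD}, produces $e_1$ together with $b_1=3$ chips absorbed at each of the two sinks in $\partial G_1$. This matches the right-hand picture in \eqref{eq:Mtoe}.

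For the inductive step, I would exploit the self-similar tile decompositions: $M_{n+1}$ is assembled from three copies of $M_n$ per Definition \ref{def:Mn}, and $e_{n+1}$ from three copies of $e_n$ per Definition \ref{def:en}. Label the three $M_n$-sub-tiles of $M_{n+1}$ as $T_0$ (containing $o$), $T_1$, and $T_2$, with shared interior cut points $p_{01}, p_{02}, p_{12}$ each carrying the merged count of $2$ chips as described in the proof of Proposition \ref{prop:maxtopple}. First, I would apply the induction hypothesis to $T_0$ while pausing chips at $p_{01}, p_{02}$ via Lemma \ref{lem:topplecut}: this transforms $T_0$ into an $e_n$-pattern and deposits $b_n$ additional chips at each of $p_{01}, p_{02}$, bringing their counts to $2+b_n$. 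Next, I would propagate chips from $p_{01}$ and $p_{02}$ into $T_1$ and $T_2$ respectively, combining Lemma \ref{lem:120turn} (the $120^\circ$-rotation of $M_n$) with the toppling identities of Theorem \ref{thm:groupSG} to convert each of $T_1, T_2$ into an appropriately oriented $e_n$-pattern. The axial reflection Lemma \ref{lem:reflection} handles the symmetric interaction at $p_{12}$. Finally, I would tally the chips accumulating at the two sinks of $\partial G_{n+1}$ and verify they total $b_{n+1} = \frac{3}{2}(3^n+1)$ at each.

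The main obstacle will be the precise chip-count bookkeeping in the inductive step: the cascade through the three $M_n$-tiles involves multiple rounds of toppling, pausing, and unpausing at the interior cut points $p_{01}, p_{02}, p_{12}$, and each intermediate configuration must remain in a regime where Lemmas \ref{lem:topplecut}, \ref{lem:reflection}, \ref{lem:120turn} and the toppling identities \eqref{2top}, \eqref{eq:rec2top}, \eqref{eq:rev1top}, \eqref{eq:btop}, \eqref{eq:xtop}, \eqref{eq:ytop} can be invoked. The key algebraic check is the relation $b_{n+1}-b_n = 3^n$, matching the net number of chips delivered to each $\partial G_{n+1}$ sink by the transformed $e_n$-patterns in $T_1$ and $T_2$. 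Once this is established, the equality $r_{4\cdot 3^n} = 2^n + r_{b_n-2}$ follows immediately from Proposition \ref{prop:ball}, Item \eqref{item:radial}.
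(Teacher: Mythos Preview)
Your plan is essentially the same strategy as the paper's proof: induction on $n$, with the induction hypothesis producing $e_n$ in the top cell, the rotation Lemma \ref{lem:120turn} and the reflection Lemma \ref{lem:reflection} handling the bottom cells, and the count $b_{n+1}-b_n=3^n$ closing the bookkeeping. The paper organizes the induction slightly differently---it starts from the decomposition $4\cdot 3^{n+1}=4\cdot 3^n + 4(2\cdot 3^n)$ and applies the induction hypothesis twice (once for the top cell, once again after pushing chips through the cut points into the bottom cells), and it isolates the key algebraic identity \eqref{eq:eneno} relating $e_n$ to $e_n^{(o)}$ before invoking the reflection lemma; it also treats the cases $n$ even and $n$ odd separately in the reflection step \eqref{eq:enturn}, a parity split you do not mention.

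Two cautions. First, your chip counts at the interior cut points are off: in $M_{n+1}$ the junctions $p_{01},p_{02}$ carry $3$ chips (not $2$), and only $p_{12}$ carries $2$; correspondingly the net extra chips delivered to each of $p_{01},p_{02}$ by the induction hypothesis on $T_0$ is $b_n-1$, not $b_n$. Second, and more substantively, your plan to ``convert each of $T_1,T_2$ into an appropriately oriented $e_n$-pattern'' by ``combining Lemma \ref{lem:120turn} with the toppling identities'' is underspecified: Lemma \ref{lem:120turn} only rotates $M_n$ into another copy of $M_n$---it does not by itself produce $e_n$. The paper closes this gap by re-invoking the induction hypothesis on the chips entering the bottom cells (so that each bottom cell is filled as a fresh copy of $(4\cdot 3^n\mathbbm{1}_o)^\circ$), and then uses \eqref{eq:eneno} together with Lemma \ref{lem:reflection} to reorient the resulting $e_n$ tiles. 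You will need the same second application of the induction hypothesis, not just the rotation lemma, to make your argument go through.
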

\begin{proof}
When $n=1$ the identity is verified directly.
Suppose the identity holds at level $n$.
Based on the induction hypothesis we have
\[
\begin{array}{m{0.9in}lm{1.05in}l}
\includegraphics[width=0.15\textwidth]{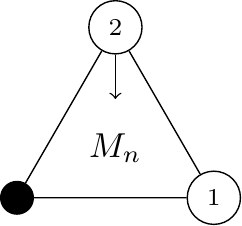} &
\longrightarrow &
\includegraphics[width=0.18\textwidth]{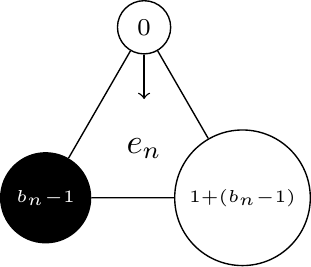} &
.
\end{array}
\]
Combine this with the $M_n$ rotation identity \eqref{eq:Mnturn}, infer that
\begin{align}
\label{eq:eneno}
\begin{array}{m{0.9in}l}
\includegraphics[width=0.15\textwidth]{enrec} &
~=~e_n^{(o)} \oplus 3^n\mathbbm{1}_x \oplus (1-b_n)\mathbbm{1}_y
~\overset{\eqref{eq:btop}}{=}~ e_n^{(o)} \oplus (2\cdot 3^n)\mathbbm{1}_x \oplus (b_n-2)\mathbbm{1}_y.
\end{array}
\end{align}
(Recall Remark \ref{rem:en}.) We will use \eqref{eq:eneno} in the final reflection step below. 

Now we indicate the stabilization steps.
The first step is based on the count $4\cdot 3^{n+1} = 4\cdot 3^n + 4(2\cdot 3^n)$ and the induction hypothesis:
\begin{center}
\begin{tabular}{rM{1.4in}l M{1.4in}l }
$(4\cdot 3^{n+1})\mathbbm{1}_o \overset{\text{IH}}{\longrightarrow} $ &
\includegraphics[width=.9in]{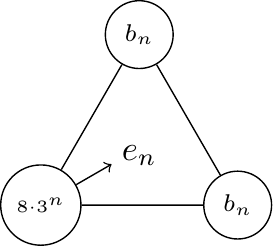} &
$\overset{\eqref{eq:rec2top}}{\longrightarrow}$ &
\includegraphics[width=.9in]{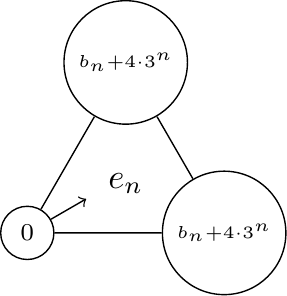} &\\
$\overset{\text{IH}}{\longrightarrow}$ &
\includegraphics[width=1.4in]{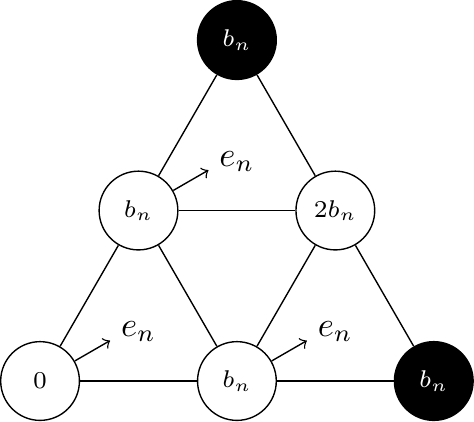} &
$\longrightarrow$ &
\includegraphics[width=1.4in]{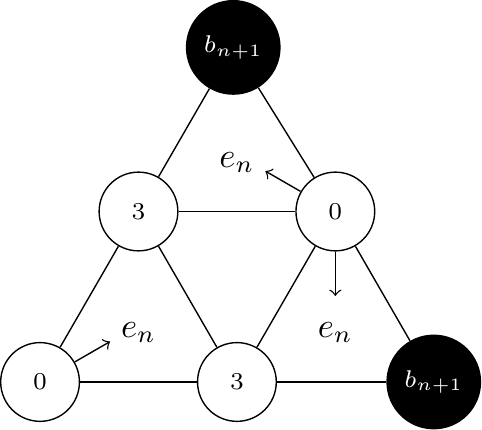} &
.
\end{tabular}
\end{center}

The final reflection step requires justification.
Observe that we need to send the remaining excess chips to the sinks $\partial G_{n+1}$.
Since $b_n$ is odd (resp.\@ even) when $n$ is odd (resp.\@ odd), we separately prove that
\begin{align}
\label{eq:enturn}
\begin{array}{lm{1.05in}lm{1.05in}l}
n \text{ even:} &
\includegraphics[width=0.17\textwidth]{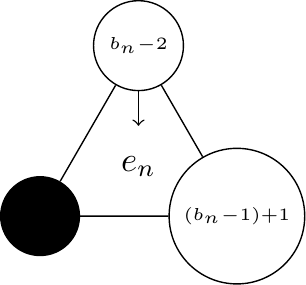} &
\longrightarrow &
\includegraphics[width=0.17\textwidth]{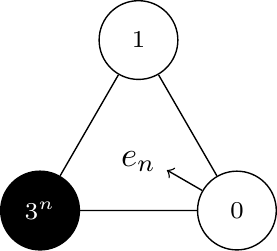} &
;
\\
n \text{ odd:} &
\includegraphics[width=0.17\textwidth]{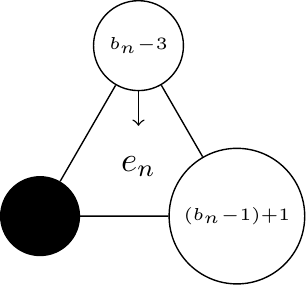} &
\longrightarrow &
\includegraphics[width=0.17\textwidth]{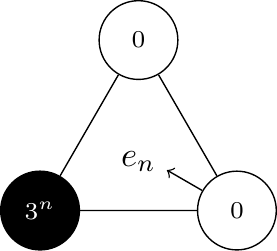} &
.
\end{array}
\end{align}

Recalling \eqref{eq:eneno}, we apply the reflection Lemma \ref{lem:reflection} with $\alpha=2\cdot 3^n$ and $\beta=b_n-2$, and verify that \eqref{eq:kxky} is satisfied with $k_x=b_n-2$, $k_y=b_n-1$, $p_0=-1$, $p_1=1$ and $p_2=0$. This proves the case when $n$ is even.
For the case when $n$ is odd, the argument is the same except that $1$ chip is removed from $x$.
The number of chips $3^n$ received by the sink can be inferred directly.

In either case, once we glue back the resulting configuration, there will be $3$ chips on each junction vertex in $\partial G_n$.
This leads to the claimed final configuration shown above, with each sink vertex in $\partial G_{n+1}$ receiving $b_n+3^n=b_{n+1}$ chips.

Finally, topple on $\partial G_{n+1}$, making sure that every topple on $\partial G_{n+1}$ also triggers a topple at every vertex in $G_{n+1}^{(s)}$ according to Lemma \ref{lem:topplecut}. 
\end{proof}

We can now establish the existence of explosion at mass $4\cdot 3^n$.

\begin{corollary}
\label{cor:maxradius}
For $n\geq 2$, $\max\{m\in \mathbb{N}: r_m = 2^n\} = 4\cdot 3^n-1$. Moreover, $r_{4\cdot 3^n} - r_{4\cdot 3^n-1} \geq 2$ for each $n\geq 3$.
\end{corollary}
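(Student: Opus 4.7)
The plan is to read this off Propositions~\ref{prop:maxtopple} and \ref{prop:merge}, combined with two elementary inputs. The first is the parity observation in \S\ref{sec:sandpileintro}: since $\deg(o)=2$, the configurations $(m\mathbbm{1}_o)^\circ$ and $((m+1)\mathbbm{1}_o)^\circ$ coincide away from $o$ whenever $m$ is even, so they have the same receiving set. The second is monotonicity of the receiving set, $S(m)\subset S(m+1)$, which is immediate from the abelian property: to stabilize $(m+1)\mathbbm{1}_o$ one may first stabilize $m\mathbbm{1}_o$ and then drop the extra chip at $o$, so no vertex outside $S(m)$ is demoted. Consequently $m \mapsto r_m$ is nondecreasing.

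For the identification of the maximum, I would first invoke Proposition~\ref{prop:maxtopple} to obtain $r_{4\cdot 3^n-2}=2^n$, and then apply the parity observation at the even value $m=4\cdot 3^n-2$ to conclude $r_{4\cdot 3^n-1}=2^n$. It remains to show $r_m>2^n$ for every $m\geq 4\cdot 3^n$; by monotonicity it suffices to handle $m=4\cdot 3^n$. Here Proposition~\ref{prop:merge} gives the recursion $r_{4\cdot 3^n}=2^n+r_{b_n-2}$ with $b_n=\tfrac{3}{2}(3^{n-1}+1)$. For $n\geq 2$ we have $b_n-2\geq 4\geq 2$, and a direct toppling check yields $r_2=1$, so monotonicity forces $r_{b_n-2}\geq 1$, hence $r_{4\cdot 3^n}\geq 2^n+1$. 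This establishes the first assertion.

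For the explosion estimate, the same recursion rewrites the radial jump as $r_{4\cdot 3^n}-r_{4\cdot 3^n-1}=r_{b_n-2}$. For $n\geq 3$ one computes $b_n-2=\tfrac{3}{2}(3^{n-1}+1)-2\geq 13\geq 10$, and Proposition~\ref{prop:maxtopple} applied at level $n=1$ gives $r_{10}=r_{4\cdot 3-2}=2$. Monotonicity then yields $r_{b_n-2}\geq 2$, so $r_{4\cdot 3^n}-r_{4\cdot 3^n-1}\geq 2$ for every $n\geq 3$, as claimed. No step here presents a serious obstacle; the entire content is already encoded in Propositions~\ref{prop:maxtopple} and \ref{prop:merge}, and the only care required is in bookkeeping the parity of $m$ and verifying $b_n-2\geq 10$ once $n\geq 3$.
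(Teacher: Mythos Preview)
Your proof is correct and follows essentially the same route as the paper's own argument: both deduce the result directly from Propositions~\ref{prop:maxtopple} and~\ref{prop:merge}, using the recursion $r_{4\cdot 3^n}=2^n+r_{b_n-2}$ together with $b_n\geq 4$ for $n\geq 2$ and $r_{b_n-2}\geq 2$ for $n\geq 3$. You have simply made explicit the parity step ($r_{4\cdot 3^n-1}=r_{4\cdot 3^n-2}$), the monotonicity of $m\mapsto r_m$, and the direct computations $r_2=1$, $r_{10}=2$ that the paper leaves to the reader.
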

\begin{proof}
The first statement is a consequence of Propositions \ref{prop:maxtopple} and \ref{prop:merge}, together with the fact that $b_n\geq 4$ for $n\geq 2$. 
For the second statement, note that Propositions \ref{prop:merge} and \ref{prop:ball} imply that $r_{4\cdot 3^n} = 2^n + r_{b_n-2}$.
When $n\geq 3$, $r_{b_n-2} \geq 2$, so $r_{4\cdot 3^n}-r_{4\cdot 3^n-1} \geq (2^n+2)-2^n =2$.
\end{proof}

\begin{corollary}[Periodicity of sandpile patterns]
\label{cor:pattper}
Let $m\geq 4\cdot 3^n$.
Then for each $k \in \mathbb{N}$ and $k\leq n$, $(m\mathbbm{1}_o)^\circ |_{G_k^{(s)}} = \left((m+ 2\cdot 3^k)\mathbbm{1}_o\right)^\circ |_{G_k^{(s)}}$.
\end{corollary}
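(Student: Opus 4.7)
My plan is to reduce the corollary to Theorem \ref{thm:groupSG}, Item \eqref{item:tops}, combined with the intermediate stabilization algorithm in the proof of Proposition \ref{prop:ball}. Write $\sigma_m := (m\mathbbm{1}_o)^\circ$ and $\sigma_m' := ((m+2\cdot 3^k)\mathbbm{1}_o)^\circ$, and set $\mathrm{int}_k := V(G_k)\setminus \partial G_k$, the non-sink vertices of $G_k^{(s)}$. Consistent with Lemma \ref{lem:topplecut}, I read the restriction $\cdot|_{G_k^{(s)}}$ as restriction to $\mathrm{int}_k$, so the claim becomes $\sigma_m|_{\mathrm{int}_k} = \sigma_m'|_{\mathrm{int}_k}$.

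The first step identifies $\sigma_m|_{\mathrm{int}_k}$ with a recurrent element of $\mathcal{R}_k^{(s)}$. Since $m\geq 4\cdot 3^n \geq 4\cdot 3^k$, the algorithm in the proof of Proposition \ref{prop:ball} stabilizes $m\mathbbm{1}_o$ by first toppling only at vertices of $\mathrm{int}_k$, producing a recurrent $\mu_m\in \mathcal{R}_k^{(s)}$ on $\mathrm{int}_k$ while pausing chips on $\partial G_k$. The remaining global topplings occur at $\partial G_k$ (and further out), but by the axial symmetry of $m\mathbbm{1}_o$ about the perpendicular bisector of $\partial G_k$, the two cut vertices must topple the same number of times throughout the full stabilization; invoking Lemma \ref{lem:topplecut}, each such pair of cut-vertex topplings can be packaged with a bulk topple of $\mathrm{int}_k$ that returns the interior to $\mu_m$. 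Hence $\sigma_m|_{\mathrm{int}_k} = \mu_m$, and running the identical argument for $(m+2\cdot 3^k)\mathbbm{1}_o$ gives $\sigma_m'|_{\mathrm{int}_k} = \mu_{m+2\cdot 3^k}$, where $\mu_{m+2\cdot 3^k}$ is the stabilization of $(m+2\cdot 3^k)\mathbbm{1}_o$ within $G_k^{(s)}$.

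The second step is purely algebraic inside the sandpile group. By the abelian property on $G_k^{(s)}$,
\begin{equation*}
\mu_{m+2\cdot 3^k} \;=\; \mu_m \oplus (2\cdot 3^k)\mathbbm{1}_o \quad \text{in } (\mathcal{R}_k^{(s)},\oplus),
\end{equation*}
and Theorem \ref{thm:groupSG}, Item \eqref{item:tops}, collapses the right-hand side to $\mu_m$. Chaining these identities yields $\sigma_m|_{\mathrm{int}_k} = \mu_m = \mu_{m+2\cdot 3^k} = \sigma_m'|_{\mathrm{int}_k}$, which is the corollary. The main obstacle lies in the first step: one must verify that subsequent global stabilization on $SG$ does not disturb the localized intermediate stabilization on $G_k^{(s)}$. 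This is precisely where the axial symmetry of the source and Lemma \ref{lem:topplecut} do the heavy lifting, ensuring that the cut-vertex topplings are invisible to the interior recurrent configuration; once this identification is secured, the rest of the argument collapses to a single application of Item \eqref{item:tops}.
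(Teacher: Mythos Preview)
Your proof is correct and follows essentially the same approach as the paper: the paper's own proof simply notes that $m\geq 4\cdot 3^n$ guarantees recurrence of $(m\mathbbm{1}_o)^\circ|_{G_n^{(s)}}$ and then invokes Proposition \ref{prop:recper} (equivalently Theorem \ref{thm:groupSG}, Item \eqref{item:tops}). Your write-up is more explicit about why the global stabilization on $SG$ leaves the intermediate recurrent configuration on $G_k^{(s)}$ intact, supplying the axial-symmetry and Lemma \ref{lem:topplecut} argument that the paper subsumes under the fundamental diagram of Proposition \ref{prop:ball}.
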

\begin{proof}
The condition $m\geq 4\cdot 3^n$ is to ensure that $\left.(m\mathbbm{1}_o)^\circ\right|_{G_n^{(s)}}$ is a recurrent configuration (but see Remark \ref{rem:firsttimerec}). The claim then follows from Proposition \ref{prop:recper}.
\end{proof}

Having identified the main explosions, we can now identify the secondary explosions occurring at mass $6\cdot 3^n$, $8\cdot 3^n$, and $10\cdot 3^n$.

\begin{proposition}
\label{prop:periodicity}
For each $n\in \mathbb{N}$ and $p\in \{0,1,2,3\}$, 
\[
\begin{array}{lm{1in}lm{1in}l}
\left(((4+2p) 3^n-2) \mathbbm{1}_o\right)^\circ = 
&
\includegraphics[height=0.15\textwidth]{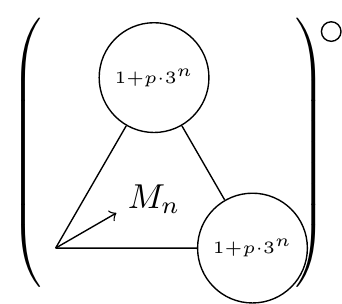}
&
;~
\left((4+2p) 3^n \mathbbm{1}_o\right)^\circ = 
&
\includegraphics[height=0.15\textwidth]{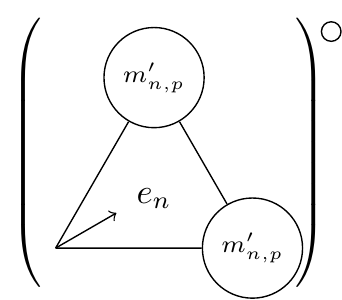}
&
,
\end{array}
\]
where $m'_{n,p} = b_n + p\cdot 3^n = \left(p+\frac{1}{2}\right)3^n + \frac{3}{2}$. It follows that 
\[
r_{(4+2p)3^n-2} = 2^n + r_{p\cdot 3^n-1}
\quad\text{and}\quad
r_{(4+2p) 3^n} = 2^n + r_{(p+\frac{1}{2}) 3^n-\frac{1}{2}}.
\]
\end{proposition}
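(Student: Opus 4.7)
Both identities will be proved by induction on $p\in\{0,1,2,3\}$ with $n\in\mathbb{N}$ fixed. The base cases $p=0$ are precisely Propositions \ref{prop:maxtopple} and \ref{prop:merge}. The central tool for the inductive step is Proposition \ref{prop:recper}: for every $\eta\in\mathcal{R}_n^{(s)}$, the operation $(2\cdot 3^n)\mathbbm{1}_o\oplus\eta$ returns $\eta$ on $G_n^{(s)}$ while depositing exactly $3^n$ chips into each cut point of $\partial G_n$.

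For the first identity, decompose by abelianness:
\[
((4+2(p+1))3^n-2)\mathbbm{1}_o = ((4+2p)3^n-2)\mathbbm{1}_o + (2\cdot 3^n)\mathbbm{1}_o.
\]
Stabilize the first summand via the inductive hypothesis to produce $M_n$ on $G_n^{(s)}$ together with $m'_{n,p}=b_n+p\cdot 3^n$ chips having been delivered to each vertex of $\partial G_n$ and the corresponding tail in $G_{n+1}\setminus G_n$. Now add the remaining $(2\cdot 3^n)\mathbbm{1}_o$ and, using the abelian property, first stabilize inside $G_n^{(s)}$ only, treating $\partial G_n$ as a sink; by Proposition \ref{prop:recper} this keeps $M_n|_{G_n^{(s)}}$ fixed and deposits an additional $3^n$ chips at each cut point, bringing the cumulative count there to $m'_{n,p+1}=m'_{n,p}+3^n=b_n+(p+1)3^n$, as required. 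The identical argument starting from Proposition \ref{prop:merge} (and using the inductive hypothesis of the second identity) yields the corresponding statement with $e_n$ in place of $M_n$.

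It remains to identify the resulting occupation of $G_{n+1}\setminus G_n$ and to extract the radial formula. This is handled by Proposition \ref{prop:ball}: each toppling at a cut point $x\in\partial G_n$ is paired with a Lemma \ref{lem:topplecut}-style simultaneous toppling across $G_n^{(s)}$, leaving the interior configuration intact while sending a net 2 chips outward per paired toppling. By the axial symmetry of the initial configuration and of $\partial G_n$, each of the two level-$n$ cells of $G_{n+1}\setminus G_n$ receives a copy of $((m'_{n,p}-2)\mathbbm{1}_o)^\circ$ (respectively $((m'_{n,p}-3)\mathbbm{1}_o)^\circ$ when $m'_{n,p}$ is odd) emanating from the corresponding cut point, and Proposition \ref{prop:ball}, Item \eqref{item:radial}, then gives the claimed radius formulas $r_{(4+2p)3^n-2}=2^n+r_{p\cdot 3^n-1}$ and $r_{(4+2p)3^n}=2^n+r_{(p+\frac{1}{2})3^n-\frac{1}{2}}$. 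We must check that the tail does not spill into $G_{n+2}$; since $m'_{n,p}\le b_n+3\cdot 3^n<4\cdot 3^n+2$ for all $n\ge 1$ and $p\le 3$ (a one-line calculation from $b_n=\tfrac{3}{2}(3^{n-1}+1)$), the tail fits inside a single level-$n$ cell. The main obstacle is purely organizational, namely cleanly separating the ``stabilize inside $G_n^{(s)}$'' and ``topple across $\partial G_n$'' phases; the abelian property makes this legal at every step.
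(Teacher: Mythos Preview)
Your approach is exactly the paper's: the base cases $p=0$ are Propositions~\ref{prop:maxtopple} and~\ref{prop:merge}, the step from $p$ to $p+1$ is Proposition~\ref{prop:recper}, and the radial identities come from Proposition~\ref{prop:ball}, Item~\eqref{item:radial}. Your added check that the tail does not spill into $G_{n+2}$ is a welcome detail the paper leaves implicit.

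There is, however, a bookkeeping slip in your treatment of the \emph{first} identity. You write that the inductive hypothesis for $((4+2p)3^n-2)\mathbbm{1}_o$ delivers $m'_{n,p}=b_n+p\cdot 3^n$ chips to each vertex of $\partial G_n$. That is the count for the \emph{second} identity (the $e_n$ case). In the $M_n$ case, Proposition~\ref{prop:maxtopple} gives exactly $1$ chip at each boundary vertex at $p=0$, and each application of Proposition~\ref{prop:recper} adds $3^n$, so the correct boundary count is $p\cdot 3^n+1$. With this correction, Proposition~\ref{prop:ball} yields $r_{(4+2p)3^n-2}=2^n+r_{(p\cdot 3^n+1)-2}=2^n+r_{p\cdot 3^n-1}$, matching the claim; had you carried $b_n+p\cdot 3^n$ through you would have obtained the wrong radius. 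The second identity is handled exactly as you say, and there the quantity $m'_{n,p}=b_n+p\cdot 3^n$ is correct.
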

\begin{proof}
The diagrams follow directly from Propositions 
\ref{prop:maxtopple}, \ref{prop:merge} and  \ref{prop:recper}. 
The radial identities are a consequence of Proposition \ref{prop:ball}, Item \eqref{item:radial}.
\end{proof}

We can use Proposition \ref{prop:periodicity} and induction to obtain an easy consequence.

\begin{corollary}
\label{cor:constantrad}
For every $m\in [10\cdot 3^{n-1}, 4\cdot 3^n)$:
\begin{enumerate}
\item \label{item:rm1} $r_m = 2^{n}$.
\item \label{item:m1} $(m\mathbbm{1}_o)^\circ$ has $1$ chip on each vertex in the boundary set $\partial G_n$.
\end{enumerate}
\end{corollary}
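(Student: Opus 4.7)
I would argue by induction on $n$, carrying out items (1) and (2) simultaneously. The base case $n=1$ is immediate: Proposition \ref{prop:maxtopple} at $n=1$ identifies $(10\mathbbm{1}_o)^\circ$ with the $M_1$-pattern carrying one chip on each vertex of $\partial G_1$, and $(11\mathbbm{1}_o)^\circ$ differs from this only at $o$.

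For the inductive step, fix $n\ge 2$ and assume the corollary at level $n-1$. The upper bound $r_m\leq 2^n$ throughout $[10\cdot 3^{n-1},4\cdot 3^n)$ follows immediately from Corollary \ref{cor:maxradius} and the monotonicity of $m\mapsto r_m$. For the matching lower bound, I instantiate Proposition \ref{prop:periodicity} at level $n-1$ with $p=3$, which gives
\[
r_{10\cdot 3^{n-1}} \;=\; 2^{n-1} + r_{(7\cdot 3^{n-1}-1)/2}.
\]
A short arithmetic check places $(7\cdot 3^{n-1}-1)/2$ inside $[10\cdot 3^{n-2},4\cdot 3^{n-1})$ for every $n\geq 2$, so the inductive hypothesis yields $r_{(7\cdot 3^{n-1}-1)/2}=2^{n-1}$, whence $r_{10\cdot 3^{n-1}}=2^n$. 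Monotonicity then propagates this value across the entire interval, establishing (1).

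For (2), observe first that $r_m=2^n$ puts $\partial G_n$ on the outer sphere of the receiving ball $B_o(2^n)$ while the two external neighbours of each vertex of $\partial G_n$ (those lying in $SG\setminus G_n$) sit strictly outside this ball, hence carry no chips and never fire. In particular no vertex of $\partial G_n$ itself ever fires, and its final chip count is determined entirely by the firings inside the level-$(n-1)$ copy of $SG$ attached at the corresponding cut point of $\partial G_{n-1}$. Invoking the fundamental diagram of Proposition \ref{prop:ball} at level $n-1$, this attached sub-cluster has the form $((m'-2)\mathbbm{1}_o)^\circ$ (or $((m'-3)\mathbbm{1}_o)^\circ$ according to the parity of $m'$), with mass $\mu$ satisfying $r_\mu=2^{n-1}$ via the identity $r_m=2^{n-1}+r_\mu$. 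At the base value $m=10\cdot 3^{n-1}$, Proposition \ref{prop:periodicity} places $\mu\in[10\cdot 3^{n-2},4\cdot 3^{n-1})$, whereupon the inductive hypothesis supplies one chip at each vertex of the sub-cluster's $\partial G_{n-1}$, one of which is precisely the designated vertex of $\partial G_n$. To upgrade this from the single value $m=10\cdot 3^{n-1}$ to the full interval, I would track $\mu$ as $m$ varies, using the $(2\cdot 3^{n-1})$-periodicity of Corollary \ref{cor:pattper} together with Proposition \ref{prop:periodicity}, to verify that $\mu$ stays inside the inductive window $[10\cdot 3^{n-2},4\cdot 3^{n-1})$ throughout.

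The main obstacle is precisely this last bookkeeping. A cleaner route that I would pursue in parallel is to strengthen the inductive hypothesis to ``for every $m$ with $r_m=2^n$, the configuration $(m\mathbbm{1}_o)^\circ$ carries one chip at each vertex of $\partial G_n$''; with this sharper IH the sub-cluster recursion fires off directly from the identity $r_\mu=2^{n-1}$, removing any need to localise $\mu$ in a narrow subinterval, and Corollary \ref{cor:constantrad} then drops out as an immediate specialization.
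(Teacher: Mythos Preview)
Your argument for Part~(1) is essentially the paper's: the upper bound comes from Corollary~\ref{cor:maxradius}, and the lower bound is obtained by induction on $n$ via Proposition~\ref{prop:periodicity} with $p=3$, then propagated across the interval by monotonicity.

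For Part~(2), however, you are working much harder than necessary, and the ``bookkeeping'' you flag as an obstacle is genuine: tracking $\mu$ across the full interval $[10\cdot 3^{n-1}, 4\cdot 3^n)$ amounts to knowing the jump function $m\mapsto m'$ on this range, which is the content of Theorem~\ref{thm:tail} and is not yet available at this point in the paper. Your proposed strengthening of the inductive hypothesis would eventually close the gap, but it still forces you to handle the merging of the two tail sub-clusters at the midpoint between the two vertices of $\partial G_n$, and is overkill.

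The paper bypasses all of this with a two-line squeeze that needs no induction. Since $r_m = 2^n$, the receiving set is exactly $B_o(2^n)$, so each vertex of $\partial G_n$ has received \emph{at least} one chip. Conversely, the number of chips delivered to each vertex of $\partial G_n$ (which, as you correctly observe, never fires) is monotone nondecreasing in $m$; and Proposition~\ref{prop:maxtopple} shows that at the top of the interval, $m = 4\cdot 3^n - 2$, that count is exactly~$1$. Hence it is exactly~$1$ throughout $[10\cdot 3^{n-1}, 4\cdot 3^n)$. The ingredient you missed is precisely this monotonicity of the chip count at a non-firing boundary vertex, which follows from monotonicity of the odometer in the initial mass.
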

\begin{proof}
To prove Part \eqref{item:rm1}, we note by Corollary \ref{cor:maxradius} that $r_m \leq 2^{n}$. To show that $r_m = 2^{n}$, it is enough to show that $r_{10\cdot 3^{n-1}} = 2^{n}$ by induction on $n$. When $n=1$, $r_{10\cdot 3^0} = r_{10} = 2$. Suppose $r_{10\cdot 3^{n-1}} = 2^{n}$. Then using Proposition \ref{prop:periodicity} we have
\[
r_{10\cdot 3^{n}} = 2^{n} + r_{(3+\frac{1}{2})3^{n+1}-\frac{1}{2}} \geq 2^{n} + r_{10\cdot 3^{n}} = 2^{n} + 2^{n} = 2^{n+1}.
\]
Combined with $r_{10\cdot 3^{n}} \leq 2^{n+1}$ this implies $r_{10\cdot 3^{n}}=2^{n+1}$.

To prove Part \eqref{item:m1}, note that since the receiving set $\left((10\cdot 3^{n-1})\mathbbm{1}_o\right)^\circ$ is an exact ball and has full support in $G_n$, each vertex in $\partial G_n$ must receive at least $1$ chip. 
Meanwhile, Proposition \ref{prop:maxtopple} states that $\left((4\cdot 3^{n}-1)\mathbbm{1}_o\right)^\circ$  also has full support in $G_n$ and carries 1 chip on each vertex in $\partial G_n$. The claim follows from the fact that the number of chips on $\partial G_n$ increases monotonically with $m$.
\end{proof}

\subsection{Enumeration of radial jumps (I): analysis of sandpile tiles} \label{sec:enumeration}

In the next two subsections we complete the proof of Theorem \ref{thm:tail}.
We present the proofs in the following order:
\begin{itemize}
\item ``$4\frac{2}{3}^-$'' and ``$5\frac{1}{3}^-$,'' followed by ``$4\frac{2}{3}$.''
\item ``$4\frac{4}{9}$'' and ``$5\frac{1}{3}$,'' which are proved in tandem.
\item ``$e+2$,'' followed by ``$4\frac{4}{9}^-$.''
\end{itemize}
The first two items are based on tiling ideas and are proved in this subsection.
The last item is proved in the next subsection.

To establish ``$4\frac{2}{3}^-$'', we introduce another family of sandpile tiles.

\begin{lemma}
\label{lem:Mn3n}
The following stabilization holds:
\begin{align}
\label{eq:Mn3n}
\begin{array}{m{1in}lm{.95in}l}
\includegraphics[height=0.15\textwidth]{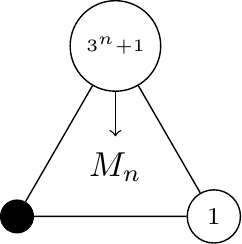} &
\longrightarrow &
\includegraphics[height=0.15\textwidth]{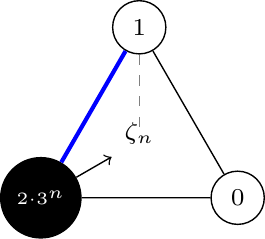} &
,
\end{array}
\end{align}
where 
\begin{tabular}{m{0.8in}lm{0.8in}l}
 \includegraphics[width=0.135\textwidth]{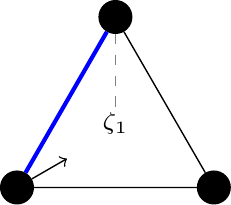}
&
$=$
&
 \includegraphics[width=0.135\textwidth]{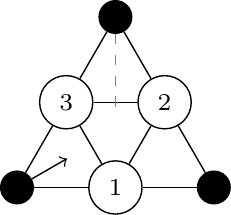}
 &,
\end{tabular}
and for $n\geq 1$, $\zeta_{n+1}$ is constructed by gluing three copies of $\zeta_n$ according to the rule

\begin{center}
\begin{tabular}{m{1.3in}lm{1.3in}l}
\includegraphics[height=0.2\textwidth]{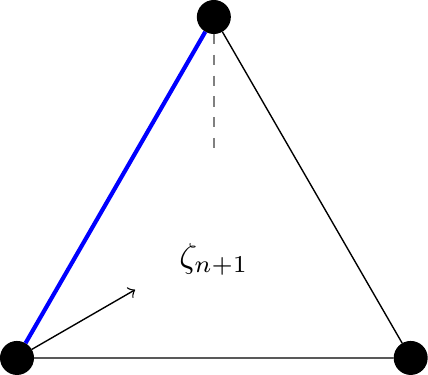}
&
$=$
&
\includegraphics[height=0.2\textwidth]{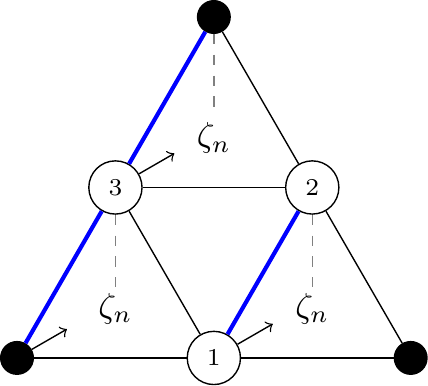}
&
.
\end{tabular}
\end{center}
(For emphasis: Every vertex along the {\color{blue}\underline{\textbf{blue line}}} carries $3$ chips.)
\end{lemma}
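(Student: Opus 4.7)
The plan is to prove the stabilization \eqref{eq:Mn3n} by induction on $n$, closely paralleling the inductive arguments used in Lemmas \ref{lem:stab}, \ref{lem:120turn} and Propositions \ref{prop:id2s}, \ref{prop:id1s}. The recursive definition of $\zeta_n$ (gluing three copies of $\zeta_n$ at their cut vertices to form $\zeta_{n+1}$) is structurally parallel to the constructions of $e_n$ and $M_n$, so the same toolkit should apply: Lemma \ref{lem:topplecut} to topple within a single level-$n$ subcopy while keeping the others' recurrent configurations frozen, together with the stabilizations \eqref{2top}, \eqref{1top}, \eqref{eq:rec2top}, \eqref{eq:rev1top} and the rotation identity \eqref{eq:Mnturn}--\eqref{eq:Mnturn2} to process excess chips that accumulate at the three interior cut vertices between the subcopies of $G_{n+1}$.

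For the base case $n=1$, the configurations $M_1$, $\zeta_1$, and the starting right-hand configuration on $G_2^{(s)}$ are all explicitly prescribed, so the stabilization reduces to a finite toppling calculation on $G_2^{(s)}$ which can be carried out directly. For the inductive step, I would decompose $G_{n+1}^{(s)}$ into three level-$n$ subcopies glued at the cut vertices prescribed by Definition \ref{def:Mn} and the analogous recursive rule for $\zeta_{n+1}$. Starting from the hypothesized initial configuration, I would first process the subcopy where the excess $3^n$ chips are deposited by invoking the induction hypothesis to produce a local $\zeta_n$ tile while routing the appropriate number of chips out through its cut vertices. Then, chip by chip (or in batches of $2\cdot 3^n$), I would feed these chips into the remaining two subcopies and apply \eqref{eq:rec2top}, \eqref{eq:rev1top}, and \eqref{eq:Mnturn}--\eqref{eq:Mnturn2} to transform their $M_n$-content into further $\zeta_n$ tiles. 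After all three subcopies carry $\zeta_n$ and the interior cut vertices each carry the correct boundary value, the recursive gluing rule for $\zeta_{n+1}$ finishes the step. The exact number of chips absorbed by the two sink vertices of $G_{n+1}^{(s)}$ follows automatically by a chip-conservation count (total chips in $M_n + 3^n\mathbbm{1}_?$, minus those in $\zeta_{n+1}$, equals twice the per-sink count by axial symmetry).

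The main obstacle I anticipate is the bookkeeping at the three interior cut vertices, especially ensuring that the distinguished blue path, along which every vertex must carry exactly $3$ chips, threads through those cut vertices with the correct chip count — not $2$ or $4$ — at every intermediate step. This is where the Sierpinski arrowhead structure (see Figure \ref{fig:4+2path}) enters: the blue line in $\zeta_{n+1}$ must be the concatenation of the blue lines of the three $\zeta_n$ tiles passing through the appropriate cut vertices, and the chip count at the junctions must be produced \emph{exactly} by the chosen firing order. To handle intermediate asymmetric states that arise after processing only one or two subcopies, I expect to need the reflection Lemma \ref{lem:reflection} and the toppling identities \eqref{eq:btop}, \eqref{eq:xtop}, \eqref{eq:ytop} from Theorem \ref{thm:groupSG} to re-balance chips on the $x$ and $y$ cut vertices modulo $3^n$ before the final gluing; verifying that the linear system \eqref{eq:kxky} is solvable at each such step is the most delicate piece of the induction.
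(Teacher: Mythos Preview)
Your overall approach is correct and matches the paper's: induction on $n$, with the base case verified by direct computation on $G_2^{(s)}$, and the inductive step carried out by decomposing $G_{n+1}^{(s)}$ into its three level-$n$ cells and applying the induction hypothesis together with the basic toppling identities.

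However, you significantly overestimate the difficulty of the inductive step. The paper's proof uses only four moves in sequence: apply \eqref{eq:rec2top} once to route the excess chips through the top cell; apply the induction hypothesis to that cell; apply the induction hypothesis again to the two bottom cells; and finally apply \eqref{eq:rev1top} once. That is the entire argument. In particular, the $120^\circ$-rotation identities \eqref{eq:Mnturn}--\eqref{eq:Mnturn2}, the reflection Lemma~\ref{lem:reflection}, and the modular re-balancing via \eqref{eq:btop}, \eqref{eq:xtop}, \eqref{eq:ytop} that you anticipate needing are \emph{not} required here. The chip counts at the interior cut vertices work out exactly after the two applications of the induction hypothesis, so there is no asymmetric intermediate state to repair and no linear system \eqref{eq:kxky} to solve. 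Your worry about the blue line threading correctly through the junctions is legitimate in principle, but in practice the recursive gluing rule for $\zeta_{n+1}$ is designed precisely so that the three blue lines of the $\zeta_n$ tiles concatenate automatically once the induction hypothesis has been applied in each cell; no additional bookkeeping is needed.
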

\begin{proof}
The $n=1$ case is verified directly.
Assume that the claim holds at level $n$.
At level $n+1$ we have the following series of stabilizations:
\begin{center}
\begin{tabular}{lm{1.6in}lm{1.6in}lm{1.6in}}
&
\includegraphics[width=0.25\textwidth]{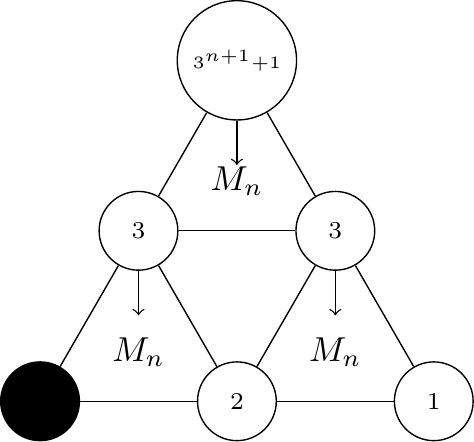}
&
$\overset{\eqref{eq:rec2top}}{\longrightarrow}$
&
\includegraphics[width=0.25\textwidth]{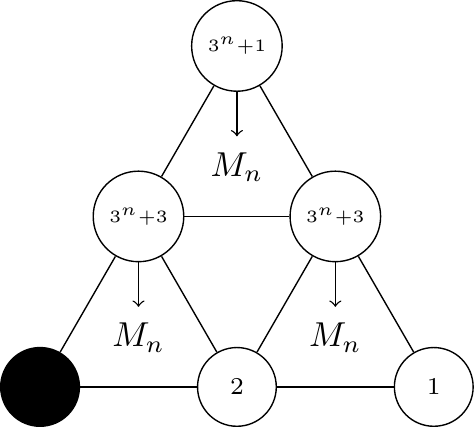}
&
$\overset{\text{IH}}{\longrightarrow}$
&
\includegraphics[width=0.25\textwidth]{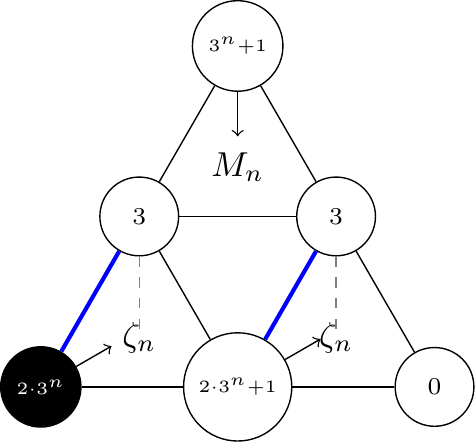}
\\
$\overset{\text{IH}}{\longrightarrow}$
&
\includegraphics[width=0.25\textwidth]{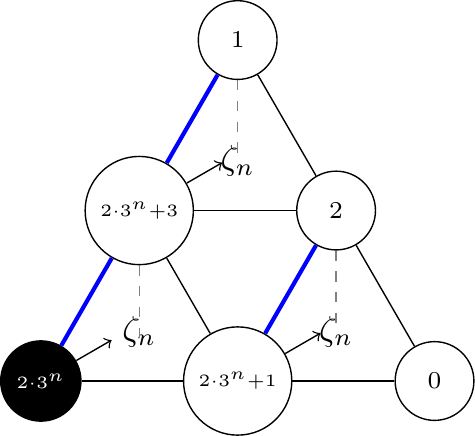}
&
$\overset{\eqref{eq:rev1top}}{\longrightarrow}$
&
\includegraphics[width=0.25\textwidth]{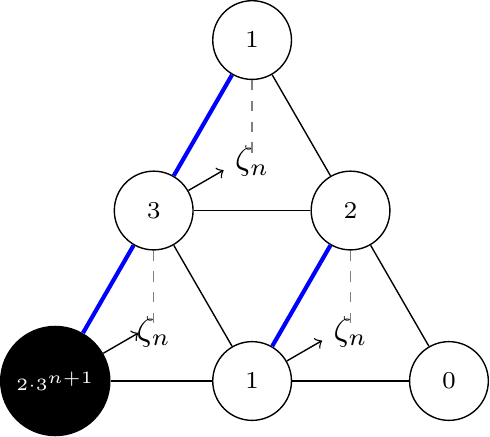}
&.
\end{tabular}
\end{center}
\end{proof}

\begin{proposition}[``$4\frac{2}{3}^-$'']
\label{prop:423-}
\[
\begin{array}{lm{1in}l}
\left((4\frac{2}{3}\cdot 3^n-2)\mathbbm{1}_o\right)^\circ = 
&
\includegraphics[height=0.15\textwidth]{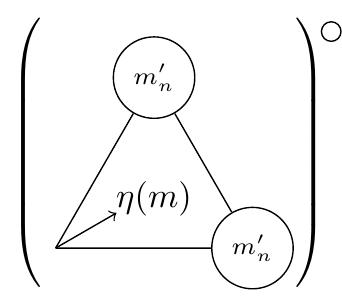}
&
, \text{ where } m_n'= 2\cdot 3^{n-1}+1.
\end{array}
\]
\end{proposition}
\begin{proof}
We start with the configuration $\left((4\cdot 3^n-2)\mathbbm{1}_o\right)^\circ$ and add to it $\frac{2}{3}\cdot 3^n = 2\cdot 3^{n-1}$ chips at $o$.
The resulting stabilization is
\begin{align*}
\begin{array}{m{1.6in}lm{1.6in}lm{1.6in}l}
\includegraphics[width=0.25\textwidth]{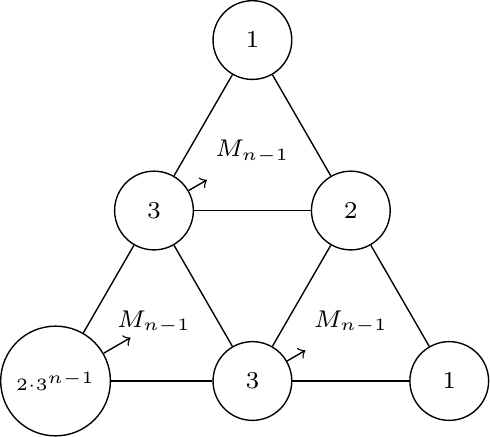}
&
\overset{\eqref{eq:rec2top}}{\longrightarrow}
&
\includegraphics[width=0.25\textwidth]{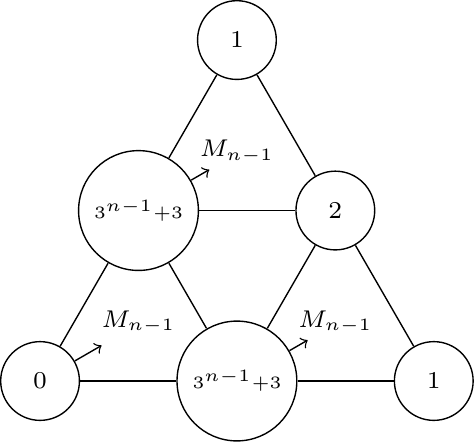}
&
\overset{\eqref{eq:Mn3n}}{\longrightarrow}
&
\includegraphics[width=0.25\textwidth]{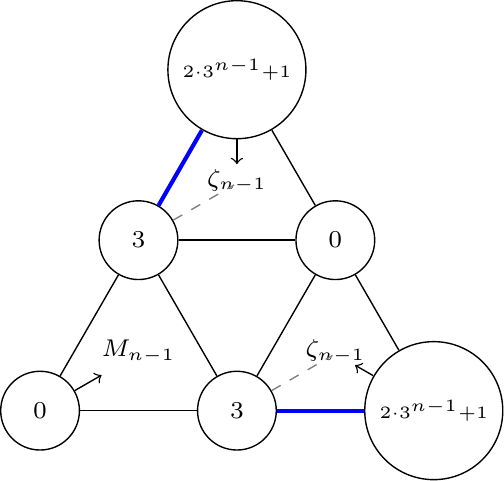}
&
.
\end{array}
\end{align*}
\end{proof}

\begin{proposition}[``$5\frac{1}{3}^-$'']
\label{prop:513-}
\[
\begin{array}{lm{1.75in}l}
\left((5\frac{1}{3}\cdot 3^n-2) \mathbbm{1}_o\right)^\circ = 
&
\includegraphics[height=0.22\textwidth]{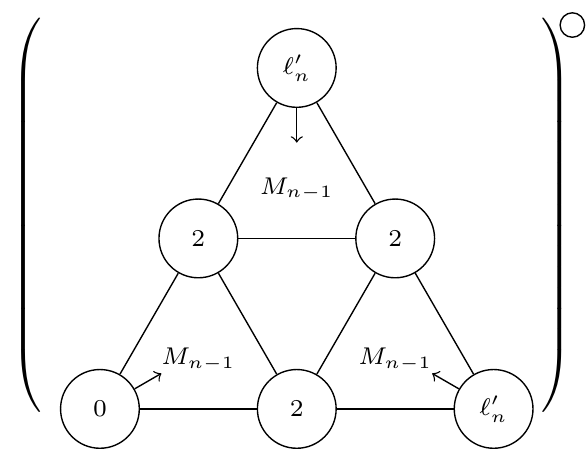}
&
, ~\text{where $\ell_n'= 2\cdot 3^{n-1}+2$.}
\end{array}
\]
\end{proposition}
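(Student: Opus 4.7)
The strategy is to build on Proposition \ref{prop:423-} by injecting one more cycle of chips. I write
\[
5\tfrac{1}{3}\cdot 3^n - 2 \;=\; (4\tfrac{2}{3}\cdot 3^n - 2) + 2\cdot 3^{n-1},
\]
so that the abelian property lets me first invoke Proposition \ref{prop:423-} to reach the ``$4\tfrac{2}{3}^-$'' configuration (with $m_n' = 2\cdot 3^{n-1}+1$ chips paused at each vertex of $\partial G_n$) and then feed the remaining $2\cdot 3^{n-1}$ chips in at $o$ and follow the resulting cascade.

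The $2\cdot 3^{n-1}$ new chips constitute exactly one full period at level $n-1$. Applying \eqref{eq:rec2top} (through Proposition \ref{prop:recper}) to the level-$(n-1)$ subcell at $o$, viewed as $G_{n-1}^{(s)}$ with its two non-$o$ corners as sinks, absorbs the new chips without altering the interior of that subcell (by Lemma \ref{lem:topplecut}) and deposits $3^{n-1}$ extra chips at each of the two interior cut vertices of $G_n$. These chips then cascade into the two sibling level-$(n-1)$ subcells, which already carry the tile pattern produced by Proposition \ref{prop:423-}. The core of the proof is to show that this secondary cascade stabilizes each sibling subcell into the reflected tile that makes up the $\mathrm{MnRef}$ figure, and simultaneously delivers one further chip to each vertex of $\partial G_n$ so that $m'$ rises to $\ell_n' = 2\cdot 3^{n-1}+2$.

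I expect this secondary cascade inside the sibling subcells to be the main technical obstacle, as it plays no role in Proposition \ref{prop:423-}. My plan is to establish it by induction on $n$ modeled on Lemmas \ref{lem:stab} and \ref{lem:Mn3n}: the base case $n=1$ is a direct verification, and the inductive step follows the self-similar gluing rule of the tile, with Lemma \ref{lem:topplecut} used at each stage to confine the sub-stabilization to the relevant sub-triangle. Axial symmetry of the source forces both sibling subcells to evolve identically, and a direct chip count (or, as an independent consistency check, the reflection identity \eqref{eq:refstab} of Lemma \ref{lem:reflection}) then confirms that each vertex of $\partial G_n$ gains precisely one chip, giving $\ell_n' = 2\cdot 3^{n-1}+2$ and completing the proof.
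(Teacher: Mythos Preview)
Your decomposition $5\tfrac{1}{3}\cdot 3^n-2=(4\tfrac{2}{3}\cdot 3^n-2)+2\cdot 3^{n-1}$ is valid and the plan is sound, but it takes a detour compared with the paper. The paper starts one step earlier, writing $5\tfrac{1}{3}\cdot 3^n-2=(4\cdot 3^n-2)+4\cdot 3^{n-1}$, so that the sibling subcells still carry the $M_{n-1}$ tile (from $M_n$) rather than the $\zeta_{n-1}$ tile you inherit from Proposition~\ref{prop:423-}. Pushing $2\cdot 3^{n-1}$ chips into each cut vertex then lands exactly on the hypothesis of the already-proved $120^\circ$-rotation identity \eqref{eq:Mnturn2} in Lemma~\ref{lem:120turn}, and the proof is finished in two lines.

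Your ``secondary cascade'' lemma, by contrast, asks what happens when $3^{n-1}$ chips hit a $\zeta_{n-1}$ tile. By Lemma~\ref{lem:Mn3n} and the abelian property this is equivalent to $M_{n-1}+2\cdot 3^{n-1}$ (modulo the one chip already at the sink), which is precisely \eqref{eq:Mnturn2}. So the induction you propose would end up reproving Lemma~\ref{lem:120turn} in disguise. The paper's route buys you that lemma for free; your route is correct but re-derives it.
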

\begin{proof}
We start with the configuration $\left((4\cdot 3^n-2)\mathbbm{1}_o\right)^\circ$, established in Proposition \ref{prop:maxtopple}, and add to it $1\frac{1}{3}\cdot 3^n = 4\cdot 3^{n-1}$ chips at $o$. The stabilization then proceeds as follows:
\begin{align}
\label{eq:Mstab}
\begin{array}{m{1.6in}lm{1.6in}lm{1.6in}l}
\includegraphics[width=0.25\textwidth]{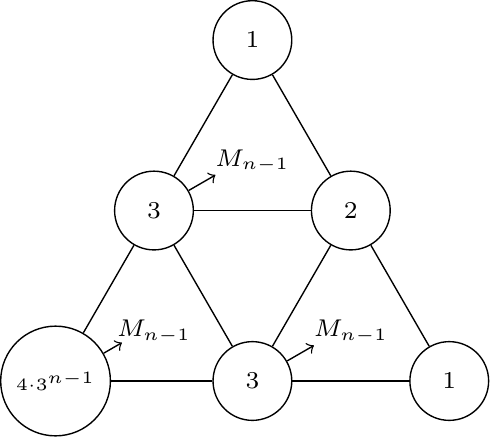}
&
\overset{\eqref{eq:rec2top}}{\longrightarrow}
&
\includegraphics[width=0.25\textwidth]{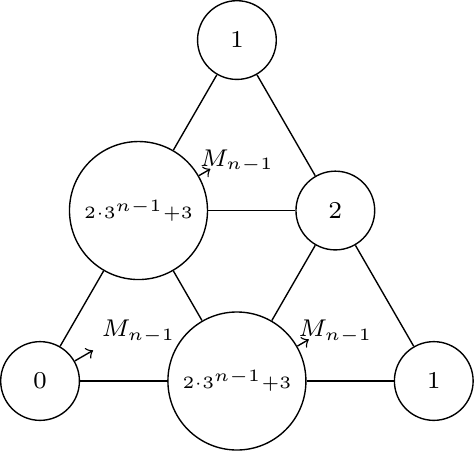}
&
\overset{\eqref{eq:Mnturn2}}{\longrightarrow}
&
\includegraphics[width=0.25\textwidth]{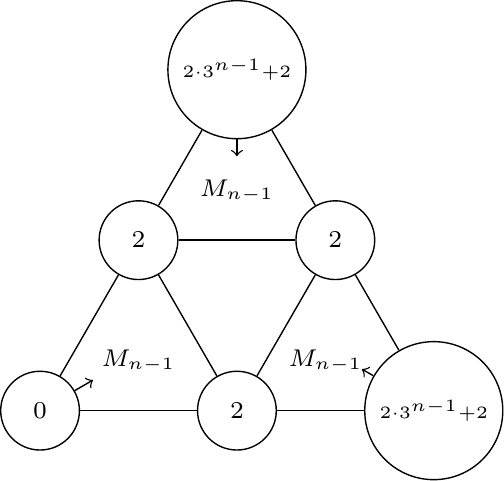}
&.
\end{array}
\end{align}
\end{proof}

It will be helpful to introduce a shorthand for the tile
\[
\begin{array}{m{1.2in}lm{1.2in}l}
\includegraphics[width=0.2\textwidth]{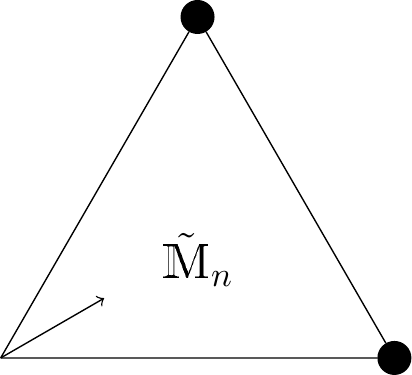}
&
:=
&
\includegraphics[width=0.2\textwidth]{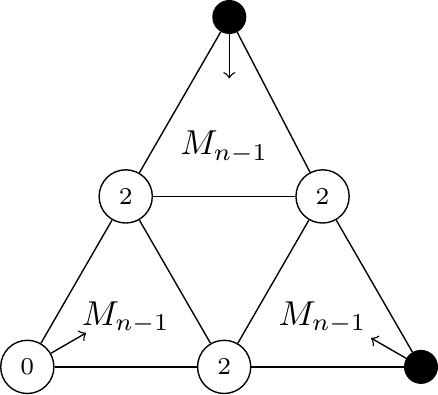}
&
.
\end{array}
\]

\begin{proposition}[``$4\frac{2}{3}$'']
\label{prop:423}
\[
\begin{array}{lm{1in}l}
\left((4\frac{2}{3}\cdot 3^n) \mathbbm{1}_o\right)^\circ = 
&
\includegraphics[height=0.15\textwidth]{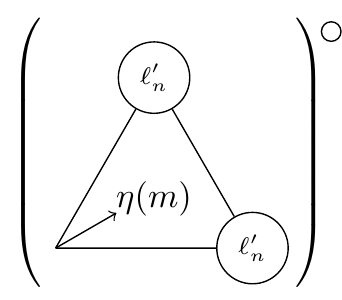}
&
, \text{ where }  \ell_n'= 2\cdot 3^{n-1}+2.
\end{array}
\]
\end{proposition}
\begin{proof}
Using Propositions \ref{prop:423-} and \ref{prop:513-}, we can infer that the number of chips received at the sink is at least $2\cdot 3^{n-1}+1$ and at most $2\cdot 3^{n-1}+2$.
To see that it is the latter count, we add $2$ chips at $o$ to the diagram for $\left((4\frac{2}{3}\cdot 3^n-2)\mathbbm{1}_o\right)^\circ$:
\[
\begin{array}{lm{1.6in}lm{1.6in}l}
\left(4\frac{2}{3}\cdot 3^n\right)\mathbbm{1}_o \longrightarrow
&
\includegraphics[width=0.25\textwidth]{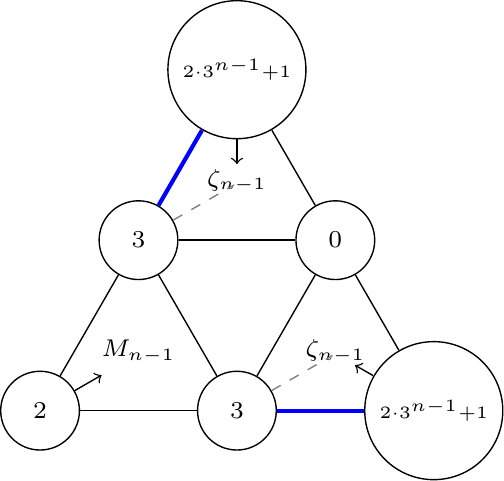}
&
\overset{\eqref{eq:Mtoe}}{\longrightarrow}
&
\includegraphics[width=0.25\textwidth]{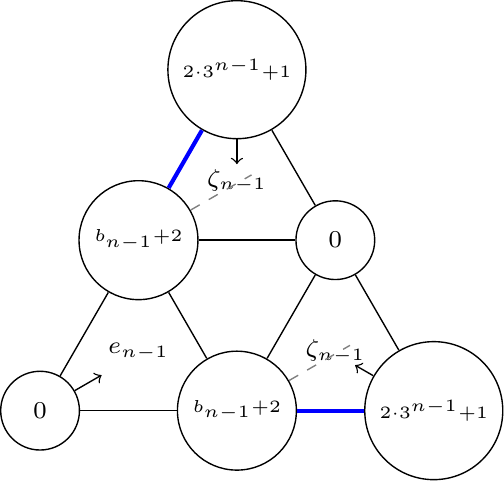}
&
.
\end{array}
\]
We then topple both cut vertices in $\partial G_{n-1}$.
Thanks to the \underline{{\color{blue} \textbf{line of 3's}}} that connects $\partial G_{n-1}$ to the sink $\partial G_n$, this triggers a ``chain reaction'' of topplings down the line and delivers extra chips to $\partial G_n$.
\end{proof}

Next we prove ``$4\frac{4}{9}$'' and $5\frac{1}{3}$'' in tandem.
\begin{proposition}[``$4\frac{4}{9}$'']
\label{prop:449}
\[
\begin{array}{lm{1in}l}
\left((4\frac{4}{9}\cdot 3^n)\mathbbm{1}_o\right)^\circ = 
&
\includegraphics[height=0.15\textwidth]{423-}
&
, \text{ where }  m_n'= 2\cdot 3^{n-1}+1.
\end{array}
\]
\end{proposition}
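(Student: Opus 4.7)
The plan is to prove Proposition~\ref{prop:449} and Proposition~\ref{prop:513} (``$5\frac{1}{3}$'') in tandem, by simultaneous induction on $n\geq 3$, with the base case $n=3$ handled by a direct stabilization calculation on $G_3^{(s)}$ (this is the smallest $n$ for which $4\frac{4}{9}\cdot 3^n=40\cdot 3^{n-2}$ is integral). The coupling between the two statements is dictated by the renormalization relation $r_{a\cdot 3^n}=2^n+r_{b\cdot 3^{n-2}}$ of Theorem~\ref{thm:radialcycle}, which sends $a\in[4\frac{4}{9},4\frac{2}{3})$ to $b\in[5\frac{1}{3},6)$: stabilizing the outer shell of the level-$n$ problem with mass $4\frac{4}{9}\cdot 3^n$ should leave an inner reduced problem at level $n-2$ with mass in the ``$5\frac{1}{3}$'' window, so the two propositions literally feed into each other.

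For the inductive step, I would exploit the decomposition $4\frac{4}{9}\cdot 3^n=4\cdot 3^n+4\cdot 3^{n-2}$ and begin with the configuration $((4\cdot 3^n)\mathbbm{1}_o)^\circ = e_n$ from Proposition~\ref{prop:merge}, which carries $b_n$ chips at each vertex of $\partial G_n$. The remaining $4\cdot 3^{n-2}=2\cdot(2\cdot 3^{n-2})$ chips are then injected at $o$. Using the nested-identity property $e_n|_{G_{n-2}^{(s)}}=e_{n-2}$ together with two applications of Proposition~\ref{prop:recper} inside the innermost cell $G_{n-2}^{(s)}$, the new chips restore $e_{n-2}$ on $G_{n-2}^{(s)}$ while depositing $2\cdot 3^{n-2}$ extra chips on each cut vertex of $\partial G_{n-2}$. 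Superposed with the ``$3$''s already lying on the Sierpinski-arrowhead backbone of $e_n$, these excess chips ignite a cascade of topplings propagating outward through the two surrounding $G_{n-1}$ cells. By the inductive ``$5\frac{1}{3}$'' hypothesis at level $n-2$, the cascade produces exactly the sub-tile needed for block renormalization. Lemma~\ref{lem:reflection} (axial reflection) and Lemma~\ref{lem:120turn} ($120^\circ$-rotation) then let me reassemble the three level-$(n-1)$ blocks into the claimed level-$n$ configuration, delivering $m_n'=2\cdot 3^{n-1}+1$ chips to each sink in $\partial G_n$.

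The main obstacle is a sharp counting argument: the cascade must terminate \emph{exactly one step short} of the ``$4\frac{2}{3}$'' count $\ell_n'=2\cdot 3^{n-1}+2$ produced by Proposition~\ref{prop:423}. In other words, the budget of $4\cdot 3^{n-2}$ injected chips has to be shown to ignite the arrowhead chain reaction all the way out to $\partial G_n$, yet fall $2\cdot 3^{n-2}$ shy of firing the outermost cut-point junction one additional time. Carrying out this bookkeeping requires tracking, at each intermediate stage, the chip counts at the cut points of $\partial G_k$ for $k=n-2,n-1,n$, and verifying at every stage that the restriction to $G_k^{(s)}$ remains recurrent so that Lemma~\ref{lem:topplecut} can be invoked to transport topplings across scales. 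Synchronizing this across the ``$4\frac{4}{9}$'' and ``$5\frac{1}{3}$'' tiles — so that neither proposition ``leaks'' into the other's regime — is precisely what forces the tandem structure and where the argument will be most delicate.
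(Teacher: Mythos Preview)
Your high-level instinct---proving ``$4\frac{4}{9}$'' and ``$5\frac{1}{3}$'' together by a simultaneous induction---matches the paper exactly, and your identification of the count $m_n'=2\cdot 3^{n-1}+1$ as falling one unit short of Proposition~\ref{prop:423} is correct. But the concrete plan diverges from the paper in ways that create real gaps.

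First, the paper does \emph{not} start from $e_n$. It starts from $M_n=\left((4\cdot 3^n-2)\mathbbm{1}_o\right)^\circ$ (Proposition~\ref{prop:maxtopple}) and adds $4\cdot 3^{n-2}+2$ chips at $o$. The reason is that $M_n$ has a clean threefold self-similar tiling into copies of $M_{n-1}$, whereas $e_n$ carries the arrowhead backbone whose trap structure (the subject of \S\ref{sec:enumeration2}) is precisely what makes propagation hard to track. Your route through $e_n$ forces you to control the cascade along the arrowhead curve, which the paper deliberately avoids here.

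Second, and more seriously, the two-way tandem you propose does not close. The paper introduces a \emph{third} auxiliary stabilization $(Mb_n)$, equation~\eqref{eq:Mnbn}, which asserts what happens when one adds $b_n-1$ chips to a boundary vertex of the tile $e_n^{(o)}$. The actual induction is on three statements $(\mathcal{Q}_n)$, $(\tilde{\mathbbm{M}}_n)$, $(Mb_n)$ linked by
\[
(\tilde{\mathbbm{M}}_n) \Longrightarrow (\mathcal{Q}_{n+1}), \qquad (Mb_n) \Longrightarrow (\tilde{\mathbbm{M}}_{n+1}), \qquad (Mb_n)+(\mathcal{Q}_n) \Longrightarrow (Mb_{n+1}),
\]
all stepping by \emph{one} level, not two. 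Your proposed use of the ``$5\frac{1}{3}$'' hypothesis at level $n-2$ to handle chips deposited at $\partial G_{n-2}$ does not apply as stated: Proposition~\ref{prop:513} concerns mass injected at $o$ on $G_{n-2}^{(s)}$, not chips arriving at a boundary vertex of an $M$-tile sitting inside a larger configuration. That boundary-injection statement is exactly what $(Mb_n)$ supplies, and without it the ``$5\frac{1}{3}$'' leg of your tandem has no inductive input. Finally, citing Theorem~\ref{thm:radialcycle} as motivation is circular: that theorem is proved \emph{from} Theorem~\ref{thm:tail}, which includes the propositions you are trying to establish.
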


\begin{proposition}[``$5\frac{1}{3}$'']
\label{prop:513}
\[
\begin{array}{lm{1in}l}
\left(\left(5\frac{1}{3}\cdot 3^n\right) \mathbbm{1}_o\right)^\circ = 
&
\includegraphics[height=0.15\textwidth]{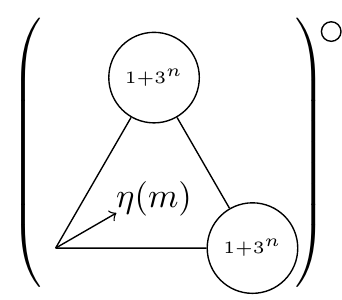}
&
.
\end{array}
\]
\end{proposition}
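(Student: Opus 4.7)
The proof of Proposition \ref{prop:513} will mirror the tiling argument used in Proposition \ref{prop:423} and will be executed in tandem with the companion ``$4\frac{4}{9}$'' claim. Since $5\frac{1}{3}\cdot 3^n = 4\cdot 3^n + 4\cdot 3^{n-1}$, I would start from the configuration $\left((4\cdot 3^n)\mathbbm{1}_o\right)^\circ$ of Proposition \ref{prop:merge}---which is $e_n$ inside $G_n^{(s)}$, with $b_n$ chips at each sink vertex of $\partial G_n$---and add $4\cdot 3^{n-1}$ fresh chips at $o$. The companion Proposition \ref{prop:449} corresponds to adding only $4\cdot 3^{n-2}$ chips at $o$, and the same self-similar decomposition handles both cases.

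The next step is to decompose the stabilization of the added chips into simultaneous topplings over the three level-$(n-1)$ subcells of $G_n$, permitted by Lemma \ref{lem:topplecut}. In the bottom subcell containing $o$, the $e_{n-1}$ portion of $e_n$ absorbs the added chips via the $M_{n-1}$-rotation identity \eqref{eq:Mnturn2}, producing a rotated $M_{n-1}$ tile and firing a controlled number of chips into each cut vertex of $\partial G_{n-1}$. Those chips then cascade into the two upper subcells along the embedded line of three-chip vertices---simultaneously a segment of the Sierpinski arrowhead curve inside $e_n$ and the defining blue line of the $\zeta_{n-1}$ tile from Lemma \ref{lem:Mn3n}. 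Each cascade deposits one chip at the corresponding sink vertex in $\partial G_n$. Tracking the total number of cascades recovers the claimed boundary count $m' = 3^n + 1$ for Proposition \ref{prop:513}, and $m' = 2\cdot 3^{n-1} + 1$ for Proposition \ref{prop:449}. A final application of the axial reflection Lemma \ref{lem:reflection} symmetrizes the interior so that the output matches the picture in the statement.

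The main obstacle will be the inductive bookkeeping. At every scale $n$ one must verify that the intermediate configurations genuinely lie in $\mathcal{R}_n^{(s)}$ or $\mathcal{R}_{n-1}^{(o)}$, so that the rotation, reflection, and periodicity identities of Section \ref{sec:reflection} may be invoked legitimately, and one must ensure that the cascade along the arrowhead curve terminates exactly at the claimed boundary count rather than spilling further into $G_{n+1}\setminus G_n$. The tandem structure is what makes the induction close: the Proposition \ref{prop:449} configuration at scale $n$ serves as the intermediate piece needed to assemble the Proposition \ref{prop:513} configuration at the same scale, and conversely the cascade in the Proposition \ref{prop:513} analysis recycles the $\zeta_{n-1}$-tile accounting from the Proposition \ref{prop:449} step. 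The base case $n=3$ can be verified by direct inspection from Table \ref{table:spectrum}, after which the parallel induction propagates both claims together.
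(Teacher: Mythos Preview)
Your high-level instinct---proving ``$4\tfrac49$'' and ``$5\tfrac13$'' in tandem by a simultaneous induction---matches the paper's approach. However, the mechanics you sketch diverge from what actually works, and there is a genuine missing ingredient.

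First, the starting configurations. For ``$5\tfrac13$'' the paper does \emph{not} start from $e_n$ and add $4\cdot 3^{n-1}$ chips; it starts from the already-established ``$5\tfrac13^-$'' configuration of Proposition~\ref{prop:513-} (which consists of rotated $M_{n-1}$ tiles, not $e_{n-1}$ tiles) and adds just $2$ chips at $o$. Likewise, for ``$4\tfrac49$'' the paper starts from $M_n$ (Proposition~\ref{prop:maxtopple}), not from $e_n$. Your invocation of the rotation identity \eqref{eq:Mnturn2} on ``the $e_{n-1}$ portion of $e_n$'' is therefore misplaced: that identity acts on $M$-tiles, not $e$-tiles, and the chip count $4\cdot 3^{n-1}$ you propose does not match its hypothesis.

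Second, and more seriously, the tandem induction does not close with only two statements. The paper introduces a \emph{third} auxiliary stabilization, called $(Mb_n)$ (see \eqref{eq:Mnbn}), which tracks what happens when one feeds $b_n-1$ chips into a corner of an $M_n$-tile with sink at $o$. The actual chain of implications is
\[
(\tilde{\mathbbm{M}}_n)\Rightarrow(\mathcal{Q}_{n+1}),\qquad
(Mb_n)\Rightarrow(\tilde{\mathbbm{M}}_{n+1}),\qquad
(Mb_n)+(\mathcal{Q}_n)\Rightarrow(Mb_{n+1}),
\]
where $(\mathcal{Q}_n)$ encodes ``$4\tfrac49$'' and $(\tilde{\mathbbm{M}}_n)$ encodes ``$5\tfrac13$''. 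Without $(Mb_n)$ you cannot get from level $n$ to level $n+1$ in the ``$5\tfrac13$'' claim: the $2$ chips added at $o$ propagate into the lower subcells and land at a cut vertex carrying $b_{n-1}$ chips on an $M_{n-1}$ tile, and it is precisely $(Mb_{n-1})$ that tells you what happens next. Your ``cascade along the arrowhead curve'' description conflates this tiling argument with the separate space-filling-curve analysis used only for ``$e+2$'' and ``$4\tfrac49^-$'' in \S\ref{sec:enumeration2}; that mechanism does not appear here.
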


We first claim that the following stabilization diagrams hold.
For ``$4\frac{4}{9}$'' we start with the configuration $\left((4\cdot 3^n-2)\mathbbm{1}_o\right)^\circ$ and add to it $\frac{4}{9}\cdot 3^n +2 = 4\cdot 3^{n-2}+2$ chips at $o$. This gives
\begin{align*}
\begin{array}{rm{1.6in}lm{1.6in}l}
(4\frac{4}{9}\cdot 3^n)\mathbbm{1}_o \longrightarrow
&
\includegraphics[width=0.25\textwidth]{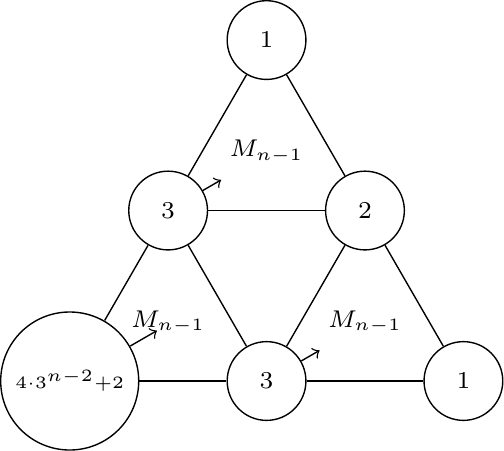}
&
\overset{\eqref{eq:rec2top}}{\longrightarrow}
&
\includegraphics[width=0.25\textwidth]{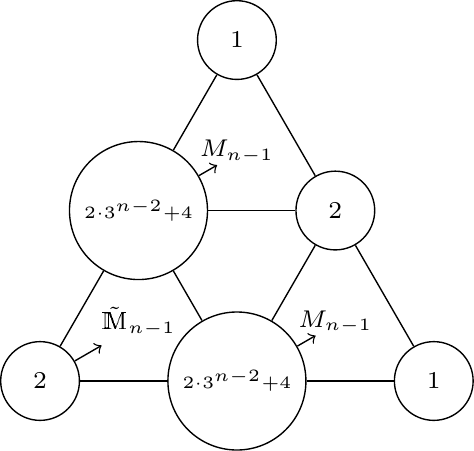}
&
\\
\overset{(\tilde{\mathbbm{M}}_{n-1})}{\longrightarrow}
&
\includegraphics[width=0.25\textwidth]{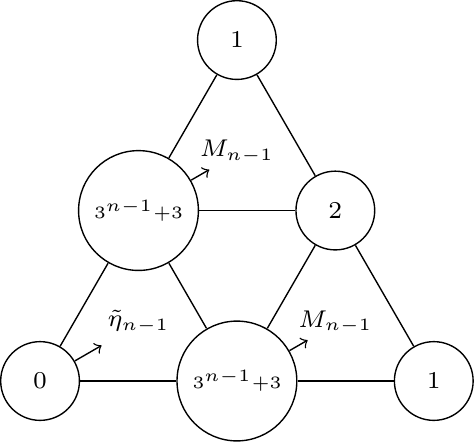}
&
\overset{\eqref{eq:Mn3n}}{\longrightarrow}
&
\includegraphics[width=0.25\textwidth]{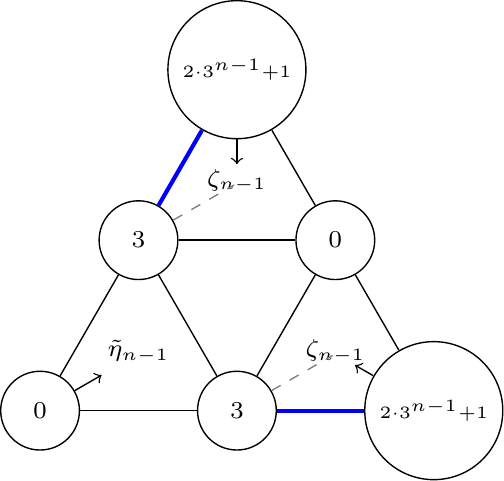}
&,
\end{array}
\end{align*}
where $\tilde\eta_n \in \mathcal{R}_n^{(s)}$.
Let's represent this stabilization by the shorthand
\begin{align}
\label{eq:Mn+4+2}
\begin{array}{m{1in}lm{1in}l}
\includegraphics[width=0.17\textwidth]{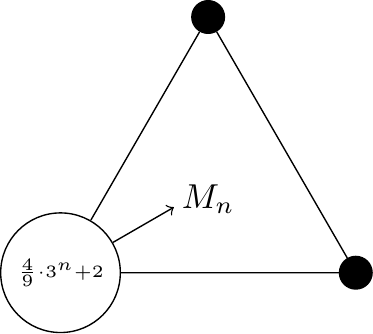}
&
\overset{(\mathcal{Q}_n)}{\longrightarrow}
&
\includegraphics[width=0.17\textwidth]{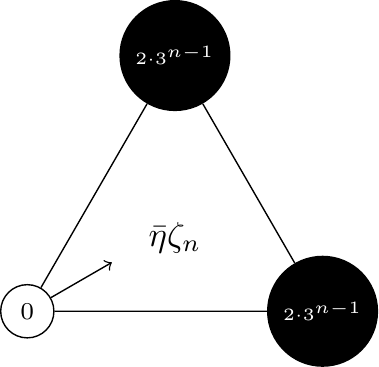}
&.
\end{array}
\end{align} 

For ``$5\frac{1}{3}$'' we start with the configuration $\left((5\frac{1}{3}\cdot 3^n-2)\mathbbm{1}_o\right)^\circ$, proved in Proposition \ref{prop:513-}, and add to it $2$ chips at $o$.
This entire stabilization process is denoted $(\tilde{\mathbbm{M}}_n)$:
\begin{align*}
\begin{array}{m{1.6in}lm{1.6in}lm{1.6in}l}
\includegraphics[width=0.25\textwidth]{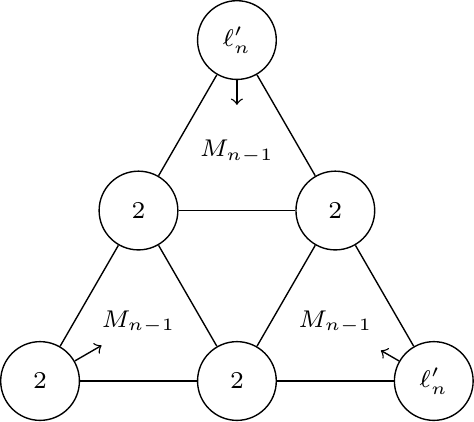}
&
\overset{\eqref{eq:rec2top}}{\longrightarrow}
&
\includegraphics[width=0.25\textwidth]{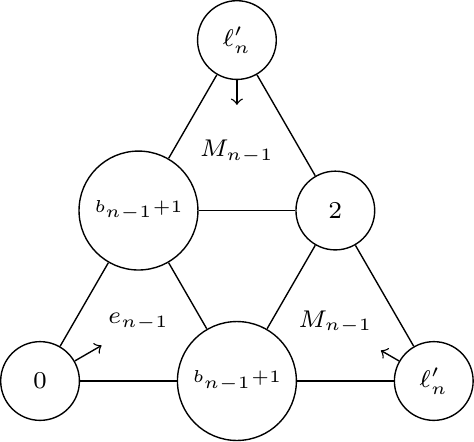}
&
\overset{(Mb_{n-1})}{\longrightarrow}
&
\includegraphics[width=0.25\textwidth]{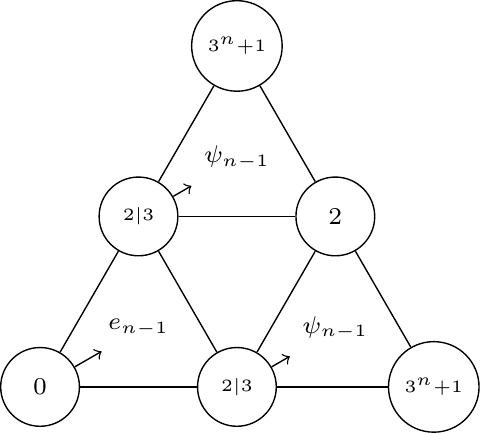}
&
,
\end{array}
\end{align*}
where $\psi_n \in \mathcal{R}_n^{(o)}$, and the notation ``$a|a+1$'' means that the vertex carries either $a$ or $a+1$ chips depending on the odd/even parity of $n$ (and of $b_n$).
In short we have
\begin{align}
\label{eq:Mnrot+2}
\begin{array}{m{0.9in}lm{0.9in}l}
\includegraphics[width=0.15\textwidth]{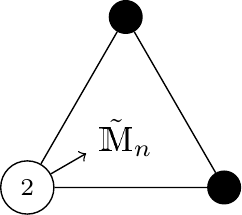}
&
\overset{(\tilde{\mathbbm{M}}_n)}{\longrightarrow}
&
\includegraphics[width=0.15\textwidth]{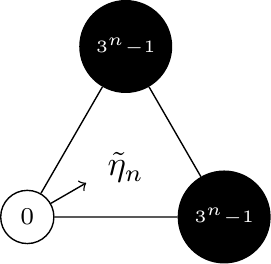}
&.
\end{array}
\end{align} 

Finally, we have the stabilization ($Mb_n$), which is established via induction on $n$:
\begin{align}
\label{eq:Mnbn}
\begin{array}{m{1in}lm{1in}l}
\includegraphics[height=0.15\textwidth]{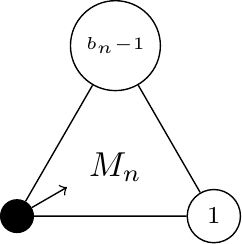} &
\overset{(Mb_n)}{\longrightarrow} &
\includegraphics[height=0.155\textwidth]{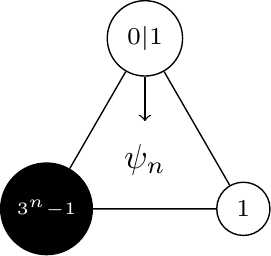} &
.
\end{array}
\end{align}
The $n=1$ case is verified directly.
Suppose \eqref{eq:Mnbn} holds at level $n$.
Then in conjunction with \eqref{eq:Mnturn} we deduce that for $k\in [b_n-1, 3^n]$,
\begin{align}
\label{eq:Mnbn3n}
\begin{array}{m{1in}lm{1in}l}
\includegraphics[height=0.15\textwidth]{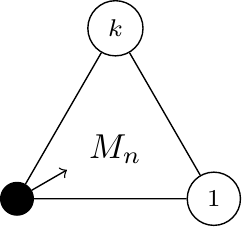} &
\longrightarrow &
\includegraphics[height=0.155\textwidth]{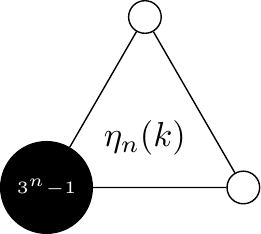}
&,
\end{array}
\end{align}
where $\eta_n(k) \in \mathcal{R}_n^{(o)}$.
The induction step then proceeds as follows, noting that $b_{n+1} = b_n +3^n$, and $b_n -2 =\frac{3}{2}(3^{n-1}+1)-2 \in \left[\frac{4}{9}\cdot 3^n, \frac{2}{3}\cdot 3^n \right)$:
\begin{center}
\begin{tabular}{lm{1.6in}lm{1.6in}lm{1.6in}}
&
\includegraphics[width=0.25\textwidth]{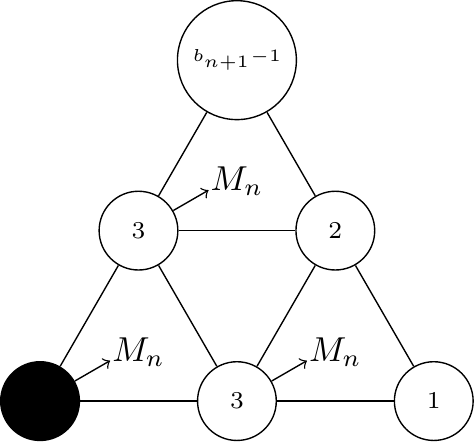}
&
$\overset{\eqref{eq:Mnturn}}{\longrightarrow}$
&
\includegraphics[width=0.25\textwidth]{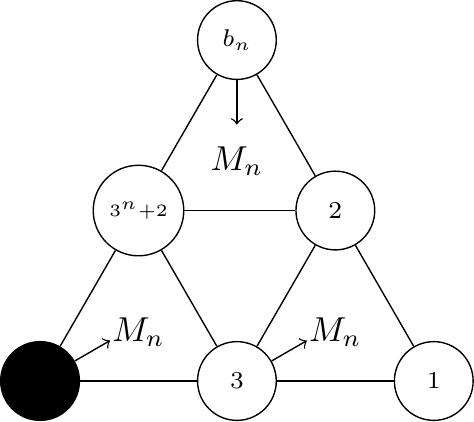}
&
$\overset{\substack{(\mathcal{Q}_n)+ \\ \text{Prop.\@ \ref{prop:423-}}}}{\longrightarrow}$
&
\includegraphics[width=0.25\textwidth]{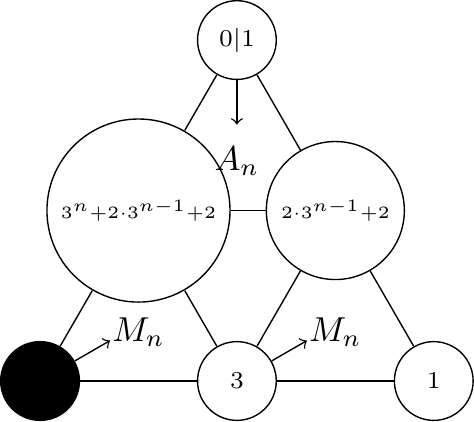}
\\
$\overset{\eqref{eq:Mnbn3n}}{\longrightarrow}$
&
\includegraphics[width=0.25\textwidth]{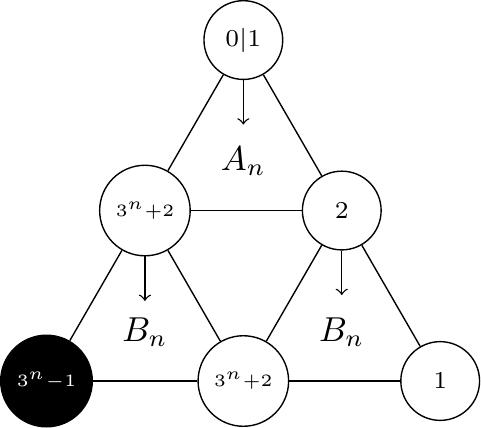}
&
$\overset{\eqref{eq:rev1top}}{\longrightarrow}$
&
\includegraphics[width=0.25\textwidth]{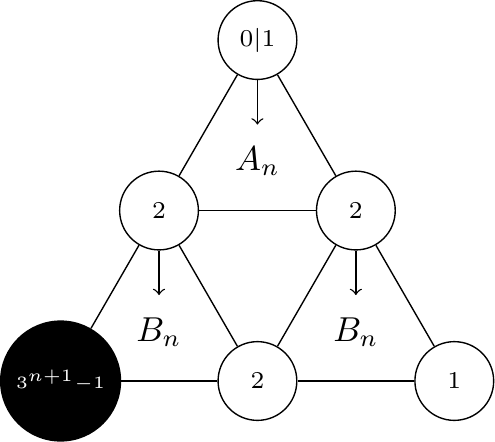}
&.
\end{tabular}
\end{center}
In the above, $A_n$ and $B_n$ are two recurrent configurations whose patterns are complicated to describe, according to our sandpile simulations.

\begin{proof}[Proof of Propositions \ref{prop:449} and \ref{prop:513}]
Looking at \eqref{eq:Mn+4+2}, \eqref{eq:Mnrot+2}, and \eqref{eq:Mnbn} and how they are established diagramatically, it suffices to prove the following trio of implications, where ``OC'' denotes ``other conditions'' that have already been established independently:
\begin{align*}
(\tilde{\mathbbm{M}}_n) + \text{OC} &\Longrightarrow (\mathcal{Q}_{n+1}), \\
(Mb_n) + \text{OC} &\Longrightarrow (\tilde{\mathbbm{M}}_{n+1}), \\
(Mb_n) + (\mathcal{Q}_n) + \text{OC} &\Longrightarrow (Mb_{n+1}).
\end{align*}
The proof by induction on $n$ is straightforward.
\end{proof}

\begin{remark}
\label{rem:firsttimerec}
We observe from our numerical computations that $\left.((10\cdot 3^{n-1})\mathbbm{1}_o)^\circ \right|_{G_n^{(s)}}$ is a recurrent configuration.
If this observation can be proved directly, then ``$5\frac{1}{3}$'' can be established without resort to diagrams, since $5\frac{1}{3} \cdot 3^n = 10\cdot 3^{n-1}+2\cdot 3^n$, and we can use Corollary \ref{cor:constantrad} and the $(2\cdot 3^n)$-periodicity to deduce that each sink in $\partial G_n$ receives $3^n+1$ chips.
\end{remark}

\subsection{Enumeration of radial jumps (II): analysis of traps along the space-filling curve} \label{sec:enumeration2}

It remains to prove ``$e+2$'' and ``$4\frac{4}{9}^-$.'' 
For these we take advantage of the structure of the identity element $e_n$, namely, the space-filling curve along which every vertex carries $3$ chips. See Figure \ref{fig:4+2path}.

\begin{proposition}[``$e+2$'']
\label{prop:4+2}
For every $n\in \mathbb{N}$, adding $2\mathbbm{1}_o$ to $e_n$ and then stabilizing results in 1 chip received by each sink in $\partial G_n$. Therefore 
\[
\begin{array}{lM{1in}l}
\left((4\cdot 3^n+2)\mathbbm{1}_o\right)^\circ = 
&
\includegraphics[height=0.15\textwidth]{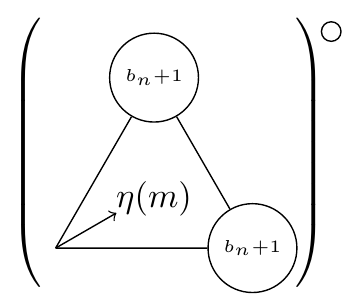}
&
.
\end{array}
\]
\end{proposition}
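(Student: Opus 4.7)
The plan is to prove the identity $(e_n + 2\mathbbm{1}_o)^\circ = e_n$, with one chip absorbed at each sink, by induction on $n$, exploiting the Sierpinski arrowhead curve structure within $e_n$ depicted in Figure \ref{fig:4+2path}. By axial symmetry there are two arrowhead curves $\gamma_n^x$ and $\gamma_n^y$ connecting $o$ to each of the two sinks $x, y \in \partial G_n$, and every interior vertex on these curves carries exactly $3$ chips in $e_n$.

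The base case $n=1$ is verified by direct computation on $G_1$. For the inductive step, I first observe that since $e_n$ is stable and $\deg(o) = 2$, the count $e_n(o)$ is either $0$ or $1$, so after adding the $2$ new chips, $o$ holds at least $\deg(o)$ chips and topples exactly once. This toppling delivers one chip to each of its two neighbors, each of which is the first interior vertex of one of the arrowhead curves and was holding $3$ chips, and thus now carries $4$ chips and topples in turn. This initiates a cascade that propagates simultaneously along both curves.

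I would then formalize the cascade by a sub-induction along each curve. At every non-terminal step, a curve vertex $v$ carrying $4$ chips topples, returning one chip to its previously toppled curve-predecessor (which was left at $0$ and is now stable at $1$), delivering one chip to its curve-successor (pushing it from $3$ to $4$ and triggering the next topple), and distributing the remaining two chips to off-curve neighbors. To handle the inductive transition from $e_n$ to $e_{n+1}$, I would use the construction of $e_{n+1}$ from three copies of $e_n$ (Definition \ref{def:en}) together with Lemma \ref{lem:topplecut}, reducing the cascade in $e_{n+1}$ to successive cascades within each sub-copy of $e_n$, chained by the extra chip that arrives at the shared glue vertex in $\partial G_n$. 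When the final cascade reaches the sink, one chip drops into $\partial G_n$ per curve, yielding the claimed tally and recovering $e_n$ in the bulk.

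The main obstacle will be showing that the chips scattered to off-curve neighbors during the cascade do not trigger spurious topplings. This requires a careful accounting of the chip counts on vertices adjacent to the arrowhead curve in $e_n$. I expect to maintain as part of the inductive invariant that every off-curve neighbor of the curve holds at most $\deg - k$ chips, where $k$ is the number of its curve-neighbors, ensuring that even receiving one chip per adjacent curve topple does not push it to criticality. Once this off-curve subcriticality is secured, the final tally of exactly one chip per sink follows because the cascade advances exactly one vertex per topple along each curve and terminates precisely at the corresponding sink.
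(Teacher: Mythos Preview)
Your proposal contains a genuine error: the identity $(e_n + 2\mathbbm{1}_o)^\circ = e_n$ that you plan to prove is \emph{false}. The stabilized configuration is not $e_n$; it differs from $e_n$ at a discrete family of locations the paper calls ``traps'' (see Figure~\ref{fig:traps} and Figure~\ref{fig:trapdynamics}). Concretely, your hoped-for invariant---that every off-curve neighbor of the arrowhead curve holds at most $\deg - k$ chips, with $k$ the number of its curve-neighbors---cannot be maintained. The arrowhead curve in $e_n$ has ``sky hooks'': pairs of far-apart on-curve vertices sharing the \emph{same} off-curve neighbor. When the cascade passes through both curve endpoints of such a hook, the shared off-curve vertex receives two chips and \emph{does} topple, triggering a backward chain along the curve that creates a trap (a blotch of $1$'s and $2$'s) rather than restoring the curve to all $3$'s. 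Your sub-induction ``returning one chip to its previously toppled curve-predecessor (which was left at $0$ and is now stable at $1$)'' therefore breaks, since that predecessor can later receive additional chips via these off-curve topplings.

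What your cascade argument \emph{does} establish is the easy direction: at least one chip reaches each sink, because the initial wave along each arrowhead curve delivers a chip to the sink before any off-curve feedback can interfere. The hard direction---that \emph{no more} than one chip reaches each sink---is precisely what the paper defers to Proposition~\ref{prop:449-} (``$4\frac{4}{9}^-$''), whose proof requires the trap analysis of \S\ref{sec:enumeration2} and the comparison Lemma~\ref{lem:comparison}. You have not supplied an argument for this direction, and the invariant you propose would, if true, prove something stronger than the actual proposition (namely that the bulk is unchanged), which is simply not the case.
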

\begin{proof}
Given the configuration $e_n$, adding 2 chips at $o$ triggers a chain of topplings along each space-filling curve, thereby sending 1 chip to each sink vertex.
That no more chips can drop into the sink is implied by Proposition \ref{prop:449-} below.
\end{proof}

Our next task is to analyze the configuration $e_n \oplus 2\mathbbm{1}_o$ in $G_n^{(s)}$, and explain how traps appear along the space-filling curve.
As a warm-up exercise let us work through two toy examples.

First consider a ``line graph.''
Take a positive integer $N\geq 2$. 
Consider the graph whose vertex set is $(0,0) \cup \{1,\cdots, N-1\}\times \{-1,0,1\} \cup (N,0)$, and whose edge set consists of edges of the form $(x,0)\sim(x+1,0)$, $0\leq x\leq N-1$; $(x,1)\sim(x,0)$, $1\leq x\leq N-1$; and $(x,-1)\sim(x,0)$, $1\leq x\leq N-1$.
Designate $(0,0)$ as the origin, $(N,0)$ as the sink, and suppose on every vertex $(x,0)$ along the line carries the maximal number of chips (3), and every vertex elsewhere carries $0$ chips.
It is easy to verify that the $1$ chip at the origin is transmitted all the way to the sink without changing the configuration on and off the line:
\[
\begin{array}{M{0.4\textwidth}lM{0.4\textwidth}l}
\includegraphics[width=0.4\textwidth]{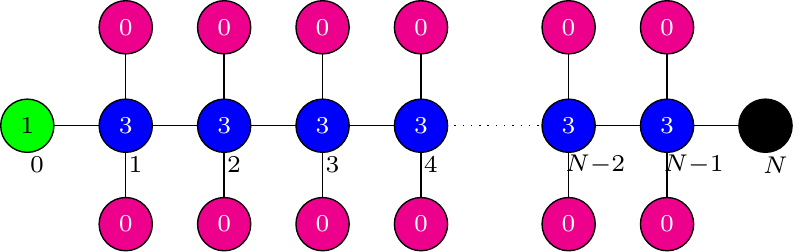}&
\longrightarrow&
\includegraphics[width=0.4\textwidth]{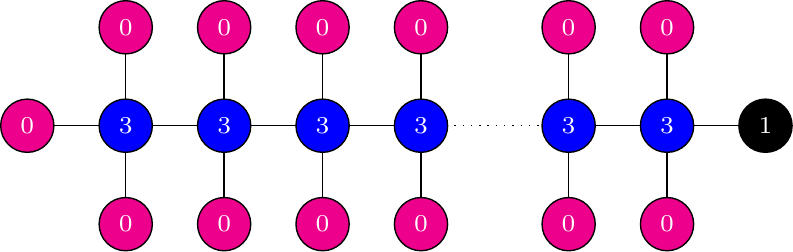}&
.
\end{array}
\]

Next we modify the above graph by identifying vertices off the line.
For simplicity, take $N=4$ and identify the vertices $(1,1)$ and $(3,1)$.
We then carry out the stabilization systematically by alternating stabilizations on and off the line.
\[
\begin{array}{rM{0.2\textwidth}lM{0.2\textwidth}lM{0.2\textwidth}l}
&
\includegraphics[width=0.2\textwidth]{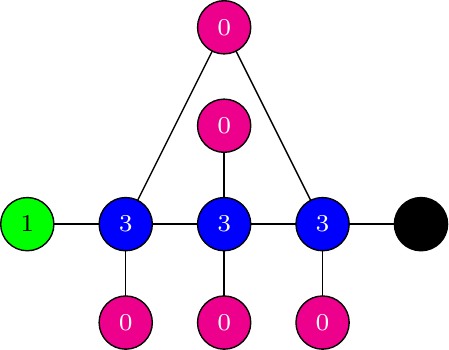}&
\longrightarrow &
\includegraphics[width=0.2\textwidth]{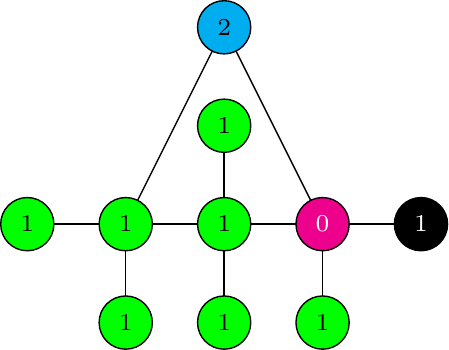}&
\longrightarrow &
\includegraphics[width=0.2\textwidth]{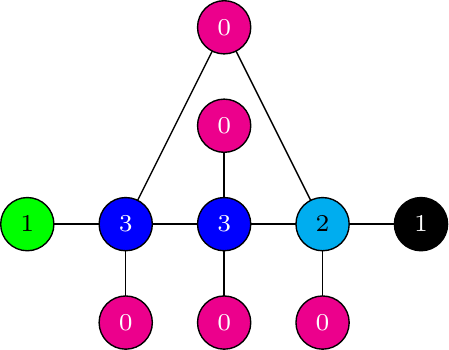}
\\
\longrightarrow &
\includegraphics[width=0.2\textwidth]{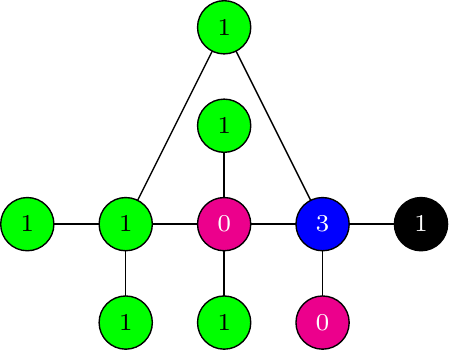}&
\longrightarrow &
\includegraphics[width=0.2\textwidth]{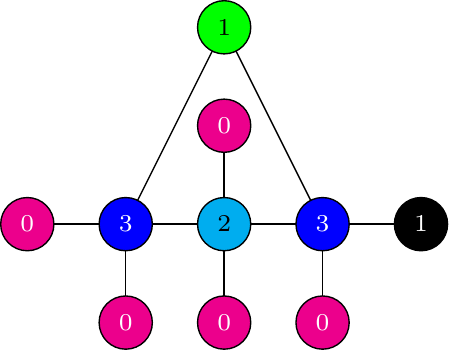}&
.
\end{array}
\]
Observe the appearance of the vertex ``$2$'' on the line: this forms a ``trap'' in the sense that the next chip to travel down the line will be stopped by it.
To ``re-open'' the line of communication requires additional chips at the origin.
Of course this does not preclude additional chip(s) from traveling down the ``sky hook,'' origin--3--1--3--sink.
At any rate, we emphasize that \emph{the identification of off-line vertices results in the formation of traps on the line.}

Having explained the trap mechanism in the simple setting, we now proceed to the actual analysis.
Parametrize one half of $G_n^{(s)}$ by the length of the space-filling curve.
Starting with 2 chips at the origin, we carry out the stabilization systematically.
Due to the identifications of the off-line vertices, this triggers a series of topplings backwards along the curve, resulting in the creation of a finite number of well-separated \emph{traps}, represented by blotches of 1's and 2's; see the left configuration in Figure \ref{fig:traps}.
Figure \ref{fig:trapdynamics} shows the precise mechanism of trap creation in the case $e_3$.

A key observation is that whenever the on-line propagation stops in the middle of a ``\emph{sky hook}''---where (far-away) vertices on-line are connected to the same off-line vertex---this results in the creation of a trap at the start of the hook. See the transition from (c) to (e), and from (g) to (i), in Figure \ref{fig:trapdynamics}.
This leaves the configuration in some portion of the space-filling curve unchanged.
Thus after stabilization is complete, the space-filling curve contains several traps, and connecting two adjacent traps is a \emph{corridor}: each vertex on the corridor carries $3$, and off it, $0$ chip.

\begin{figure}
\centering
\begin{tabular}{M{0.4\textwidth} M{0.4\textwidth} M{0.07\textwidth}}
\includegraphics[width=0.4\textwidth]{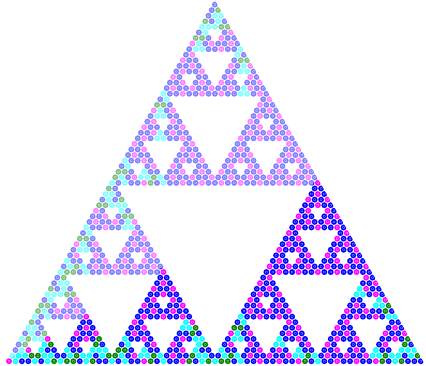}
&
\includegraphics[width=0.4\textwidth]{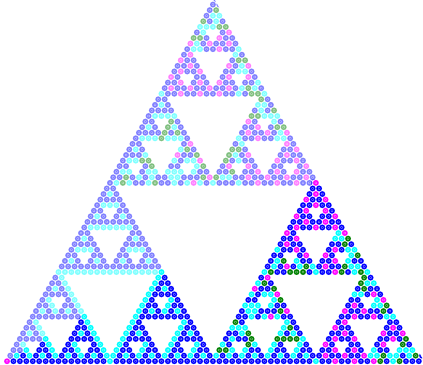}
&
\includegraphics[width=0.05\textwidth]{colorlegend}
\end{tabular}
\caption{The configuration $e_5 \oplus 2\mathbbm{1}_o$ (left) and $e_5 \oplus \left(4\cdot 3^{5-2}\right)\mathbbm{1}_o$ (right). Blotches of 1's and 2's are ``traps,'' which appear in well-defined locations.}
\label{fig:traps}
\end{figure}

\begin{figure}
\centering
\begin{small}
\begin{tabular}{M{0.2\textwidth} l M{0.55\textwidth} l}
\includegraphics[width=0.15\textwidth]{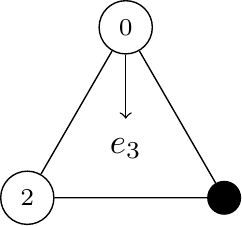} & $=$ &
\includegraphics[width=0.55\textwidth]{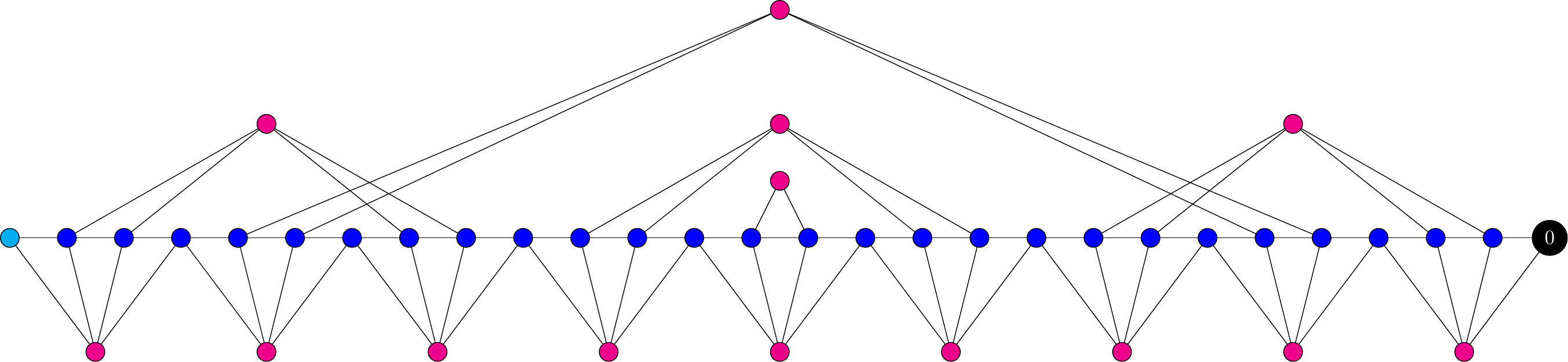} & (a) \\
&\multirow{9}{*}{
\tikz{\draw [-{Latex[length=5mm, width=2mm]}] (0,0) -- (0,-18);}
}&
\includegraphics[width=0.55\textwidth]{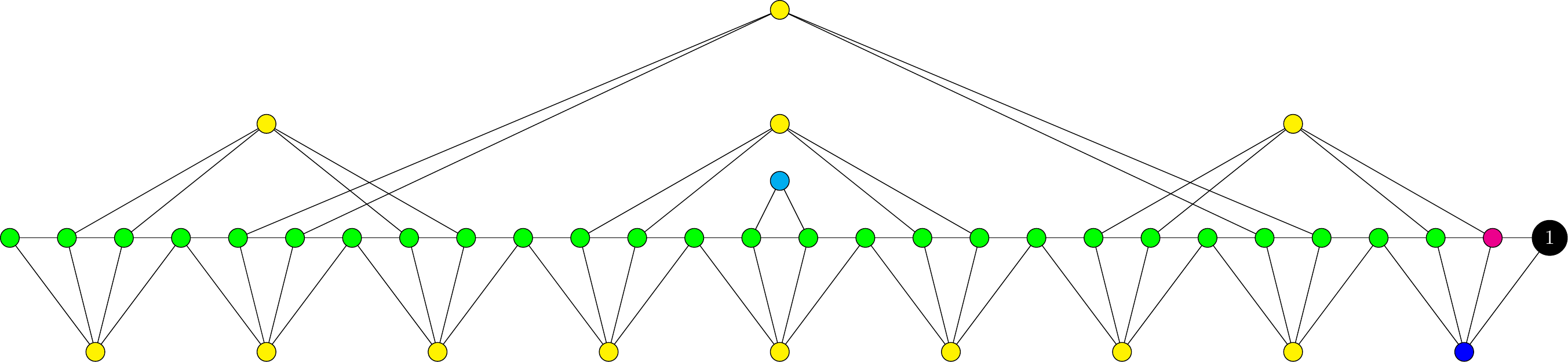} & (b) \\
&&\includegraphics[width=0.55\textwidth]{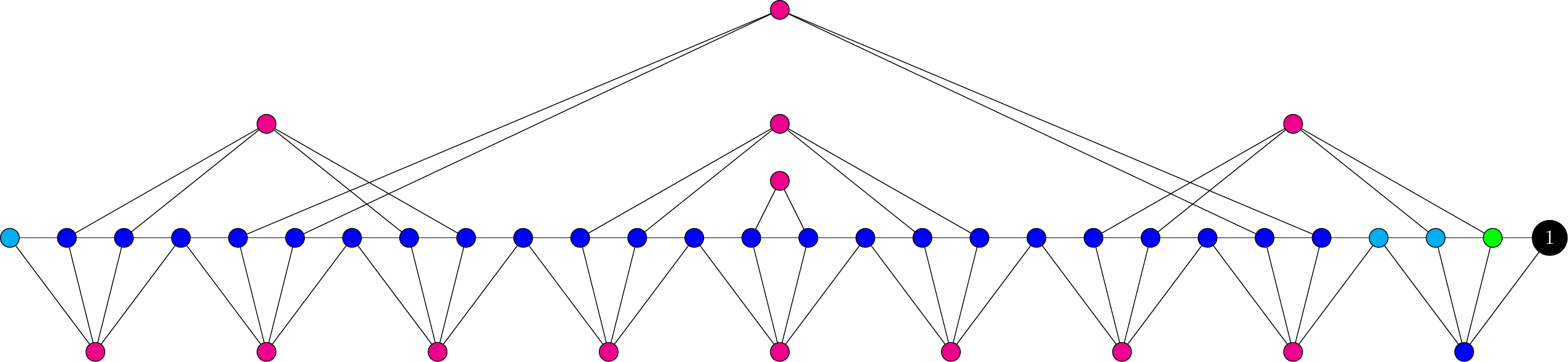} & (c) \\
&&\includegraphics[width=0.55\textwidth]{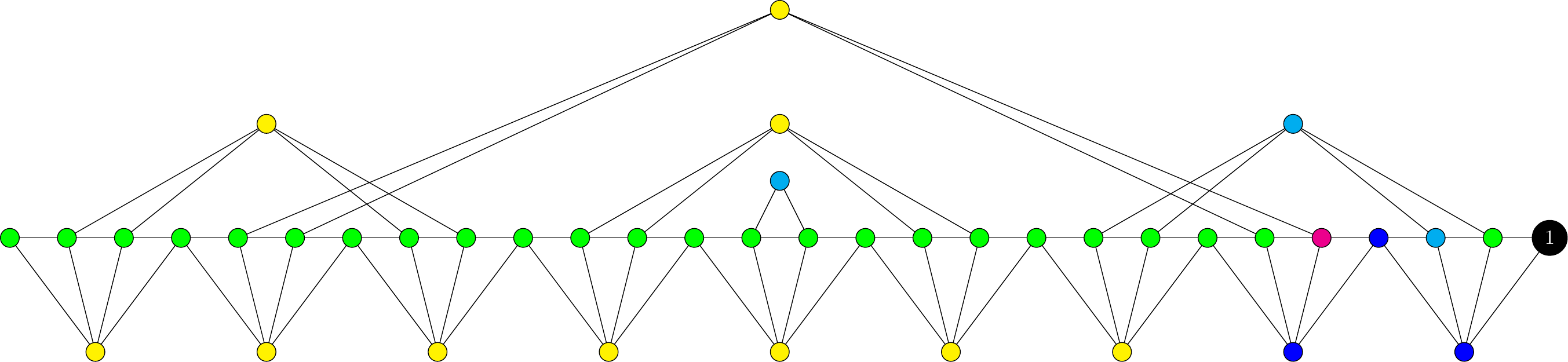} & (d)\\
&&\includegraphics[width=0.55\textwidth]{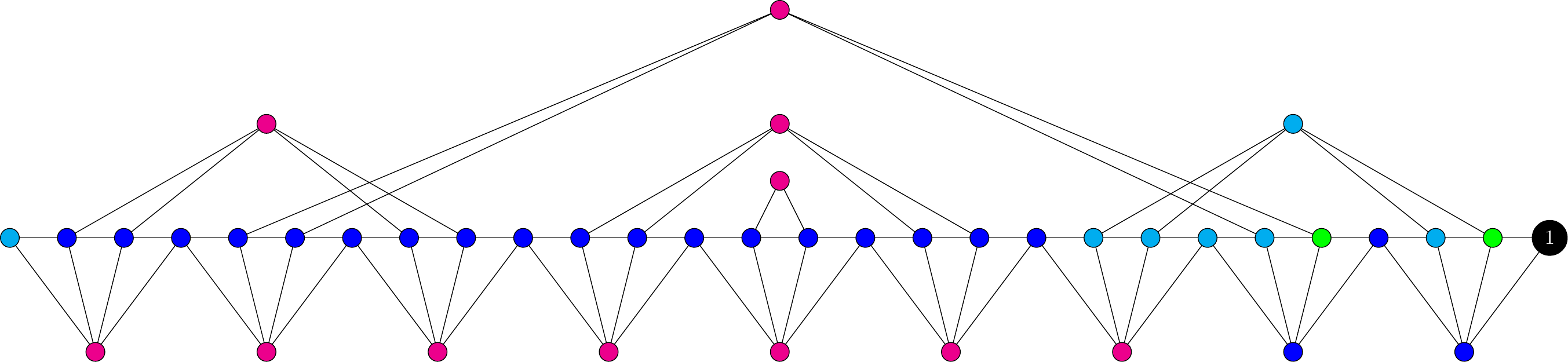} & (e) \\
&&\includegraphics[width=0.55\textwidth]{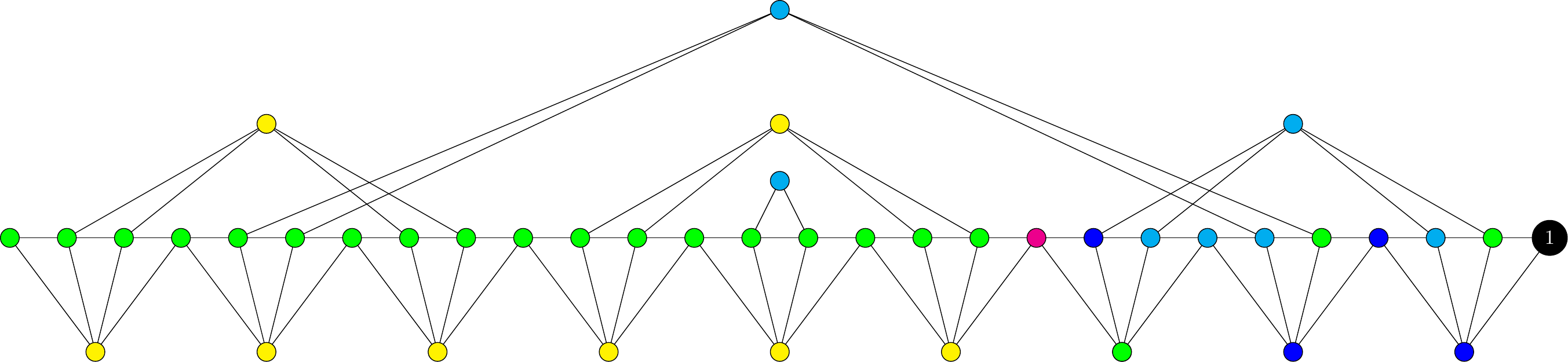} & (f) \\
&&\includegraphics[width=0.55\textwidth]{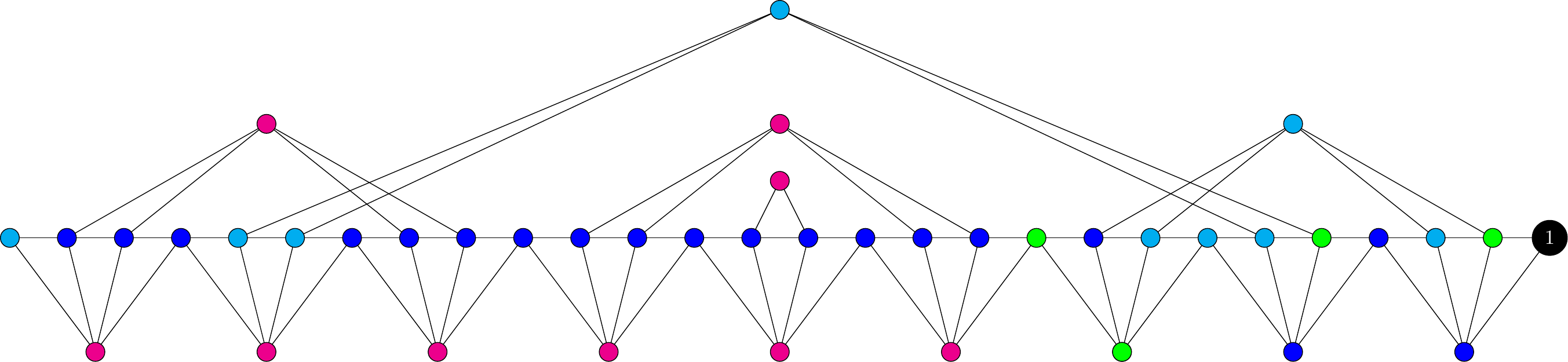} & (g)\\
&&\includegraphics[width=0.55\textwidth]{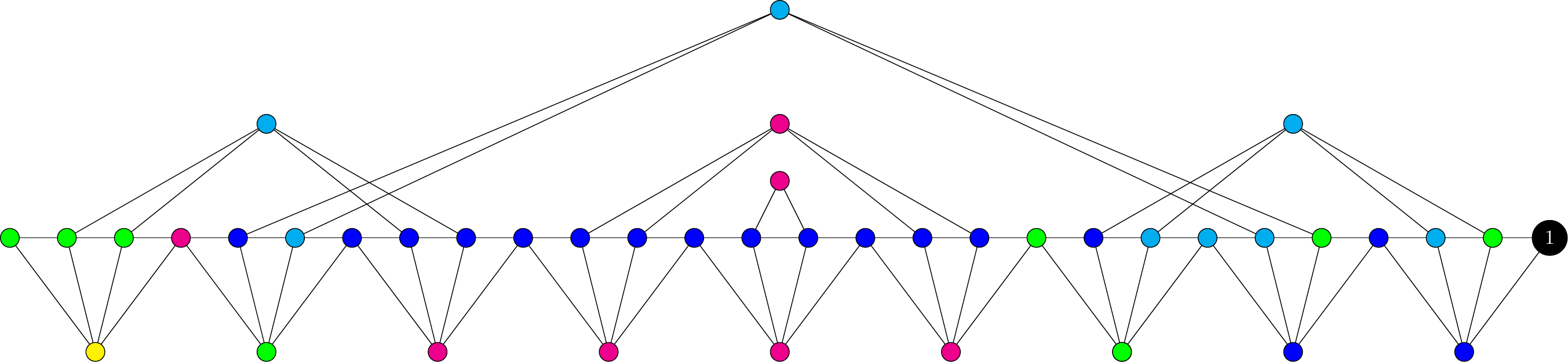} & (h)\\
&&\includegraphics[width=0.55\textwidth]{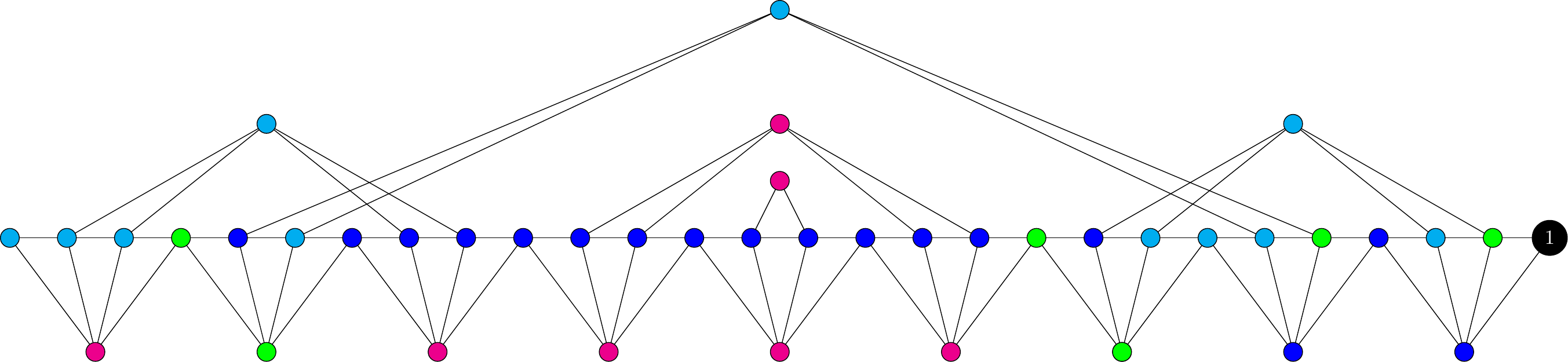} & (i) \\
&&\includegraphics[width=0.55\textwidth]{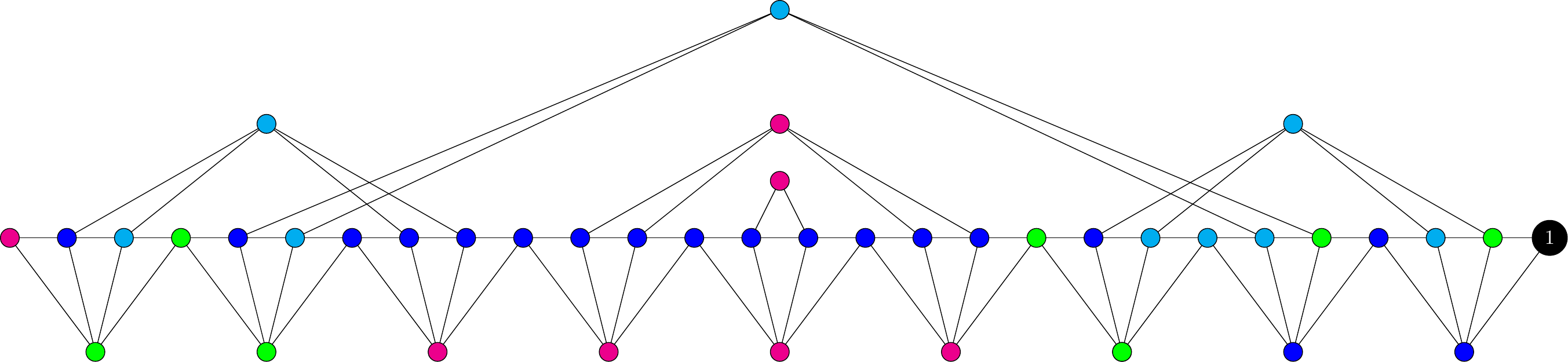} &(j)
\end{tabular}
\hspace{10pt}
\begin{tabular}{M{7pt} M{7pt}}
0 & \tikz{\node [draw, circle, fill=magenta] at (0,0) {};}\\
1 & \tikz{\node [draw, circle, fill=green] at (0,0) {};}\\
2 & \tikz{\node [draw, circle, fill=cyan] at (0,0) {};}\\
3 & \tikz{\node [draw, circle, fill=blue] at (0,0) {};}\\
4 & \tikz{\node [draw, circle, fill=yellow] at (0,0) {};}
\end{tabular}
\end{small}
\caption{Stabilization after adding 2 chips at the start of the space-filling curve in $e_3$. This is carried out systematically by alternating stabilizations on and off the space-filling curve. The final configuration consists of two traps and a corridor.}
\label{fig:trapdynamics}
\end{figure}

Given the landscape of traps, the goal is to add enough chips at $o$ so that they can overcome one trap after another.
Keep in mind, however, that whenever a chip lands on a corridor but is stopped by the next trap, the aforementioned argument implies that additional traps may be created along the corridor.

Now that we have laid out the key observations, it is time to establish the following comparison lemma, which says that over the configuration $e_n$, it is ``easier'' to send chips down the axial direction than the oblique direction.

\begin{lemma}
\label{lem:comparison}
If
\begin{tabular}{M{0.7in}}
\includegraphics[width=0.7in]{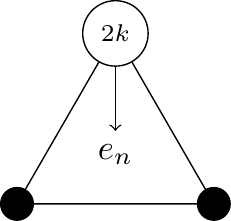}
\end{tabular}
does not result in more than 1 chip received by each sink, then neither does 
\begin{tabular}{M{0.75in}}
\includegraphics[width=0.75in]{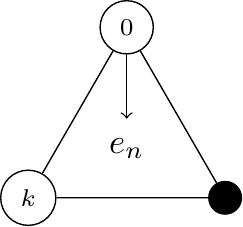}
\end{tabular}
.
\end{lemma}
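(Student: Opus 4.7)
The plan is to reduce the oblique case to the axial case by exploiting the axial symmetry of $e_n$ and the abelian property, rather than running the two stabilizations separately. The geometric intuition is that in $e_n$ the Sierpinski arrowhead curve emanates from $o$ along the axial direction, so a chip launched axially already sits at the head of a ``highway of $3$'s'' to each sink, whereas a chip launched obliquely must first detour through a non-axial sub-triangle before rejoining that highway; one therefore expects the axial configuration to be at least as effective at delivering chips to $\partial G_n$.

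First I would make this precise by passing to odometers. Let $u^{\mathrm{ax}}$ and $u^{\mathrm{obl}}$ denote the odometer functions produced by stabilizing the axial and oblique configurations on $G_n^{(s)}$, respectively, and let $u_{\partial}^{\mathrm{ax}}$, $u_{\partial}^{\mathrm{obl}}$ denote the numbers of chips absorbed at each sink in $\partial G_n$. The lemma is equivalent to the implication $u_{\partial}^{\mathrm{ax}}\le 1 \Rightarrow u_{\partial}^{\mathrm{obl}}\le 1$, and I would try to prove the stronger componentwise inequality $u^{\mathrm{obl}}\le u^{\mathrm{ax}}$ on all of $G_n^{(s)}$ by induction on $n$. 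The least action principle (used in the same spirit as Proposition \ref{prop:LAP} / Proposition \ref{prop:DSLAP}) makes such odometer comparisons amenable to inductive bookkeeping: any firing sequence that stabilizes the oblique initial configuration, and is pointwise dominated by a legal firing sequence for the axial one, yields the desired inequality on odometers.

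For the inductive step I would use the tiling decomposition of $e_n$ into three copies of $e_{n-1}$ (Definition \ref{def:en}). The oblique neighbor of $o$ lies inside one of the two non-axial level-$(n-1)$ sub-triangles, so the initial burst of stabilization is confined to that sub-triangle; by Lemma \ref{lem:topplecut} together with the toppling identities \eqref{eq:rec2top} and \eqref{eq:rev1top}, any chips that eventually reach the two junction vertices $\partial G_{n-1}$ with the remaining sub-triangles are dominated by those delivered in the axial case. At that point I would use the reflection lemma (Lemma \ref{lem:reflection}) to fold the asymmetric chip distribution at the junction into a symmetric one matching the axial dynamics, and then invoke the induction hypothesis on each level-$(n-1)$ sub-triangle to complete the comparison. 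The $120^{\circ}$-rotation identities of Lemma \ref{lem:120turn} are used to identify the emerging patterns in each sub-triangle with canonical configurations already analyzed in \S\ref{sec:enumeration}.

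The main obstacle I anticipate is keeping accurate ``chip accounting'' at the junction vertices between sub-triangles. Because $e_n$ already carries many $3$-chip vertices adjacent to these junctions, even a small asymmetric perturbation can trigger a long cascade along the curve, and one must verify that the oblique cascade is pointwise (not merely boundary-wise) no larger than the axial cascade; otherwise, later traps created by the oblique firings could differ from those in the axial case and the induction would not close. I would isolate this accounting into an auxiliary sublemma, checked by direct inspection for $n\le 2$ and propagated to general $n$ using the $120^\circ$-rotation identity together with the structural recursion for $e_n$. Given the ``trap'' analysis already carried out in \S\ref{sec:enumeration2} and displayed in Figures \ref{fig:traps} and \ref{fig:trapdynamics}, I expect such a sublemma to fall into line with the existing framework.
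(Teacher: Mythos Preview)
Your proposal takes a genuinely different route from the paper, and the route has a real gap. The paper does \emph{not} attempt any odometer comparison or induction on $n$; instead it unfolds both configurations along the Sierpinski arrowhead curve, identifies the landscape of traps created after the first toppling at the source, and then argues in a ``best-case'' scenario: the hypothesis (axial case fails to push a second chip to the sink) bounds the number of traps overcome along the shortest path by~$7$, and since the oblique unfolding has strictly more branching (hence more traps along any path to the sink), the same bound of~$7$ overcome traps prevents a second chip from reaching the sink there as well. The argument is entirely combinatorial on the curve and never compares $u^{\mathrm{ax}}$ and $u^{\mathrm{obl}}$ vertex by vertex.

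Your stronger target $u^{\mathrm{obl}}\le u^{\mathrm{ax}}$ pointwise is where the plan breaks. The two initial configurations are not comparable in the sandpile partial order: the oblique case places $k$ extra chips at a vertex $v\neq o$, so near $v$ one should expect $u^{\mathrm{obl}}>u^{\mathrm{ax}}$, not the reverse. The standard monotonicity/least-action principle for sandpiles only gives $u_\eta\le u_{\eta'}$ when $\eta\le\eta'$ pointwise, which fails here; Propositions~\ref{prop:LAP} and~\ref{prop:DSLAP} concern the rotor-router and divisible sandpile odometers and do not supply the comparison you need. You correctly flag this as ``the main obstacle,'' but the proposed fix---use Lemma~\ref{lem:reflection} to symmetrize the junction load and then invoke the induction hypothesis---does not close it: Lemma~\ref{lem:reflection} applies only to configurations of the special form $e_n^{(o)}\oplus\alpha\mathbbm{1}_x\oplus\beta\mathbbm{1}_y$ with an arithmetic constraint \eqref{eq:kxky}, and there is no reason the intermediate state at the junctions has this form (indeed the tiles $A_n$, $B_n$ appearing in the analogous step of $(Mb_n)$ are explicitly noted in the paper to be complicated and not of that shape). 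Without a valid pointwise domination at the junctions, the inductive step does not go through, and the lemma is not established by your outline.
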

\begin{proof}
First note that by the axial symmetry, it is enough to split
\begin{tabular}{M{0.7in}}
\includegraphics[width=0.7in]{enaxial}
\end{tabular}
in two halves, and consider whether
\begin{tabular}{M{0.7in}}
\includegraphics[width=0.7in]{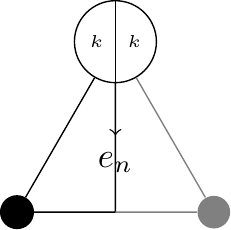}
\end{tabular}
results in more than $1$ chip received by the sink.
With this proviso we now parametrize the two configurations in the statement by the length of the space-filling curve.
When the origin topples once, the resulting landscape of traps (\tikz{\node [draw, scale=0.6,circle, fill=green, text=black] at (0,0) {\scriptsize{T}};}) is shown below. (In the right-hand diagrams, each tick mark represents distance $3^{n-3}$ in the graph metric along the space-filling curve. For clarity, traps, corridors, and sky hooks of length scale $<3^{n-3}$ are omitted from the diagrams.)

\begin{tabular}{M{0.12\textwidth} l P{0.82\textwidth}}
\includegraphics[width=0.11\textwidth]{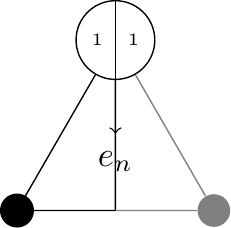}&
$\longrightarrow$ &
\includegraphics[width=0.42\textwidth]{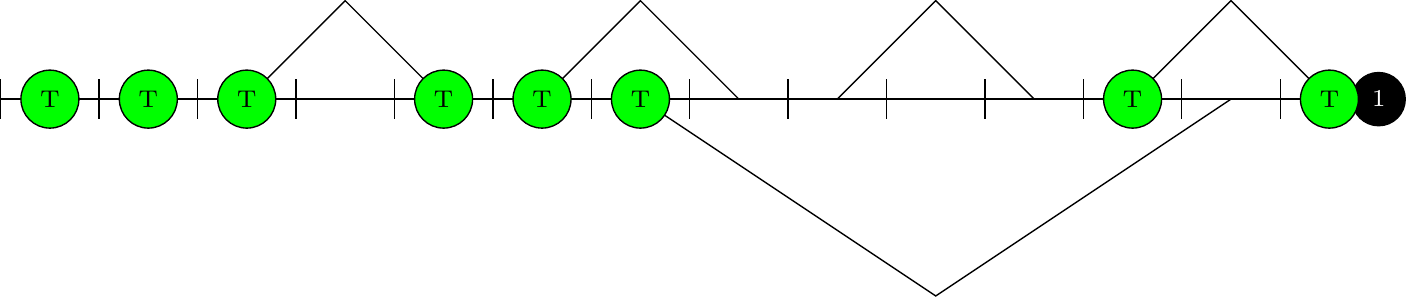}\\
\includegraphics[width=0.12\textwidth]{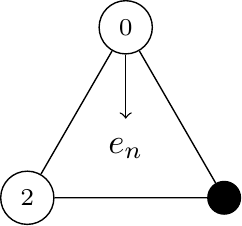}&
$\longrightarrow$ &
\includegraphics[width=0.81\textwidth]{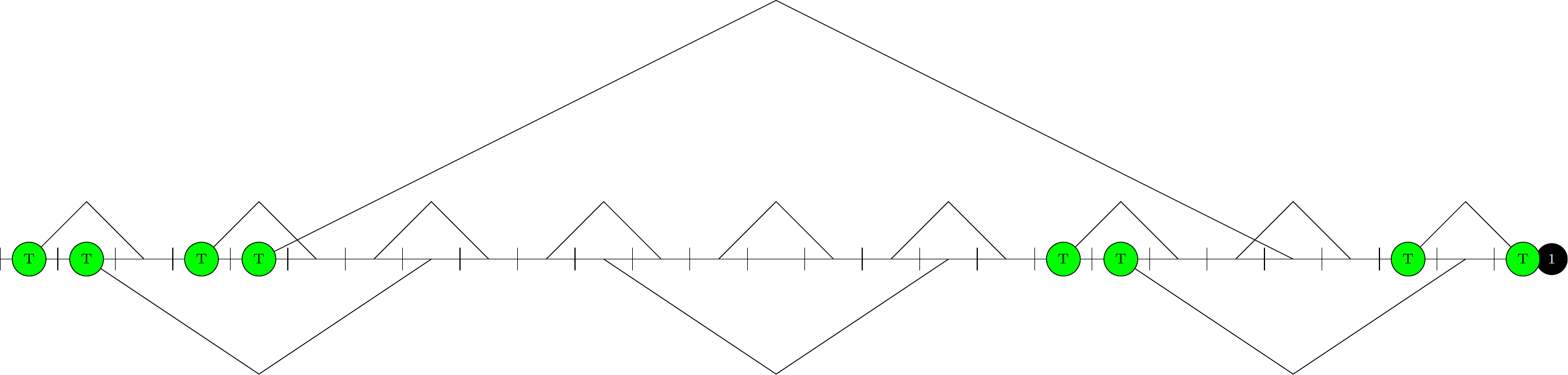}
\end{tabular}

Suppose 
\begin{tabular}{M{0.7in}}
\includegraphics[width=0.7in]{enaxialhalf}
\end{tabular}
does not result in more than $1$ chip received by each sink.
In the best-case, or most greedy, scenario, chips topple along the shortest path (marked in {\color{red} thick red line}) in such a way that all but the final trap has been overcome. Traps which are guaranteed to be overcome in this scenario are denoted by \tikz{\node [draw, scale=0.6,circle, fill=blue, text=white] at (0,0) {\scriptsize{T}};}.
Observe that there are 7 traps along the shortest path which are overcome.
\vspace{-10pt}
\begin{align}
\label{eq:axial2}
\includegraphics[width=0.42\textwidth]{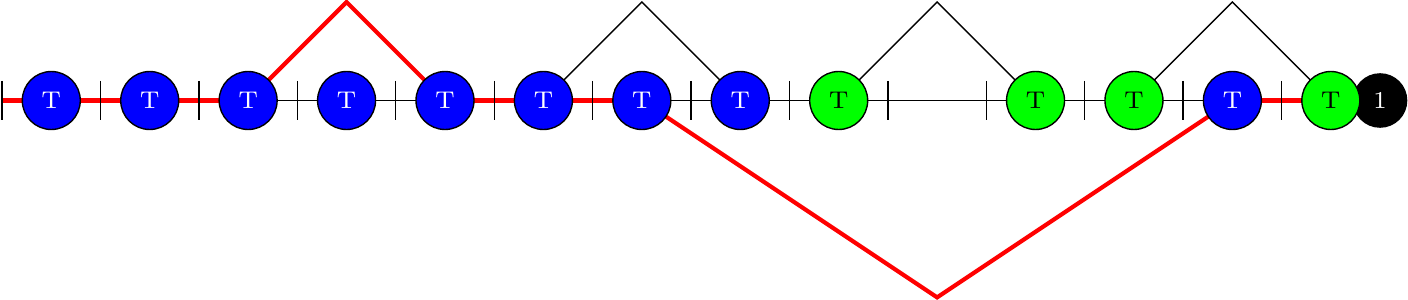}
\end{align}
As a result, when we stabilize 
\begin{tabular}{M{0.7in}}
\includegraphics[width=0.7in]{enoblique}
\end{tabular}
we also expect, in the best-case scenario, no more than $7$ traps overcome along the shortest path, as indicated in the diagram below:
\begin{align}
\label{eq:oblique2}
\includegraphics[width=0.81\textwidth]{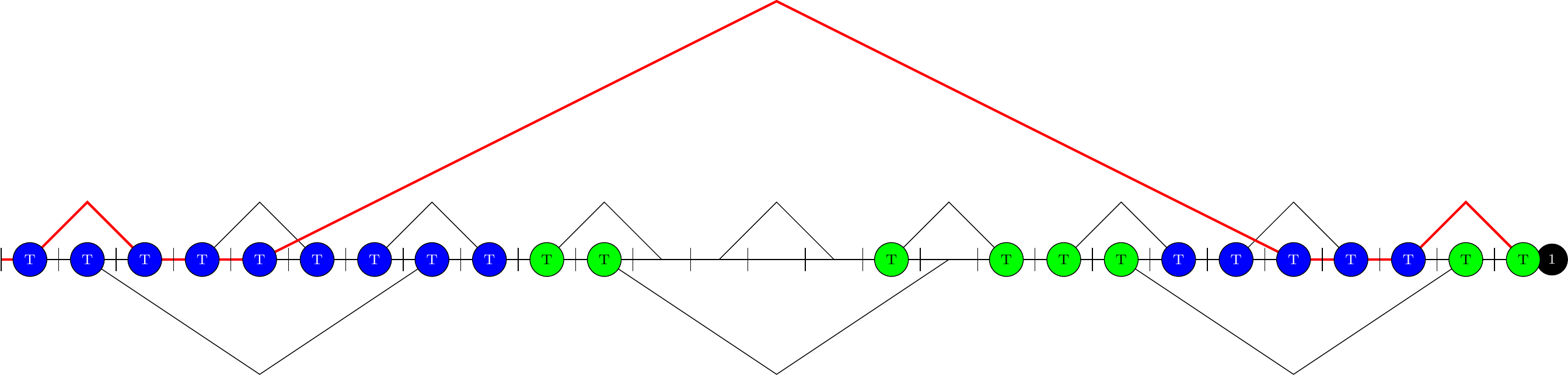}
\end{align}
In fact, since the graph in \eqref{eq:oblique2} has additional branching compared to the graph in \eqref{eq:axial2}, the diagram \eqref{eq:oblique2} overrepresents the number of topplings than the actual case. 
Nevertheless it does indicate that no more than 1 chip can drop into the sink.
\end{proof}

\begin{proposition}[``$4\frac{4}{9}^-$'']
\label{prop:449-}
\[
\begin{array}{lm{1in}l}
\left((4\frac{4}{9}\cdot 3^n -2)\mathbbm{1}_o\right)^\circ = 
&
\includegraphics[height=0.15\textwidth]{4+2}
&
.
\end{array}
\]
\end{proposition}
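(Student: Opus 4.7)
The plan is to complete the trap-based analysis of the Sierpinski arrowhead curve inside $e_n$. Observe that
\[
(4\tfrac{4}{9}\cdot 3^n - 2) - (4\cdot 3^n + 2) = 4\cdot 3^{n-2} - 4,
\]
so I would decompose the stabilization of $(4\tfrac{4}{9}\cdot 3^n - 2)\mathbbm{1}_o$ into two stages: first, stabilize $(4\cdot 3^n + 2)\mathbbm{1}_o$ to obtain $(e_n\oplus 2\mathbbm{1}_o)^\circ$ in $G_n^{(s)}$, with at least one chip delivered to each sink in $\partial G_n$ via the chain of topplings along the arrowhead curve (the constructive half of Proposition \ref{prop:4+2}, which does not itself depend on the present proposition); second, inject the remaining $4\cdot 3^{n-2}-4$ chips at $o$ and show that none of them can push further chips into a sink. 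The target identity then follows, and it simultaneously closes out the upper bound in Proposition \ref{prop:4+2} by monotonicity.

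First, I would catalog the trap-corridor landscape inside $(e_n\oplus 2\mathbbm{1}_o)^\circ$ along each axial half-arrowhead curve from $o$ to the adjacent sink; see Figures \ref{fig:traps} (left) and \ref{fig:trapdynamics}. The landscape alternates ``traps'' (blotches of $1$'s and $2$'s, acting as barriers) with ``corridors'' (runs of $3$'s, acting as transmission lines). The sky-hook discussion preceding Lemma \ref{lem:comparison} shows that whenever downstream propagation stalls at an off-curve branching vertex, a fresh trap is spawned upstream, only increasing the subsequent delivery cost. By the axial symmetry of the initial configuration and of $e_n$, the extra $4\cdot 3^{n-2}-4$ chips at $o$ split evenly into $2\cdot 3^{n-2}-2$ chips per half-curve; Lemma \ref{lem:comparison} then reduces the problem to bounding delivery along one axial direction alone, since if the axial half cannot deliver an extra chip, neither can the oblique one.

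Second, I would perform the trap-cost accounting using the self-similar block renormalization idea from Section \ref{sec:enumeration}. The traps produced by stabilizing $e_n\oplus 2\mathbbm{1}_o$ replicate at every scale $3^{n-k}$ the pattern produced by stabilizing $e_{n-k}\oplus 2\mathbbm{1}_o$, so the ``toll'' (the number of chips at $o$ required to transmit one chip across a trap of scale $3^{n-k}$) obeys a recursion whose cumulative sum, along the axial shortest path, is exactly $4\cdot 3^{n-2}$. Since $2\cdot 3^{n-2}-2$ axial chips fall $2$ short of half this threshold (equivalently, $4\cdot 3^{n-2}-4$ chips fall $4$ short overall), the final sink-adjacent trap remains impassable and the receiving set does not grow, proving the claim.

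The main obstacle will be making the toll accounting rigorous, in particular verifying the \emph{exact} threshold $4\cdot 3^{n-2}$ (neither larger nor smaller) and ruling out off-axial shortcuts through sky hooks. I expect the argument to proceed by induction on $n$, with the induction step exploiting the subdivision of $G_n^{(s)}$ into level-$(n-2)$ subcells, each of which carries a configuration governed by the companion propositions (``$4\frac{2}{3}$'', ``$4\frac{4}{9}$'', etc.) proved earlier in Section \ref{sec:enumeration}. The most delicate part lies at the innermost trap adjacent to the sink, where the seven-trap structure visible in the shortest-path diagrams of Lemma \ref{lem:comparison} must be verified to account exhaustively for all possible delivery pathways, including those that traverse multiple sky hooks before attempting to cross the barrier.
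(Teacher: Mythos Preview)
Your instinct to use the trap landscape and the comparison Lemma~\ref{lem:comparison} is right, but your decomposition creates unnecessary work and leaves the central step unproved. The paper does \emph{not} start from $(e_n\oplus 2\mathbbm{1}_o)^\circ$ at mass $4\cdot 3^n+2$; it starts two chips earlier, from $M_n = ((4\cdot 3^n-2)\mathbbm{1}_o)^\circ$, and adds $4\cdot 3^{n-2}$ chips. The point is that $M_n$ has a clean self-similar tile structure, so the already-established identities \eqref{eq:Mstab}, \eqref{eq:Mtoe}, and \eqref{eq:enturn} mechanically reduce the problem to: $b_{n-2}+1$ chips sit at each cut point of $\partial G_{n-1}$, adjacent to a copy of $e_{n-1}$ in the \emph{oblique} orientation, and one must show that after the first chip drains to the sink via the arrowhead curve, the remaining $b_{n-2}-1$ chips cannot push a second one through. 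Since $b_{n-2}-1 < \tfrac{1}{2}(\tfrac{4}{9}\cdot 3^{n-1}-2)$, this follows immediately from the induction hypothesis (the axial statement \eqref{eq:49hypo} at level $n-1$) combined with Lemma~\ref{lem:comparison}. No absolute toll computation is ever needed.

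By contrast, your route requires you to (i) characterize the full trap landscape of $(e_n\oplus 2\mathbbm{1}_o)^\circ$ for general $n$, and (ii) prove that the cumulative toll along the arrowhead curve is exactly the claimed threshold. You acknowledge (ii) as the main obstacle, and indeed it is: the paper deliberately avoids any such absolute accounting, using Lemma~\ref{lem:comparison} only as a \emph{relative} statement (axial dominates oblique). Your assertion that the toll ``obeys a recursion whose cumulative sum is exactly $4\cdot 3^{n-2}$'' is the whole content of the proposition and cannot be asserted without proof; moreover the number is slightly off, since delivery already occurs at mass $4\tfrac{4}{9}\cdot 3^n$, i.e.\ after $4\cdot 3^{n-2}-2$ further chips, not $4\cdot 3^{n-2}$. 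There is also a subtle circularity in your setup: knowing that exactly one chip (not more) reaches each sink when stabilizing $e_n+2\mathbbm{1}_o$ is precisely the upper bound of Proposition~\ref{prop:4+2}, which the paper derives \emph{from} the present proposition rather than as input to it. Starting from $M_n$ sidesteps this entirely.
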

\begin{proof}
This is equivalent to the stabilization
\begin{align}
\label{eq:49hypo}
\begin{array}{m{1.1in}lm{1in}l}
\includegraphics[width=1.1in]{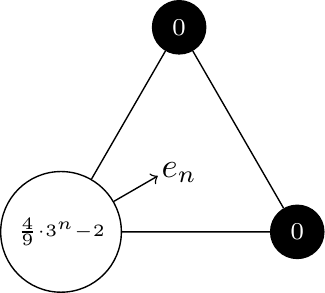} &
\longrightarrow &
\includegraphics[width=1in]{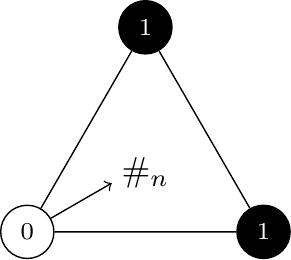} &
,
\end{array}
\end{align}
which we prove by induction on $n$.
The base case is verified directly.
Assume \eqref{eq:49hypo} holds at level $n-1$.
Observe that to obtain $\left((4\frac{4}{9}\cdot 3^n -2)\mathbbm{1}_o\right)^\circ$, we start with the configuration $\left((4\cdot 3^n-2)\mathbbm{1}_o\right)^\circ$ and add to it $\frac{4}{9}\cdot 3^n = 4\cdot 3^{n-2}$ chips at $o$.
\begin{align*}
\begin{array}{rm{1.6in}lm{1.6in}l}
&
\includegraphics[width=1.6in]{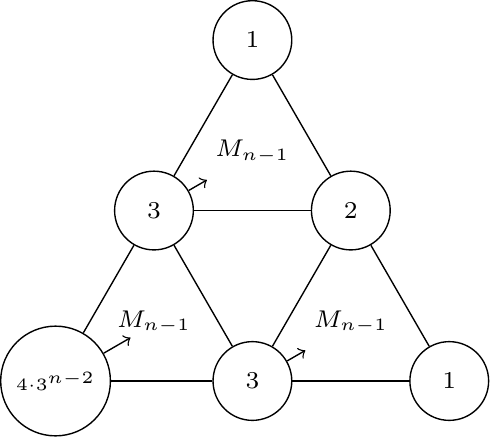} &
\overset{\eqref{eq:Mstab}}{\longrightarrow} &
\includegraphics[width=1.6in]{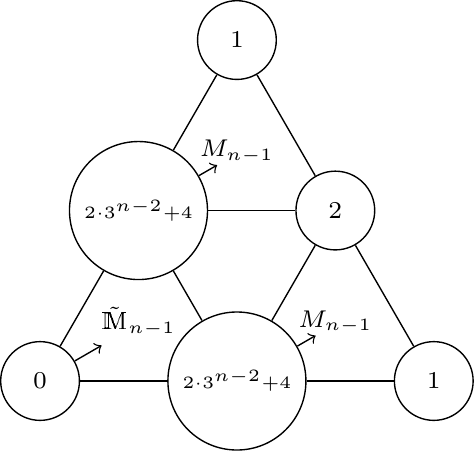}
& \\
\overset{\eqref{eq:Mtoe}}{\longrightarrow} &
\includegraphics[width=1.6in]{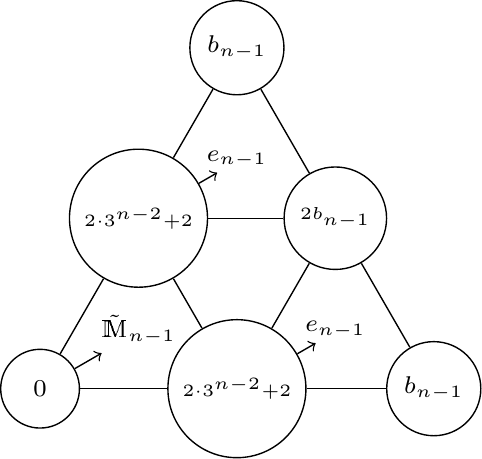} &
\overset{\eqref{eq:enturn}}{\longrightarrow} &
\includegraphics[width=1.6in]{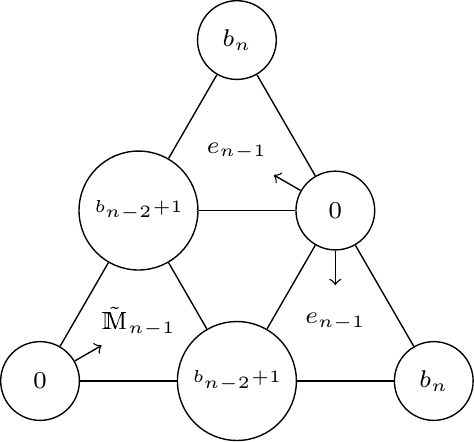} &
.
\end{array}
\end{align*}
In the next step we topple the $b_{n-2}+1$ chips at the cut point in $\partial G_n$.
Recall that in $e_{n-1}$ there is a path connecting $\partial G_{n-1}$ and $\partial G_n$ along which every vertex carries $3$ chips, \emph{cf.\@} Proposition \ref{prop:4+2}.
Thus the first topple will trigger a chain of topplings which sends 1 chip to the sink in $\partial G_n$.
We now claim that no additional chips can drop into $\partial G_n$, \emph{i.e.,}
\begin{align*}
\begin{array}{m{1in}lm{1in}l}
\includegraphics[width=1in]{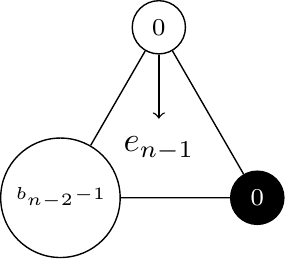} &
\longrightarrow &
\includegraphics[width=1in]{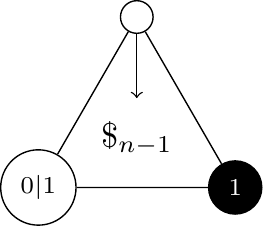} &
.
\end{array}
\end{align*}
Since $b_{n-2}-1 = \frac{3}{2}(3^{n-3}+1)-1 = \frac{1}{6}\cdot 3^{n-1} +\frac{1}{2}< \frac{1}{2}\left(\frac{4}{9}\cdot 3^{n-1} -2\right)$, the claim follows from the induction hypothesis \eqref{eq:49hypo} at level $n-1$ and the comparison Lemma \ref{lem:comparison}. We thus verify \eqref{eq:49hypo} at level $n$.
\end{proof}

\begin{proof}[Proof of Theorem \ref{thm:tail}]
So far we have proved the results for $p=0$. To obtain the result for $p\in \{1,2,3\}$, we just add to every existing diagram $2p\cdot 3^n$ chips at $o$. By \eqref{eq:rec2top}, stabilization results in adding $p\cdot 3^n$ chips to each sink in $\partial G_n$.
\end{proof}

\subsection{Recursive formula for the cluster radius}
\label{sec:exact}

\begin{proof}[Proof of Theorem \ref{thm:radialcycle}]
We combine the results in Theorem \ref{thm:tail} with Proposition \ref{prop:ball}, Item \eqref{item:radial}.
Specifically, given $m'-2$, we identify the numbers $c$ and $d$ such that the $r(x)$ is constant on $[c,d)$ and $m'-2 \in [c,d)$.
This then leads to the claimed radial recursions (see also Figure \ref{fig:radialcycle}), as well as the complete characterization of the radial jumps.
The reason that we specify the results for $n\geq 3$ (or $n\geq 4$) is due to jumps for $n\in \{1,2,3\}$ which do not follow the periodicity stated in Theorem \ref{thm:tail}, \emph{cf.\@} Table \ref{table:spectrum} and Figure \ref{fig:ASMScaling}.
\end{proof}

\subsection{Cluster growth asymptotically follows a power law modulated by log-periodic oscillations} \label{sec:osc}

The goal of this subsection is to prove Theorem \ref{thm:ASM}, Part (\ref{item:renewal}).
The proof employs the renewal theorem, which is widely used in the study of fractal geometry, see \emph{e.g.\@} \cites{FalconerTFG, KigamiLapidus, LV96, Lalley}.
The key input is

\begin{proposition}[Remainder estimate]
\label{prop:scalingest}
For all $m\in \mathbb{N}$,
$
|r_{3m}-2r_m| \leq 1
$.
\end{proposition}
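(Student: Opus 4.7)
The plan is to prove the bound by strong induction on $m$, exploiting a scale invariance inherent in the recursions of Theorem \ref{thm:radialcycle}. The key observation is that every recursion there takes the form $r_{a\cdot 3^n} = 2^n + r_\mu$ with $\mu = b(a)\cdot 3^{\,n - k(a)}$, where $k(a)\in\{1,2\}$ and $b(a)\in[4,12)$ depend only on the normalized ratio $a = m/3^n \in [4,12)$, not on the level $n$ itself (provided $n\geq 4$).

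\textbf{Inductive step.} Fix $m\geq 4\cdot 3^4$ and let $n\geq 4$ be the unique integer with $m\in[4\cdot 3^n,\,4\cdot 3^{n+1})$. Set $a=m/3^n$. Theorem \ref{thm:radialcycle} gives $r_m = 2^n + r_\mu$ with $\mu=b(a)\cdot 3^{\,n-k(a)}$. Since $3m/3^{n+1}=a$ lies in the same subinterval of $[4,12)$, the \emph{same} entry of the table applies at level $n+1$, yielding
\[
r_{3m} = 2^{n+1} + r_{b(a)\cdot 3^{\,n+1-k(a)}} = 2^{n+1} + r_{3\mu}.
\]
Subtracting gives $r_{3m} - 2r_m = r_{3\mu}-2r_\mu$. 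Since $b(a)<12\leq 3a$ when $k(a)=1$ (second table: $a\geq 6$), and trivially $b(a)\cdot 3^{n-2}<a\cdot 3^n$ when $k(a)=2$, we have $\mu<m$, and the inductive hypothesis bounds $|r_{3\mu}-2r_\mu|$ by $1$.

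\textbf{Base case.} For $m<4\cdot 3^4=324$, the inequality $|r_{3m}-2r_m|\leq 1$ is verified by direct computation: Proposition \ref{prop:ball}(\ref{item:radial}) expresses $r_m$ as $2^{n'}+r_{m'-2}$ with $m'-2<m$ whenever $m\geq 12$, and the remaining values $r_m$ for $m<12$ are obtained by hand-stabilizing the configuration $m\mathbbm{1}_o$. A finite case check suffices, and the relevant radii (for $m\leq 324$ and $3m\leq 972$) are in fact exhibited in Table \ref{table:spectrum}.

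\textbf{Main obstacle.} The substantive point is confirming the scale-invariance claim: that the maps $a\mapsto b(a)$ and $a\mapsto k(a)$ encoded by the two tables of Theorem \ref{thm:radialcycle} depend only on $a$ and not on $n$, once $n\geq 4$. This is precisely what Theorem \ref{thm:radialcycle} asserts, but invoking it requires verifying interval by interval that the eleven subintervals of $[4,12)$ listed there are exactly the ones on which $r$ is piecewise constant at every level $n\geq 4$; in particular the image interval must itself be one of the eleven, so that the recursion can be reapplied to $r_\mu$. The threshold $n\geq 4$ forced by the first table is what dictates the base-case cutoff $m=4\cdot 3^4=324$.
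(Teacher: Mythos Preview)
Your proof is correct and follows essentially the same route as the paper: both exploit the scale-invariance of the recursions in Theorem~\ref{thm:radialcycle} (that $b$ and the level drop depend only on $a=m/3^n$) to obtain $r_{3m}-2r_m=r_{3\mu}-2r_\mu$ with $\mu<m$, and then close with a finite case check against Table~\ref{table:spectrum}. The only cosmetic difference is your base-case cutoff at $4\cdot 3^4$ versus the paper's $4\cdot 3^3$, which reflects your uniform use of the stricter threshold $n\geq 4$.
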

\begin{proof}
Theorem \ref{thm:radialcycle} states that given $a\in [4,12)$, there exists $b\in [4,12)$ such that
\[
\text{either } ~r_{a\cdot 3^n} = 2^n + r_{b\cdot 3^{n-2}} \text{ for } n\geq 4, ~\text{ or } r_{a\cdot 3^n} = 2^n + r_{b\cdot 3^{n-1}} \text{ for  } n\geq 3.
\]
This implies that 
\begin{align}
\label{radiusinduction}
\text{either } ~r_{a\cdot 3^{n+1}} - 2 r_{a\cdot 3^n} = r_{b\cdot 3^{n-1}} - 2 r_{b\cdot 3^{n-2}}~\text{ for } n\geq 4, \text{ or } r_{a\cdot 3^{n+1}} - 2 r_{a\cdot 3^n} = r_{b\cdot 3^n} - 2 r_{b\cdot 3^{n-1}} \text{ for  } n\geq 3.
\end{align}
So it is enough to check that for every $b\in [1,4\cdot 3^3)$, $|r_{3b} - 2r_{b}| \leq 1$,  \emph{cf.\@} Table \ref{table:spectrum}, and then apply \eqref{radiusinduction} inductively.
\end{proof}

For our purposes the following version of the renewal theorem will suffice. Denote by $\mathcal{F}$ the space of Borel measurable functions $f: \mathbb{R}\to\mathbb{R}$ such that $\lim_{t\to-\infty} f(t)=0$ and such that $f$ is bounded on $(-\infty, a]$ for every $a\in \mathbb{R}$. A Borel measure $\mu$ is said to be \textbf{$\tau$-arithmetic} if $\tau>0$ is the largest number such that the support of $\mu$ is contained in the additive subgroup $\tau \mathbb{Z}$. If no such $\tau$ exists then we say that $\mu$ is \textbf{non-arithmetic}.

\begin{lemma}[Renewal theorem, \emph{cf.\@} \cite{FalconerTFG}*{Proposition 7.1 and Theorem 7.2}]
\label{lem:renewal}
Let $g: \mathbb{R}\to\mathbb{R}$, and $\mu$ be a Borel probability measure supported on $[0,\infty)$, 
Suppose:
\begin{enumerate}[label=(R\arabic*)]
\item \label{cond1} $\displaystyle \lambda:=\int_0^\infty\, t\,d\mu(t) < \infty$.
\item \label{cond2} $\displaystyle \int_0^\infty \,e^{-at}\,d\mu(t) <1$ for every $a>0$.
\item \label{cond3} $g$ has a discrete set of discontinuities, and there exist $c,\alpha>0$ such that $|g(t)| \leq ce^{-\alpha|t|}$ for all $t\in\mathbb{R}$.
\end{enumerate}
Then there is a unique $f \in \mathcal{F}$ which solves the \textbf{renewal equation}
\[
f(t) = g(t) + \int_0^\infty\, f(t-y)\,d\mu(y) \qquad (t\in\mathbb{R})
\]
and the solution is
\[
f(t) = \sum_{k=0}^\infty \left(g* \mu^{* k}\right)(t),
\]
where $(g *\mu)(t) = \int_0^\infty\, g(t-y)\,d\mu(y)$ denotes the convolution of $g$ and $\mu$, and $\mu^{*k}$ denotes the $k$-fold convolution of $\mu$.

Furthermore,
\begin{itemize}
\item If $\mu$ is \textbf{non-arithmetic}, then
\[
\lim_{t\to\infty} f(t) = \frac{1}{\lambda} \int_{-\infty}^\infty \, g(y)\,dy.
\]
\item If $\mu$ is \textbf{$\tau$-arithmetic}, then for all $y\in [0,\tau)$,
\[
\lim_{k\to\infty} f(k\tau+y) = \frac{\tau}{\lambda}\sum_{j=-\infty}^\infty g(j\tau+y).
\]
\end{itemize}
\end{lemma}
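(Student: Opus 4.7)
The plan is to construct the solution explicitly as a convolution with the renewal measure and then derive the asymptotic from Blackwell's renewal theorem.

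Define the renewal measure $U := \sum_{k=0}^\infty \mu^{*k}$ on $[0,\infty)$ (with $\mu^{*0} := \delta_0$) and propose $f := g * U$, i.e.\ $f(t) = \sum_{k=0}^\infty (g * \mu^{*k})(t)$. For absolute convergence, I use hypothesis \ref{cond2} to pick $\alpha > 0$ with $\widehat{\mu}(\alpha) := \int e^{-\alpha t}\, d\mu(t) < 1$; then $\int e^{-\alpha y}\, dU(y) = 1/(1-\widehat{\mu}(\alpha)) < \infty$, and a short calculation combining this with the exponential bound on $g$ from \ref{cond3} shows $\int |g(t-y)|\, dU(y) < \infty$ uniformly on each $(-\infty, a]$, placing $f$ in $\mathcal{F}$. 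That $f$ solves the renewal equation is immediate from $U = \delta_0 + \mu * U$ after convolving with $g$, and uniqueness in $\mathcal{F}$ is verified by iterating the homogeneous equation $h = h * \mu = h * \mu^{*k}$ for any difference $h = f_1 - f_2$ and invoking the same exponential envelope to force $h \equiv 0$.

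The substantive input is Blackwell's renewal theorem for $U$: if $\mu$ is non-arithmetic, $U((t, t+h]) \to h/\lambda$ as $t \to \infty$ for every $h > 0$; if $\mu$ is $\tau$-arithmetic, the atoms $u_k := U(\{k\tau\})$ converge to $\tau/\lambda$. The arithmetic case is elementary: $(u_k)$ satisfies the discrete renewal equation
\[
u_k = \mathbbm{1}_{\{k=0\}} + \sum_{j=1}^k u_{k-j}\, \mu(\{j\tau\}),
\]
and its limit $\tau/\lambda$ follows from the classical discrete renewal theorem (equivalently, from the ergodic theorem for the positive-recurrent Markov chain tracking ``time since last renewal,'' whose stationary mass at state $0$ is $\tau/\lambda$). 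The non-arithmetic case is the genuine obstacle; the standard approaches are either Feller's coupling of two independent delayed renewal processes (non-arithmeticity being exactly what guarantees successful coupling with probability one) or the Fourier-analytic argument exploiting the local regularity of $1/(1-\widehat{\mu}(is))$ near $s=0$ to express $U$ as $(1/\lambda)\,\mathrm{Leb}$ plus a signed measure vanishing at infinity.

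With Blackwell in hand, pass back to $f = g * U$. In the arithmetic case $U$ concentrates on $\tau\mathbb{Z}_{\geq 0}$, so for $y \in [0, \tau)$,
\[
f(k\tau + y) = \sum_{j=0}^\infty g\bigl((k-j)\tau + y\bigr)\, u_j = \sum_{\ell = -\infty}^k g(\ell\tau + y)\, u_{k-\ell}.
\]
As $k \to \infty$, $u_{k-\ell} \to \tau/\lambda$ pointwise in $\ell$, and the exponential envelope on $|g|$ from \ref{cond3} provides a summable dominating function, so dominated convergence delivers $(\tau/\lambda) \sum_{\ell \in \mathbb{Z}} g(\ell\tau + y)$, which is the stated limit. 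In the non-arithmetic case I run the standard ``key renewal theorem'' argument: approximate $g$ in $L^1$ by compactly supported step functions (direct Riemann integrability being provided by the combination of discrete-discontinuity and exponential-decay hypotheses), apply Blackwell on each approximating interval to obtain the Lebesgue integral $(1/\lambda)\int g$, and control the approximation error uniformly via the exponential tail of $g$.
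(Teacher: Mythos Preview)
The paper does not prove this lemma; it is stated with a citation to Falconer's \emph{Techniques in Fractal Geometry} (Proposition~7.1 and Theorem~7.2) and used as a black box in the proof of Theorem~\ref{thm:ASM}, Part~\eqref{item:renewal}. So there is no ``paper's own proof'' to compare against.

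Your sketch is a correct outline of the standard argument: build the candidate solution $f = g * U$ with $U = \sum_{k\geq 0}\mu^{*k}$, use \ref{cond2} and \ref{cond3} to control convergence and place $f$ in $\mathcal{F}$, verify the renewal equation from $U = \delta_0 + \mu * U$, obtain uniqueness by iterating the homogeneous equation, and then deduce the asymptotics from Blackwell's renewal theorem (discrete renewal in the arithmetic case, key renewal theorem via direct Riemann integrability in the non-arithmetic case). This is essentially how Falconer's proof proceeds as well, so your approach matches the cited source. The only caveat is that several steps you describe as ``a short calculation'' or ``standard'' (e.g., the uniform bound on $\int |g(t-y)|\,dU(y)$ over $(-\infty,a]$, the dominated-convergence domination in the arithmetic limit, and the direct Riemann integrability of $g$) would each need a line or two of justification in a self-contained write-up, but none of them hides a genuine difficulty.
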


\begin{proof}[Proof of Theorem \ref{thm:ASM}, Part \eqref{item:renewal}]
We extend $m\mapsto r_m$ to a function $r: [0,\infty) \to \mathbb{N}$ via $r(x)= r_{\lfloor x\rfloor}$. 
It is easy to verify that Proposition \ref{prop:scalingest} extends to the function $r$. 
For the rest of the proof we designate the remainder function
\begin{align}
\label{eq:remainder}
R: [0,\infty) \to \{-1, 0, 1\}, \quad R(x) = r(x) - 2 r\left(\frac{x}{3}\right).
\end{align}
Since $r_1=0$, it follows that $R(x) =0$ whenever $x\leq 1$.

Making the change of variables $x=e^t$ in \eqref{eq:remainder} we obtain
\[
r(e^t) = 2r\left(\frac{e^t}{3}\right) + R(e^t) = 2r(e^{t-\log 3}) + R(e^t).
\]
Multiplying both sides by $e^{-t/d_H}$ yields the renewal equation
\begin{align}
\label{eq:re}
f(t) = f(t-\log 3)+ g(t) = \int_0^\infty\, f(t-y) \, \delta_{\log 3}(dy) + g(t),
\end{align}
where 
\begin{align}
f(t) := e^{-t/d_H}r(e^t) \quad\text{and}\quad g(t) := e^{-t/d_H} R(e^t).
\end{align}
It is clear that $f\in \mathcal{F}$.

To ensure that $f$ is the unique solution to \eqref{eq:re}, we verify Conditions \ref{cond1} through \ref{cond3} of Lemma \ref{lem:renewal}. Firstly $\mu=\delta_{\log 3}$, so $\lambda=\log 3$, verifying \ref{cond1}. \ref{cond2} is validated by noting that the relevant integral equals $e^{-a\log 3}= 3^{-a} <1$ for $a>0$. Last but not least, since $x\mapsto R(x)$ has only jump discontinuities at the positive integers, it is straightforward to deduce that the set of discontinuities of $g$ is discrete. In addition
$g(t)$ decays exponentially in $|t|$, \emph{i.e.,}
\[
|g(t)| \left\{\begin{array}{ll} \leq e^{-t/d_H}, & t>0, \\ =0, & t\leq 0. \end{array}\right.
\]
This verifies \ref{cond3}.

By Lemma \ref{lem:renewal}, $f$ is the unique solution to \eqref{eq:re}. As $\mu$ is a delta mass, we are in the $\tau$-arithmetic case with $\tau=\log 3$, which implies the limit statement
\[
\lim_{k\to\infty} f(k\log 3+y) = \sum_{j=-\infty}^\infty g(j\log 3+y) \quad \text{for all}~y\in [0, \log 3),
\]
which can be rewritten in terms of $r$ and $R$ as follows:
\begin{align}
\label{eq:renew}
\lim_{k\to\infty} (3^k e^y)^{-1/d_H} r(3^k e^y) =  \sum_{j=-\infty}^\infty (3^j e^y)^{-1/d_H} R(3^j e^y) \quad \text{for all}~y\in [0, \log 3).
\end{align}
To deduce \eqref{eq:asymprm} we will replace $3^k e^y$ by $x$.
This triggers a change of variables on the RHS of \eqref{eq:renew} and yields
\[
x^{-1/d_H} r(x) = \sum_{j=-\infty}^\infty \left(3^j \tilde{x}\right)^{-1/d_H} R\left(3^j \tilde{x}\right) + o(1) \quad\text{as } x\to\infty,
\]
where $\tilde{x}:=\tilde{x}(x) \in [1,3)$ is the unique number such that $\log \tilde{x} \equiv \log x \pmod{\log 3}$.
Note that since $R(x)=0$ for $x\leq 1$, the terms with negative values of $j$ do not contribute to the Fourier series. 
Hence we obtain \eqref{eq:asymprm} with
\begin{align}
\label{eq:G}
\mathcal{G}(\log x) = \mathcal{G}(\log\tilde{x})=\sum_{j=0}^\infty (3^j \tilde{x})^{-1/d_H}R(3^j \tilde{x}).
\end{align}

Since $R(x)$ has a finite number of jump discontinuities on $[0, 4\cdot 3^4)$, and it has a well-defined number of jumps on $[4\cdot 3^n, 4\cdot 3^{n+1})$, $n\geq 4$, at $a\cdot 3^n$, where $a\in \left\{4+2p, 4\frac{4}{9}+2p, 4\frac{2}{3}+2p, 5\frac{1}{3}+2p ~|~ p\in \{0,1,2,3\}\right\}$, it follows that $\mathcal{G}(\log x)$ has a finite number of jump discontinuities.

To obtain the identity \eqref{Gest}, we restrict to $x\in \left[\frac{10}{9}, \frac{4}{3}\right)$, in which case $\tilde{x}=x$.
By Corollary \ref{cor:constantrad} it follows that $R(3^j x) =0$ for all $j\geq 2$.
Meanwhile, $R(3^j x)=1$ when $j=1$, and $R(3^j x)=0$ when $j=0$. 
So $\mathcal{G}(\log x) = (3x)^{-1/d_H} = \frac{1}{2}x^{-1/d_H}$.

The proof of the global estimate \eqref{eq:globalest} is given in \S\ref{sec:globalest}.
\end{proof}

\begin{remark}
Computing the $\mathcal{G}$ function \eqref{eq:G} for $\tilde{x}\in [1,\frac{10}{9})$ requires knowing $R(3^j \tilde{x})$ for all $j$, which can be obtained using Theorem \ref{thm:radialcycle}, \emph{cf.\@} Figure \ref{fig:radialcycle}.
This is a relatively tedious exercise, so we opt for a more geometric approach in making the global estimate \eqref{eq:globalest}.
\end{remark}

\subsection{Geometric estimate of sandpile growth} \label{sec:globalest}

In this subsection we estimate the growth of the abelian sandpile cluster on $SG$, using only information about volumes of subsets of $SG$.
The techniques described below have been applied to other graphs \cites{BorgneRossin,ASMSGStr}, and appear to give close-to-optimal global lower bound on $SG$, \emph{cf.\@} Figure \ref{fig:ASMScaling}.
However these do not yield the necessary remainder estimate to produce the log-periodic oscillations shown in the previous subsection \S\ref{sec:osc}.

To be precise, we prove
\begin{proposition}
\label{prop:rmbounds}
For every $m\in \mathbb{N}$ we have
\[
\frac{(r_m)^{d_H}+1}{m} \geq \frac{2}{9}  \quad \text{and}
\quad \frac{(r_m-1)^{d_H}}{m} \leq \left(\frac{3}{4}\right)^{d_H}.
\]
\end{proposition}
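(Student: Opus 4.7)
The plan is to derive both inequalities from volume estimates on graph-metric balls in $SG$, combined with the chip-conservation identity $m = \sum_{x \in SG}\eta_\infty(x)$ (which holds on the sink-free infinite graph $SG$) and the cluster identification from Proposition~\ref{prop:ball}: the receiving set is exactly $B_o(r_m)$, and the firing set $A(m)$ contains $B_o(r_m-1)$.

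The first task is a sharp two-sided volume estimate,
\[
\tfrac{3}{2}\,r^{d_H} \;\leq\; |B_o(r)| \;\leq\; \tfrac{3}{2}\bigl(r^{d_H}+1\bigr) \qquad (r \in \mathbb{N}),
\]
which (essentially) saturates at the dyadic radii $r=2^n$, where $|B_o(2^n)| = |V(G_n)| = \tfrac{3}{2}(3^n+1) = \tfrac{3}{2}((2^n)^{d_H}+1)$. For intermediate $r \in (2^{n-1},2^n)$ the ball $B_o(r)$ decomposes as $V(G_{n-1})$ together with two congruent pieces extending into the adjacent level-$(n-1)$ subgaskets through the cut vertices of $\partial G_{n-1}$, and an induction on $n$ that exploits self-similarity propagates the bounds.

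For inequality~(1), every vertex of $SG$ has $\deg \leq 4$, so any stable configuration satisfies $\eta_\infty(x) \leq \deg(x)-1 \leq 3$. Chip conservation and $\mathrm{supp}(\eta_\infty) \subseteq B_o(r_m)$ then give
\[
m \;=\; \sum_{x \in B_o(r_m)} \eta_\infty(x) \;\leq\; 3\,|B_o(r_m)| \;\leq\; \tfrac{9}{2}\bigl(r_m^{d_H}+1\bigr),
\]
which is (1) after dividing by $m$.

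Inequality~(2) is harder because mass conservation alone only upper-bounds $m$. Here I would use the firing-set containment $B_o(r_m-1) \subseteq A(m)$ in tandem with an abelian-vs.-divisible comparison to the cluster $\mathcal{D}(m) = B_o(n_m)$ of Proposition~\ref{prop:DSShapeThm}, where $n_m = \max\{k : \bi{k} \leq m\}$. The strategy is to argue $r_m - 1 \leq n_m$ by observing that on $SG$ the abelian firing set cannot reach beyond the divisible cluster (both are balls centered at $o$, and a monotonicity/majorization argument via the least-action principles controls the abelian odometer by its divisible relaxation). This yields $m \geq \bi{n_m} \geq \bi{r_m-1} = |B_o(r_m-1)| - \tfrac{1}{2}|\partial_I B_o(r_m-1)|$, and the precise constant $(\tfrac{3}{4})^{d_H} = 9/3^{d_H}$ in~(2) emerges after combining the lower volume bound above with the fact that $|\partial_I B_o(2^n)| = 2$ at dyadic radii and a control of $|\partial_I B_o(r)|$ on intermediate scales via self-similarity. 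The principal obstacle I anticipate is establishing the comparison $r_m - 1 \leq n_m$ rigorously, and then extracting the sharp numerical constant $(\tfrac{3}{4})^{d_H}$ rather than a weaker bound; this may require separately handling radii $r_m$ close to versus far from the nearest dyadic scale $2^n$.
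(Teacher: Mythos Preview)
Your argument for the first inequality is essentially the paper's: both use chip conservation $m\le 3\,|B_o(r_m)|$ together with an upper volume estimate for balls, the paper via rounding $r_m$ up to the nearest radius of the form $(1-2^{-p})2^k$ and computing $|B_o((1-2^{-p})2^k)|$ exactly, you via an induction on scale. (One caution: your asserted two--sided bound $\tfrac32 r^{d_H}\le |B_o(r)|\le \tfrac32(r^{d_H}+1)$ is not quite right as stated. The lower inequality already fails at $r=5$, where $|B_o(5)|=19<\tfrac32\cdot 5^{d_H}\approx 19.3$; the upper inequality is what you actually need for part~(1), and the paper obtains it only after the dyadic rounding and an optimisation over $\epsilon$, not as a direct ball estimate.)

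For the second inequality your route diverges from the paper's and has a genuine gap. The paper does \emph{not} compare with the divisible sandpile at all. Instead it invokes Rossin's lemma: if $X$ is any connected subgraph, then the minimum number of chips needed to make every vertex of $X$ topple at least once equals $|\mathrm{in}(X)|+|C_G(X)|$, the number of internal edges plus the number of boundary edges. Applying this to the firing set and using $A(m)\supset B_o(r_m-1)\supset B_o((1-2^{-p})2^k)$ gives
\[
m\;\ge\;|\mathrm{in}(B_o((1-2^{-p})2^k))|\;=\;3^{k+1}-2^p\cdot 3^{\,k-p+1},
\]
because $G_k$ has exactly $3^{k+1}$ edges and the complement decomposes into $2^p$ copies of $G_{k-p}$. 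A short optimisation in $p$ (the worst case is $p=1$) then yields the constant $(3/4)^{d_H}$ directly. Your proposed step $r_m-1\le n_m$ is plausible but you do not prove it, and even granting it, the chain $m\ge \bi{r_m-1}=|B_o(r_m-1)|-\tfrac12|\partial_I B_o(r_m-1)|$ cannot be closed with your lower volume bound (which is false) nor does it obviously produce the sharp constant: the edge count, not the vertex count, is what makes $(3/4)^{d_H}$ fall out cleanly.
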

The bounds \eqref{eq:globalest} follow from taking the limits $m\to\infty$ of the inequalities in this proposition.

Recall from Proposition \ref{prop:ball}, Part \eqref{item:ball} that $B_o(r_m-1) \subset A(m) \subset S(m)=B_o(r_m)$, where $A(m)$ and $S(m)$ are, respectively, the firing set and the receiving set corresponding to $(m\mathbbm{1}_o)^\circ$.

\begin{proof}[Proof of Proposition \ref{prop:rmbounds}, lower bound]
Since $S(m)=B_o(r_m)$, and the maximal stable configuration has $3$ chips everywhere (except at $o$ where a max of 1 chip is allowed, but WLOG we may assume that $m$ is even, so $o$ carries 0 chip), we have
\[
\frac{m}{3} \leq \#\left(\text{vertices in $B_o(r_m)$ occupied with a chip}\right) \leq |B_o(r_m)|.
\]
So it suffices to give a good upper bound for $|B_o(r_m)|$. 

Suppose $r_m =(1-\epsilon)2^k$ for some $k\in \mathbb{N}$ and $2^{-p} \leq \epsilon < 2^{-(p-1)}$, $p\in \{2,\cdots, k\}$. Observe that $B_o(r_m) \subset B_o((1-2^{-p})2^k)$, and that $B_o(2^k) \setminus B_o((1-2^{-p})2^k)$ is the union of $2^p$ copies of the graph $G_{k-p}$ excluding the head vertex. Since $|V(G_k)| = \frac{3}{2}(3^k+1)$, we deduce that
\begin{align*}
|B_o(r_m)| &\leq |B_o((1-2^{-p})2^k)| = \frac{3}{2}(3^k+1) - 2^p \left[\frac{3}{2} \left(3^{k-p}+1\right)-1\right]\\
&< \frac{3}{2} \left[3^k\left(1-\left(\frac{2}{3}\right)^{p} \right)+1\right] < \frac{3}{2} \left[(r_m)^{d_H} \frac{1-\epsilon^{d_H-1}}{(1-\epsilon)^{d_H}} +1\right].
\end{align*}
Since
\[
\sup_{\epsilon\in(0,\frac{1}{2})} \frac{1-\epsilon^{d_H-1}}{(1-\epsilon)^{d_H}} = 1,
\]
we conclude that 
\[
m\leq  3 |B_o(r_m)| < \frac{9}{2}((r_m)^{d_H}+1).
\]
\end{proof}

To prove the upper bound we invoke a result of Rossin \cite{Rossin}.

\begin{lemma}[\cite{Rossin}*{Lemme 15}]
\label{lem:Rossin}
Let $X$ be a connected subgraph of $G$. The minimum number of grains needed for every vertex of $X$ to topple at least once is equal to $|{\rm in}(X)| + |C_G(X)|$, where ${\rm in}(X)$ is the set of (internal) edges in $X$, $C_G(X) = \{(i,j) : i\in X,~j\in V\setminus X\}$, and $|\cdot|$ denotes the cardinality of the edge set.
\end{lemma}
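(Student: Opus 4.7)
The plan is to prove Lemma \ref{lem:Rossin} by establishing matching upper and lower bounds on $m$, both by induction on $|X|$.

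For the \emph{upper bound}, I will construct an explicit initial configuration $\sigma_0$ with $|\sigma_0|=|\mathrm{in}(X)|+|C_G(X)|$ chips whose stabilization fires every vertex of $X$ at least once. The base case $|X|=1$ is immediate: place $\deg_G(v)$ chips at the unique $v\in X$. For the inductive step, pick $v\in X$ such that $X':=X\setminus\{v\}$ remains connected (e.g.\@ $v$ a leaf of any spanning tree of $X$), apply the inductive hypothesis on $X'$ to obtain $\sigma_0'$ supported in $X'$ with $|\sigma_0'|=|\mathrm{in}(X')|+|C_G(X')|$, and augment by placing $\deg_G(v)-\deg_X(v)$ chips at $v$. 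Stabilizing $X'$ first (legal by induction) delivers $\deg_X(v)$ chips to $v$ along the $X$-internal edges, after which $v$ holds $\deg_G(v)$ chips and topples. The combinatorial identities
\[
|\mathrm{in}(X)|=|\mathrm{in}(X')|+\deg_X(v),\qquad |C_G(X)|=|C_G(X')|+\deg_G(v)-2\deg_X(v)
\]
confirm the chip count $|\mathrm{in}(X)|+|C_G(X)|$.

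For the \emph{lower bound}, I will induct in parallel. Given any $\sigma_0$ whose stabilization has odometer $u\geq\mathbbm{1}_X$, Dhar's abelian property permits fixing the firing order so that some distinguished $v\in X$ is the last vertex in $X$ to fire. Just before $v$'s first firing, every vertex of $X\setminus\{v\}$ has already fired at least once, and the ``effective configuration'' $\sigma_0+\Delta_G' u^{(v)}$ at that moment legally realizes an odometer $\geq\mathbbm{1}_{X\setminus\{v\}}$ on $X\setminus\{v\}$. By monotonicity of the sandpile and the inductive hypothesis, at least $|\mathrm{in}(X\setminus\{v\})|+|C_G(X\setminus\{v\})|$ chips of $\sigma_0$ must have been consumed by firings in $X\setminus\{v\}$; legality at $v$'s first firing then forces an additional $\deg_G(v)-\deg_X(v)$ chips on top of the $\deg_X(v)$ chips that $v$ collects from its $X$-internal neighbors. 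Adding yields $|\sigma_0|\geq|\mathrm{in}(X)|+|C_G(X)|$.

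The main obstacle is making the lower-bound induction rigorous: chips initially placed outside $X$ may cycle into $X$ via repeated topples, so the naive decoupling of ``chips consumed by firings in $X\setminus\{v\}$'' versus ``chips reserved for $v$'' is delicate. This will be resolved by invoking the least-action principle for the abelian sandpile, which reduces the minimization of $|\sigma_0|$ subject to $u\geq\mathbbm{1}_X$ to a well-posed integer-programming problem on the odometer whose optimum is attained exactly by the explicit construction used in the upper bound; this matching exhibits sharpness and completes the proof.
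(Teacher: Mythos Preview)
The paper does not prove Lemma~\ref{lem:Rossin}; it is quoted verbatim from Rossin's thesis \cite{Rossin}*{Lemme 15} and used as a black box in the proof of Proposition~\ref{prop:rmbounds}. So there is no proof in the paper to compare against, and your attempt stands on its own.

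Your upper-bound construction is clean and correct: the spanning-tree leaf removal, the inductive firing sequence, and the edge-count identities all check out.

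The lower bound, as you yourself flag, is where the real content lies, and your sketch does not yet close it. The inductive argument you outline breaks down precisely at the point you identify: when $\sigma_0$ is allowed to have mass outside $X$, or when vertices outside $X$ fire and feed chips back into $X$, there is no clean way to ``reserve'' chips for the last vertex $v$ and simultaneously invoke the inductive hypothesis on $X\setminus\{v\}$ with a reduced chip count. Your proposed rescue via the least-action principle is the right instinct, but the one-line description (``reduces the minimization \ldots\ to a well-posed integer-programming problem whose optimum is attained exactly by the explicit construction'') is not a proof---you would still need to exhibit a dual certificate or a direct combinatorial obstruction, and the least-action principle alone does not hand you one.

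A more direct route to the lower bound, which is closer in spirit to Rossin's original argument and avoids the induction entirely: observe that in the paper's application the firing set $A(m)$ is by definition the set of vertices that fire, so no vertex outside $X$ fires and $\sigma_0$ is supported at $o\in X$. This lets you contract $V\setminus X$ to a single sink without changing the dynamics, reducing to the statement ``on a connected sinked graph $H$, firing every non-sink vertex at least once requires at least $|E(H)|$ chips.'' For this, argue that if every vertex fires, then by the abelian property there is a legal ordering $v_1,\ldots,v_k$ of the \emph{first} firings; at step $i$, vertex $v_i$ has received at most one chip along each internal edge from $\{v_1,\ldots,v_{i-1}\}$, so $\sigma_0(v_i)\ge \deg_H(v_i)-|\{j<i: v_j\sim v_i\}|$. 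Summing over $i$ and noting that each internal edge is counted exactly once on the right gives $|\sigma_0|\ge \sum_i\deg_H(v_i)-|\mathrm{in}(X)|=|\mathrm{in}(X)|+|C_G(X)|$. This replaces your induction with a single counting argument and sidesteps the outside-chip issue entirely.
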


\begin{proof}[Proof of Proposition \ref{prop:rmbounds}, upper bound]
Since $S(m)=B_o(r_m) \supset A(m)$, we can bound the mass needed to fill $B_o(r_m)$ from below by the mass needed to topple everywhere in $A(m)$. 

Suppose $r_m=(1-\epsilon)2^k+1$ for some $k\in \mathbb{N}$ and $2^{-(p+1)} \leq \epsilon < 2^{-p}$, $p\in \{1,2,\cdots, k\}$. Since $A(m) \supset B_o(r_m-1) \supset B_o((1-2^{-p})2^k)$, according to Lemma \ref{lem:Rossin}, we have
\begin{align}
\label{ineq:mr}
m  \geq |{\rm in}(A(m)) |+ |C_G(A(m)) | \geq |{\rm in}(B_o(r_m-1))| \geq |{\rm in}(B_o((1-2^{-p})2^k))|.
\end{align}
It is direct to check that $B_o(2^k) \setminus B_o((1-2^{-p})2^k))$ is the union of $2^p$ copies of the graph $G_{k-p}$. Since $G_k$ has $3^{k+1}$ edges, we deduce that the RHS of \eqref{ineq:mr} equals
\[
3^{k+1}-2^p 3^{k-p+1}  = 3 \left(\frac{r_m-1}{1-\epsilon}\right)^{d_H} \left(1-\left(\frac{2}{3}\right)^p\right) \geq 3 (r_m-1)^{d_H} \frac{1-(2/3)^p}{(1-2^{-(p+1)})^{d_H}}.
\]
Therefore
\begin{align}
\label{eq:rmupbd2}
\frac{(r_m-1)^{d_H}}{m} \leq \frac{1}{3} \frac{(1-2^{-(p+1)})^{d_H}}{1-(2/3)^p}.
\end{align}
It can be checked explicitly that the function $\displaystyle x\mapsto \frac{(1-2^{-(x+1)})^{d_H}}{1-(2/3)^x}$ is decreasing on $(0,\infty)$, and tends to $1$ as $x\to\infty$. For a uniform estimate we can take $p=1$ in \eqref{eq:rmupbd2} to obtain the claimed upper bound.
\end{proof}


\section{Fluctuations of the IDLA cluster on $SG$} \label{sec:IDLA}

\begin{figure}
\centering
\includegraphics[width=0.45\textwidth]{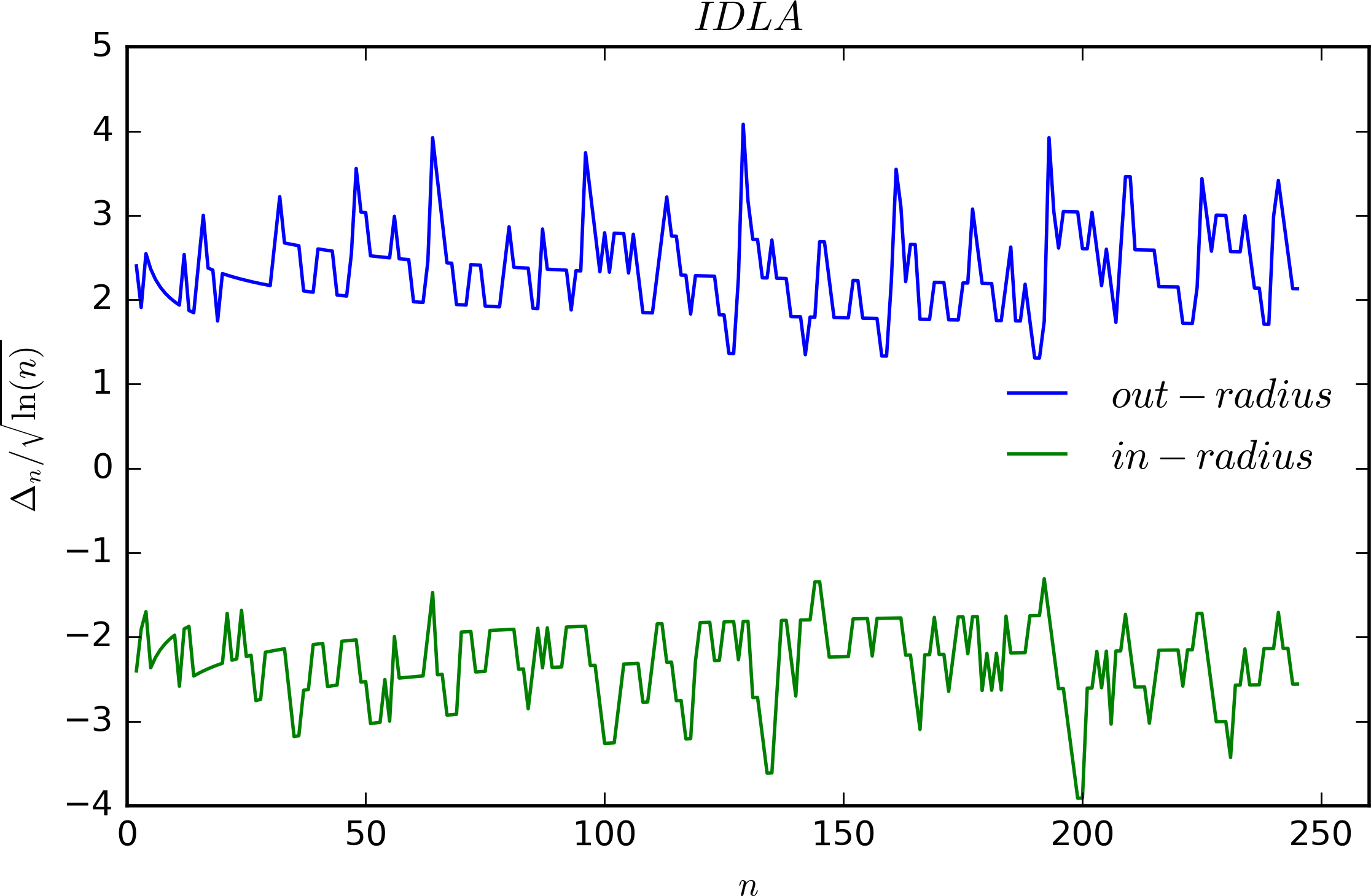}
\caption{Maximal out-radius and minimal in-radius vs.\@ the expcted radius for 1000 realizations of the IDLA on $SG$.}
\label{fig:IDLA}
\end{figure}

In this section we present numerical results concerning fluctuations of the IDLA cluster. Proposition \ref{prop:IDLAShapeThm} implies that the limit shape is a ball in the graph metric, without quantifying the order of the fluctuations about the limit shape. In the case of $\mathbb{Z}^d$, it is proved that the fluctuations are of order $\sqrt{\log n}$ when $d=2$, and of order $\log n$ when $d\geq 3$, \emph{cf.\@} \cites{AG1, AG2, JLS1}. 
The second-named author has written the Python program ``AutomataSG'' \cite{AutomataSG} and performed simulations of IDLA on $SG$, which strongly suggest the following

\begin{conjecture}
There exists $C>0$ such that
\[
B_o(n-C\sqrt{\log n}) \subset \mathcal{I}(|B_o(n)|) \subset B_o(n+C\sqrt{\log n})
\]
for all $n$, with probability $1$.
\end{conjecture}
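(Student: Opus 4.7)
The plan is to adapt the sharp fluctuation techniques of Asselah--Gaudilli\`ere \cites{AG1,AG2} and Jerison--Levine--Sheffield \cites{JLS1,JLS2} to $SG$, using the exact divisible sandpile odometer $\uds{n}$ of Lemma \ref{lem:divisibleodometer} as the deterministic baseline. Write $N_m(x)$ for the number of the first $m$ random walks launched from $o$ that visit $x$ before being absorbed into the cluster. By the abelian / Diaconis--Fulton reformulation of IDLA, $\mathcal{I}(\bi{n}) \supset B_o(n-h)$ on the event $\min_{x \in \partial_I B_o(n-h)} N_{\bi{n}}(x) > 0$, and a dual inequality controls the outer bound. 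The $\sqrt{\log n}$ target therefore reduces to a \emph{Gaussian}-type (rather than Poissonian-type) concentration estimate for $N_{\bi{n}}$ uniformly over the boundary sites.

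For the inner bound I would peel the $\bi{n}$ walks layer by layer, relaunching the $k$-th batch from its hitting distribution on the sphere $S_{k-1}$; the uniformity of harmonic measure on spheres of $SG$, already exploited in \S\ref{sec:harmonicmeasure} and in \cite{IDLASG}, makes this coupling exact and turns $N_{\bi{n}}(x)$ into a sum of weakly dependent Bernoulli indicators whose covariance is given in closed form by the killed Green's function $G_{B_o(n)}(\cdot,x)$. Sub-Gaussian heat kernel bounds \cite{BarlowStFlour} together with the explicit solution of Poisson's equation on balls \cite{StrLap} should then yield sharp mean and covariance estimates. For the outer bound, a dual application of Lawler's mean-value inequality on $B_o(n)$ reduces matters to controlling the tail probability that a single walk overshoots by height $h$, which should decay polynomially in $h$ thanks to the cut-point structure on $SG$.

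The hard part is reconciling the conjectured $\sqrt{\log n}$ fluctuation with the fact that, on $SG$, the killed Green's function grows \emph{polynomially}: $G_{B_o(n)}(o,o) \asymp n^{d_w-d_H} = n^{\log(5/3)/\log 2}$, where $d_w = \log 5/\log 2$ is the walk dimension on $SG$. On $\mathbb{Z}^2$ the corresponding quantity is $\asymp \log n$, and it is precisely this logarithm that produces the planar $\sqrt{\log n}$ bound. On $SG$ the pointwise variance of $N_{\bi{n}}(x)$ is therefore polynomial in $n$, so any $\sqrt{\log n}$ scaling for the overall fluctuation must arise from strong \emph{cancellation} between visits to neighbouring boundary vertices---essentially requiring the centred field $\{N_{\bi{n}}(x) - \mathbb{E}[N_{\bi{n}}(x)]\}_{x \in \partial_I B_o(n)}$ to behave like a log-correlated Gaussian process whose maximum is suppressed below its on-diagonal polynomial variance via a chaining / multiscale argument exploiting the self-similarity of $SG$. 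Making this cancellation rigorous is where the proof most plausibly gets stuck; I expect the first bound achievable by this route to be polynomial in $\log n$ with exponent strictly greater than $1/2$, with the sharp exponent $1/2$ requiring a delicate analysis of harmonic-measure correlations across scales.
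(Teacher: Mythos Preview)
The statement you are attempting to prove is labelled a \emph{Conjecture} in the paper and is not proved there; the authors only offer numerical evidence (Figure~\ref{fig:IDLA}) and the heuristic remark that one would need to show tentacles of length $\ell$ occur with probability $\exp(-c\ell^2)$. There is therefore no proof in the paper against which to compare your proposal.

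That said, your outline is a reasonable description of the Asselah--Gaudilli\`ere / Jerison--Levine--Sheffield machinery, and you have correctly located the obstruction: the killed Green's function on $SG$ grows polynomially, $G_{B_o(n)}(o,o) \asymp n^{d_w - d_H}$, rather than logarithmically. This means the pointwise variance of the visit-count field $N_{\bi{n}}(x)$ is polynomial in $n$, and the standard concentration-of-measure argument only yields polynomial fluctuations, not polylogarithmic ones. Your proposed remedy---that cancellations across boundary vertices suppress the maximum of the centred field below its diagonal variance---is speculative, and you yourself flag that ``the proof most plausibly gets stuck'' there. As written, then, this is not a proof but a programme, and the honest conclusion matches the paper's: the conjecture remains open.

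One minor correction: on $\mathbb{Z}^2$ the IDLA fluctuations are of order $\log n$, not $\sqrt{\log n}$ (cf.\ Table~\ref{table:Zdshape}); the $\sqrt{\log n}$ rate holds for $d\geq 3$, where the Green's function is bounded. Your sentence linking the logarithmic Green's function on $\mathbb{Z}^2$ to a ``planar $\sqrt{\log n}$ bound'' conflates these two regimes. If anything, the analogy suggests that on $SG$---where the Green's function grows faster than on $\mathbb{Z}^2$---one should a priori expect fluctuations \emph{larger} than $\log n$, making the conjectured $\sqrt{\log n}$ bound all the more striking and in need of a genuinely new idea.
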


To prove the $\sqrt{\log n}$ fluctuation one needs to show that growing a tentacle of length $\ell$ has a probability of order $\exp(-c\ell^2)$, which is the case on $\mathbb{Z}^d$, $d\geq 3$ \cites{AG1, AG2, JLS1}, and on the comb lattice \cite{AR16}.
On $SG$ we expect the proof techniques to go beyond what were used in \cite{IDLASG}, and involves careful potential theoretic analysis on $SG$.

Another salient feature we see numerically are the log-periodic oscillations in the rescaled radial fluctuations (by $\sqrt{\log n}$). These are present in both the sample maximum (Figure \ref{fig:IDLA}) and in the sample mean, rescaled by the expected radius $\sqrt{\log n}$, with the latter being more pronounced. 
We also simulated rotor-router aggregation where the rotor mechanisms are identical and fixed for all vertices, but the initial rotor configuration is randomized by choosing, independently for every vertex, each of the $4$ (or $2$ at $o$) possible rotor directions with equal probability.
The resulting \emph{unscaled} radial fluctuations also exhibits log-periodic oscillations (Figure \ref{fig:avgstat}).

\begin{openquestion}
Prove the existence of log-periodic oscillations shown in Figures \ref{fig:IDLA} and \ref{fig:avgstat}.
Even better, characterize the almost-sure properties of the log-periodic oscillation (\emph{e.g.\@} it contains a dense set of jump discontinuities).
\end{openquestion}

\begin{figure}
\centering
\includegraphics[width=0.45\textwidth]{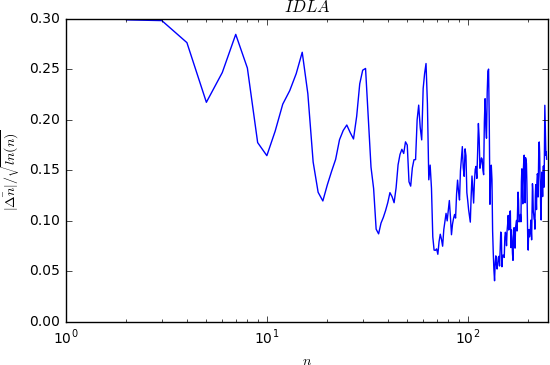}
\includegraphics[width=0.47\textwidth]{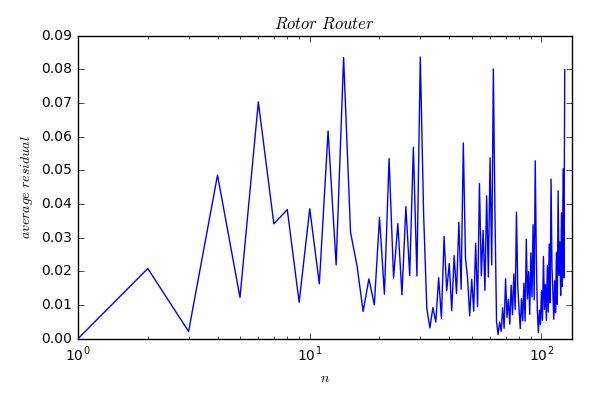}
\caption{Sample average of the absolute value of the radial fluctuations about the expected radius, for 1000 realizations of the IDLA (left) and of rotor-router aggregation (right) on $SG$.}
\label{fig:avgstat}
\end{figure}

We also expect a corresponding central limit theorem for the rescaled space-time fluctuations, though here we only investigate it numerically.
Following \cite{JLS2} we Poissonize the IDLA process.
Let $\{N(t): t\geq 0\}$ be a rate-$1$ simple Poisson process, and consider the continuous-time IDLA process $\{\mathcal{I}(N(t)): t\geq 0\}$ started from $o$.
We pick a radius $r=\epsilon 2^k$ for $k\in\mathbb{N}$ and $\epsilon \in 1-2^{-p}$, $p\in \mathbb{N}$; observe that the sphere $S_o(r)=\{x\in SG: d(o,x)=r\}$ is an interval when $p=1$, and is a union of disjoint intervals which is a pre-fractal approximation of the Cantor set when $p\geq 2$.
By Proposition \ref{prop:IDLAShapeThm}, $\mathcal{I}(N(B_o(r)))$ is close to $B_o(r)$, so we consider radial fluctuations of the former about the sphere $S_o(r)$ by introducing the function
$h: S_o(r) \to \mathbb{Z}$, $h(x) = A(x)-r$, where $A(x)$ is the height of the ``tentacle'' in the cluster $\mathcal{I}(N(B_o(r)))$ measured transversally from the sphere $S_o(r)$.
(Our function $h$ is related to the ``lateness function'' introduced by \cite{JLS2}.)
By the FKG inequality \cite{FKG}, the covariance of $h(x)$ and $h(y)$ is always nonnegative.
In Figure \ref{fig:IDLAcov} we present simulations of the covariance of the rescaled $h$ function (by $\sqrt{\log n}$).
From the data it appears that the covariance is higher when $x$ and $y$ are taken to lie in the same connected component of $SG\cap S_o(r)$, and furthermore, local maxima of the covariance occur when $x$ and $y$ are both cut points.
This may be explained by the fact that tentacles are rooted from the cut points.

\begin{figure}
\centering
\bgroup
\def\arraystretch{1.5}%
\begin{tabular}{cccc}
& $\epsilon=\frac{3}{4}$\hspace{0.2in} & $\epsilon=\frac{7}{8}$\hspace{0.2in} & $\epsilon=1$\hspace{0.2in} \\ 
\rotatebox{90}{\hspace{0.5in} $k=5$} &
\includegraphics[width=1.6in]{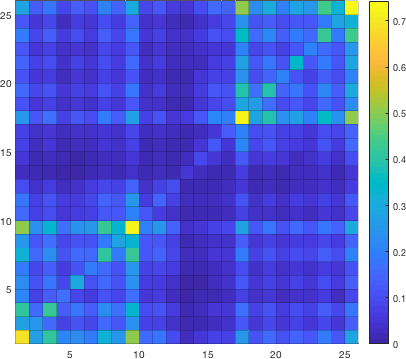} &
\includegraphics[width=1.6in]{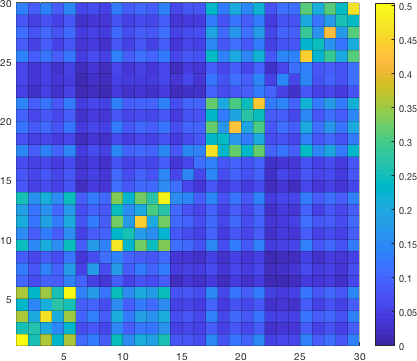} &
\includegraphics[width=1.6in]{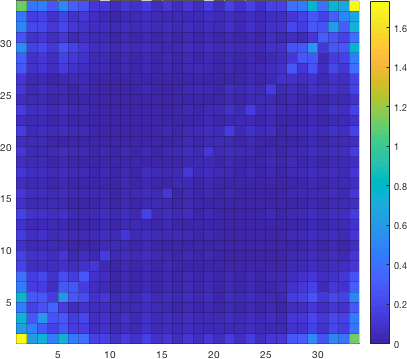} \\
\rotatebox{90}{\hspace{0.5in} $k=6$} &
\includegraphics[width=1.6in]{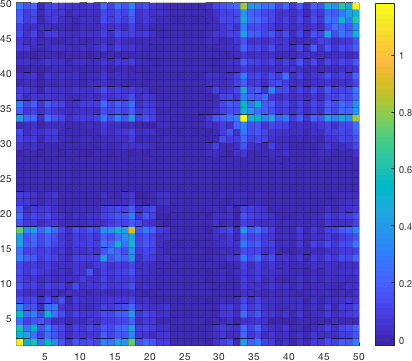} &
\includegraphics[width=1.6in]{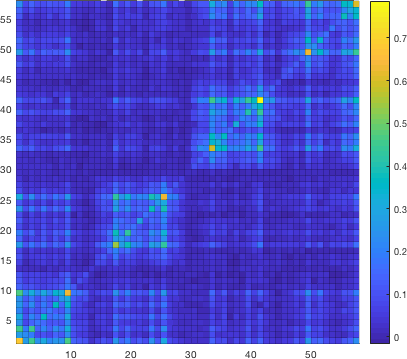} &
\includegraphics[width=1.6in]{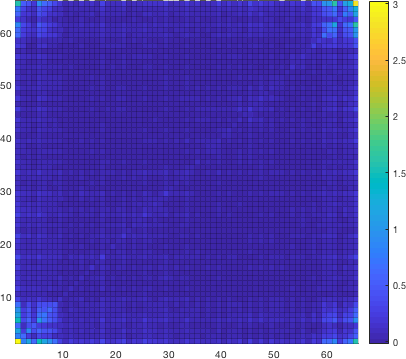} \\
\rotatebox{90}{\hspace{0.5in} $k=7$} &
\includegraphics[width=1.6in]{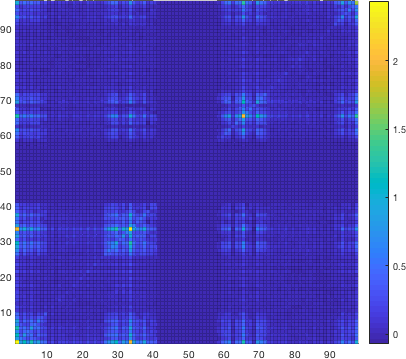}&
\includegraphics[width=1.6in]{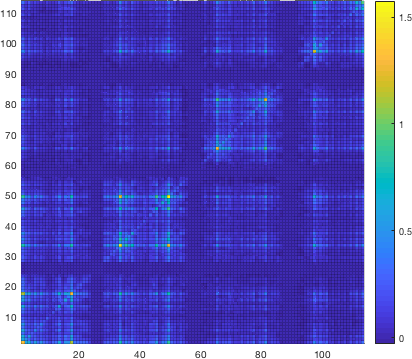}&
\includegraphics[width=1.6in]{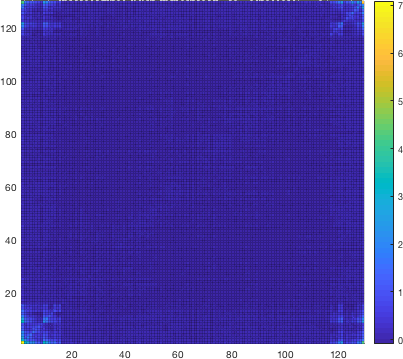}
\end{tabular}
\egroup
\caption{Simulations showing the covariance of the continuous-time IDLA height fluctutations ${\rm Cov}\left(\frac{h(x)}{\sqrt{\log r}},\frac{h(y)}{\sqrt{\log r}}\right)$ for $x,y \in S_o(r)$, where $r=\epsilon 2^k$ with $\epsilon \in \left\{\frac{3}{4}, \frac{7}{8}, 1\right\}$ and $k\in \{5,6,7\}$. Each plot is generated with 1000 simulations.}
\label{fig:IDLAcov}
\end{figure}


\section{Open questions} \label{sec:open}

We close this paper with some outstanding questions and future directions.

\emph{Extensions to nested fractals.}
In light of Theorem \ref{thm:shapeuniv}, it is natural to ask if there are other examples where the limit shapes of the four Laplacian growth models coincide. Some potential candidates are the Vicsek tree (Figure \ref{fig:VT}), the planar Sierpinski gaskets $SG(m)$ (Figure \ref{fig:SG3}, whereupon rotor walks are studied in \cite{FTRotor}), and higher-dimensional Sierpinski simplices (Figure \ref{fig:ST}), all of which are \textbf{nested fractals} as defined by Lindstr\o m \cite{Lindstrom}. 

\begin{figure}
\centering
\begin{subfigure}[t]{0.2\textwidth}
\centering
\includegraphics[width=\linewidth]{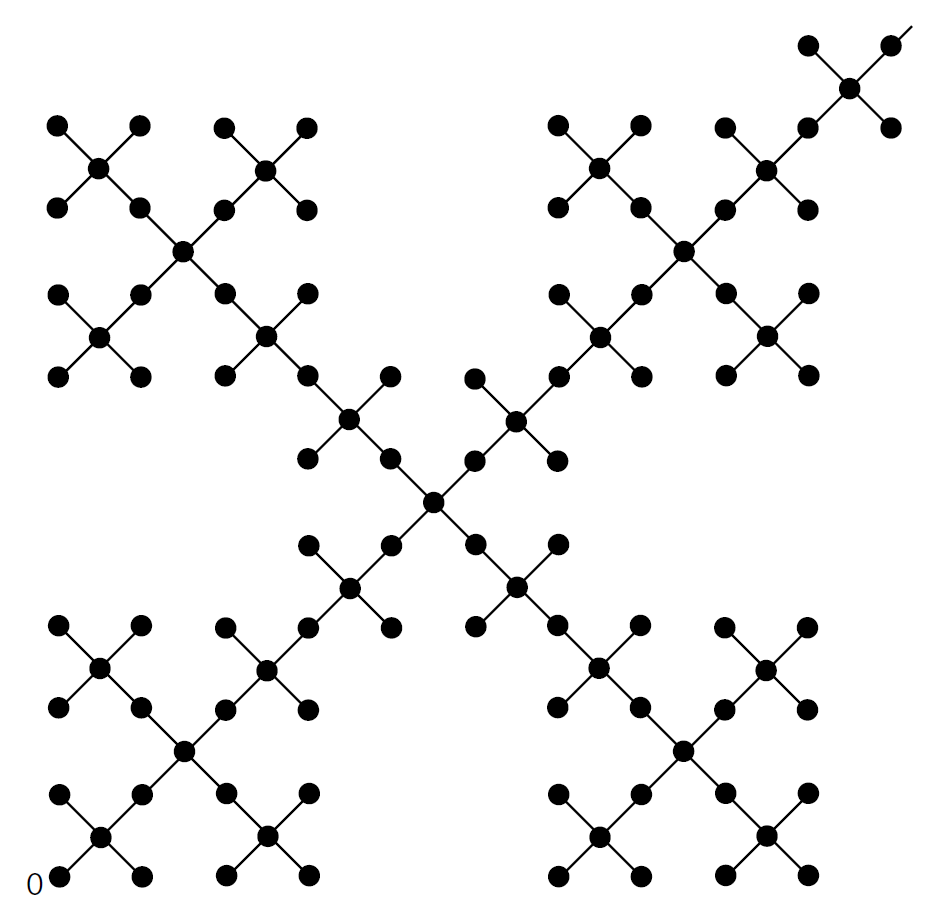}
\caption{Vicsek tree}
\label{fig:VT}
\end{subfigure}
\begin{subfigure}[t]{0.22\textwidth}
\centering
\includegraphics[width=\linewidth]{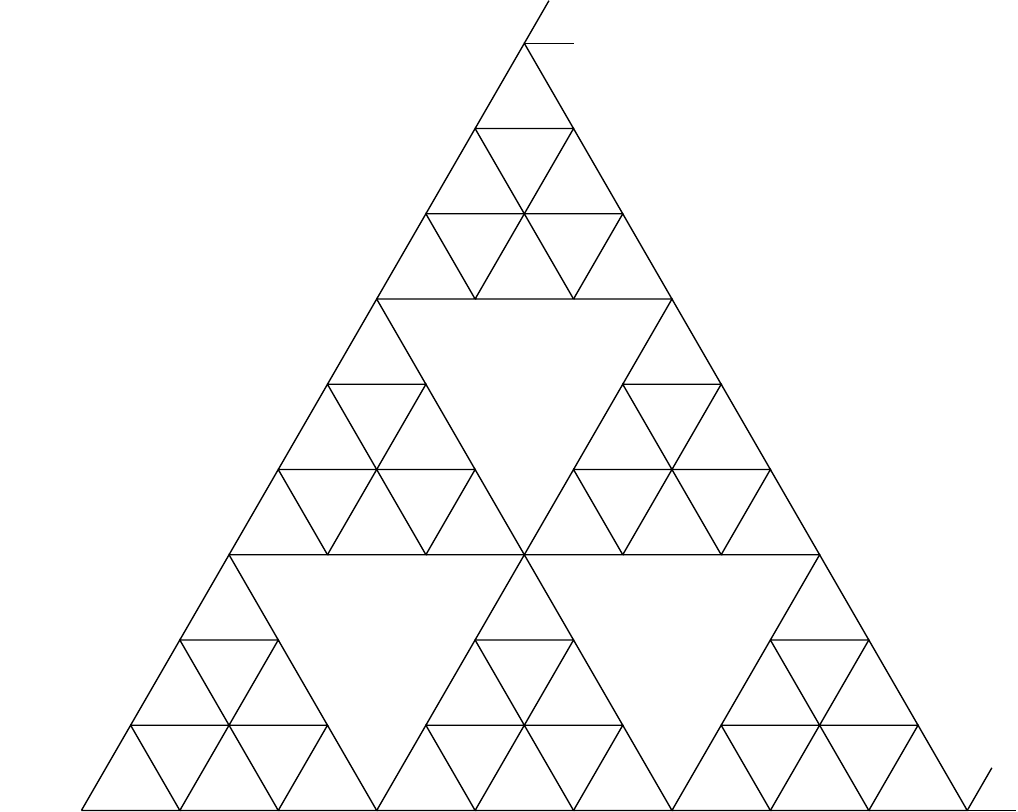}
\caption{$SG(3)$}
\label{fig:SG3}
\end{subfigure}
\begin{subfigure}[t]{0.25\textwidth}
\centering
\includegraphics[width=0.8\linewidth]{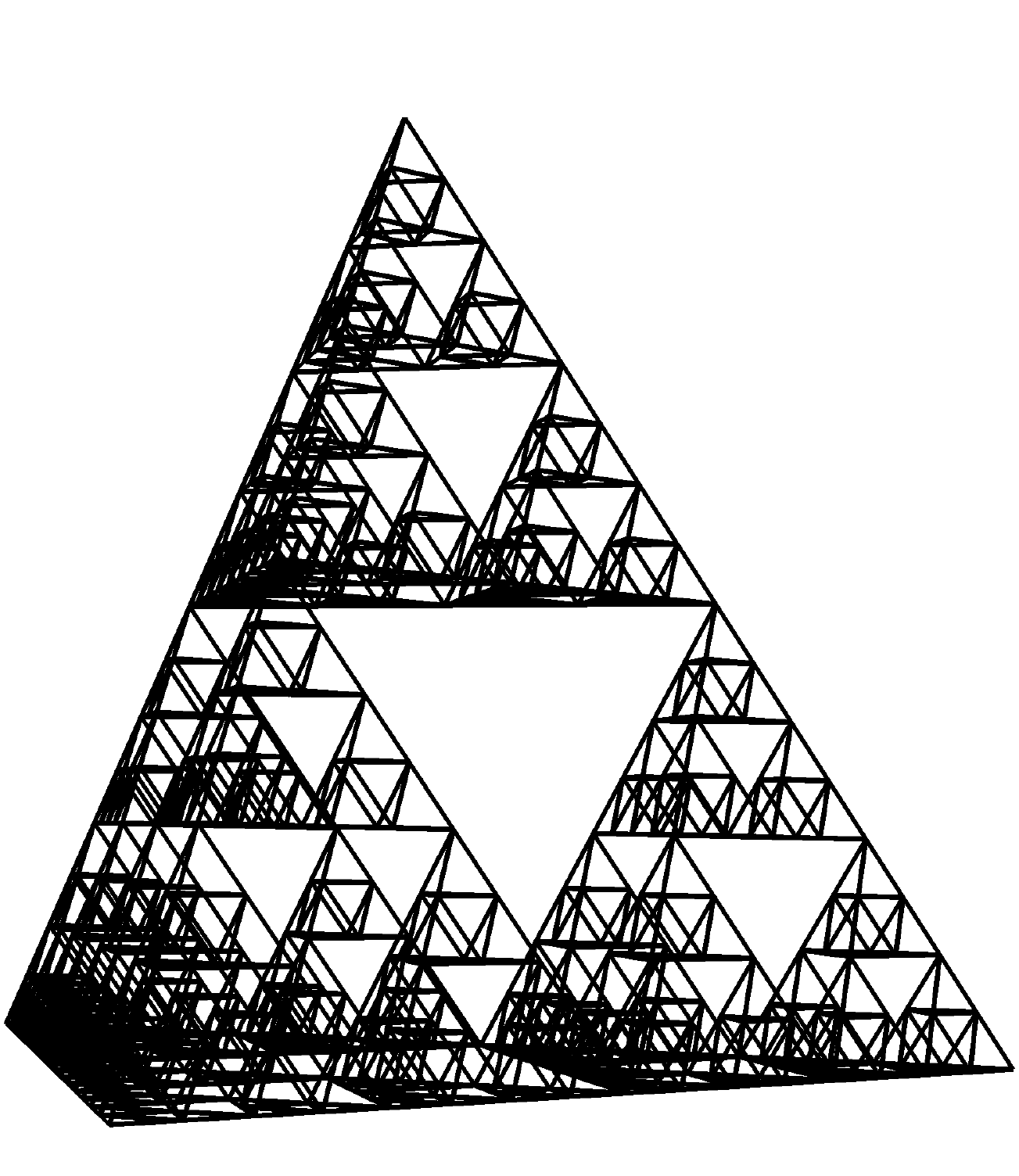}
\caption{Sierpinski tetrahedron}
\label{fig:ST}
\end{subfigure}
\begin{subfigure}[t]{0.25\textwidth}
\centering
\includegraphics[width=0.8\linewidth]{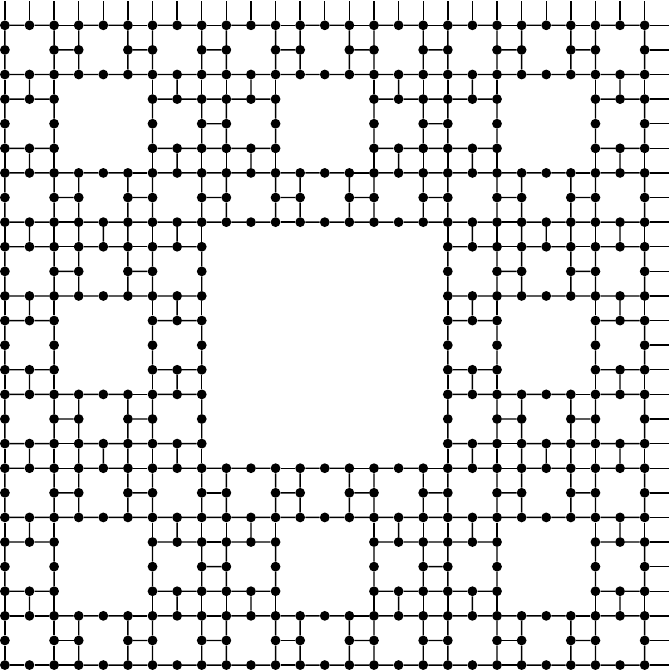}
\caption{Sierpinski carpet graph}
\label{fig:SC}
\end{subfigure}
\caption{A sample of self-similar fractal graphs.}
\label{fig:others}
\end{figure}

\begin{conjecture}
\label{conj:LSUNested}
Limit shape universality of single-source Laplacian growth models holds on nested fractal graphs, if the starting vertex $o$ is chosen such that all balls $B_o(n)$ in the graph metric have spatial symmetry.
\end{conjecture}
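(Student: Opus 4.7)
The plan is to follow the four-model decomposition used in this paper. We split the growth models into two classes: (i) divisible sandpiles, rotor-router aggregation, and IDLA, all of which reduce to solving a free-boundary variational problem for the graph Laplacian on balls; and (ii) the abelian sandpile model, which requires a self-similar, combinatorial ``sandpile tile'' argument in the spirit of \S\ref{sec:ASM}. The spatial symmetry assumption on $B_o(n)$ is what will allow both arguments to be carried out cleanly, because it forces harmonic measure on spheres to be uniform on each symmetry orbit.

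For the first three models the starting point is an analog of Lemma \ref{lem:divisibleodometer}. On a nested fractal pre-fractal graph this amounts to solving a Dirichlet-type problem for $\Delta$ on $B_o(n-1)$ with point source at $o$, which can be done using Kigami's resistance-form Green's function; the symmetry hypothesis then forces the divisible sandpile odometer $u^{\rm DS}$ to be constant on each stabilizer orbit of $o$, and in particular on $\partial_I B_o(n-1)$. From this the divisible sandpile shape theorem follows via Proposition \ref{prop:DSLAP}. The rotor-router shape theorem follows by running the Friedrich--Levine algorithm (Algorithm \ref{alg:FL}) with $u^{\rm DS}$ as the input odometer and repeating the ``filling the bulk'' / ``pulling the marionette'' induction of \S\ref{sec:RRodo}, again leveraging constancy of $u^{\rm DS}$ along $\partial_I B_o(n-1)$ and the locally symmetric branching structure at each sphere. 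Finally, IDLA follows by transplanting the mean-value / Green's-function method of \cites{LBG92, Lawler95, IDLASG}, whose main analytic input (a bound on the Dirichlet Green's function averaged over a ball) is provided by the same explicit divisible sandpile odometer.

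The abelian sandpile is the harder half. One must construct analogs of the configurations $e_n \in \mathcal{R}_n^{(s)}$ and $M_n \in \mathcal{R}_n^{(o)}$ by self-similar gluing along cut points, and then verify by induction the analogs of the toppling identities of Theorem \ref{thm:groupSG}, i.e.\@ that $e_n$ is the identity of $\mathcal{R}_n^{(s)}$ and that there is a level-dependent constant $c_n$ with $\eta \oplus c_n \mathbbm{1}_o = \eta$ for every $\eta \in \mathcal{R}_n^{(s)}$. The value of $c_n$ is forced by Dhar's multiplication-by-identity test (Lemma \ref{lem:mit}) applied to the finite cut-point set $\partial G_n$; this works on any nested fractal, because finite ramification is exactly the statement that $|\partial G_n|$ is uniformly bounded. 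Combined with Lemma \ref{lem:topplecut} (whose proof uses nothing specific to $SG$ beyond cut points and the burning test), this yields the fundamental diagram \eqref{eq:FD}, and hence the verbatim analog of Proposition \ref{prop:ball}: the sandpile cluster is a ball $B_o(r_m)$ for every $m$. That alone suffices for the limit shape universality conclusion.

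The main obstacle will be the explosion step (Propositions \ref{prop:maxtopple} and \ref{prop:merge}): one must show that as $m$ grows, the cluster actually transitions from the ``$M_n$ phase'' to the ``$e_n$ phase'' and thereby reaches level $G_{n+1}$. Verifying this on a general nested fractal requires upgrading the reflection and rotation lemmas of \S\ref{sec:reflection} to the full symmetry group of $G$, rather than the single axial $\mathbb{Z}/2$-reflection of $SG$. A deeper obstacle, which is however not needed for the bare universality conjecture but is essential for any quantitative refinement (log-periodic oscillations, explicit radii), is that the Sierpinski-arrowhead mechanism used in Propositions \ref{prop:4+2} and \ref{prop:449-}---a distinguished path from $o$ to $\partial G_n$ on which $e_n$ saturates at $\deg(\cdot)-1$ chips---is a feature of $SG$ that need not have an obvious counterpart on a general nested fractal. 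The cleanest conditional statement to aim for is therefore: \emph{whenever $G$ is a nested fractal whose basepoint $o$ is fixed by a nontrivial reflection and whose sandpile groups $\mathcal{R}_n^{(s)}$ admit an inductively self-similar identity element, limit shape universality for all four Laplacian growth models holds}.
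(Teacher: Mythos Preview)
The statement you are attempting to prove is a \emph{conjecture}, listed in the paper's open-questions section \S\ref{sec:open}; the paper offers no proof. What the paper does provide immediately after Conjecture \ref{conj:LSUNested} is an informal discussion that closely matches your plan for the first three models: uniform (or near-uniform) harmonic measure on spheres should yield a divisible sandpile shape theorem, from which the rotor-router and IDLA shape theorems follow via the same Friedrich--Levine and Green's-function machinery used in \S\ref{sec:RRA} and \cite{IDLASG}. So for class (i) your outline is consistent with the authors' own expectations, though neither you nor they supply an argument that the divisible sandpile odometer on a general nested fractal is actually constant along $\partial_I B_o(n-1)$; spatial symmetry of $B_o(n)$ gives constancy only on symmetry orbits, and you would need the symmetry group to act transitively on $\partial_I B_o(n-1)$.

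For the abelian sandpile your proposal overreaches. The paper is explicit that ``for non-tree nested fractal graphs, the problem appears to be open,'' and your assertion that Lemma \ref{lem:topplecut} plus the fundamental diagram \eqref{eq:FD} yields ``the verbatim analog of Proposition \ref{prop:ball}'' hides a genuine structural gap. The proof of Proposition \ref{prop:ball} uses more than cut points and Dhar's test: it uses that each vertex of $\partial G_n$ has exactly two neighbors outside $G_n$, that each connected component of $G_{n+1}\setminus G_n$ is a copy of $G_n$ with the cut point playing the role of $o$, and that the axial symmetry forces \emph{equal} numbers of chips to arrive at each cut point so that the tails are identical. None of these is automatic on a general nested fractal (for instance on the Sierpinski tetrahedron $|\partial G_n|=3$ and each cut point borders two outgoing cells, not one). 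Without them you cannot conclude that the tail configuration is a rescaled single-source cluster, and the induction that gives $S(m)=B_o(r_m)$ breaks down. Your conditional reformulation at the end is a reasonable target for a research program, but as written the proposal is a strategy sketch for an open problem, not a proof.
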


\begin{openquestion}
To what extent can one establish log-periodic oscillations in Laplacian growth models on nested fractal graphs?
\end{openquestion}

Recent simulations by Ilias Stitou suggest that Conjecture \ref{conj:LSUNested} holds on the Vicsek tree and on $SG(3)$.
In particular, log-periodic oscillations are present in the sandpile growth.

From a rigorous perspective, we expect Conjecture \ref{conj:LSUNested} to hold amongst divisible sandpiles, IDLA, and rotor-router aggregation at the least. 
Since the harmonic measure on the sphere (in the graph metric) is approximately uniform,
this should imply a spherical shape theorem for the divisible sandpiles.
Consequently, we can use the same strategy outlined in \S\ref{sec:harmonicmeasure} to obtain the same limit shapes for the IDLA or rotor-router cluster, respectively.

The more delicate question concerns the abelian sandpile model.
For tree nested fractal graphs, this can be addressed using results from \cite{LevineTree}.
For non-tree nested fractal graphs, the problem appears to be open to the best of our knowledge.
(See also \cite{ASMSGStr} for numerical results of the sandpile model on various gasket-type fractal graphs.)

\emph{Explosions in sandpile growth.}
An unexpected feature of the sandpile growth on $SG$ is the presence of radial explosions, which do not appear on $\mathbb{Z}^d$ or trees.

\begin{openquestion}
Does radial explosion of sandpile growth appear on other non-tree nested fractal graphs? 
\end{openquestion}

It would seem that the explosion comes from a combination of the loop and the cut point structures of the fractal graph. 
However, Stitou's simulations show there is \emph{no} explosion on $SG(3)$ (Figure \ref{fig:SG3}), despite it having the stated properties. 
It thus appears that $SG$ may be an exception rather than the rule, even among this class of fractal graphs. 
A careful study of the corresponding sandpile group may provide clues to resolving this question.

\begin{openquestion}
Study Laplacian growth models and the single-source abelian sandpile model on the Sierpinski tetrahedron (Figure \ref{fig:ST}). In particular, what does the cluster (and in the case of the abelian sandpile model, the sandpile configuration) look like when restricted to a cross-section of the tetrahedron that is isomorphic to the two-dimensional gasket?    
\end{openquestion}

\emph{Scaling limits of the sandpile patterns on $SG$.}
The appearance of self-similar sandpile tiles (Figure \ref{fig:tailpattern}) suggests that there is a scaling limit of the abelian sandpile patterns restricted to the ``bulk.'' 
A plausible limit statement is as follows: For a fixed $k \in [4,12)$, the sequence of configurations $\left\{2^{-n}\left.(k\cdot 3^n \mathbbm{1}_o)^\circ \right|_{G_n^{(s)}}\right\}_{n=1}^\infty$ converges to a height function on the limit fractal $K$.
Furthermore, we conjecture that this convergence takes place in the weak-$*$ $L^\infty(K,\nu)$ topology ($\nu$ is the standard self-similar measure on $K$) along the full sequence indexed by mass $m\to\infty$.
Roughly speaking this says that local averages of the pattern colors in the bulk converges to a function on $K$.
The conjecture is inspired by the elliptic PDE arguments of Pegden and Smart \cite{PegdenSmart} showing the sandpile pattern convergence on $\mathbb{Z}^d$.

\emph{Other single-source and multi-source growth models on $SG$.}
Throughout this paper we assumed that particles are launched from a single corner vertex $o$ of $SG$. One may ask what happens if they are launched from a fixed vertex which is not $o$, or from multiple vertices. (In the case of $\mathbb{Z}^d$ see \cite{LevinePeres10}.) This is very much an analytic problem as it is probabilistic and combinatorial, requiring fine analysis of the Dirichlet (or obstacle) problem on subsets of $SG$. 
We expect that recent results of Qiu, Strichartz, and collaborators \cites{QiuStr,GKQS} may be applicable for this purpose. 

Another interesting initial condition to explore is the product Bernoulli $\{4-\delta, 4+\delta\}$ cluster: that is, start with a nonempty subset of $SG$ (say, $B_o(r)$ for some $r\in \mathbb{N}$) wherein every vertex independently carries either $4-\delta$ or $4+\delta$ chips with probability $\frac{1}{2}$.
According to simulations by Ahmed Bou-Rabee (Figure \ref{fig:SGBernoulli}), it is conjectured that the sandpile cluster exhibit a two-phase pattern structure, where the patterns in the inner ball (the support of the initial cluster) becomes noisy, while the outer annulus carries patterns reminiscent of those in the deterministic single-source sandpile.

\begin{figure}
\centering
\includegraphics[width=0.9\textwidth]{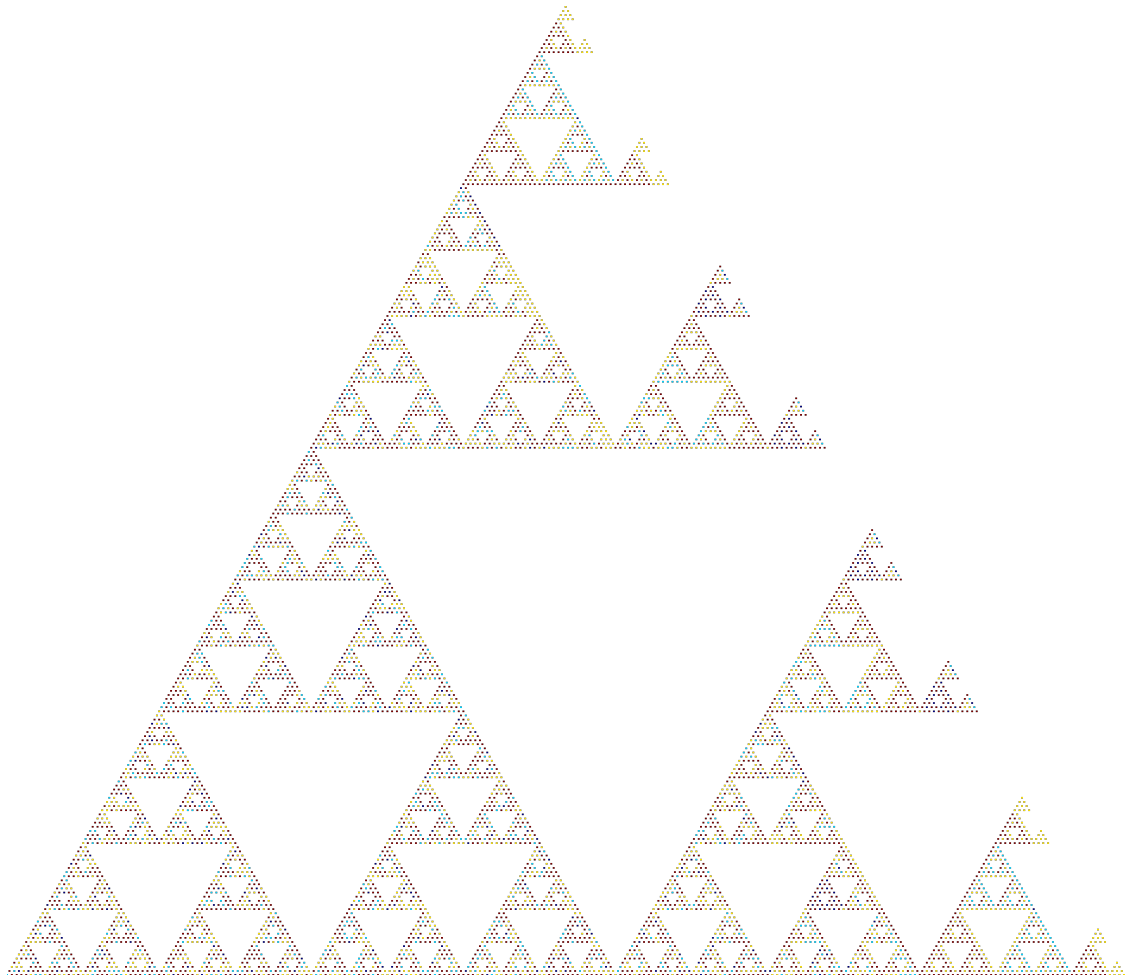}
\caption{A realization of the abelian sandpile configuration starting from the product Bernoulli 3-5 initial configuration on $B_o(160)$. 
The stable configuration exhibits a two-phase pattern structure: noisy patterns on the inner ball (the support of the initial configuration), and patterns reminiscent of the deterministic single-source sandpile on the outer annulus.
Image courtesy of Ahmed Bou-Rabee.}
\label{fig:SGBernoulli}
\end{figure}

\emph{Sandpile Markov chain on fractal graphs.}
In this paper we solved the deterministic abelian sandpile growth problem from a single source.
A related, but different, problem is to study the Markov chain on $\mathcal{R}_G$ under stationarity.
Recalling that $(\mathcal{R}_G, \oplus)$ is a finite abelian group, it is a standard fact that the stationary distribution is a uniform on $\mathcal{R}_G$.
Besides the works of Daerden \emph{et al.\@} \cites{DV98, DPV01}  and Matter \cite{MatterThesis}*{Chapter 5} mentioned in Remark \ref{rem:sandpileSG}, it would be interesting to exploit the burning bijection and make connections with recent results on spanning trees and Laplacian determinants on fractal graphs \cites{CCY, CC, AnemaTsougkas, CTT, TeuflWagner}.
A yet unsolved problem is to compute the average number of chips per vertex, or the sandpile density, on subgraphs of $SG$ under stationarity.

\emph{Laplacian growth models on the Sierpinski carpet.}
One may also perform the same analysis on the Sierpinski carpet graph $SC$ (see Figure \ref{fig:SC}), an infinitely ramified fractal. For concreteness we discuss only the case where $m$ particles are launched from the bottom-left corner vertex of $SC$. Wilfried Huss has performed simulations on both the rotor-router aggregation and the IDLA, \emph{cf.\@} \cite{HussThesis}*{Figures 8.2 and 8.3}. (Both sandpile problems on $SC$ are open.) His key observations were that given a fixed number of particles, the cluster shapes of the two models look qualitatively similar. However, as $m$ increases, a periodic family of limit shapes, as opposed to a single limit shape, appears to develop in both models. Due to the difficulty in the analysis on the Sierpinski carpet (see \cite{BarlowSCReview} for a summary of the state of the art, and references therein for details), we are unable to address Huss' observations rigorously at the moment. 

\emph{Do limit shapes coincide on different graphs approximating the same space?}
We end the paper with the following semi-vague question.

\begin{openquestion}
Let $(G^{(1)}_n)_n$ and $(G^{(2)}_n)_n$ be two sequences of bounded-degree unweighted graphs rooted at a common point $o$. Assume that both sequences converge, in the pointed Gromov-Hausdorff sense (or is there a better mode of convergence?), to the same metric measure space $(X, d, \mu)$ rooted at $o$.
Prove or disprove (with possibly extra conditions) that the clusters associated with any of the Laplacian growth models on $\bigcup_{n\geq 1}G^{(1)}_n$ and $\bigcup_{n\geq 1} G^{(2)}_n$ started from $o$ have the same limit shape.
\end{openquestion}

As an example, take $G^{(1)}_n = (n^{-1}\mathbb{Z})^2$ and $G^{(2)}_n$ to be the hexagonal lattice rooted at $o$ with lattice spacing $1/n$.
Another example is to take $G^{(1)}_n$ to be the level-$n$ Sierpinski gasket graph, and $G^{(2)}_n$ to be the level-$n$ Hanoi tower graph (Figure \ref{fig:Hanoi}). Both sequences of graphs are rooted at the corner vertex $o$ and scaled such that that their diameters stay constant.

It will be useful to investigate this problem for the divisible sandpile model first.

\subsection*{Acknowledgements} 

JPC would like to thank Lionel Levine for providing many useful suggestions and feedbacks; Ilias Stitou for running computations which helped clarify the mechanisms described in the paper; Ahmed Bou-Rabee for useful discussions concerning the open questions in \S\ref{sec:open}; and Elmar Teufl for bringing to his attention the history behind the Sierpinski arrowhead curve.
He also acknowledges inspiring conversations with Richard Kenyon, Tatiana Nagnibeda, Ecaterina Sava-Huss, Robert Strichartz, Alexander Teplyaev, and Wolfgang Woess.
JKF would like to thank Patrick Crotty for his help in setting up simulations on the Ho Computer Cluster at Colgate University.

\begin{bibdiv}
\begin{biblist}
\bibselect{sandpile_ref}
\end{biblist}
\end{bibdiv}

\end{document}